\documentclass[11pt,reqno]{amsart}
\usepackage{amsmath, amsthm, amssymb}
\usepackage{enumitem}
\usepackage[foot]{amsaddr}

% Packages
\usepackage{fullpage}
\usepackage[many]{tcolorbox}
\usepackage{xcolor}
\usepackage{wasysym}
\usepackage{mathtools}
\usepackage{scalerel}
\usepackage{bbm}
\usepackage[all]{xy}

%Tikz
\usepackage{tikz}
\usetikzlibrary{arrows,backgrounds,patterns.meta}
\usetikzlibrary{positioning,shadings,cd}
\usetikzlibrary{shapes}
\usetikzlibrary{backgrounds}
\usetikzlibrary{decorations,decorations.pathreplacing,decorations.markings,decorations.pathmorphing}
\tikzstyle{snake}=[decorate, decoration={snake, segment length=1mm, amplitude=.5mm}]
\usetikzlibrary{fit,calc,through}
\usetikzlibrary{external}
%\usetikzlibrary{calc}
%\usetikzlibrary{decorations,decorations.pathreplacing,decorations.markings,decorations.pathmorphing}
%\tikzset{squiggly/.style={decorate, decoration=snake}}
\newcommand{\tikzmath}[2][]
{\vcenter{\hbox{\begin{tikzpicture}[#1]#2\end{tikzpicture}}}
}
\newcommand{\roundNbox}[6]{
	\draw[rounded corners=5pt, very thick, #1] ($#2+(-#3,-#3)+(-#4,0)$) rectangle ($#2+(#3,#3)+(#5,0)$);
	\coordinate (ZZa) at ($#2+(-#4,0)$);
	\coordinate (ZZb) at ($#2+(#5,0)$);
	\node at ($1/2*(ZZa)+1/2*(ZZb)$) {#6};
}

%\tikzstyle{snake}=[decorate, decoration={snake, segment length=1mm, amplitude=.3mm}]
\tikzset{super thick/.style={line width=3pt}}
\tikzstyle{far>}=[decoration={markings, mark=at position 0.75 with {\arrow{>}}}, postaction={decorate}]
\tikzstyle{mid>}=[decoration={markings, mark=at position 0.55 with {\arrow{>}}}, postaction={decorate}]
\tikzstyle{mid<}=[decoration={markings, mark=at position 0.55 with {\arrow{<}}}, postaction={decorate}]
\tikzset{super thick/.style={line width=3pt}}
\tikzstyle{far>}=[decoration={markings, mark=at position 0.75 with {\arrow{>}}}, postaction={decorate}]
\tikzstyle{mid>}=[decoration={markings, mark=at position 0.55 with {\arrow{>}}}, postaction={decorate}]
\tikzstyle{mid<}=[decoration={markings, mark=at position 0.55 with {\arrow{<}}}, postaction={decorate}]
\tikzstyle{knot}=[preaction={super thick, white, draw}]
\tikzstyle{coupon}=[draw, very thick, rectangle, rounded corners=5pt]
\tikzset{Rightarrow/.style={double equal sign distance,>={Implies},->},
triplecd/.style={-,preaction={draw,Rightarrow}},
quadruplecd/.style={preaction={draw,Rightarrow,
shorten >=0pt
},
shorten >=1pt,
-,double,double
distance=0.2pt}}
\tikzset{
    tripleline/.style args={[#1] in [#2] in [#3]}{
        #1,preaction={preaction={draw,#3},draw,#2}
    }
}
\tikzstyle{triple}=[tripleline={[line width=.15mm,black] in
      [line width=.7mm,white] in
      [line width=1mm,black]}] 
\tikzset{
    quadrupleline/.style args={[#1] in [#2] in [#3] in [#4]}{
        #1,preaction={preaction={preaction={draw,#4},draw,#3}, draw,#2}
    }
}
\tikzstyle{quadruple}=[quadrupleline={[line width=.3mm,white] in
      [line width=.6mm,black] in
      [line width=1.2mm,white] in
      [line width=1.5mm,black]}]

%Colors
\definecolor{violet}{RGB}{148,0,211}
\definecolor{DarkGreen}{RGB}{0,150,0}
\definecolor{rufous}{HTML}{A81C07}
\newcommand{\ModuleColor}{blue}

%Hyperref
\usepackage[plainpages=false,hypertexnames=false,pdfpagelabels]{hyperref}
\definecolor{medium-blue}{rgb}{0,0,.8}
\hypersetup{colorlinks, linkcolor={purple}, citecolor={medium-blue}, urlcolor={medium-blue}}
\newcommand{\arxiv}[1]{\href{http://arxiv.org/abs/#1}{\tt arXiv:\nolinkurl{#1}}}
\newcommand{\arXiv}[1]{\href{http://arxiv.org/abs/#1}{\tt arXiv:\nolinkurl{#1}}}

\newcommand{\googlebooks}[1]{(preview at \href{https://books.google.com/books?id=#1}{google books})}

% MathOperators

\DeclareMathOperator{\alt}{alt}
\DeclareMathOperator{\coev}{coev}
\DeclareMathOperator{\End}{End}
\DeclareMathOperator{\eval}{eval}
\DeclareMathOperator{\Forget}{Forget}
\DeclareMathOperator{\ev}{ev}

\DeclareMathOperator{\Hom}{Hom}
\DeclareMathOperator{\id}{id}
\DeclareMathOperator{\im}{im}

\DeclareMathOperator{\Irr}{Irr}

\DeclareMathOperator{\Tr}{Tr}
\DeclareMathOperator{\tr}{tr}
\DeclareMathOperator{\Tube}{Tube}

% 

% CategoryNames
%\newcommand{\Top}{{\mathsf Top}}

% Commands
%\newcommand{\todo}[1]{\textcolor{blue}{\textbf{TODO: #1}}}
%\newcommand{\nn}[1]{\textcolor{red}{[[#1]]}}
\newcommand{\set}[2]{\left\{#1 \middle| #2\right\}}

%Categories
\newcommand{\Mod}{\mathsf{Mod}}
\newcommand{\Bim}{\mathsf{Bim}}

\newcommand{\DHR}{\mathsf{DHR}}
\newcommand{\Hilb}{\mathsf{Hilb}}

% tricky way to iterate macros over a list
\def\semicolon{;}
\def\applytolist#1{
    \expandafter\def\csname multi#1\endcsname##1{
        \def\multiack{##1}\ifx\multiack\semicolon
            \def\next{\relax}
        \else
            \csname #1\endcsname{##1}
            \def\next{\csname multi#1\endcsname}
        \fi
        \next}
    \csname multi#1\endcsname}

\def\calc#1{\expandafter\def\csname c#1\endcsname{{\mathcal #1}}}
\applytolist{calc}QWERTYUIOPLKJHGFDSAZXCVBNM;
\def\bbc#1{\expandafter\def\csname bb#1\endcsname{{\mathbb #1}}}
\applytolist{bbc}QWERTYUIOPLKJHGFDSAZXCVBNM;
\def\bfc#1{\expandafter\def\csname bf#1\endcsname{{\mathbf #1}}}
\applytolist{bfc}QWERTYUIOPLKJHGFDSAZXCVBNM;
\def\sfc#1{\expandafter\def\csname s#1\endcsname{{\sf #1}}}
\applytolist{sfc}QWERTYUIOPLKJHGFDSAZXCVBNM;
\def\fc#1{\expandafter\def\csname f#1\endcsname{{\mathfrak #1}}}
\applytolist{fc}QWERTYUIOPLKJHGFDSAZXCVBNM;
\def\rmc#1{\expandafter\def\csname rm#1\endcsname{{\mathrm #1}}}
\applytolist{rmc}QWERTYUIOPLKJHGFDSAZXCVBNM;

% Theorems
\numberwithin{equation}{section}

\theoremstyle{plain}
\newtheorem{thm}[equation]{Theorem}
\newtheorem*{thm*}{Theorem}

\newtheorem{lem}[equation]{Lemma}
\newtheorem{prop}[equation]{Proposition}

\newtheorem*{claim*}{Claim}

\newtheorem{thmalpha}{Theorem}

\theoremstyle{definition}
\newtheorem{defn}[equation]{Definition}

\newtheorem*{trick*}{Trick}
\newtheorem{construction}[equation]{Construction}
\newtheorem{nota}[equation]{Notation}

\newtheorem{ex}[equation]{Example}

\newtheorem{subex}[equation]{Sub-example}

\newtheorem{rem}[equation]{Remark}

%\newcommand{\dave}[1]{\textcolor{orange}{[[Dave: #1]]}}
%\newcommand{\pieter}[1]{\textcolor{DarkGreen}{[[Pieter: #1]]}}
%\newcommand{\daniel}[1]{\textcolor{rufous}{[[Daniel: #1]]}}
%\newcommand{\corey}[1]{\textcolor{violet}{[[Corey: #1]]}}

%%%%%%%%%%%%%%%%%%%%%%%%%%%%%%%%%%%%%%%%%%%%%%%%%
\title{Holography for bulk-boundary local topological order}
\date{\today}
\begin{document}
\author{Corey Jones$^1$}
\address{$^1$ Department of Mathematics, North Carolina State University, Raleigh, NC 27695, USA}
\author{Pieter Naaijkens$^2$}
\address{$^2$ School of Mathematics, Cardiff University, Cardiff, CF24 4AG, United Kingdom}
\author{David Penneys$^3$}
\address{$^3$ Department of Mathematics, The Ohio State University, Columbus, OH 43210, USA}

\begin{abstract}
In our previous article [\arXiv{2307.12552}], we introduced local topological order (LTO) axioms for quantum spin systems which allowed us to define a physical boundary (associated to a cut of the lattice) manifested by a net of boundary algebras in one dimension lower.
This gives a formal setting for topological holography, where the braided tensor category of DHR bimodules of the physical boundary algebra captures the bulk topological order.

In this article, we extend the LTO axioms to quantum spin systems equipped with a topological boundary (domain wall with the trivial phase), again producing a physical boundary algebra for the bulk-boundary system, whose category of (topological) boundary DHR bimodules recovers the topological boundary order.
We perform this analysis in explicit detail for Levin-Wen and Walker-Wang bulk-boundary systems.

Along the way, we introduce a 2D braided categorical net of algebras built from a unitary braided fusion category (UBFC).
Such nets arise as boundary algebras of Walker-Wang models. 
We consider the canonical state on this braided categorical net corresponding to the standard topological boundary for the Walker-Wang model.
Interestingly, in this state, the cone von Neumann algebras are type I with finite dimensional centers, in contrast with the type II and III cone von Neumann algebras from the Levin-Wen models studied in [\arXiv{2307.12552}].
Their superselection sectors recover the underlying unitary category of our UBFC, and we conjecture the superselection category also captures the fusion and braiding.
\end{abstract}
\maketitle
\tableofcontents

%%%%%%%%%%%%%%%%%%%%%%%%%%%%%%%%%%%%%%%%%%%%%%%%%%%%%%
%%%%%%%%%%%%%%%%%%%%%%%%%%%%%%%%%%%%%%%%%%%%%%%%%%%%%%
%%%%%%%%%%%%%%%%%%%%%%%%%%%%%%%%%%%%%%%%%%%%%%%%%%%%%%
\section{Introduction}

Topologically ordered spin systems are fundamental examples of quantum many-body systems whose ground states exhibit long-range entanglement. 
A particularly interesting class of these systems exhibit \textit{local topological order} (LTO), defined in terms of error correcting properties of their local ground state spaces. 
This class includes all string-net models \cite{MR1611329,MR1951039,PhysRevB.71.045110,PhysRevB.103.195155}, and thus the class of LTOs exhibits all possible non-chiral topological orders in (2+1)D. 

In our previous work~\cite{MR4945955}, we introduced additional LTO-type axioms, which we review in detail in~\S\ref{sec:LTO}.
Roughly speaking, the first LTO axiom is essentially the same as the \textit{local topological quantum order} axioms of~\cite{MR2742836} and says that ground states are locally indistinguishable.
The remaining three allow for a well-defined notion of a \textit{boundary algebra} related to a \textit{physical boundary}.
The physical boundary is obtained by cutting the system using a hyperplane.\footnote{In practice, the construction can be thought of as cutting off half of the system, and forgetting about the interaction terms across the cut. }
We can define a corresponding boundary algebra, which essentially can be thought of as operators creating excitations (`errors') at the boundary, but not in the bulk.
The LTO axioms guarantee that these boundary algebras are well-defined and have the structure of a local net of $\rm{C}^*$-algebras in one dimension lower than the original LTO system.
It allows for a parameterization of boundary states of the system, and in (2+1)D it captures the bulk topological order via \emph{DHR bimodules}~\cite{MR4814692}, which are bimodules which can be localized in arbitrary intervals along the boundary, and should be thought of as analogous to DHR superselection sectors.
The relation of the (physical) boundary theory to the bulk topological order manifests a kind of topological holographic principle in the spirit of~\cite{PhysRevB.107.155136,2310.05790}.

In this article, we extend this story to the case of \textit{topological boundaries} of locally topologically ordered spin systems in 2+1 and 3+1 dimensions, introducing the notion of \textit{boundary LTO}.
The topological boundary comes from a boundary of the underlying (bulk) lattice on which the system is defined, and can be thought of as a domain wall between a topologically ordered system and the trivial phase \cite{MR2942952,MR4640433}.
A boundary LTO is then a system that satisfies the \textit{boundary LTO axioms}, which are appropriate modifications of the LTO axioms designed to capture the standard topological boundaries of string-net models. 
We again cut the bulk and boundary by a hyperplane $\mathcal{K}$, and construct a (physical) boundary algebra for the topological bulk-boundary system described by a boundary LTO.
In 1D and 2D this looks as follows, where
red denotes the bulk, black denotes the topological boundary, and the cut corresponding to the physical boundary is denoted in blue.
$$
\tikzmath{
\draw[thick, step=.75, thick, red] (.01,0.25) grid (3.5,2.75);
\draw[thick, black] (0,.25) -- (0,2.75);
\draw[thick, cyan] (-.5,1.875) node[above]{$\scriptstyle \cK$} -- (3.5,1.875);
}
\qquad\qquad\qquad\qquad\qquad
\tikzmath{
\draw[thick, red, step=.75, xshift=-.3cm, yshift=-.3cm] (1.75,0.25) grid (3.5,3.5);
\draw[thick, red, step=.75, knot, xshift=-.15cm, yshift=-.15cm] (1.75,0.25) grid (3.5,3.5);
\draw[thick, step=.75, knot] (1.75,0.25) grid (3.5,3.5);
\filldraw[draw=cyan, thick, fill=cyan!30, rounded corners=5pt]
(1.75,0) --  (2.1,.2) -- (2.1,3.6) -- (1.4,3.2) -- (1.4,-.2) -- (1.75,0);
\draw[thick, red, step=.75, knot, xshift=-.3cm, yshift=-.3cm] (0.25,0.25) grid (2,3.5);
\draw[thick, red, step=.75, knot, xshift=-.15cm, yshift=-.15cm] (0.25,0.25) grid (2,3.5);
\draw[thick, step=.75, knot] (0.25,0.25) grid (2,3.5);
\foreach \x in {.75,1.5,3}{
\foreach \y in {.75,1.5,2.25,3}{
\draw[thick, red] (\x,\y) -- ($ (\x,\y) + (-.45,-.45) $);
}}
\foreach \y in {.75,1.5,2.25,3}{
\draw[thick, red] (2.25,\y) -- ($ (2.25,\y) + (-.15,-.15) $);
}
\node[cyan] at (2,3.7) {$\scriptstyle\cK$};
}
$$
With a physical boundary we will always mean a boundary obtained by cutting the system (with or without topological boundary) in half.
To avoid the awkward terminology of `boundary LTO boundary algebra', we use a superscript $\partial$ to distinguish it from the LTO boundary algebras studies in~\cite{MR4945955} whenever this is not clear from the context.
On this algebra $\fB^\partial$ we define a boundary DHR category $\text{DHR}^{\partial}$ meant to capture the structure of topological boundary excitations and their relation to the bulk. 
Unlike the category of DHR bimodules of the LTO boundary algebra, $\text{DHR}^\partial$ does not have a natural braiding in (2+1)D, but it will have a braiding in ($n$+1)D when $n>2$.

We illustrate the construction with two classes of examples.
Our first family of examples are 1D topological boundaries for 2D Levin-Wen models associated to a unitary fusion category (UFC) $\cC$ and a unitary module category $\cM$ (pictured left above) \cite{MR2942952}.
The second family of main examples are 2D topological boundaries for the 3D Walker-Wang model associated to a unitary braided fusion category (UBFC) $\cA$ and a unitary $\cA$-enriched UFC $\cX$ \cite{MR4640433,2305.14068} (pictured right above).

Our first result analyzes the boundary LTOs of Levin-Wen models. 
For a Levin-Wen model built from a string-net for some UFC $\cC$, we study the case of standard topological boundaries given microscopically by a $\cC$-module category $\cM_\cC$~\cite{MR2942952}. 
In this case, the boundary algebras associated to the bulk system are fusion spin chains, and the boundary algebras associated to the bulk-boundary system are natural `finite index' (in the sense of \cite{MR0696688}) extensions of half-infinite fusion spin chains called \emph{fusion module spin chains}, which is precisely the setting of classical subfactor theory \cite{MR1473221,MR1642584} (see also \cite{MR4227743} for how these fusion module algebras arise). 
We may thus apply existing operator algebraic machinery to deduce that the boundary DHR category is the dual category $\End(\cM_\cC)$ as expected. 

To state this result as a formal theorem, let $\cC$ be a UFC and $\cM_\cC$ an indecomposable module category. 
We have a natural boundary LTO on the 2D square lattice whose bulk LTO is a version of the Levin-Wen model studied in \cite{PhysRevB.71.045110}, and whose boundary projections are defined in terms of $\cM$ as in \cite{MR2942952}. 
We denote the (physical) boundary algebra of this bulk-boundary system by $\fM^\partial$. 
Using standard subfactor techniques (which we summarize in Appendix \ref{appendix:AFActionsOfUmFCs}), we demonstrate the following in~\S\ref{sec:BoundaryLTOforBoundaryLW}.

\begin{thmalpha}
The Levin-Wen bulk-boundary system satisfies the boundary LTO axioms.
The boundary algebra $\mathfrak{M}^\partial$ is a fusion module spin chain, and its boundary DHR bimodules are given by the UFC $\DHR^{\partial}(\mathfrak{M}^\partial)\cong \End(\cM_\cC)$.
\end{thmalpha}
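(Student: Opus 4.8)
The plan is to verify the three assertions in turn, with the third resting on the operator-algebraic input of Appendix~\ref{appendix:AFActionsOfUmFCs}. For the boundary LTO axioms, I would first write down explicitly the bulk vertex and plaquette projections of the Levin-Wen model for $\cC$ on the square lattice (as in \cite{PhysRevB.71.045110}) together with the boundary projections built from the module category $\cM$ following \cite{MR2942952}, and then check the boundary LTO axioms --- the bulk axioms of \cite{2307.12552} augmented by conditions on the local ground-state projections of regions touching the topological boundary --- using the string-net graphical calculus. The nontrivial points are that the local ground-state projection of a region abutting the boundary retains the expected local structure, which reduces to module-associativity (pentagon) relations and semisimplicity of $\cM$, and the boundary analogue of the topological-order condition, namely that an operator commuting with all ground-state projections away from a boundary interval is supported near that interval. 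This is the most labor-intensive part, but it runs parallel to the bulk verification already carried out in \cite{2307.12552}, with the module data of $\cM$ substituted at the boundary row.

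Next I would identify $\fM$. The boundary algebra is produced by the general recipe of \cite{2307.12552}: to a boundary interval one assigns the $\mathrm{C}^*$-algebra generated by the ground-state projections of regions meeting only that interval, and one then passes to the inductive limit. Away from the corner this reproduces the fusion spin chain on $\cC$, exactly as in the bulk case; near the corner the generators acquire labels from $\cM$, and a direct inspection of the local algebras and their relative commutants should show that the inclusion of the half-infinite fusion spin chain on $\cC$ into $\fM$ is precisely the finite-index extension (in the sense of \cite{MR0696688}) defining a fusion module spin chain, with the hom-spaces of $\cM$ realizing the multiplicity spaces in the tower. Indecomposability of $\cM_\cC$ makes this extension connected, so $\fM$ is a genuine fusion module spin chain.

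Finally, by the previous step $\fM$ is exactly the kind of AF inclusion studied in classical subfactor theory \cite{MR1473221,MR1642584,MR4227743}: it carries an AF action of a unitary multifusion category encoding $\cC$, $\cM$, and the $\cM$-dual of $\cC$. Appendix~\ref{appendix:AFActionsOfUmFCs} computes the DHR bimodules of such an AF algebra --- the bimodules over the quasi-local algebra that are locally implemented --- in terms of this data, identifying them, with their monoidal structure, as the category of $\cC$-module endofunctors of $\cM$, i.e.\ $\End(\cM_\cC)$, which is a UFC because $\cM_\cC$ is indecomposable; hence $\DHR^\partial(\fM)\cong\End(\cM_\cC)$. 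I expect the main obstacle to be the bookkeeping in this last step: matching the abstract DHR-bimodule construction on the half-infinite fusion module spin chain with the subfactor-theoretic bimodule category --- monoidal structure included, and with no spurious sectors introduced by the corner --- so that the appendix applies cleanly. The verification of the LTO axioms, although lengthy, is essentially mechanical once the bulk analysis of \cite{2307.12552} is in hand.
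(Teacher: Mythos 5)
Your overall plan tracks the paper closely: verify the boundary LTO axioms by skein-theoretic / graphical arguments parallel to the bulk analysis of \cite{2307.12552}, identify the boundary algebra as $\fM(I)=\End_\cM(W\lhd X^{\otimes I\setminus\partial\cL})$, and then use Appendix~\ref{appendix:AFActionsOfUmFCs} to bring in subfactor-theoretic input via the unitary multifusion category $\cE = \begin{psmallmatrix}\End(\cM_\cC) & \cM\\ \cM^{\rm op} & \cC\end{psmallmatrix}$. One point of emphasis: the paper's axiom checking routes through the strip/ladder algebra for $\cM_\cC$ (replacing the tube algebra used in the bulk) rather than raw pentagon bookkeeping, but the underlying idea is the same.

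Where your account is imprecise in a way that matters: you attribute to Appendix~\ref{appendix:AFActionsOfUmFCs} the \emph{full} identification $\DHR^\partial(\fM)\cong\End(\cM_\cC)$, ``identifying them, with their monoidal structure.'' The appendix does not do this. It produces only one direction --- the fully faithful unitary tensor functor $\cE \hookrightarrow \Bim(\fM\oplus\fN)$, via the Q-system realization of $\underline{\End}_\cC(W)$ and Ocneanu compactness. Essential surjectivity onto $\DHR^\partial(\fM)$ is a separate argument in the body of the paper (Theorem~\ref{thm:BoundaryLTOforLW}), and it is not automatic. The mechanism is: (a) Lemma~\ref{lem:BoundaryDHRBimoduleCharacterization}, which characterizes boundary DHR bimodules by local triviality --- restricted to ${}_{\fN_I}(-)_\fM$ they are summands of direct sums of ${}_{\fN_I}\fM_\fM$; and (b) Lemma~\ref{lem:2outof3}, the ``2 out of 3'' observation that if $X, X\boxtimes_A Z$ lie in a replete multifusion subcategory $\cE\subset\Bim(A)$ then so does $Z$, applied with $A=\fM\oplus\fN_I$ to induce back up. You gesture at the issue (``no spurious sectors introduced by the corner'') but don't name the mechanism that rules them out, and the phrase ``so that the appendix applies cleanly'' suggests you expect the appendix to close the gap when in fact it cannot: there is no abstract reason a localized bimodule must come from $\cE$ without the local triviality + 2-out-of-3 argument. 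Building these two lemmas into your plan is the only genuine missing ingredient.
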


Our next main topic is the analysis of (3+1)D Walker-Wang LTOs and their boundary LTOs, providing the first worked example of the LTO axioms beyond (2+1)D.
Associated to a unitary \emph{braided} fusion category (UBFC) $\cB$, the Walker-Wang model is a (3+1)D string-net model which are easily seen to satisfy the LTO axioms (see Theorem \ref{thm:LTQO-WW}).
For (2+1)D systems, the boundary algebras are generally given in terms of fusion spin chains, which can be defined straightforwardly in terms of a unitary fusion category because the 1D boundary has a natural ordering of the points, giving a natural order to take tensor products in the fusion category.
For the 2D boundary of the $\cB$ Walker-Wang model, this is no longer the case.
Using the braiding on the UBFC $\cB$ and a (strong) tensor generating object $X\in \mathcal{B}$, we define a two-dimensional version of a fusion spin chain which we call a \textit{braided categorical net}.
The boundary algebra for the $\cB$ Walker-Wang model can naturally be identified with such a braided categorical net, where $X$ is the sum over simples.

Braided categorical nets are interesting mathematical objects in their own right. 
For example, when a braided categorical net is built from a Drinfeld center $Z(\cC)$, which is a UBFC if $\cC$ is UFC, the net of algebras is \emph{bounded spread isomorphic} to the net of local operators  in the `flux sector' of a Levin-Wen model \cite{MR4808260} (see also \S\ref{sec:BoundedSpreadIso} below), which is purely 2D. 
Transporting the ground state of the Levin-Wen model across this isomorphism yields an interesting topological boundary state on the $Z(\cC)$ braided categorial net, which is generalizable to arbitrary UBFCs $\mathcal{B}$. 
This gives a new example of a \textit{duality} for 2D abstract quantum spin systems \cite{MR5041853,MR5011952}.

Given such a braided categorical net in 2D together with a distinguished state, one can try to study its superselection theory in the Doplicher--Haag--Roberts approach~\cite{MR2804555,MR3617688,MR4362722}.
Of central importance are the \textit{cone algebras}, the von Neumann algebras generated (in the reference state) by observables localized in a cone-like region.
The braided categorical nets as obtained above provide an interesting new feature in comparison with the nets of type $\rm II$ and type $\rm III$ cone von Neumann algebras from Levin-Wen models~\cite{MR4945955} or from Quantum Double models~\cite{MR4721705}.

\begin{thmalpha}
The cone von Neumann algebras of the $\cB$ braided categorical net in this topological boundary state are type $\rmI$ von Neumann algebras with finite dimensional centers.
\end{thmalpha}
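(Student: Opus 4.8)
The plan is to reduce the statement to an explicit model for the GNS triple of the topological boundary state over a cone, obtained by first using the braiding to ``unfold'' the two-dimensional net onto a single half-infinite line of morphisms, and then closing up the diagrams transverse to that line; the type $\rmI$ property will then be a matrix-product-state phenomenon, and the finite-dimensional center will record the finitely many charges admissible at the apex of the cone. Concretely: by translation and bounded spread invariance it suffices to analyze one standard angular cone $\Lambda$ with apex at a fixed lattice site, and to identify $\cR(\Lambda) := \pi_\omega\big(\overline{\bigcup_n \fA(\Lambda_n)}\big)''$ for an exhausting sequence of finite subregions $\Lambda_1 \subset \Lambda_2 \subset \cdots$ of $\Lambda$, where $\fA$ denotes the $\cB$ braided categorical net and $\omega$ the topological boundary state. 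Since $\fA$ restricts on a horizontal slice to a fusion categorical net, the first step is to put elements of $\fA(\Lambda_n)$ into a normal form: using the braiding, every local morphism is slid onto a single half-infinite ``spine'' strand carrying a large power $X^{\otimes N}$, so that $\fA(\Lambda_n)$ is realized inside $\End_\cB(X^{\otimes N})$ together with the combinatorial record of where in $\Lambda$ it came from. This is the two-dimensional analogue of the unfolding underlying the bounded spread isomorphism of \S\ref{sec:BoundedSpreadIso} and \cite{MR4808260}, and it is the device that trades genuine two-dimensionality for bookkeeping along a line.

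Next I would compute $\omega$ in this spine normal form. By construction $\omega$ is the transport, along that bounded spread isomorphism, of the Levin--Wen ground state restricted to the flux subnet in the case $\cB = Z(\cC)$, and for general $\cB$ it is assembled from the same combinatorial data, pairing $\cB$-diagrams against the standard Walker--Wang boundary condition via the canonical evaluation and coevaluation. In the normal form this should exhibit $\omega$ as \emph{finitely correlated}: along the spine it is an infinite tensor product of one fixed finite-dimensional \emph{pure} state (compatible with the inclusions $X^{\otimes N} \hookrightarrow X^{\otimes (N+1)}$), while the transverse strands are closed off using the canonical (co)evaluation, contributing only a \emph{finite} amount of correlation across $\partial\Lambda$. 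The key extraction is then that $\omega|_{\fA(\Lambda)}$ decomposes as a finite convex combination $\sum_a \lambda_a\, \psi_a$ over the admissible apex charges --- which the computation should identify with $\Irr(\cB)$, matching the superselection count advertised in the abstract, with weights $\lambda_a$ governed by the squared quantum dimensions --- where each $\psi_a$ is a pure state on the cone $\mathrm{C}^*$-algebra, and the $\psi_a$ are pairwise disjoint because the apex charge is an asymptotic invariant not localized in any $\Lambda_n$.

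Granting this, the conclusion is immediate: each $\psi_a$, being pure, has irreducible GNS representation and generates $B(\cH_a)$, so $\cR(\Lambda) \cong \bigoplus_{a \in \Irr(\cB)} B(\cH_a)$, which is type $\rmI$ with finite-dimensional center. The main obstacle is the middle step: one must show that the transverse closure really does leave only a finite amount of correlation across $\partial\Lambda$, so that $\omega|_{\fA(\Lambda)}$ is \emph{exactly} a finite sum of pure product-type states with no residual type $\mathrm{II}$ or type $\mathrm{III}$ part, and that the algebra of apex charges is genuinely finite-dimensional rather than, say, an infinite inductive limit. This is the precise point of contrast with the Levin--Wen ground states of \cite{2307.12552}: there the analogous transverse closure fails --- the complement of the cone retains infinitely much correlated information and the Markov trace survives --- so the cone algebras stay type $\mathrm{II}$ or type $\mathrm{III}$, whereas here the braiding together with the special topological boundary condition is exactly what truncates the entanglement. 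Verifying that truncation quantitatively, and pinning down the resulting finite charge set, is the technical heart of the argument.
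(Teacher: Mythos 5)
Your overall approach matches the paper's: reduce the two-dimensional braided categorical net to a one-dimensional fusion chain via the braiding (the paper invokes exactly this cone-splitting/unfolding trick just before Proposition~\ref{prop:NormalActionOfConeAlgebra}), then decompose the GNS representation over apex charges indexed by simples, and conclude type $\rmI$ with finite-dimensional center. What you flag as the ``technical heart'' is not closed in the paper by any quantitative entanglement estimate, but by exact semisimplicity of $\cB$. In the proof of Theorem~\ref{thm:BraidedCategoricalNetHaagDuality}, one works with the ambient GNS Hilbert space $\cK_{1_\cB}$ rather than with the restricted state, and inserting the semisimple splitting of the identity along $\partial\Lambda$ gives $\cK_{1_\cB}(\Delta_n) \cong \bigoplus_{x\in\Irr\langle X\rangle} \cK_{x^\vee}(\Delta_n\setminus\Lambda)\otimes\cK_x(\Lambda_n)$, compatibly with the inclusions, so that the full GNS space is $\bigoplus_{x\in\Irr\langle X\rangle}\cK_{x^\vee}(\Lambda^c)\otimes\cK_x(\Lambda)$ with $\fA(\Lambda)$ acting only on the second tensor factor in each block. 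Irreducibility of each $\cK_x(\Lambda)$ under $\fA(\Lambda)$ is then Lemma~\ref{lem:IrreducbleConeAlgebraRepresentation}, proved by exhibiting an explicit system of matrix units inside $\bigcup_n\fA(\Lambda_n)$ after the one-dimensional reduction. This yields $A_\Lambda\cong\bigoplus_{x\in\Irr\langle X\rangle}B(\cK_x(\Lambda))$ directly, and the center is finite because $\cB$ is fusion.

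Two caveats on your write-up. First, working at the level of the global GNS Hilbert space, as the paper does, sidesteps the quasi-equivalence question that is implicit in passing from a decomposition of $\omega|_{\fA(\Lambda)}$ into disjoint pure states to a description of $\pi_\omega(\fA(\Lambda))''$ inside the global representation. Second, your ``infinite tensor product / finitely correlated'' picture of the spine state is heuristic at best: the local algebras $\End_\cB(X^{\otimes\Lambda_n})$ are not tensor factors, and there is no literal iid structure to appeal to; what actually does the work is the semisimplicity factorization above, which is an exact finite direct-sum identity at every finite stage, not an asymptotic bound on surviving correlations.
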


The extension of superselection theory to the non-factorial setting appears in \cite{MR4927814}.
In contrast to the factorial case, where type I algebras lead to a trivial superselection theory~\cite{MR4426734}, the braided categorical nets with their natural states described above do have non-trivial sectors.
To our knowledge, this gives the first examples of nets of type $\rmI$ von Neumann algebras with interesting superselection theory.

\begin{thmalpha}
The $\rmW^*$-category\footnote{We do not analyze the monoidal or braided structures on this category, but we expect them to coincide with those for $\mathcal{B}$.
We leave this investigation to future joint work.} 
of superselection sectors of the $\cB$ braided categorical net is unitarily equivalent to $\Hilb(\cB)$, the unitary completion of $\cB$ in the sense of \cite{2411.01678} (see also \cite{MR3509018,MR3687214}).
\end{thmalpha}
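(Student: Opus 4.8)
The plan is to build an explicit unitary equivalence of $\rmW^*$-categories between $\Hilb(\cB)$ and the category $\cS$ of superselection sectors of the $\cB$ braided categorical net. Write $\fA$ for its quasi-local $\mathrm{C}^*$-algebra, $\psi$ for the canonical topological boundary state, and $(\cH_0,\pi_0,\Omega)$ for the associated GNS triple; for a cone $\Lambda$ put $\cR(\Lambda):=\pi_0(\fA(\Lambda))''$. Objects of $\cS$ are representations $\pi$ of $\fA$ satisfying the cone (Buchholz--Fredenhagen) selection criterion relative to $\pi_0$, i.e.\ $\pi\restriction_{\fA(\Lambda^c)}\cong\pi_0\restriction_{\fA(\Lambda^c)}$ for every cone $\Lambda$, with intertwiners as morphisms \cite{MR2804555,MR3617688,MR4362722}; since by Theorem~B the $\cR(\Lambda)$ are type $\rmI$ with finite-dimensional centers, the relevant framework is the non-factorial one of \cite{2410.21454}. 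The first step is a sharpening of Theorem~B: I would identify $Z(\cR(\Lambda))$ naturally with $\bbC^{\Irr(\cB)}$, the minimal central projections being indexed by the isomorphism classes of simple objects of $\cB$ --- the central label recording which charge a subrepresentation carries. This makes the superselection problem discrete and explains why, at the level of the $\rmW^*$-category, the answer should see only the underlying category of $\cB$ and not its monoidal or braided structure.

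Next I construct a $\rmW^*$-functor $F\colon\Hilb(\cB)\to\cS$. For a simple $b\in\cB$, strong tensor generation gives an inclusion $b\hookrightarrow X^{\otimes k}$; inserting such a copy of $X^{\otimes k}$ deep inside a cone $\Lambda$ and using the braiding of $\cB$ to comb the resulting charge along a path running to spatial infinity within $\Lambda$ produces an amplimorphism $\rho_b$ of $\fA$ localized in $\Lambda$. The braid relations ensure $\rho_b$ acts trivially on local algebras not crossed by the charge string, and transportability of that string around any given cone yields the selection criterion for all cones, so $\rho_b\in\cS$. A morphism $f\in\Hom_\cB(b,b')$ attached at the end of the string gives an intertwiner $\rho_b\to\rho_{b'}$, compatibly with adjoints because $\cB$ is unitary; extending by direct sums gives $F$ on all of $\Hilb(\cB)$. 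To see $F$ is fully faithful I would localize $\rho_b,\rho_{b'}$ in a common cone $\Lambda$ and evaluate a prospective intertwiner against the net: using that the region algebras of the braided categorical net are assembled from the multimatrix algebras $\End_\cB(X^{\otimes n})$ with the standard connecting maps, together with the type $\rmI$ finite-center structure of $\cR(\Lambda)$ and its relative commutant, the intertwiner space is identified with $\Hom_\cB(b,b')$, and faithfulness follows from injectivity of the connecting inclusions.

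It remains to show $F$ is essentially surjective, which I expect to be the main obstacle. Given an arbitrary sector $\pi$, transportability replaces it by a localized amplimorphism $\rho$ in a fixed cone $\Lambda$. The key inputs I need are: (i) a cone Haag-duality statement for $(\pi_0,\psi)$ --- that $\cR(\Lambda^c)'$ is an amplification of $\cR(\tilde\Lambda)$ for a slightly enlarged cone $\tilde\Lambda$, or the appropriate non-factorial substitute; and (ii) granting (i), that $\rho$ is determined up to unitary equivalence by the permutation it induces on the minimal central projections of the cone algebras, that is by a label in $\Irr(\cB)$, so that $\rho\cong\rho_b$ for the corresponding simple $b$. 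Step (i) carries the genuine analytic content: when $\cB=Z(\cC)$ one can try to import it through the bounded-spread isomorphism with the flux sector of a Levin--Wen model \cite{MR4808260}, but for general $\cB$ I would argue directly from the categorical net, exploiting that each $\End_\cB(X^{\otimes n})$ is a multimatrix algebra and that the braiding cleanly separates the inside-cone and outside-cone tensor factors. Establishing this Haag duality and ruling out sectors with exotic multiplicities or non-transportable components is the crux; once that is in hand the remainder is bookkeeping with the categorical data, and a fully faithful essentially surjective $*$-functor of $\rmW^*$-categories is a unitary equivalence.
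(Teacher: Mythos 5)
Your structural picture is the right one, and you have correctly identified the two load-bearing pillars: Haag duality for the cone von Neumann algebras in the canonical state, and the identification of cone algebras as finite direct sums of type $\rmI_\infty$ factors indexed by $\Irr\langle X\rangle$. However, the proposal leaves exactly these pillars as acknowledged gaps (``the key inputs I need,'' ``the crux''), so as it stands this is a plan rather than a proof. Two specific things are missing.

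First, your construction goes through amplimorphisms $\rho_b$ built by combing a charge string to infinity, whereas the paper works directly with the representation Hilbert spaces $\cK_z(\Lambda):=\cB(z\to X^{\otimes\Lambda})$ (with the categorical-trace inner product) and their inductive limits $\cK_z$; the ground-state representation is $\cK_{1_\cB}\cong L^2(\fA,\psi)$. This is cleaner because Haag duality then falls out of an explicit semisimple tensor decomposition
$$
\cK_{1_\cB}\;\cong\;\bigoplus_{x\in\Irr\langle X\rangle}\cK_{x^\vee}(\Lambda^c)\otimes\cK_x(\Lambda),
$$
with $\fA(\Lambda)$ acting irreducibly on each $\cK_x(\Lambda)$ (shown by exhibiting a system of matrix units) and $\fA(\Lambda^c)$ acting on the multiplicity factors; the commutant is read off directly. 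Your proposed route through a ``permutation on minimal central projections'' is roughly pointing at the same central decomposition, but you would still need the irreducibility lemma and the normality of the cone-algebra action (which the paper gets from a criterion in \cite{MR3948170} by explicitly realizing vector states of $\cK_x$ on $\cK_{1_\cB}$ using duality in $\cB$). None of that is in your proposal.

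Second, and more importantly, for essential surjectivity the paper uses a dimensional-reduction trick that your proposal misses entirely: a cone splits the plane into two halves, and the braiding of $\cB$ straightens each half into a 1D $\bbZ$ lattice, converting the 2D braided categorical net into a 1D fusion spin chain over a cone. This reduces the classification of superselection sectors to the already-established 1D result \cite[Const.~6.13, Thm.~6.14]{2307.12552}. Your suggestion to import Haag duality through the bounded-spread isomorphism with the Levin--Wen flux sector only works when $\cB=Z(\cC)$ (a point you note), and your alternative of ``arguing directly from the categorical net'' is not carried out. Without the dimensional reduction (or a substitute argument of comparable strength), the proposal does not actually establish that every cone-localized sector with the Buchholz--Fredenhagen property arises as some $\cK_z$; ruling out exotic multiplicities is precisely the content that \cite{2307.12552} supplies after reduction to 1D.

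So: right skeleton, genuinely different (amplimorphism-based) construction of the functor, but the hard analytic content is deferred rather than proved, and the trick that makes the paper's proof short --- braiding a cone into a 1D chain and citing the fusion spin chain classification --- is absent.
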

The proof of these two theorems can be found in~\S\ref{sec:SSSofBraidedCategoricalNet}.

Returning to our main line of investigation, the universality classes of topological boundaries of $\mathcal{B}$ Walker-Wang models correspond to $\mathcal{B}$-enriched fusion categories \cite{MR4640433}.
A $\cB$-enriched fusion category is a UFC $\cX$ equipped with a braided central functor from $\mathcal{B}$ to $Z(\cX)$ \cite{MR3866911,MR3578212,MR3961709}.
Microscopic models for these boundaries were introduced in \cite{MR4640433,2305.14068}. 
We show in \S\ref{sec:WW} that these bulk-boundary models are boundary LTOs, and that their physical boundary algebras assemble into a braided categorical net $\fX^\partial$. 

\begin{thmalpha}
The Walker-Wang bulk-boundary system for $\cB\to Z(\cX)$ is a boundary LTO.
The boundary algebra $\fX^\partial$ is a braided categorical net, and
its category of boundary DHR bimodules $\DHR^{\partial}(\mathfrak{X}^\partial)$ is unitarily braided equivalent to the \emph{enriched center} $Z^\cB(\cX)$, the M\"uger centralizer~\cite{MR1990929} of $\cB$ in $Z(\cX)$. 
\end{thmalpha}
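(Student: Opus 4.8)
The plan is to establish the three assertions in turn, leaning heavily on the machinery already in place: the bulk Walker--Wang LTO (Theorem~\ref{thm:LTQO-WW}), the one-dimension-lower Levin--Wen boundary analysis (Theorem~A), and the braided categorical net formalism of \S\ref{sec:BoundedSpreadIso} and the surrounding sections.

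\emph{Step 1: the bulk-boundary system is a boundary LTO.} I would start from the microscopic Walker--Wang bulk-boundary projections of \cite{MR4640433,2305.14068}, built from the UBFC $\cB$ in the bulk and the $\cB$-enriched UFC $\cX$ along the topological boundary. Away from the topological boundary the LTO axioms hold by Theorem~\ref{thm:LTQO-WW}, so the content is to verify the remaining boundary LTO axioms -- the boundary net axioms and the approximate Haag-duality/intertwining conditions -- for regions meeting the topological boundary, and for the bulk-boundary corner where the physical cut meets the topological boundary. As in the Levin--Wen case this is carried out graphically: one writes down explicit partial isometries implementing the local ground-state projections using the string-net calculus for $\cX$ together with the enriched half-braiding data $\cB\to Z(\cX)$, and identifies local ground-state spaces with hom-spaces computed in $\cX$. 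The argument is the one-dimension-up analogue of the proof of Theorem~A; the only genuinely new combinatorics is at the corner, where bulk $\cB$-strands, boundary $\cX$-strands, and the physical cut all interact.

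\emph{Step 2: identifying $\fX$ as a braided categorical net.} Using the LTO structure from Step~1, I would compute the local algebras of the physical boundary net: the local ground-state spaces are hom-spaces in $\cX$, so the local algebras are matrix amplifications of endomorphism algebras in $\cX$, with the expected net inclusions. On the region far from the topological boundary this reproduces the $\cB$ braided categorical net (the bulk Walker--Wang boundary algebra); near the edge the $\cX$-labels enter, and the key point is that the central functor $\cB\to Z(\cX)$ supplies exactly the half-braidings needed to let strands cross one another in the two-dimensional net. Assembling these, $\fX$ is a braided categorical net on a half-plane with $\cX$-decorated edge, precisely parallel to the Levin--Wen bulk-boundary algebra being a fusion module spin chain.

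\emph{Step 3: $\DHR^\partial(\fX)\simeq Z^\cB(\cX)$ as unitary braided categories.} This is the heart of the theorem. The boundary DHR bimodules of $\fX$ form a braided $\rmW^*$-category (one dimension below the $3$D bulk), whose objects are the cone-localizable sectors of the boundary net. I would first show that such a bimodule amounts to an object of $\cX$ equipped with a half-braiding against every strand of $\fX$; since $\fX$ carries both $\cX$-strands and bulk $\cB$-strands, the half-braiding in particular centralizes the image of $\cB$ in $Z(\cX)$, so the object lies in the M\"uger centralizer $Z^\cB(\cX)\subseteq Z(\cX)$, and conversely every object of $Z^\cB(\cX)$ yields such a bimodule. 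This produces an equivalence of $\rmW^*$-categories, monoidal because fusion of DHR bimodules is composition of half-braided objects. Finally one checks that the DHR braiding -- obtained by transporting two cone-localized sectors past each other in the boundary plane -- agrees with the braiding on $Z^\cB(\cX)$ inherited from $Z(\cX)$, which is a graphical computation in the string-net calculus. When $\cB\to Z(\cX)$ arises from a Drinfeld-center situation one may instead transport the whole computation to a Levin--Wen flux-sector net via a bounded-spread isomorphism (\S\ref{sec:BoundedSpreadIso}) and invoke the subfactor input behind Theorem~A.

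\emph{Main obstacle.} The hard step is Step~3. Unlike the Levin--Wen boundary, where fusion module spin chains sit inside classical subfactor theory so that $\DHR^\partial$ can essentially be read off, here one must compute the boundary DHR category of a genuinely two-dimensional braided categorical net directly. The crux is the completeness statement -- that \emph{every} boundary DHR bimodule is equivalent to one induced from $Z^\cB(\cX)$ -- which requires the full strength of the boundary LTO axioms from Step~1, together with a careful check that the braided structure coming from cone transport is the correct one rather than, say, its reverse.
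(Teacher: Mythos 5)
Your Steps~1 and~2 match the paper: the boundary LTO axioms for the $\cX$-boundary Walker--Wang system are verified by adapting the Levin--Wen arguments using the dome algebra of the $\cB$-enriched UFC, and the resulting physical boundary algebra $\fX$ is the braided-enriched fusion categorical net on a half-plane.

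The gap is in Step~3, and you have correctly identified where it is without supplying the idea that closes it. You propose to characterize a boundary DHR bimodule directly as ``an object of $\cX$ equipped with a half-braiding against every strand of $\fX$'' and then argue completeness from the boundary LTO axioms, but you give no concrete mechanism for the completeness statement (that \emph{every} boundary DHR bimodule is in the essential image of $Z^\cB(\cX)$), and you explicitly flag this as the unresolved obstacle. The paper does \emph{not} compute $\DHR^\partial(\fX)$ by a direct $2$D analysis; instead it performs a dimensional reduction. Given a boundary rectangle $\Delta$ meeting $\partial\cL$, one folds the half-plane lattice (via a careful bijection $\bbZ\times\bbN\to\bbN$) into a thickened $1$D ``slab,'' turning the $2$D braided-enriched net $\fX$ into a $1$D fusion module categorical net $\fM$ for the module $\cM=\cX$ over the multifusion category $\cC=\cX^{\rm mp}\boxtimes_\cB\cX$ (a ladder category). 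Boundary DHR bimodules for $\fX$ are then boundary DHR bimodules for $\fM$, so Theorem~\ref{thm:BoundaryLTOforLW} together with the subfactor machinery of Appendix~\ref{appendix:AFActionsOfUmFCs} immediately gives $\DHR^\partial(\fM)\cong\End(\cM_\cC)$ as a unitary tensor category; the $4$-categorical folding trick of \cite{MR4228258} then identifies $\End(\cM_\cC)\cong Z^\cB(\cX)$. It is precisely this reduction to the already-solved $1$D problem that makes completeness tractable, and without it your plan stalls exactly where you say it does. Likewise, for the braided structure: the paper deduces boundary algebraic Haag duality for $\fX$ from that of the $1$D net $\fM$ (equation~\eqref{eq:FusionModuleBoundaryHaagDuality}), uses it to define the unitary braiding on $\DHR^\partial(\fX)$, and then checks the functor $Y:Z^\cB(\cX)\to\DHR^\partial(\fX)$ is braided by a graphical computation with projective bases localized at single boundary sites (equation~\eqref{eq:PartitionOfUnityGivesProjectiveBasis} and the argument following it), rather than by a general ``cone transport agrees with $Z(\cX)$-braiding'' assertion as you propose.

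Your alternative suggestion to handle the Drinfeld-center case via the bounded-spread isomorphism to the Levin--Wen flux sector (\S\ref{sec:BoundedSpreadIso}) is not used here: that technique concerns superselection sectors of cone von Neumann algebras, not boundary DHR bimodules, and it would in any case only cover the special case $\cB=Z(\cC)$ rather than the general statement.

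Finally, you should double-check that the equivalence $Z^\cB(\cX)\cong\DHR^\partial(\fX)$ is only claimed at the level of unitary tensor categories before the braiding is separately constructed and matched; the paper isolates this as a proposition followed by the braided upgrade in Theorem~\ref{thm:BoundaryDHRForBraidedEnrichedNet}, and your proposal should be structured the same way rather than asserting the braided equivalence in one stroke.
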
 

Our analysis is performed using the folding trick, leading to a new inclusion of von Neumann algebras which could reasonably be called the \emph{enriched symmetric enveloping algebra} of an `enriched subfactor' in the spirit of \cite{MR1302385}.
As a corollary, we also obtain a characterization of the \emph{bulk} DHR bimodules for Walker-Wang models.

\begin{thmalpha}
The $\cB$ Walker-Wang model is an LTO, and the DHR category $\DHR(\fB)$ of its boundary algebra $\fB$ is unitarily braided equivalent to $Z_2(\cB)$, the M\"uger center of $\cB$. 
\end{thmalpha}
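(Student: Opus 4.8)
The plan is to obtain this as a corollary of the preceding bulk--boundary theorem together with the structure theory built for its proof. The first assertion is essentially immediate: Walker--Wang Hamiltonians are frustration-free commuting-projector string-net models, so the bulk LTO axioms and the additional boundary-type axioms of~\cite{2307.12552} needed to produce a boundary algebra hold by the same local arguments as for Levin--Wen; this is recorded in Theorem~\ref{thm:LTQO-WW}. For the second assertion, the opening move is the explicit computation of the local algebras near a cutting hyperplane, which identifies the boundary algebra $\fB$ with the $2$D braided categorical net built from $\cB$ with strong tensor generator $X=\bigoplus_{i\in\Irr(\cB)}X_i$; the half-braidings defining that net are exactly the resolutions of strand crossings dictated by the $3$D Walker--Wang lattice once it is projected onto the boundary hyperplane.

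Next I would compute $\DHR(\fB)$, following the template by which the DHR category of a fusion spin chain for a UFC $\cC$ is identified with the Drinfeld center $Z(\cC)$ in~\cite{2307.12552}, while tracking how the $2$D braided structure of $\fB$ changes the outcome. Passing to cone algebras and their weak closures, one realizes $\fB$ as one side of a symmetric-enveloping-type inclusion built from $(\cB,X)$ --- an instance of the enriched symmetric-enveloping inclusions used to prove the bulk--boundary theorem --- so that DHR bimodules of $\fB$ are classified by bimodules over this inclusion that are transportable across cones. The key point is that in $\fB$ the braiding of $\cB$ has already been spent building the $2$D net: the half-braiding data needed to transport a cone-localized object $b\in\cB$ past the remainder of the net is available precisely when $b$ centralizes all of $\cB$, i.e.\ when $b\in Z_2(\cB)$, and then $\Hom_{\DHR}(b,b')=\Hom_{\cB}(b,b')$. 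Conversely, each transparent $b$ does transport consistently and so defines a DHR bimodule; this is cross-checked against the bulk--boundary theorem by embedding $\fB$ as the \emph{deep-bulk} subnet of a bulk--boundary boundary algebra $\fX$, whereupon restriction of DHR data induces a fully faithful functor $\DHR(\fB)\to\DHR^{\partial}(\fX)\cong Z^{\cB}(\cX)$ whose essential image is precisely the bulk point excitations $Z_2(\cB)$.

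Finally, the DHR category of a $2$D net carries the canonical cone-swap braiding, and on the objects coming from transparent $b\in Z_2(\cB)$ this unwinds to the (symmetric) braiding of $\cB$ restricted to $Z_2(\cB)$, with unitarity inherited from the dagger structure of $\cB$; hence $\DHR(\fB)\cong Z_2(\cB)$ as unitary braided categories. The main obstacle is the sharpness of the middle step --- ruling out DHR bimodules whose underlying object of $\cB$ is non-transparent. This is the operator-algebraic shadow of the Walker--Wang confinement phenomenon and genuinely uses that $\fB$ is a $2$D rather than a $1$D net: one must show that transporting a cone-localized non-transparent object around a loop of cones produces a nontrivial monodromy, obstructing the transportability demanded of a DHR bimodule. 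Making this precise, and matching the Hom-spaces and the braiding with $Z_2(\cB)$, is where the enriched symmetric-enveloping inclusion and the appendix on AF actions of unitary (braided) fusion categories carry the load.
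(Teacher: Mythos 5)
Your first paragraph correctly identifies that the LTO axioms are Theorem~\ref{thm:LTQO-WW} and that $\fB$ is the braided categorical net for $(\cB,A)$ with $A$ the sum of simples; and your physical intuition for the second part --- that the braiding of $\cB$ is ``used up'' building the 2D net, so only transparent objects $b\in Z_2(\cB)$ can be transported, i.e.\ Walker--Wang confinement --- is correct and is the actual phenomenon being captured. However, you do not identify the mechanism the paper uses to turn this intuition into a proof, and the route you propose has a genuine gap. First, passing to \emph{cone} algebras is a red herring here: Theorem~\ref{thm:SSSforBraidedCategoricalNet} shows the cone-based superselection category of the braided categorical net in its canonical state is $\Hilb(\cB)$ (all of $\cB$), not $Z_2(\cB)$, because cone sectors see the string-like excitations attached to infinity. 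The theorem at hand is about the $\rmC^*$-algebraic $\DHR$ bimodules localized in \emph{bounded rectangles}, which is a strictly smaller collection. Second, your proposed cross-check --- embedding $\fB$ as a ``deep-bulk subnet'' of a bulk--boundary boundary algebra $\fG$ and claiming that restriction gives a fully faithful functor $\DHR(\fB)\to\DHR^{\partial}(\fG)$ --- is not easy to implement: $\fB$ and $\fG$ are nets on different lattices with different quasi-local algebras, there is no canonical induction or restriction of Hilbert $\rmC^*$-bimodules that lands in the right category, and even if there were, identifying the essential image with $Z_2(\cB)$ inside $Z^{\cB}(\cX)$ would require a further argument.

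The paper avoids all of this with a folding trick applied to the Walker--Wang model \emph{along the cut $\cK$ itself}. Folding the $\cB$ Walker--Wang lattice across the hyperplane $\cK$ doubles every site away from $\cK$ and realizes $\cK$ as the topological boundary $\partial\cL$ of the folded lattice; the folded system is exactly the $\cB^{\rm rev}\boxtimes\cB$ Walker--Wang model with the topological boundary given by the $\cB^{\rm rev}\boxtimes\cB$-enriched UFC underlying $\cB$. Under this identification, a DHR bimodule of $\fB$ localized in a rectangle $I\subset\cK$ \emph{is} a boundary DHR bimodule of the folded boundary algebra localized in $I\subset\partial\cL$; no functor between different DHR categories needs to be constructed, because the algebras and localizing rectangles are literally the same after folding. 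Theorem~\ref{thm:BoundaryDHRForBraidedEnrichedNet} then gives $\DHR(\fB)\cong Z^{\cB^{\rm rev}\boxtimes\cB}(\cB)\cong Z_2(\cB)$, and the braiding comparison is immediate because the braiding on $\DHR(\fB)$ can be computed using rectangles on $\cK$. So your transparency heuristic is the correct physical picture, but to make it a proof you should use the fold to convert the bulk DHR computation into the boundary DHR computation that you already know how to do.
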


The physical interpretation of this result in the context of classifying topological order is as follows.
The main application of DHR bimodules for (physical) boundary algebras for LTOs in 2D is to give a model-independent way to recover the bulk topological order.
We again find that the boundary DHR bimodules for the $\cX$ topological boundary for the $\cB$ Walker-Wang model
recover the topological boundary excitations.
However, for the (3+1)D $\cB$ Walker-Wang model alone, the DHR bimodules fall short in this regard.
Indeed, the topological order of a (3+1)D system is a braided fusion 2-category, whereas the DHR bimodules merely form a braided (and in (3+1)D, \emph{symmetric}) fusion category. 
Our bulk DHR only recovers information concerning \emph{point defects} of the underlying topological quantum field theory, missing the line defects. 
However, if we truncate the boundary algebra to obtain a bulk-boundary system for a fusion spin chain, our result says the category of boundary DHR bimodules is $\cB$ as a braided fusion category. 
From this, we can take $Z(\Mod(\cB))$ to again describe the entire bulk topological order. 
However, this description is somewhat unsatisfactory, and it would be desirable to extend the notion of DHR bimodule to directly construct the braided fusion 2-category of all low energy excitations.

We recall the basic setting and main results of~\cite{MR4945955} in \S\ref{sec:Background}, before we introduce the notion of boundary LTO and the category of boundary DHR bimodules.
These notions are analyzed in \S\ref{sec:BoundaryLW} for the Levin-Wen model with gapped boundary.
Braided categorical nets are defined (and their superselection sectors identified) in~\S\ref{sec:BraidedCategoricalNets}. 
Finally, we consider the (3+1)D Walker-Wang model in~\S\ref{sec:WW} and identify its boundary algebras and corresponding (boundary) DHR bimodules.
Some useful techniques from subfactor theory are collected in Appendix~\ref{appendix:AFActionsOfUmFCs}.

\subsection*{Acknowledgements}
The authors would like to thank
Peter Huston,
Ryan Thorngren, 
and
Dominic Williamson
for helpful conversations.

Corey Jones was supported by NSF DMS-2247202.
David Penneys was supported by NSF DMS-2154389 and 2554723.
Additionally, this work was supported in part by NSF DMS-1928930 while David Penneys was in residence at the Mathematical Sciences Research Institute/SLMath in Berkeley, California, during Summer 2024,
and in part by grant no.~NSF PHY-2309135 to the Kavli Institute for Theoretical Physics (KITP) while Corey Jones and David Penneys attended the GENSYM25 program in Spring 2025.

%%%%%%%%%%%%%%%%%%%%%%%%%%%%%%%%%%%%%%%%%%%%%%%%%%%%%
%%%%%%%%%%%%%%%%%%%%%%%%%%%%%%%%%%%%%%%%%%%%%%%%%%%%%
%%%%%%%%%%%%%%%%%%%%%%%%%%%%%%%%%%%%%%%%%%%%%%%%%%%%%
\section{Background}
\label{sec:Background}

%%%%%%%%%%%%%%%%%%%%%%%%%%%%%%%%%%%%%%%%%%%%%%%%%%%%%
\subsection{Nets of algebras and local topological order axioms}
\label{sec:LTO}
We consider \emph{abstract quantum spin systems} given by a (local) net of algebras defined on some discrete set $\cL$.
In our applications, typically $\cL = \bbZ^d$.

\begin{defn}[Net of algebras]
Let $\cL$ be a discrete set and $\fA$ a unital $\rmC^*$-algebra.
A \emph{net of algebras} is a mapping which assigns a unital $\rmC^*$-subalgebra $\fA(\Lambda) \subset \fA$ to each finite subset\footnote{It is enough to only specify $\fA(\Lambda)$ for ``nice'' regions $\Lambda$, e.g. all rectangles.} $\Lambda \subset \cL$ such that
\begin{enumerate}
    \item $\fA(\emptyset) = \bbC 1_\fA$;
    \item (Isotony) $\fA(\Lambda) \subset \fA(\Delta)$ if $\Lambda \subset \Delta$;
    \item (Locality) $[\fA(\Lambda), \fA(\Delta)] = 0$ if $\Lambda \cap \Delta = \emptyset$;
    \item (Quasi-locality/density) $\bigcup_\Lambda \fA(\Lambda)$ is norm-dense in $\fA$.
\end{enumerate}
\end{defn}
We will also refer to $\fA$ as the \emph{quasi-local algebra}.
Physically, it describes those operators that can be approximated arbitrarily well (in norm) by strictly local operators.
For any (not-necessarily finite) $\Lambda \subset \cL$ we can define $\fA(\Lambda)$ as the norm-closure of $\bigcup_{\Delta \subset \Lambda} \fA(\Lambda)$, where the union is over all finite subsets contained in $\Lambda$, or in other words, $\fA(\Lambda) := \varinjlim_{\Delta \subset \Lambda} \fA(\Delta)$.
Note that $\fA(\cL) = \fA$.

For a net of algebras $\fA$ on a $\bbZ^d$ lattice, a net of projections is an assignment of a projection $p_\Lambda\in \fA(\Lambda)$ for every bounded rectangle $\Lambda$.
We furthermore demand that $\Lambda \mapsto p_\Lambda$ is ordered by reverse inclusion, i.e. $p_\Delta \leq p_\Lambda$ if $\Lambda \subset \Delta$.
In our examples there always is a natural translation symmetry on $\cL$, inducing a translation symmetry $\tau_x(\fA(\Lambda)) = \fA(\Lambda+x)$ on $\fA$, and we will assume that $p_\Lambda$ is translation invariant.

We say a rectangle $\Lambda$ is 
\begin{itemize}
\item
\emph{completely surrounded} by another rectangle $\Delta$ by \emph{surrounding constant} $s>0$, denoted $\Lambda \ll_s\Delta$, if $\Lambda\subset \Delta$, and the $\ell^\infty$ ball of radius $s$ about every vertex $v\in\Lambda$ (a rectangle centered at $v$ with side lengths $2s$) fits inside $\Delta$, 
and
\item
\emph{surrounded} by another rectangle $\Delta$ by \emph{surrounding constant} $s>0$, denoted  $\Lambda \Subset_s \Delta$
if $\Lambda\subset \Delta$, the intersection $\partial \Lambda\cap \partial \Delta$ occurs in a hyperplane of codimension one along a standard axis in our $\bbZ^d$ lattice, and for all other points $v\in \Delta \setminus \Lambda$, an $\ell^\infty$ ball of radius $s$ containing $v$ fits inside $\Delta$.
\end{itemize}
$$
\tikzmath{
\foreach \y in {0,.5,...,2.5}{
\foreach \x in {0,.5,...,2.5}{
\filldraw[red] (\x,\y) circle (.05cm);
}
}
\draw[thick, blue, rounded corners=5pt] (.8,.8) rectangle (1.7,1.7);
\draw[thick, cyan, rounded corners=5pt] (.2,.2) rectangle (2.3,2.3);
\node[cyan] at (.7,1.8) {$\Delta$};
\node[blue] at (1.25,1.25) {$\Lambda$};
\node at (1.25,-.5) {$\Lambda \ll_s \Delta$};
}
\qquad\qquad\qquad
\tikzmath{
\foreach \y in {0,.5,...,2.5}{
\foreach \x in {0,.5,...,2.5}{
\filldraw[red] (\x,\y) circle (.05cm);
}
}
\draw[thick, blue, rounded corners=5pt] (.8,.8) rectangle (2.2,1.7);
\draw[thick, cyan, rounded corners=5pt] (.2,.2) rectangle (2.3,2.3);
\node[cyan] at (.7,1.8) {$\Delta$};
\node[blue] at (1.25,1.25) {$\Lambda$};
\node at (1.25,-.5) {$\Lambda \Subset_s \Delta$};
}
$$
Given $\Lambda \Subset_s \Delta$, we define a unital $*$-algebra of `boundary error operators' by
$$
\fB(\Lambda \Subset_s \Delta)
:=
\set{xp_\Delta}{x\in p_{\Lambda} \fA(\Lambda)p_\Lambda\text{ and }xp_{\Delta'}=p_{\Delta'}x\text{ whenever }\Lambda \Subset_s \Delta'\text{ with }\partial \Lambda \cap \partial \Delta' = \partial \Lambda \cap \partial \Delta}.
$$
The local topological order axioms \cite{MR4945955} for a net of algebras $\fA$ equipped with a net of projections $(p_\Lambda)$ are as follows.
There is a `sufficient largeness' constant $r>0$ and a 
`surrounding constant' $s>0$ such that
\begin{enumerate}[label=\textup{(LTO\arabic*)}]
\item 
\label{LTO:QECC}
Whenever $\Lambda \ll_s \Delta$, $p_\Delta \fA(\Lambda)p_\Delta = \bbC p_\Delta$.
\item 
\label{LTO:Boundary}
Whenever $\Lambda \Subset_s \Delta$, $p_\Delta \fA(\Lambda)p_\Delta = \fB(\Lambda \Subset_s \Delta) p_\Delta$.
\item
\label{LTO:Surjective}
Whenever
$\Lambda_1\subset  \Lambda_2\Subset_s \Delta$ with 
$\partial \Lambda_1 \cap \partial\Delta= \partial \Lambda_2 \cap \partial \Delta$,
$\fB(\Lambda_1 \Subset_s \Delta)=\fB(\Lambda_2 \Subset_s \Delta)$.
\item
\label{LTO:Injective}
Whenever $\Lambda \Subset_s \Delta_1\subset \Delta_2$ with 
$\partial \Lambda \cap \partial\Delta_1= \partial \Lambda \cap \partial \Delta_2$,
the map $x\mapsto xp_{\Delta_2}$ is injective.
\end{enumerate}
Here all rectangles are assumed to have diameter at least $r$.
The constants $r$ and $s$ depend on the specific model under consideration.
One can define the LTO axioms for more general regions (and show that they hold for e.g. Levin--Wen models)~\cite{2605.10693}, but for simplicity we will restrict to rectangular regions.

Axioms \ref{LTO:Surjective} and \ref{LTO:Injective} ensure that the boundary error operators $\fB(\Lambda \Subset_s \Delta)$ only depend on the intersection $\partial\Lambda \cap \partial \Delta$ and not on the choices of $\Lambda,\Delta$ beyond them being sufficiently large with $\Lambda \Subset_s \Delta$.

We can think of our axioms in terms of error correcting codes.
The first axiom is related to the LTQO conditions~\cite{MR2742836,MR2842961} and implies that for tensor product nets of algebras, the projections associated to sufficiently large rectangles project onto a code space that corrects against errors localized \emph{away} from the boundary of the rectangle. 
The second axioms ensures all localized errors whose supports only meet the boundary in a small region are correctable to errors localized near where the support meets the boundary. 
The algebras $\fB(\Lambda \Subset_s \Delta)$ collect these local error operators into a net of algebras on a rectangle lattice one dimension smaller, as we recall below.
The other two axioms ensure that the resulting nets have nice properties (e.g., being local in a suitable sense), which we will explain below.
The operators in $\fB(\Lambda \Subset_s \Delta)$ are related to the `patch operators' in \cite{PhysRevB.107.155136,2310.05790} and our motivation to study them is the same, namely the holographic bulk-boundary duality.

\begin{ex}
Kitaev's Toric Code \cite{MR1611329} was shown to satisfy \ref{LTO:QECC}--\ref{LTO:Injective} in~\cite[\S 3]{MR4945955}.
\end{ex}

\begin{ex}
Kitaev's Quantum Double model \cite{MR1951039} was shown to satisfy \ref{LTO:QECC}--\ref{LTO:Injective} in~\cite[\S III]{MR4814524}.
\end{ex}

\begin{ex}
The unitary tensor category version of the Levin-Wen model from \cite{MR3204497,2305.14068} 
was shown to satisfy \ref{LTO:QECC}--\ref{LTO:Injective} in \cite[\S 4]{MR4945955}.
\end{ex}

%%%%%%%%%%%%%%%%%%%%%%%%%%%%%%%%%%%%%%%%%%%%%%%%%%%%%
\subsection{The canonical state and the net of boundary error operators}
\label{sec:CanonicalStateAndBoundaryAlgebra}

By \ref{LTO:QECC}, we get a canonical state $\psi$ on the quasi-local algebra $\fA$ determined by taking norm limits under the formula
$$
p_\Delta xp_\Delta = \psi(x) p_\Delta
\qquad\qquad\qquad
x\in \fA(\Lambda)\text{ and }\Lambda\ll_s\Delta.
$$
It was shown in \cite[Cor.~2.24]{MR4945955} that if $\Delta \mapsto p_\Delta$ is translation invariant, $\psi$ is the unique translation invariant state satisfying $\psi(p_\Lambda)=1$ for all $\Lambda$, and thus $\psi$ is pure.

Now given a choice of codimension 1 sub-lattice $\cK\subset \cL$ and a distinguished choice of half-space $\bbH$ of $\cL$ to one side of $\cK$, the  axioms \ref{LTO:QECC}--\ref{LTO:Injective} construct a net of algebras $\fB$ on $\cK$ called the \emph{boundary algebra} together with a quantum channel $\bbE: \fA({\bbH}) \to \fB$.

The main idea is that given $\Lambda\Subset_s \Delta\subset \bbH$ with $s$ sufficiently large
and $I:=\partial \Lambda\cap \partial\Delta\subset \cK$,
the algebra
$
\fB(\Lambda\Subset_s \Delta)
$
only depends on the boundary rectangle $I$ and sites nearby in $\Lambda$, not the specific choices of $\Lambda$ and $\Delta$.
Indeed, choosing $\Lambda_I\subset \Lambda$ to be the smallest sufficiently large rectangle with $\partial \Lambda_I\cap \cK= I$
and $\Delta_I\subset \Delta$ to be the smallest rectangle with $\Lambda_I\Subset_s \Delta_I$ with $\partial \Lambda_I\cap \partial\Delta_I = I$,
we define the \emph{boundary algebra}
$$
\fB(I):= \fB(\Lambda_I\Subset_s \Delta_I).
$$
It was shown in~\cite[Construction 2.28]{MR4945955} that $I\mapsto \fB(I)$ defines a net of algebras on $\cK$.

We now observe that we get a quantum channel $\bbE$ from 
$\fA(\bbH)$ to $\fB=\varinjlim \fB(I)$ 
by taking norm limits under the formula
$$
p_\Delta x p_\Delta = \bbE(x)p_\Delta
\qquad\qquad\qquad
x\in \fA(\Lambda),\,\,
\Lambda\Subset_s \Delta\subset \bbH,\text{ and }
\partial \Lambda\cap \partial\Delta\subset \cK.
$$
Moreover, for any $x\in \fA({\bbH})$ supported sufficiently far away from $\cK$, $\bbE(x)=\psi(x)$.
We thus get a \emph{state-based approach} to boundary theories, where any state $\phi$ on $\fB$ can be uniquely extended to $\fA(\bbH)$ by $\phi\circ \bbE$, and $\phi\circ \bbE$ will still restrict to $\psi$ sufficiently far away from $\cK$.

\begin{ex}
\label{ex:fusion_spin}
In \cite[Thm.~4.8 and Rem.~4.9]{MR4945955}, it was shown that the boundary algebra for the unitary tensor category version of the Levin-Wen model from \cite{MR3204497,2305.14068} 
is a \emph{fusion spin chain}.

In more detail, consider a unitary fusion category (UFC) $\cC$ equipped with a \emph{strong tensor generator} $X$, meaning that there exists an $n$ such that every simple object in $\cC$ is isomorphic to a summand of $X^{\otimes n}$.
For a finite interval $I \subset \bbZ$, we define the finite dimensional $\rmC^*$-algebra $\fF(I) := \End_\cC(X^{\otimes |I|})$.
If $I \subset J$ there is an obvious inclusion of the corresponding algebras, by tensoring with copies of $\id_X$.
Note that this is compatible with the `geometric' locality of $\bbZ$, and hence we get a net of algebras $\fF$ called the \emph{fusion spin chain} associated to $(\cC,X)$.

One can then build a Levin-Wen model on a $\bbZ^2$ lattice where the local Hilbert spaces are given by
$$
\cC(X\otimes X \to X\otimes X)
$$
equipped with the GNS inner product from the categorical trace.
The boundary algebra along a 1D hyperplane will then be bounded spread isomorphic to the fusion spin chain $\fF$ using a skein module argument as in \cite[Thm.~4.8]{MR4945955}.
Indeed, by semisimplicity, whenever $|I|\geq n$, $\fB(I)=\fF(I)$ by the following essential lemma.
\end{ex}

\begin{lem}[Semisimple commutation]
\label{lem:SemisimpleCommutation}
Suppose $\cC$ is a semisimple category and $X\in \cC$ such that $\cC(c\to X)\neq 0$ for all $c\in\Irr(\cC)$.
Then for any $a\in\cC$, the commutant of the left action of $\End_\cC(X)$ on $\cC(a\to X)$ by postcomposition is the right action of $\End_\cC(a)$ by pre-composition.
\end{lem}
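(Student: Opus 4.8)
The plan is to split the proof into a trivial inclusion coming from associativity, and a reverse inclusion that I will establish by diagonalizing everything using the semisimple decomposition of $\cC$, reducing the claim to the elementary double–commutant statement for matrix algebras. First I would note that precomposition by any $f\in\End_\cC(a)$ always commutes with postcomposition by any $g\in\End_\cC(X)$, since $(g\circ\xi)\circ f=g\circ(\xi\circ f)$ for all $\xi\in\cC(a\to X)$; this uses neither semisimplicity nor the hypothesis on $X$, and shows the right $\End_\cC(a)$-action lands inside the commutant of the left $\End_\cC(X)$-action. The entire content is the reverse inclusion.

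\textbf{Decomposition step.} Let $\Irr$ denote the set of isomorphism classes of simple objects (finite, since the hypothesis forces every simple to be a summand of the object $X$). Using semisimplicity and Schur's lemma, composition gives natural vector space isomorphisms
$$
\cC(a\to X)\;\cong\;\bigoplus_{c\in\Irr} B_c\otimes_\bbC A_c,
\qquad B_c:=\cC(c\to X),\quad A_c:=\cC(a\to c),
$$
together with compatible algebra isomorphisms $\End_\cC(X)\cong\bigoplus_c\End_\bbC(B_c)$ (via postcomposition) and $\End_\cC(a)\cong\bigoplus_c\End_\bbC(A_c)$ (via precomposition). Under these identifications the left action of $\End_\cC(X)$ on $\cC(a\to X)$ becomes $\bigoplus_c\End_\bbC(B_c)\otimes\id_{A_c}$, with the $c$-summand acting as zero on the summands for $c'\neq c$. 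Here is exactly where the hypothesis $\cC(c\to X)\neq 0$ enters: it says each $B_c\neq 0$, so the isotypic projections $1_c\in\End_\cC(X)$ (onto the $c$-isotypic part of $X$) are nonzero and act on $\cC(a\to X)$ as the projections onto $B_c\otimes A_c$.

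\textbf{Double commutant computation.} Now a $\bbC$-linear operator $T$ on $\cC(a\to X)$ commuting with the left action commutes in particular with each $1_c$, hence is block diagonal, $T=\bigoplus_c T_c$ with $T_c\in\End_\bbC(B_c\otimes A_c)$. Each $T_c$ commutes with $\End_\bbC(B_c)\otimes\id_{A_c}$ inside $\End_\bbC(B_c)\otimes\End_\bbC(A_c)$; since $\End_\bbC(B_c)$ is a full matrix algebra with trivial center, its commutant there is $\id_{B_c}\otimes\End_\bbC(A_c)$, so $T_c=\id_{B_c}\otimes S_c$. Thus the commutant of the left action is precisely $\bigoplus_c\id_{B_c}\otimes\End_\bbC(A_c)$. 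Finally I would check that precomposition by $\End_\cC(a)$ surjects onto this: under $\End_\cC(a)\cong\bigoplus_c\End_\bbC(A_c)$, precomposition sends the $c$-block onto $\id_{B_c}\otimes\End_\bbC(A_c)$ — every rank-one operator on $A_c$ is realized by precomposing with a suitable $\iota\circ\pi$ where $\pi\colon a\to c$, $\iota\colon c\to a$ — and the blocks are independent. Combined with the easy inclusion this yields the asserted equality.

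\textbf{Expected obstacle.} There is no genuine obstacle; the work is entirely in the bookkeeping. The delicate points are: verifying that the composition isomorphism $\bigoplus_c B_c\otimes A_c\xrightarrow{\sim}\cC(a\to X)$ really does intertwine the postcomposition action with $\bigoplus_c\End_\bbC(B_c)\otimes\id$ and the precomposition action with $\bigoplus_c\id\otimes\End_\bbC(A_c)$ (keeping straight the op/non-op convention for the right action and the perfect pairing $\cC(c\to X)\times\cC(X\to c)\to\bbC$ used to identify $\End_\cC(X)$ with a sum of matrix algebras); and recording clearly that the hypothesis $\cC(c\to X)\neq0$ is exactly what makes the right $\End_\cC(a)$-action faithful and surjective onto every block (without it the statement fails, e.g.\ when $a$ has a simple summand not occurring in $X$).
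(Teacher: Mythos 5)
Your proposal is correct and follows essentially the same route as the paper's proof: decompose $\cC(a\to X)\cong\bigoplus_{c\in\Irr(\cC)}\cC(a\to c)\otimes\cC(c\to X)$ by semisimplicity, observe that each $\cC(c\to X)$ is an irreducible $\End_\cC(X)$-module with $\cC(a\to c)$ as multiplicity space, and compute the commutant blockwise. The paper simply cites the multiplicity-space characterization of the commutant in one line where you unwind it via the matrix-algebra double commutant; your version also makes explicit where the hypothesis $\cC(c\to X)\neq 0$ is used, which the paper leaves implicit.
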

\begin{proof}
By semisimplicity, we can write
$$
\cC(a\to X) = \bigoplus_{c\in\Irr(\cC)} \cC(a\to c)\otimes \cC(c\to X).
$$
Observe that $\cC(c\to X)$ is an irreducible $\End_\cC(X)$-module for each $c\in\Irr(\cC)$, and $\cC(a\to c)$ then acts as a multiplicity space.
We then see that 
\begin{align*}
\End_{\End_\cC(X)}(\cC(a\to X))
&=
\bigoplus_{c\in\Irr(\cC)} \End(\cC(a\to c))
=
\bigoplus_{c\in\Irr(\cC)} \cC(a\to c)\otimes \underbrace{\cC(a\to c)^\vee}_{\cong \cC(c\to a)}
\\&=
\bigoplus_{c\in\Irr(\cC)} \cC(a\to c)\otimes \cC(c\to a)
= 
\End_\cC(a).
\qedhere
\end{align*}
\end{proof}

%%%%%%%%%%%%%%%%%%%%%%%%%%%%%%%%%%%%%%%%%%%%%%%%%%%%%
\subsection{DHR bimodules}
DHR bimodules where introduced in~\cite{MR4814692} as an analog of the DHR superselection theory~\cite{MR1405610} for abstract spin systems.
In~\cite{MR4945955} it was shown that the DHR bimodules of a `boundary net' can be thought of as being holographically dual to the bulk topological order in Levin-Wen models or the toric code.
We recall the main ideas here, giving a fairly complete sketch of the main argument.

In this section, $A,B$ will denote unital $\rmC^*$-algebras.

\begin{defn}
A right Hilbert $\rmC^*$ $B$-module $Y_B$ is a right $B$-module equipped with a right $B$-valued inner product.
Moreover, we require that $Y$ is complete with respect to the norm $\|\xi\|:=\|\langle \xi|\xi\rangle_B\|_B^{1/2}$.

A bounded operator $T:Y_B\to Y_B$ is called \emph{adjointable} if there is another operator $T^\dag:Y_B\to Y_B$ called the \emph{adjoint} such that
$\langle \eta|T\xi\rangle_B = \langle T^\dag \eta|\xi\rangle_B$ for all $\eta,\xi\in Y$.
Adjoints are unique when they exist, and they are automatically bounded by the Closed Graph Theorem.
We denote the space of adjointable operators by $\End(Y_B)$, which is again a $\rmC^*$-algebra.

A Hilbert $\rmC^*$ $A-B$ bimodule (sometimes called a \emph{correspondence}) is a right Hilbert module $Y_B$ together with a left $A$-action by adjointable operators.
We denote the category of $A-B$ bimodules by $\Bim(A,B)$.
We refer the reader to \cite[\S8]{MR2111973} (see also \cite{MR355613,MR0367670} and \cite[\S2]{MR4419534}) for the definition of the Connes fusion relative tensor product for bimodules, which we will denote by $\boxtimes$, and further details.
This makes $\Bim(A)$ into a $\rm{C}^*$ tensor category.
\end{defn}

\begin{defn}
Suppose $Y_B$ is a right Hilbert $B$-module.
Each $\eta\in Y$ gives a right $B$-linear creation operator $L_\eta: B\to Y$ given by $b\mapsto \eta b$, which is adjointable with $L_\eta^\dag\xi=\langle \eta|\xi\rangle_B$.

A right Hilbert $B$-module $Y_B$ is called \emph{finitely generated projective (fgp)} if there is a finite set $\{\beta_i\}_{i=1}^n\subset Y$ called a $Y_B$-\emph{basis} such that $\sum_iL_{\beta_i}L_{\beta_i}^\dag=1$.
Such a $Y_B$-basis gives an explicit isomorphism $Y_B \cong pB^n_B$ where
$p\in M_n(B)$ is the orthogonal projection whose entries are given by
$$
p_{ij} :=\langle \beta_i|\beta_j\rangle_B.
$$
Moreover, if ${}_AY_B$ carries a left $A$-action by adjointable operators, the corresponding left $A$-action on $B^n$ given by
$$
(a\cdot (b_j))_i
:= 
\sum_j \langle \beta_i | a\beta_j\rangle_B b_j
$$
promotes the above right $B$-module isomorphism to an $A-B$ bimodule isomorphism.
\end{defn}

\begin{defn}
Suppose $\fA$ is a net of algebras on a lattice $\cL$ and ${}_\fA Y_\fA\in \Bim(\fA)$.
We say that a vector $\xi\in Y$ is \emph{localized} in a rectangle $\Lambda$ if 
$$
a\xi = \xi a
\qquad\qquad\qquad
\forall\, a\in \fA(\Delta),
\quad
\forall\, \Delta\subset\Lambda^c.
$$
\end{defn}

Recall that if $B\subset A$ is a unital inclusion of $\rmC^*$-algebras, the \emph{centralizer} of $B\subset A$ is 
$$
Z_A(B):=\set{a\in A}{ab=ba\text{ for all }b\in B}.
$$

\begin{lem}
\label{lem:IPLocalizedInCentralizer}
If $\eta,\xi\in Y\in \Bim(\fA)$ are both localized in $\Lambda$ and $a\in \fA(\Lambda)$, then
$\langle \eta|a\xi\rangle_{\fA} \in Z_{\fA}(\fA(\Delta))$ for all $\Delta\subset \Lambda^c$.
\end{lem}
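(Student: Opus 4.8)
The plan is to prove the statement by a short, purely formal computation with the $\fA$-valued inner product, invoking only three algebraic facts together with the locality and localization hypotheses. The facts I need are: (i) right $\fA$-linearity, $\langle \eta | \xi b\rangle_{\fA} = \langle \eta|\xi\rangle_{\fA}\, b$; (ii) conjugate-symmetry of the inner product, $\langle \eta|\xi\rangle_{\fA}^* = \langle \xi|\eta\rangle_{\fA}$, which combined with (i) yields $\langle \eta b|\xi\rangle_{\fA} = b^* \langle \eta|\xi\rangle_{\fA}$; and (iii) the fact that the left $\fA$-action on $Y$ is a $*$-representation by adjointable operators, so that $\langle \eta| a\xi\rangle_{\fA} = \langle a^*\eta|\xi\rangle_{\fA}$ for all $a\in\fA$. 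I also use that $\fA$ is a net of algebras (so $[\fA(\Lambda),\fA(\Delta)]=0$ whenever $\Lambda\cap\Delta=\emptyset$), together with the hypothesis that $\eta$ and $\xi$ are localized in $\Lambda$.

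First I would fix $\Delta\subset\Lambda^c$ and an arbitrary $c\in\fA(\Delta)$; the goal reduces to showing $\langle\eta|a\xi\rangle_{\fA}\,c = c\,\langle\eta|a\xi\rangle_{\fA}$. Starting from the left-hand side: push $c$ inside using right linearity to get $\langle\eta|a\xi c\rangle_{\fA}$; apply localization of $\xi$ in $\Lambda$ — legitimate because $c\in\fA(\Delta)$ with $\Delta\subset\Lambda^c$ — to replace $\xi c$ by $c\xi$, giving $\langle\eta|a c\xi\rangle_{\fA}$; use locality of $\fA$ — here $\Lambda\cap\Delta=\emptyset$ since $\Delta\subset\Lambda^c$, and $a\in\fA(\Lambda)$, $c\in\fA(\Delta)$ — to commute $a$ past $c$, obtaining $\langle\eta|c a\xi\rangle_{\fA}$; move $c$ onto the bra via the adjoint relation (iii) for the left action, giving $\langle c^*\eta| a\xi\rangle_{\fA}$; apply localization of $\eta$ in $\Lambda$ with the operator $c^*\in\fA(\Delta)$ to replace $c^*\eta$ by $\eta c^*$; and finally pull $c^*$ out of the bra using (ii), landing on $c\,\langle\eta|a\xi\rangle_{\fA}$. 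Since $c\in\fA(\Delta)$ was arbitrary this gives $\langle\eta|a\xi\rangle_{\fA}\in Z_{\fA}(\fA(\Delta))$, and since $\Delta\subset\Lambda^c$ was arbitrary the lemma follows. In the final writeup I would record the whole thing as a single displayed chain of equalities, annotating each step with the property used.

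I do not anticipate a genuine obstacle: the argument is a one-line computation once the conventions are pinned down. The only care required is bookkeeping — checking that each localization step is applied with an operator genuinely supported in the complement of $\Lambda$ (which holds since $\Delta\subset\Lambda^c$), keeping straight that the $\fA$-valued inner product is right-linear in the ket and conjugate-linear in the bra, and using the correct adjoint relation $\langle\eta|a\xi\rangle_{\fA}=\langle a^*\eta|\xi\rangle_{\fA}$ coming from the left action being by adjointable operators. No completeness, semisimplicity, or finite-dimensionality input is needed; the statement is valid for an arbitrary $\fA$-$\fA$ correspondence.
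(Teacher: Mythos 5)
Your proof is correct and is essentially the paper's argument run in the reverse direction: the paper starts from $x\langle\eta|a\xi\rangle_{\fA}$ and unwinds to $\langle\eta|a\xi\rangle_{\fA}x$, while you start from $\langle\eta|a\xi\rangle_{\fA}c$ and unwind to $c\langle\eta|a\xi\rangle_{\fA}$, using the identical ingredients (right-linearity, the conjugate relation $\langle\eta b|\cdot\rangle_{\fA}=b^*\langle\eta|\cdot\rangle_{\fA}$, adjointability of the left action, localization of $\eta,\xi$, and locality of the net). No gap.
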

\begin{proof}
For all $x\in \fA(\Delta)$,
$
x\langle \eta|a\xi\rangle_{\fA}
=
\langle \eta x^*|a\xi\rangle_{\fA}
=
\langle x^*\eta|a\xi\rangle_{\fA}
=
\langle \eta|xa\xi\rangle_{\fA}
=
\langle \eta|a\xi x\rangle_{\fA}
=
\langle \eta|a\xi \rangle_{\fA}x
$.
\end{proof}

If we know our net of algebras satisfies algebraic Haag duality (see the next definition), the above lemma allows us to conclude that $\langle \eta|\xi\rangle_{\fA}$ actually lies in the algebra $\fA(\Lambda)$.

\begin{defn}[\cite{MR4814692}]
A net of algebras $\fA$ satisfies \emph{algebraic Haag duality}
if for every rectangle $\Lambda$, 
$Z_{\fA}(\fA(\Lambda^c))=\fA(\Lambda)$, where $\fA(\Lambda^c)\subset \fA$ is the $\rmC^*$-algebra generated by all the $\fA(\Delta)$ such that $\Delta\subset \Lambda^c$. 
\end{defn}

\begin{defn}[\cite{MR4814692}]
\label{def:LocalizableBimodule}
Suppose $\fA$ is a net of algebras on our lattice $\cL$
and ${}_\fA Y_\fA\in \Bim(\fA)$ is fgp.
For a rectangle $\Lambda\subset \cL$, we say that ${}_\fA Y_\fA$ is \emph{localizable in $\Lambda$}
if there is a $Y_\fA$ basis localized in $\Lambda$.

We say ${}_\fA Y_\fA$ is \emph{localizable} or a \emph{DHR bimodule} if for any $\Lambda$ sufficiently large, ${}_\fA Y_\fA$ is localizable in $\Lambda$.
Observe that DHR bimodules are closed under composition of bimodules, and thus form a full $\rmC^*$ tensor subcategory of $\Bim(\fA)$, denoted $\DHR(\fA)$.
\end{defn}

One uses this localizability of DHR bimodules to define a unitary braiding for $\DHR(\fA)$ similar to the strategy in \cite{MR4753059}.

\begin{defn}
Suppose that $\fA$ satisfies algebraic Haag duality.
Suppose $Y,Z\in \DHR(\fA)$.
Choose projective bases $\{b_{i}\}, \{c_{j}\}$ for $Y,Z$ respectively which are localized in sufficiently large rectangles $\Lambda,\Delta$ which are sufficiently far apart in $\cL$ (see \cite{MR4814692} for the precise details). 
We define 
$$
u^{\Lambda,\Delta}_{Y,Z}: Y\boxtimes_\fA Z\rightarrow Z\boxtimes_\fA Y
\qquad\qquad\text{by}\qquad\qquad
\sum_{i,j}b_{i}\boxtimes c_{j}a_{ij}
\mapsto
\sum_{i,j}c_{j}\boxtimes b_{i} a_{ij}.
$$
One verifies this formula is a well-defined unitary $\fA-\fA$ bimodule map which is independent of the choices of $\Lambda,\Delta$ (provided they are sufficiently large and separated), and also independent of the choices of projective bases.
(When our lattice $\cL$ is 1-dimensional, we must require that our two rectangles are always linearly ordered in the same way, or we may accidentally get the reverse braiding.)
\end{defn}

To calculate $\DHR(\fA)$ and its braiding in practice, we may use the following equivalent characterization of a DHR bimodule (for similar analysis, see \cite{2504.06094}).

\begin{lem}[Local triviality of DHR bimodules]
\label{lem:DHRBimoduleCharacterization}
A bimodule ${}_\fA Y_\fA\in \Bim(\fA)$ lies in $\DHR(\fA)$ if and only if for any sufficiently large $\Lambda$, the restriction
$
{}_{\fA(\Lambda^c)} Y_\fA
$
is isomorphic to a summand of a finite direct sum of copies of ${}_{\fA(\Lambda^c)}\fA_\fA$.
\end{lem}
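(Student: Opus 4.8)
The plan is to push everything through the dictionary between finitely generated projective bimodules and projections recorded above: a $Y_\fA$-basis $\{\beta_i\}_{i=1}^n$ produces the self-adjoint projection $p=(p_{ij})\in M_n(\fA)$ with $p_{ij}=\langle\beta_i|\beta_j\rangle_\fA$ together with a bimodule isomorphism $Y\cong p\fA^n$, and conversely any self-adjoint projection $q\in M_n(\fA)$ yields the right-module basis $\{qe_i\}$ of standard generators of $q\fA^n$. The entire content is that ``$\{\beta_i\}$ is localized in $\Lambda$'' corresponds exactly to ``$p$ has entries in the centralizer $Z_\fA(\fA(\Lambda^c))$'', which is precisely the condition making $p$ an endomorphism of $\fA^n$ \emph{as an $\fA(\Lambda^c)$--$\fA$ bimodule} with $\fA(\Lambda^c)$ acting diagonally, and hence exhibiting $Y$ as a bimodule summand of ${}_{\fA(\Lambda^c)}\fA^n_\fA$.

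For the forward direction, fix a sufficiently large $\Lambda$ and a $Y_\fA$-basis $\{\beta_i\}$ localized in $\Lambda$. Lemma~\ref{lem:IPLocalizedInCentralizer} (applied with $a=1$ to each pair $\beta_i,\beta_j$) gives $p_{ij}\in Z_\fA(\fA(\Delta))$ for every $\Delta\subset\Lambda^c$, hence $p\in M_n(Z_\fA(\fA(\Lambda^c)))$. Thus $p$ commutes with the diagonal left $\fA(\Lambda^c)$-action on $\fA^n$ and is a self-adjoint idempotent in the $\rmC^*$-algebra $\End({}_{\fA(\Lambda^c)}\fA^n_\fA)$. I would then record the one-line check that, for $a\in\fA(\Lambda^c)$ and $v\in p\fA^n$, the prescribed left action $(a\cdot v)_i=\sum_j\langle\beta_i|a\beta_j\rangle_\fA v_j$ collapses (using $a\beta_j=\beta_j a$ and $pv=v$) to diagonal left multiplication by $a$; this identifies ${}_{\fA(\Lambda^c)}Y_\fA\cong{}_{\fA(\Lambda^c)}(p\fA^n)_\fA$ as a direct summand of ${}_{\fA(\Lambda^c)}\fA^n_\fA$. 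As $Y$ is localizable in every sufficiently large rectangle, this holds for all such $\Lambda$.

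For the converse, fix a sufficiently large $\Lambda$ with ${}_{\fA(\Lambda^c)}Y_\fA$ a direct summand of ${}_{\fA(\Lambda^c)}\fA^n_\fA$ (in particular $Y_\fA$ is fgp). The summand is cut out by an idempotent in $\End({}_{\fA(\Lambda^c)}\fA^n_\fA)$, which I would identify with $M_n(Z_\fA(\fA(\Lambda^c)))$ via left multiplication by matrices (the diagonal $\fA(\Lambda^c)$-action forcing entries into the centralizer). Replacing this idempotent by a similar self-adjoint projection $q$ within the same $\rmC^*$-algebra (the similarity implemented by a bimodule automorphism of $\fA^n$) gives $q\in M_n(Z_\fA(\fA(\Lambda^c)))$ with ${}_{\fA(\Lambda^c)}Y_\fA\cong{}_{\fA(\Lambda^c)}(q\fA^n)_\fA$, where $q\fA^n$ carries the restriction of the diagonal action. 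The standard generators $\beta_i:=qe_i$ form a $(q\fA^n)_\fA$-basis, and for $\Delta\subset\Lambda^c$, $a\in\fA(\Delta)$ one computes $a\cdot\beta_i=q(ae_i)=q(e_ia)=\beta_i a$, since $q$ commutes with the diagonal action and left multiplication by $a$ on a standard generator equals right multiplication by $a$. Transporting this basis along the isomorphism to $Y$ shows $Y$ is localizable in $\Lambda$; letting $\Lambda$ range over all sufficiently large rectangles gives $Y\in\DHR(\fA)$.

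The genuinely delicate point --- rather than the two ``one-line'' verifications that the prescribed left action restricts to diagonal multiplication --- is the bookkeeping in the converse: upgrading an abstract bimodule summand to an honest self-adjoint projection \emph{commuting with $\fA(\Lambda^c)$}, and confirming that the induced left action on its image is the diagonal action restricted, rather than some unrelated extension. I would also flag that, in contrast with the construction of the DHR braiding, this characterization needs no algebraic Haag duality: the projection lands in $M_n(Z_\fA(\fA(\Lambda^c)))$ automatically, and localization of the basis only uses commutation with $\fA(\Delta)$ for $\Delta\subset\Lambda^c$.
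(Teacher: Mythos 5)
Your argument is correct and is essentially the paper's own proof: both directions pass through the dictionary $Y_\fA \cong p\fA^n_\fA$ with $p_{ij}=\langle\beta_i|\beta_j\rangle_\fA$, observe that localization of the basis is equivalent to $p\in M_n\bigl(Z_\fA(\fA(\Lambda^c))\bigr)$, and read off that this is precisely the condition for $p$ to lie in $\End\bigl({}_{\fA(\Lambda^c)}\fA^n_\fA\bigr)$. One small correction: the step you flag as delicate in the converse --- replacing an abstract idempotent by a similar self-adjoint projection --- is a non-issue here, since in the $\rmC^*$-category $\Bim(\fA)$ (morphisms adjointable) a bimodule summand of ${}_{\fA(\Lambda^c)}\fA^n_\fA$ is automatically the image of a self-adjoint projection commuting with the left $\fA(\Lambda^c)$-action, so no similarity argument is needed; the genuine content is exactly the two ``one-line'' diagonal-action checks you carry out.
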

\begin{proof}
Suppose ${}_\fA Y_\fA\in\DHR(\fA)$ and $\{\beta_i\}_{i=1}^n$ is a $Y_\fA$-basis localized in $\Lambda$, i.e., $x\beta_i=\beta_ix$ for all $x\in \fA(\Lambda^c)$.
This condition is identical to saying that the creation operator $L_{\beta_i}: \fA_\fA\to Y_\fA$ is left $\fA(\Lambda^c)$-linear.
Its adjoint $L_{\beta_i}^\dag:Y_\fA\to \fA_\fA$ is also left 
$\fA(\Lambda^c)$-linear, and thus the composite 
$$
L^\dag_{\beta_i}L_{\beta_j} 
\in 
\fA(\Lambda^c)'\cap \fA
:=
\set{a\in \fA}{ax=xa\text{ for all }x\in \fA(\Lambda^c)}.
$$
This implies that the projector $p\in \End(\fA^{\oplus n}_\fA)$ given by $p_{ij}=\langle \beta_i|\beta_j\rangle_\fA$ whose image is isomorphic to $Y_\fA$
satisfies 
$$
p\in \fA(\Lambda^c)'\cap M_n(\fA)
=
\End\left({}_{\fA(\Lambda^c)}\fA^{\oplus n}_\fA\right).
$$
Thus the restriction
$
{}_{\fA(\Lambda^c)} Y_\fA
$
is isomorphic to a summand of a finite direct sum of copies of ${}_{\fA(\Lambda^c)}\fA_\fA$.

Conversely, suppose that for any sufficiently large $\Lambda$, ${}_{\fA(\Lambda^c)}Y_\fA$ is isomorphic to a summand of a finite direct sum of copies of ${}_{\fA(\Lambda^c)} \fA_\fA$.
This exactly says we have a projector $p\in \End(\fA^{\oplus n}_\fA)$ which commutes with the left $\fA(\Lambda^c)$-action, and we may identify $Y\cong p\fA^{\oplus n}$.
Set $\beta_i:= pe_i$ where $e_i$ has $i$-th entry 1 and all other entries zero.
Then clearly $\{\beta_i\}$ is a $p\fA^{\oplus n}_\fA$-basis such that $x\beta_i=\beta_i x$ for all $x\in \fA(\Lambda^c)$.
Thus $Y$ is localizable in every sufficiently large $\Lambda$.
\end{proof}

We may use the characterization of DHR bimodules from Lemma~\ref{lem:DHRBimoduleCharacterization} to compute $\DHR(\fA)$ if we understand how such bimodules restrict to $\fA(\Lambda^c)$-bimodules.
The following lemma which relies on dualizability is useful for exactly this type of computation.

\begin{lem}[2 out of 3]
\label{lem:2outof3}
Suppose $\cE\subset \Bim(A)$ is a fully faithfully embedded and replete unitary multifusion subcategory. 
If $X,Z\in \Bim(A)$ such that
two out of three of 
$X,Z,X\boxtimes_AZ$
lie in $\cE$, then the third also lies in $\cE$
\end{lem}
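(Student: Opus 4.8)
The plan is to split into the three symmetric situations and, in each of the two nontrivial ones, exhibit the ``missing'' bimodule as an orthogonal direct summand of a Connes-fusion product of bimodules already known to lie in $\cE$. The engine behind this is that every object of a unitary multifusion category is dualizable, so $X$ and $Z$ have duals, and the zig-zag identities let us split off a copy of the unit from $\bar X\boxtimes_A X$. (We assume $X$ and $Z$ are nonzero, as in all our applications.)

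First I would record the structural consequences of the hypotheses. Because $\cE\hookrightarrow\Bim(A)$ is a fully faithful monoidal functor, it preserves and reflects duality data: a dual $\bar X\in\cE$ of $X\in\cE$, together with its evaluation and coevaluation satisfying the zig-zag identities inside $\cE$, witnesses $\bar X$ as a dual of $X$ in $\Bim(A)$ as well. Since $\cE$ is rigid (being multifusion), every object of $\cE$ has a dual in $\cE$; combined with repleteness, this means that any dual of an object of $\cE$, computed anywhere in $\Bim(A)$, again lies in $\cE$. Next, a unitary multifusion category is semisimple, hence idempotent complete, so together with full faithfulness and repleteness one gets that $\cE$ is closed under orthogonal direct summands taken in $\Bim(A)$: an idempotent on $W\in\cE$ is a morphism of $\cE$, splits there, and its range is isomorphic in $\Bim(A)$ to an object of $\cE$, hence lies in $\cE$. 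Finally, $\cE$ contains the unit ${}_AA_A$ and is closed under $\boxtimes_A$.

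With these in hand, the case $X,Z\in\cE$ is immediate. If instead $X,\,X\boxtimes_A Z\in\cE$, fix a dual $\bar X\in\cE$ of $X$ (of the appropriate handedness, say with $\ev\colon\bar X\boxtimes_A X\to {}_AA_A$). Since $X\neq 0$ and $\cE$ is semisimple, $\Hom_\cE({}_AA_A,\bar X\boxtimes_A X)\cong\Hom_\cE(X,X)\ni\id_X$ is nonzero, so ${}_AA_A$ is a subobject, hence an orthogonal direct summand, of $\bar X\boxtimes_A X$ in $\cE$, and the splitting maps --- being morphisms of $\cE$ --- realize this splitting in $\Bim(A)$ as well. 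Tensoring on the right with $Z$, we find that $Z\cong {}_AA_A\boxtimes_A Z$ is an orthogonal direct summand of $(\bar X\boxtimes_A X)\boxtimes_A Z\cong\bar X\boxtimes_A(X\boxtimes_A Z)$, which lies in $\cE$ because $\bar X$ and $X\boxtimes_A Z$ do; hence $Z\in\cE$ by closure under summands. The remaining case $Z,\,X\boxtimes_A Z\in\cE$ is handled symmetrically: fix a dual $\bar Z\in\cE$ of $Z$, note ${}_AA_A$ is a summand of $Z\boxtimes_A\bar Z$, and observe that $X\cong X\boxtimes_A{}_AA_A$ is then a summand of $X\boxtimes_A(Z\boxtimes_A\bar Z)\cong(X\boxtimes_A Z)\boxtimes_A\bar Z\in\cE$.

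The part that requires genuine care --- and which I expect to be the only real obstacle --- is the ``absoluteness'' bookkeeping collected in the second paragraph: that dualizing an object of $\cE$ can be done inside $\Bim(A)$ without leaving $\cE$, that the unit of $\Bim(A)$ genuinely splits off $\bar X\boxtimes_A X$ in $\Bim(A)$ and not just abstractly, and that the direct summands produced in the argument land back in $\cE$. All three follow formally from $\cE\hookrightarrow\Bim(A)$ being fully faithful and replete together with the semisimplicity and rigidity of a unitary multifusion category, so no analysis is involved; the subtlety is purely about where idempotents and duality morphisms live.
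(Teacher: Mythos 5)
Your proof is correct and follows essentially the same route as the paper's: dualize $X$ inside $\cE$, note ${}_AA_A$ is an (orthogonal) direct summand of $\bar X\boxtimes_A X$, tensor with $Z$ to realize $Z$ as a summand of $\bar X\boxtimes_A(X\boxtimes_A Z)\in\cE$, and conclude by idempotent completeness and repleteness. The additional bookkeeping you supply (that duality data and idempotent splittings from $\cE$ remain valid in $\Bim(A)$, and your explicit flag that $X\neq 0$ is needed — a hypothesis the paper leaves implicit) is a worthwhile clarification rather than a change of method.
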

\begin{proof}
Clearly $X,Z\in\cE$ implies $X\boxtimes_A Z\in\cE$.
We prove $X,X\boxtimes_AZ\in\cE$ implies $Z\in\cE$, and the remaining case is similar.
Observe that ${}_AA_A$ is always a sub-bimodule of $X^\vee \boxtimes_A X\in\cE$.
Hence $Z$ is a sub-bimodule of
$X^\vee \boxtimes_A X\boxtimes_A Z\in\cE$.
Since $\cE$ is idempotent complete, $Z\in \cE$.
\end{proof}

\begin{ex}[{\cite[Thm.~C]{MR4814692}}]
\label{ex:fusionSpinChainDHR}
Consider the fusion spin chain $\fF$ build from a UFC with a strong tensor generator $X\in\cC$ (i.e., there is an $n\in\bbN$ such that each simple $c\in\Irr(\cC)$ is a subobject of of $X^{\otimes n}$).
It is straightforward to construct a DHR bimodule $Y^z$ for every $z\in Z(\cC)$ via an inductive limit construction.
The main point is that we pick an interval $I$ long enough so that every simple of $\cC$ appears as a summand of $X^{\otimes |I|}$,
and then subfactor techniques 
for the inclusion of AF $\rmC^*$-algebras $\fF(I^c)\subset \fF$
can be used to construct DHR bimodules.
We will not explain this here, as it is entirely analogous to the module category version in \S\ref{sec:BoundaryDHRforBoundaryLW} below; instead we refer the reader to \cite[\S4]{MR4814692}.
It was then shown that $z\mapsto Y^z$ is a fully faithful braided tensor functor using the Ocneanu Compactness Theorem \cite{MR1473221} (see also \cite{MR4916103} and \cite{MR4923706}).
We give a generalization of this construction and the proof that the functor is fully faithful in Appendix~\ref{appendix:AFActionsOfUmFCs} below.

To see that this functor is essentially surjective, for an arbitrary $Y\in \DHR(\fF)$,
${}_{\fF(I^c)}Y_\fF$ is a summand of a direct sum of the trivial bimodule ${}_{\fF(I^c)}\fF_\fF$ by Lemma \ref{lem:DHRBimoduleCharacterization}.
We then restrict the right action to see that
${}_{\fF(I^c)}Y_{\fF(I^c)}$ is a summand of a direct sum of the trivial bimodule ${}_{\fF(I^c)}\fF_{\fF(I^c)}$.
These trivial bimodules lie in the essential image of $\cC\boxtimes \cC^{\rm mp}$ in $\Bim(\fF(I^c))$ by using the \emph{Q-system realization} technique from \cite{MR4419534}.
One then induces back to $\Bim(\fF)$ using Lemma \ref{lem:2outof3} to conclude that ${}_\fF Y_\fF$ is in the essential image of $Z(\cC)$.
Here, we have used the fact that one can embed the unitary multifusion category
$$
\cE = 
\begin{pmatrix}
\cC\boxtimes \cC^{\rm mp} & \cC
\\
\cC^{\rm op} & Z(\cC)
\end{pmatrix}
$$
fully faithfully into $\Bim(\fF(I^c)\oplus \fF)$.
Thus Lemma \ref{lem:2outof3} is applied twice with $A=\fF(I^c)\oplus \fF$.
\end{ex}

%%%%%%%%%%%%%%%%%%%%%%%%%%%%%%%%%%%%%%%%%%%%%%%%%%%%%
\subsection{Boundary LTO axioms for a net of algebras on a lattice with boundary}

We now consider a net of algebras $\fA$ on a lattice with boundary.
We consider the lattice $\cL=\bbZ^d\times \bbN$ 
where $\partial\cL:=\bbZ^d\times \{0\}$ is considered as a distinguished 1D boundary.
We write $\cL^\circ:=\cL\setminus \partial\cL = \bbZ^d\times \{1,2,3,4,\dots\}$.

As above, our net of algebras $\fA$ on our lattice with boundary is equipped with a net of projections $\Lambda\mapsto p_\Lambda$ ordered by reverse inclusion.
For rectangles suitably far away from $\partial \cL$, we require the above \ref{LTO:QECC}--\ref{LTO:Injective} axioms for the bulk.
For rectangles along the boundary, we must now allow for surrounding rectangles to intersect on two adjacent sides, rather than along only one side.

\begin{defn}
We say a rectangle $\Lambda$ with $\partial\Lambda\cap \partial \cL \neq \emptyset$ is:
\begin{itemize}
\item 
\emph{completely} $\partial$-\emph{surrounded} by $\Delta$ with surrounding constant $s$, denoted $\Lambda \ll_s^\partial \Delta$, if
$\Lambda\subset \Delta$, $\partial\Lambda\cap \partial \Delta$ is a rectangle one dimension lower which lies in $\partial \cL$, and for all other points $v\in\Delta$ not in $\Lambda$, an $\ell^\infty$ ball of radius $s$ containing $v$ fits inside $\Delta$.

Alternatively, $\Lambda \ll_s^\partial \Delta$ if and only if $\Lambda \Subset_s \Delta$ with $\emptyset \neq\partial\Lambda\cap \partial\Delta \subset \partial \cL$.
\item 
$\partial$-\emph{surrounded} by $\Delta$ with surrounding constant $s$, denoted $\Lambda \Subset_s^\partial \Delta$, if $\Lambda\subset \Delta$, $\partial\Lambda\cap \partial \Delta$ is the union of two rectangles of one dimension lower that intersect each other, one of which lies in $\partial \cL$, and for all other points $v\in\Delta$ not in $\Lambda$, an $\ell^\infty$ ball of radius $s$ containing $v$ fits inside $\Delta$.
\end{itemize}
$$
\tikzmath{
\foreach \x in {0,.5,...,2.5}{
\filldraw (\x,.5) circle (.05cm);
}
\foreach \y in {1,1.5,...,2.5}{
\foreach \x in {0,.5,...,2.5}{
\filldraw[red] (\x,\y) circle (.05cm);
}
}
\draw[thick, blue, rounded corners=5pt] (.8,.3) rectangle (1.7,1.7);
\draw[thick, cyan, rounded corners=5pt] (.2,.2) rectangle (2.3,2.3);
\node[cyan] at (.7,1.8) {$\Delta$};
\node[blue] at (1.25,1.25) {$\Lambda$};
\node at (3,.5) {$\partial \cL$};
\node[red] at (3,2) {$\cL^\circ$};
\node at (1.25,-.25) {$\Lambda \ll_1^\partial \Delta$};
}
\qquad\qquad\qquad
\tikzmath{
\foreach \x in {0,.5,...,2.5}{
\filldraw (\x,.5) circle (.05cm);
}
\foreach \y in {1,1.5,...,2.5}{
\foreach \x in {0,.5,...,2.5}{
\filldraw[red] (\x,\y) circle (.05cm);
}
}
\draw[thick, blue, rounded corners=5pt] (.8,.3) rectangle (2.2,1.7);
\draw[thick, cyan, rounded corners=5pt] (.2,.2) rectangle (2.3,2.3);
\node[cyan] at (.7,1.8) {$\Delta$};
\node[blue] at (1.25,1.25) {$\Lambda$};
\node at (3,.5) {$\partial \cL$};
\node[red] at (3,2) {$\cL^\circ$};
\node at (1.25,-.25) {$\Lambda \Subset_1^\partial \Delta$};
}
$$
\end{defn}
For a rectangle $\Lambda$ intersecting $\partial\cL$ and a rectangle $\Delta$ with $\Lambda \Subset_s^\partial \Delta$, we introduce the `corner' boundary algebra
$\fB^\partial(\Lambda \Subset_s^\partial \Delta)$ defined exactly as in the case $\fB(\Lambda \Subset_s \Delta)$.
The following \emph{boundary LTO axioms} ensure that this algebra is independent of choices beyond the intersection of the rectangles (see~\S\ref{sec:BoundaryAlgforBoundaryLTO} below for an outline of the argument).
\begin{enumerate}[label=\textup{($\partial$LTO\arabic*)}]
\item 
\label{BoundaryLTO:QECC}
Whenever $\Lambda \ll_s^\partial \Delta$, $p_\Delta \fA(\Lambda)p_\Delta = \bbC p_\Delta$.
\item 
\label{BoundaryLTO:Boundary}
Whenever $\Lambda \Subset_s^\partial \Delta$, $p_\Delta \fA(\Lambda)p_\Delta = \fB^\partial(\Lambda \Subset_s^\partial \Delta) p_\Delta$.
\item
\label{BoundaryLTO:Surjective}
Whenever
$\Lambda_1\subset  \Lambda_2\Subset_s^\partial \Delta$ with 
$\partial \Lambda_1 \cap \partial\Delta = \partial \Lambda_2 \cap \partial \Delta$,
$\fB^\partial(\Lambda_1 \Subset_s^\partial \Delta)=\fB^\partial(\Lambda_2 \Subset_s^\partial \Delta)$.
\item
\label{BoundaryLTO:Injective}
Whenever $\Lambda \Subset_s^\partial \Delta_1\subset \Delta_2$ with 
$\partial \Lambda \cap \partial\Delta_1= \partial \Lambda \cap \partial \Delta_2$,
the map $x\mapsto xp_{\Delta_2}$ is injective.
%if $x\in \fB(\Lambda \Subset_s^\delta \Delta_1)$ with $xp_{\Delta_2}=0$, then $x=0$.
\end{enumerate}

\begin{ex}
Kitaev's Toric Code with smooth boundary satisfies \ref{BoundaryLTO:QECC}--\ref{BoundaryLTO:Injective}.
Indeed, for rectangles $\Lambda$ such that $\partial \Lambda$ meets the smooth boundary, the projection $p_\Lambda$ is still given by
$$
p_\Lambda 
:= 
\prod_{s\subset \Lambda} \left(\frac{1+A_s}{2}\right)
\prod_{p\subset \Lambda} \left(\frac{1+B_p}{2}\right),
$$
but we now have more terms $A_s$ and $B_p$ than before in our boundary Hamiltonian.
Indeed, every boundary star and plaquette
$$
\tikzmath{
\foreach \y in {-.75,-1.5,-2.25}{
\draw (.05,\y) -- (3.7,\y);
}
\foreach \x in {.75,1.5,2.25,3}{
    \draw (\x,-.5) -- (\x,-2.25);
}
\draw[thick, red] (.75,-2.25) -- (1.5,-2.25);
\node[red] at (1.175,-2.05) {$\scriptstyle \sigma^Z$};
\node[red] at (1.175,-2.6) {$B_p$};
\draw[thick, orange] (3,-1.5) -- (3,-2.25);
\draw[thick, orange] (2.25,-2.25) -- (3.7,-2.25);
\node[orange] at (2.75,-1.7) {$\scriptstyle \sigma^X$};
\node[orange] at (2.675,-2) {$\scriptstyle \sigma^X$};
\node[orange] at (3.375,-2) {$\scriptstyle \sigma^X$};
\node[orange] at (3,-2.6) {$A_s$};
\draw[thick, blue, rounded corners=5pt] (3.7,-.5) -- (3.7,-2.35) -- (.05,-2.35) -- (.05,-.5);
\node[blue] at (3.5,-1.125) {$\Lambda$}; 
}
$$
is included in the boundary Hamiltonian, so the argument in \cite[\S 3]{MR4945955} based on \cite{MR2345476} now implies that \ref{BoundaryLTO:QECC} holds.
(See also \cite[Proof of Thm.~4.2]{MR4650344}.)
More precisely, one can consider a simple tensor $a$ of Pauli matrices, and consider $p_\Delta a p_\Delta$ for a suitable rectangle $\Delta$.
The $A_s$ and $B_p$ terms from the Hamiltonian can be freely absorbed into and created from $p_\Delta$.
Using the (anti-)commutation properties of the Pauli matrices, it follows that $p_\Delta a p_\Delta = 0$ unless $a$ is a product of $A_s$'s and $B_p$'s (from the bulk or boundary Hamiltonian).

The axioms \ref{BoundaryLTO:Boundary}--\ref{BoundaryLTO:Injective} can be similarly adapted from the proofs in \cite[\S 3]{MR4945955}.
The result depends on if the touching boundaries of $\Lambda$ and $\Delta$ in the bulk are along a `rough' or a `smooth' (physical) cut in the bulk.
For concreteness, we consider the rough case:
$$
\tikzmath{
\foreach \y in {1.5,0.75,0,-.75,-1.5,-2.25}{
\draw (-1.1,\y) -- (3.7,\y);
}
\foreach \x in {-0.75,0,.75,1.5,2.25,3}{
    \draw (\x,1.65) -- (\x,-2.25);
}
\node at (4.0,-2.25) {$\scriptstyle \partial \cL$};

\draw[thick, orange] (3,-1.5) -- (3.7,-1.5);
\node[orange] at (3.25,-1.25) {$\scriptstyle \sigma^X$};
\node[orange] at (4.0,-1.45) {$C_\ell$};
\draw[thick, red] (3.7,-0.75) -- (3.0,-0.75) -- (3,0) -- (3.7,0);
\node[red] at (4.0,-0.4) {$D_p$};
\node[red] at (3.25,-.5) {$\scriptstyle \sigma^Z$};
\node[red] at (3.25,0.25) {$\scriptstyle \sigma^Z$};
\node[red] at (2.75,-0.35) {$\scriptstyle \sigma^Z$};

\draw[thick, cyan, rounded corners=5pt] (-.85,-2.45) rectangle (3.6,1.6);
\node[cyan] at (-0.5,1.275) {$\Delta$}; 
\draw[thick, blue, rounded corners=5pt] (-.10,-2.35) rectangle (3.5,0.85);
\node[blue] at (0.5,0.475) {$\Lambda$}; 
}
$$
Write $I$ for the horizontal links along the rough cut, i.e.\ the \textit{right side} boundary of $\partial \Lambda \cap \partial \Delta$.
We denote $\widetilde{I}$ for the union of $I$ together with the vertical links directly to the left of it.
By again systematically checking commutation relations with Hamiltonian terms, we find that
\[
    \fB^\partial(\Lambda \Subset^\partial_s \Delta) \cong \mathrm{C}^*\left(C_\ell,\, D_p \, \middle| \, \ell \subset I \setminus \partial\cL,\, p \subset \widetilde{I} \right).
\]
Note that the horizontal link $\ell$ at the boundary $\partial\cL$ is excluded due to the boundary Hamiltonian term $\sigma^Z$.
The case of the smooth (physical) bulk cut is handled analogously.

Similarly, Kitaev's Toric Code with rough topological boundary also satisfies \ref{BoundaryLTO:QECC}--\ref{BoundaryLTO:Injective}.
In either case, one can define a net of algebras assigned to intervals of the half-line given by the cut $\cK$, including for intervals away from $\partial \cL$, by matching the `corner' boundary algebra with the `bulk' boundary algebra coming from the LTO axioms, cf. a similar construction for the Levin--Wen model below in~\S\ref{sec:BoundaryLTOforBoundaryLW}.
\end{ex}

\begin{ex}
We will see in \S\ref{sec:BoundaryLTOforBoundaryLW} below that the 1D boundary for the 2D Levin-Wen model for $\cC$ coming from an indecomposable module category $\cM_\cC$ satisfies  \ref{BoundaryLTO:QECC}--\ref{BoundaryLTO:Injective}.
\end{ex}

%%%%%%%%%%%%%%%%%%%%%%%%%%%%%%%%%%%%%%%%%%%%%%%%%%%%%
\subsection{Boundary algebras and DHR bimodules for boundary LTOs}
\label{sec:BoundaryAlgforBoundaryLTO}

Similar to the construction of the boundary net of algebras $\fB$ from the net $\fA$ 
from a choice of codimension one sub-lattice
along with the quantum channel $\bbE: \fA({\bbH}) \to \fB$,
for a lattice with boundary $\cL$
and a sublattice with boundary $\cK$ such that 
$\partial \cK\subset \partial\cL$
but $\cK$ is not completely contained in $\cL$
and a choice of `quadrant' $\bbQ\subset \cL$ on one side of $\cK$ (so that $\cK\subset \partial \bbQ$),
$$
\tikzmath{
\foreach \x in {0,.5,1,1.5,2.5}{
\filldraw (\x,.5) circle (.05cm);
}
\foreach \y in {1,1.5,...,2.5}{
\foreach \x in {0,.5,1,1.5,2.5}{
\filldraw[red] (\x,\y) circle (.05cm);
}
}
\foreach \y in {1,1.5,...,2.5}{
\filldraw[blue] (2,\y) circle (.05cm);
}
\filldraw[blue!50!red!50] (2,.5) circle (.05cm);
\node[blue!50!red!50] at (2,0) {$\partial \cK$};
\node[blue] at (2,3) {$\cK$};
\node at (3,.5) {$\partial \cL$};
\draw[thick, cyan, rounded corners=5pt] (-.2,.3) -- (2.3,.3) -- (2.3,2.7);
\node[cyan] at (.75,1.75) {$\bbQ$};
}
$$
we get a boundary net of algebras $\fB^\partial$ on $\cK$ together with a quantum channel $\fA(\bbQ) \to \fB^\partial$.

Similar to \cite[Lem.~2.13]{MR4945955}, the axioms \ref{BoundaryLTO:QECC}--\ref{BoundaryLTO:Injective} above ensure that $\fB^\partial(\Lambda \Subset_s^\partial \Delta)$ is independent of the choice of $\Lambda \Subset_s^\partial \Delta\subset \bbQ$ beyond the intersection $I:=\partial \Lambda \cap \partial \Delta\cap  \cK$.
Indeed, we may choose $\Lambda_I\subset \Lambda$ to be the smallest sufficiently large rectangle with $\partial \Lambda_I \cap \cK=I$ and $\Delta_I\subset \Delta$ to be the smallest rectangle with $\Lambda_I\Subset_s^\partial \Delta_I$ and $\partial\Lambda_I\cap \partial\Delta_I = I$.
We then define 
$$
\fB^\partial(I):= \fB^\partial(\Lambda_I\Subset_s\Delta_I),
$$
and the map $I\mapsto \fB^\partial(I)$ defines a net of algebras with boundary on $\cK$.

We now introduce the notion of a \emph{boundary DHR bimodule} for a net of algebras $\fA$ on a lattice $\cL$ with boundary $\partial \cL$, which is localized near the boundary $\partial \cL$.
(We will use this notion for the boundary algebra $\fB^\partial$ on $\cK$, which has boundary $\partial \cK = \cK\cap \partial \cL$.)

\begin{defn}
\label{defn:BoundaryDHRBimodule}
A \emph{boundary DHR bimodule} for $\fA$ is an $\fA-\fA$ Hilbert $\rmC^*$-bimodule ${}_{\fA} Y_{\fA}$ such that 
for any sufficiently large rectangle $\Lambda\subset \cL$ which meets $\partial \cL$, $Y$ is localized in $\Lambda$ in the sense of Definition~\ref{def:LocalizableBimodule} above.
Unpacking, this means there is a finite $Y_{\fA}$-basis $\{\beta_i\}\subset Y$ such that $\beta_i a=a\beta_i$ for all 
$$
a\in \fA(\Lambda^c)
:= 
\varinjlim_{\Delta\subset \Lambda^c} \fA(\Delta)
\subset 
\fA.
$$
That is, the basis elements $\beta_i$ commute with all the elements of $\fA$ which are in the subalgebra supported on $\Lambda^c$.

We denote the full subcategory of boundary DHR-bimodules by $\DHR^\partial(\fA)\subset \Bim(\fA)$.
Observe that $\DHR^\partial(\fA)$ is closed under composition of bimodules and thus forms a full $\rmC^*$ tensor subcategory.
\end{defn}

Similar to Lemma \ref{lem:DHRBimoduleCharacterization} above, we have the following characterization of boundary DHR bimodules whose proof is identical and therefore omitted.

\begin{lem}[Local triviality of boundary DHR bimodules]
\label{lem:BoundaryDHRBimoduleCharacterization}
A bimodule ${}_{\fA} Y_{\fA}\in \Bim(\fA)$ lies in $\DHR^\partial(\fA)$ if and only if for any sufficiently large $\Lambda$ which meets $\partial\cL$, the restriction
$
{}_{\fA(\Lambda^c)} Y_{\fA}
$
is isomorphic to a summand of a finite direct sum of copies of ${}_{\fA(\Lambda^c)}\fA_{\fA}$.
\end{lem}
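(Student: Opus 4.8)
The plan is to imitate the proof of Lemma~\ref{lem:DHRBimoduleCharacterization} essentially verbatim, the only modification being that the relevant rectangles $\Lambda$ are now required to meet $\partial\cL$. Since the notion of a vector being localized in $\Lambda$ (Definition~\ref{def:LocalizableBimodule}, as invoked in Definition~\ref{defn:BoundaryDHRBimodule}) and the definition of $\fA(\Lambda^c)$ are insensitive to whether $\Lambda$ abuts the boundary, no further change is needed; the argument is purely module-theoretic.

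For the forward implication I would take ${}_\fA Y_\fA\in\DHR^\partial(\fA)$, fix a sufficiently large rectangle $\Lambda$ meeting $\partial\cL$, and choose a $Y_\fA$-basis $\{\beta_i\}_{i=1}^n$ localized in $\Lambda$, so that $x\beta_i=\beta_i x$ for all $x\in\fA(\Lambda^c)$. This is exactly the statement that each creation operator $L_{\beta_i}\colon\fA_\fA\to Y_\fA$ is left $\fA(\Lambda^c)$-linear; hence so is its adjoint $L_{\beta_i}^\dag$, and therefore $L_{\beta_i}^\dag L_{\beta_j}$ lies in the relative commutant $\fA(\Lambda^c)'\cap\fA$. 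Consequently the Gram projector $p\in\End(\fA^{\oplus n}_\fA)$ with entries $p_{ij}=\langle\beta_i|\beta_j\rangle_\fA$, whose image is isomorphic to $Y_\fA$, satisfies $p\in\fA(\Lambda^c)'\cap M_n(\fA)=\End\!\left({}_{\fA(\Lambda^c)}\fA^{\oplus n}_\fA\right)$, which exhibits ${}_{\fA(\Lambda^c)}Y_\fA$ as a summand of a finite direct sum of copies of ${}_{\fA(\Lambda^c)}\fA_\fA$.

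For the converse I would start from a bimodule ${}_\fA Y_\fA\in\Bim(\fA)$ such that for every sufficiently large $\Lambda$ meeting $\partial\cL$ the restriction ${}_{\fA(\Lambda^c)}Y_\fA$ is such a summand, pick a projector $p\in\End(\fA^{\oplus n}_\fA)$ commuting with the left $\fA(\Lambda^c)$-action with $Y\cong p\fA^{\oplus n}$, and set $\beta_i:=pe_i$. Then $\{\beta_i\}$ is a $Y_\fA$-basis with $x\beta_i=\beta_i x$ for all $x\in\fA(\Lambda^c)$, so $Y$ is localized in $\Lambda$; as $\Lambda$ ranges over all sufficiently large rectangles meeting $\partial\cL$, this gives $Y\in\DHR^\partial(\fA)$.

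I do not expect any genuine obstacle: the reasoning never interacts with the geometry of $\partial\cL$ beyond the bookkeeping of which $\Lambda$ are admissible. The one point worth a sentence of care is that the local triviality hypothesis in the converse implicitly supplies the finitely generated projective structure needed to speak of a basis, but this is immediate from the isomorphism $Y\cong p\fA^{\oplus n}$, so the proof goes through unchanged.
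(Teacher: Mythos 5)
Your proof is correct and takes exactly the approach the paper intends: the paper explicitly states that the proof of this lemma is identical to that of Lemma~\ref{lem:DHRBimoduleCharacterization} and is therefore omitted, and your argument is precisely that proof adapted by restricting $\Lambda$ to rectangles meeting $\partial\cL$, which as you note changes nothing in the module-theoretic reasoning.
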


Even though our net of algebras on our lattice with boundary may no longer satisfy algebraic Haag duality (e.g., the fusion module categorical net in \S\ref{sec:BoundaryDHRforBoundaryLW} below does not have this property), we still have a notion of \emph{boundary algebraic Haag duality}.
(There are obvious weaker/bounded spread variants of this notion, which we leave to the reader.)

\begin{defn}
A net of algebras $\fA$ on a lattice with boundary satisfies \emph{boundary algebraic Haag duality} if for every rectangle $\Delta$ meeting the boundary $\partial \cL$,
$Z_{\fA}(\fA(\Delta^c))=\fA(\Delta)$,
where again $\fA(\Delta^c)$ is the subalgebra of $\fA$ generated by the $\fA(\Lambda)$ with $\Lambda\subset \Delta^c$.
\end{defn}

The proof of the following lemma is identical to Lemma \ref{lem:IPLocalizedInCentralizer}.

\begin{lem}
\label{lem:BoundaryIPLocalizedInCentralizer}
If $\eta,\xi\in Y\in \Bim^\partial(\fA)$ are both localized in $\Delta$ which meets $\partial\cL$ and $a\in \fA(\Delta)$, then
$\langle \eta|a\xi\rangle_{\fA} \in Z_{\fA}(\fA(\Lambda))$ for all $\Lambda\subset \Delta^c$.
In particular, if $\fA$ satisfies boundary algebraic Haag duality, then $\langle \eta|a\xi\rangle_{\fA}\in \fA(\Delta)$.
\end{lem}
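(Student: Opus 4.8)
The plan is to copy, with relabeled rectangles, the one-line computation proving Lemma~\ref{lem:IPLocalizedInCentralizer}; the hypothesis that $\Delta$ meets $\partial\cL$ plays no role in the algebra itself and only enters through the definition of the boundary objects involved. So I would fix a rectangle $\Lambda\subset\Delta^c$, take an arbitrary $x\in\fA(\Lambda)$, and show $x$ commutes with $\langle\eta|a\xi\rangle_\fA$.

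First I would slide $x$ into the first slot of the inner product using the Hilbert module identity $\langle\eta b|\zeta\rangle_\fA = b^*\langle\eta|\zeta\rangle_\fA$ (with $b=x^*$), giving $x\langle\eta|a\xi\rangle_\fA = \langle\eta x^*|a\xi\rangle_\fA$. Since $x^*\in\fA(\Lambda)$ with $\Lambda\subset\Delta^c$ and $\eta$ is localized in $\Delta$, I may replace the right action $\eta x^*$ by the left action $x^*\eta$; then, the left action being a $*$-representation, I pull $x$ over to the right slot, obtaining $\langle\eta|xa\xi\rangle_\fA$. Now $a\in\fA(\Delta)$ and $x\in\fA(\Lambda)$ with $\Delta\cap\Lambda=\emptyset$, so locality gives $xa=ax$; and $\xi$ being localized in $\Delta$ lets me rewrite $x\xi=\xi x$, after which the bimodule relation moves $x$ past $a$ on the right and right $\fA$-linearity of the inner product pulls it out, yielding $\langle\eta|ax\xi\rangle_\fA = \langle\eta|a\xi x\rangle_\fA = \langle\eta|a\xi\rangle_\fA x$. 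Hence $x\langle\eta|a\xi\rangle_\fA = \langle\eta|a\xi\rangle_\fA x$, and since $x\in\fA(\Lambda)$ was arbitrary, $\langle\eta|a\xi\rangle_\fA\in Z_\fA(\fA(\Lambda))$.

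For the final assertion, $\Lambda\subset\Delta^c$ was itself arbitrary, so $\langle\eta|a\xi\rangle_\fA$ centralizes every generating subalgebra $\fA(\Lambda)$ of $\fA(\Delta^c)$ and therefore lies in $Z_\fA(\fA(\Delta^c))$; invoking boundary algebraic Haag duality, $Z_\fA(\fA(\Delta^c))=\fA(\Delta)$, so $\langle\eta|a\xi\rangle_\fA\in\fA(\Delta)$. There is no genuine obstacle here — the content is entirely formal — and the only point deserving a moment's care is that the single condition $\Lambda\subset\Delta^c$ is exactly what makes both families of commutations available at once: localization of $\eta,\xi$ past $\fA(\Lambda)$ (because $\Lambda\subset\Delta^c$) and locality of $a\in\fA(\Delta)$ past $\fA(\Lambda)$ (because then $\Delta\cap\Lambda=\emptyset$).
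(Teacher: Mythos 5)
Your argument reproduces, with the rectangles relabeled, the one-line chain the paper gives for Lemma~\ref{lem:IPLocalizedInCentralizer}, and the paper's stance on the boundary version is literally that the proof is identical to that lemma — so you have matched the intended approach. The only thing you added beyond the paper is the explicit unwinding of $\langle\eta|xa\xi\rangle_\fA=\langle\eta|a\xi x\rangle_\fA$ into its two constituent commutations and the short closing remark that centralizing each generating $\fA(\Lambda)$ suffices for centralizing $\fA(\Delta^c)$; both are correct and just make the omitted steps explicit.
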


The tensor category $\DHR^\partial(\fA)$ does not have a braiding. 
We expect that one can adapt the definition of the braiding on $\DHR(\fA)$ to get a canonical \emph{half-braiding} for the image of $\DHR(\fA)$ in $\DHR^\partial(\fA)$ with $\DHR^\partial(\fA)$.
We leave this exploration for a future article as it would take us too far afield.

%%%%%%%%%%%%%%%%%%%%%%%%%%%%%%%%%%%%%%%%%%%%%%%%%%%%%
\section{Boundary LTO for Levin-Wen}
\label{sec:BoundaryLTOforBoundaryLW}

In this section, we construct a boundary LTO from the boundary Levin-Wen model \cite{MR2942952} built from a unitary fusion category (UFC) $\cC$ and an indecomposable unitary module category $\cM_\cC$.
We work with a unitary tensor category version of this model along the lines of \cite{MR3204497,2305.14068} 
rather than the 6j symbol model.
To do so, let us first recall our notation.
We denote the unique unitary spherical dual functor of $\cC$ by $\vee$, and we write $d_c:=\tr^\cC(\id_c)$.
We write $\Irr(\cC)$ for a set of representatives for the simple objects of $\cC$, and we write $D_\cC:= \sum_{c\in\Irr(\cC)} d_c^2$.
For our module category $\cM$, we write $\Irr(\cM)$ for a set of representatives for the simple objects of $\cM$.

Our indecomposable unitary module category $\cM$ is equipped with a unitary \emph{module trace} $\Tr^\cM$, which consists of a family of faithful positive linear functionals
$\Tr^\cM_m: \cM(m\to m)\to \bbC$ for each $m\in\cM$ satisfying
$$
\Tr^\cM_m(g\circ f) = \Tr^\cM_n(f\circ g)
\qquad\qquad\qquad
\forall\, m,n\in\cM,
\quad
\forall\, f:m\to n,
\quad g: n \to m,
$$
and for all $h: m\lhd c \to m\lhd c$, we have
$$
\Tr^\cM_{m\lhd c}\left(
\tikzmath{
\begin{scope}
\clip[rounded corners=5pt] (-.7,.7) rectangle (0,-.7);
\fill[gray!50] (-.2,-.7) rectangle (-.7,.7);
\end{scope}
\draw[thick, \ModuleColor] (-.2,.3) --node[left]{$\scriptstyle m$} (-.2,.7);
\draw[thick, \ModuleColor] (-.2,-.3) --node[left]{$\scriptstyle m$} (-.2,-.7);
\draw (.2,.3) --node[right]{$\scriptstyle c$} (.2,.7);
\draw (.2,-.3) --node[right]{$\scriptstyle c$} (.2,-.7);
\roundNbox{fill=white}{(0,0)}{.3}{.1}{.1}{$h$}
}
\right)
=
\Tr^\cM_m\left(
\tikzmath{
\begin{scope}
\clip[rounded corners=5pt] (-.7,.7) rectangle (0,-.7);
\fill[gray!50] (-.2,-.7) rectangle (-.7,.7);
\end{scope}
\draw[thick, \ModuleColor] (-.2,.3) --node[left]{$\scriptstyle m$} (-.2,.7);
\draw[thick, \ModuleColor] (-.2,-.3) --node[left]{$\scriptstyle m$} (-.2,-.7);
\draw (.2,.3) arc(180:0:.2cm) --node[right]{$\scriptstyle c^\vee$} (.6,-.3) arc(0:-180:.2cm);
\roundNbox{fill=white}{(0,0)}{.3}{.1}{.1}{$h$}
}
\right)
\qquad\qquad
\tikzmath{
\draw[rounded corners=5pt, dotted] (-.3,-.3) rectangle (.3,.3);
}
=
\cC,
\qquad
\tikzmath{
\fill[rounded corners=5pt, gray!50] (-.3,-.3) rectangle (.3,.3);
}
=
\End(\cM_\cC).
$$
By \cite{MR3019263}, $\Tr^\cM$ is unique up to positive scalar.
This scalar will be chosen judiciously later on; the curious reader may read Appendix \ref{appendix:AFActionsOfUmFCs} first if they desire.
We write $d_m:= \Tr^\cM_m(\id_m)$, and note that in our conventions,  $\sum_{m\in\Irr(\cM)} d_m^2 \neq D_\cC$ in general.

%%%%%%%%%%%%%%%%%%%%%%%%%%%%%%%%%%%%%%%
\subsection{The boundary Levin-Wen model}
\label{sec:BoundaryLW}

For simplicity we will only consider the model on a square lattice in two dimensions with a 1D boundary.
The Hilbert space can be visualized as follows, where the black edges carry labels from $\Irr(\cC)$ and the blue edges carry labels from $\Irr(\cM)$.
$$
\tikzmath{
\draw[thick,step=.75, thick] (0.01,0.25) grid (3.5,2.75);
\draw[thick, blue] (.0,0.25) -- (.0,2.75);
\node at (1.875,1.875) {$\cL$};
\node[blue] at (-.3,1.5) {$\partial\cL$};
}
$$
Here, we read from bottom left to top right.
The total Hilbert space is the tensor product of local Hilbert spaces:
\begin{align}
\label{eq:LWBulkSpace}
\tikzmath{
\draw[thick] (-.5,0) node[left]{$\scriptstyle a_1$} -- (.5,0) node[right]{$\scriptstyle a_2$};
\draw[thick] (0,-.5) node[below]{$\scriptstyle b_1$} -- (0,.5) node[above]{$\scriptstyle b_2$};
\filldraw (0,0) circle (.05cm);
\draw[blue!50, very thin] (-.5,.5) -- (.5,-.5);
{\draw[blue!50, very thin, -stealth] ($ (-.3,.3) - (.1,.1)$) to ($ (-.3,.3) + (.1,.1)$);}
{\draw[blue!50, very thin, -stealth] ($ (.3,-.3) - (.1,.1)$) to ($ (.3,-.3) + (.1,.1)$);}
}
\qquad
\longleftrightarrow
\qquad
\cH_v 
:&= 
\bigoplus_{
a_i,b_i\in \Irr(\cC)
}
\cC(a_1 b_1\to b_2 a_2)
=
\cC(X^{\otimes 2}\to X^{\otimes 2})
\\
\label{eq:LWBoundarySpace}
\tikzmath{
\draw[thick, blue] (0,-.5) node[below]{$\scriptstyle m_1$} -- (0,.5) node[above]{$\scriptstyle m_2$};
\draw[thick] (0,0) -- (.5,0) node[right]{$\scriptstyle c$};
\filldraw[blue] (0,0) circle (.05cm);
\draw[blue!50, very thin] (-.5,.5) -- (.5,-.5);
{\draw[blue!50, very thin, -stealth] ($ (-.3,.3) - (.1,.1)$) to ($ (-.3,.3) + (.1,.1)$);}
{\draw[blue!50, very thin, -stealth] ($ (.3,-.3) - (.1,.1)$) to ($ (.3,-.3) + (.1,.1)$);}
}
\qquad
\longleftrightarrow
\qquad
\cH^\partial_v 
:&= 
\bigoplus_{
\substack{
c\in\Irr(\cC)
\\
m_i\in \Irr(\cM)
}}
\cM(m_1\to m_2\lhd c)
=
\cM(W\to W\lhd X)
\end{align}
where the direct sum is orthogonal.
Above, we use the shorthand notation 
$X:= \bigoplus_{c\in\Irr(\cC)} c$
and
$W:= \bigoplus_{m\in\Irr(\cM)} m$.
The spaces $\cH_v,\cH^\partial_v$ are equipped with the \emph{skein-module inner products}.
\begin{align*}
\left\langle
\tikzmath{
\draw[thick] (-.5,0) node[left]{$\scriptstyle a_1$} -- (.5,0) node[right]{$\scriptstyle a_2$};
\draw[thick] (0,-.5) node[below]{$\scriptstyle b_1$} -- (0,.5) node[above]{$\scriptstyle b_2$};
\node at (.2,-.2) {$\scriptstyle \xi$};
}
\middle|
\tikzmath{
\draw[thick] (-.5,0) node[left]{$\scriptstyle a_1'$} -- (.5,0) node[right]{$\scriptstyle a_2'$};
\draw[thick] (0,-.5) node[below]{$\scriptstyle b_1'$} -- (0,.5) node[above]{$\scriptstyle b_2'$};
\node at (.2,-.2) {$\scriptstyle \xi'$};
}
\right\rangle
&=
\left(\prod_{i=1,2}
\frac{
\delta_{a_i=a_i'}
\delta_{b_i=b_i'}
}{\sqrt{d_{a_i}d_{b_i}}}
\right)
\cdot
\tr^\cC(\xi^\dag\circ \xi')
\\
\left\langle
\tikzmath{
\draw[thick, blue] (0,-.5) node[below]{$\scriptstyle m_1$} -- (0,.5) node[above]{$\scriptstyle m_2$};
\draw[thick] (0,0) -- (.5,0) node[right]{$\scriptstyle c$};
\node[blue] at (-.2,0) {$\scriptstyle \eta$};
}
\middle|
\tikzmath{
\draw[thick, blue] (0,-.5) node[below]{$\scriptstyle m_1'$} -- (0,.5) node[above]{$\scriptstyle m_2'$};
\draw[thick] (0,0) -- (.5,0) node[right]{$\scriptstyle c'$};
\node[blue] at (-.2,0) {$\scriptstyle \eta'$};
}
\right\rangle
&=
\frac{\delta_{c=c'}}{\sqrt{d_c}}
\left(\prod_{i=1,2}
\frac{\delta_{m_i=m_i'}}{\sqrt{d_{m_i}}}
\right)
\cdot
\Tr^\cM(\eta^\dag\circ \eta')
\end{align*}

Consider now a 2D rectangle $\Lambda$ in our lattice $\cL$.
We consider the canonical spin system from this setup, i.e., we define local algebras
$\fA(\Lambda):= \bigotimes_{v\in \Lambda} B(\cH_v)$.
We set $\fA:= \varinjlim \fA(\Lambda) = \bigotimes_v B(\cH_v)$.
There are edge terms $A_\ell$ which enforce that the simple labels on connected edges match, e.g. for two neighboring vertices linked by a horizontal edge $\ell$ we have
\[
A_\ell \left(
\tikzmath{
\draw[thick] (-.5,0) node[left]{$\scriptstyle a_1$} -- (.5,0) node[right]{$\scriptstyle a_2$};
\draw[thick] (0,-.5) node[below]{$\scriptstyle b_1$} -- (0,.5) node[above]{$\scriptstyle b_2$};
\node at (.2,-.2) {$\scriptstyle \xi$};
}
\otimes
\tikzmath{
\draw[thick] (-.5,0) node[left]{$\scriptstyle a_1'$} -- (.5,0) node[right]{$\scriptstyle a_2'$};
\draw[thick] (0,-.5) node[below]{$\scriptstyle b_1'$} -- (0,.5) node[above]{$\scriptstyle b_2'$};
\node at (.2,-.2) {$\scriptstyle \xi'$};
}
\right)
:=
\delta_{a_2 = a_1'}
\left(
\tikzmath{
\draw[thick] (-.5,0) node[left]{$\scriptstyle a_1$} -- (.5,0) node[above]{$\scriptstyle a_2$};
\draw[thick] (0,-.5) node[below]{$\scriptstyle b_1$} -- (0,.5) node[above]{$\scriptstyle b_2$};
\node at (.2,-.2) {$\scriptstyle \xi$};
\draw[thick] (0.5,0) -- (1.5,0) node[right]{$\scriptstyle a_2'$};
\draw[thick] (1,-.5) node[below]{$\scriptstyle b_1'$} -- (1,.5) node[above]{$\scriptstyle b_2'$};
\node at (1.2,-.2) {$\scriptstyle \xi'$};
}
\right)
,
\]
and similarly for the boundary edges.
We will primarily be interested in the subspace spanned by tensors where the labels match.
For every square in the lattice, there are plaquette terms $B_p$ for every square in the lattice, defined on the subspace of matching edges as pictured below (and zero on the orthogonal complement of this space).
\begin{equation}
\label{eq:LWPlaquette}
\frac{1}{D_\cC}
\sum_{a\in \Irr(\cC)}d_c\cdot
\tikzmath{
\filldraw[knot, thick, cyan, rounded corners=5pt, fill=cyan!20] (.15,.15) rectangle (.85,.85);
\node[cyan] at (.25,.5) {$\scriptstyle c$};
\draw[thick] (-.5,-.5) grid (1.5,1.5);
}
\qquad\qquad\text{or}\qquad\qquad
\frac{1}{D_\cC}
\sum_{c\in \Irr(\cC)}d_c\cdot
\tikzmath{
\filldraw[knot, thick, cyan, rounded corners=5pt, fill=cyan!20] (.15,.15) rectangle (.85,.85);
\node[cyan] at (.25,.5) {$\scriptstyle c$};
\draw[thick] (.01,-.5) grid (1.5,1.5);
\draw[thick, blue] (0,-.5) -- (0,1.5);
}
\end{equation}
For plaquettes meeting the boundary, we must use the module action to resolve the diagram into the local Hilbert spaces.
The result \cite[Lem.~2.8]{2305.14068} (see also \cite[\S5]{0907.2204}, \cite[\S5]{MR3204497}) provides a formula for the matrix coefficients $C(\xi,\xi')$ for the $B_p$ operators in the bulk in terms of diagrams in the associated \emph{skein module} which shows these operators are self-adjoint projectors.
A similar formula holds along the boundary.

In more detail, 
the skein module for $\cC$ with $n$ boundary points, denoted $\cS_\cC(n)$, is the Hilbert space
$$
\cC(1_\cC\to X^{\otimes n})
=
\bigoplus_{x_1,\dots, x_n\in\Irr(\cC)}
\cC(1_\cC\to x_1\otimes \cdots \otimes x_n)
$$
where the direct sum is orthogonal, 
and each subspace $\cC(1_\cC\to x_1\otimes \cdots \otimes x_n)$ has inner product
$$
\left\langle
\tikzmath{
\draw (-.3,.3) --node[left]{$\scriptstyle x_1$} (-.3,.7);
\node at (.05,.5) {$\cdots$};
\draw (.3,.3) --node[right]{$\scriptstyle x_n$} (.3,.7);
\roundNbox{}{(0,0)}{.3}{.3}{.3}{$f$}
}
\middle|
\tikzmath{
\draw (-.3,.3) --node[left]{$\scriptstyle x_1$} (-.3,.7);
\node at (.05,.5) {$\cdots$};
\draw (.3,.3) --node[right]{$\scriptstyle x_n$} (.3,.7);
\roundNbox{}{(0,0)}{.3}{.3}{.3}{$g$}
}
\right\rangle
:=
\frac{1}{\sqrt{d_{x_1}\cdots d_{x_n}}} 
\tikzmath{
\draw (-.3,.3) --node[left]{$\scriptstyle x_1$} (-.3,.7);
\node at (.05,.5) {$\cdots$};
\draw (.3,.3) --node[right]{$\scriptstyle x_n$} (.3,.7);
\roundNbox{}{(0,1)}{.3}{.3}{.3}{$f^\dag$}
\roundNbox{}{(0,0)}{.3}{.3}{.3}{$g$}
}\,.
$$
Similarly, recalling $W := \bigoplus_{m\in\Irr(\cM)} m$, the skein module for $\cM$ with two $\cM$-boundary points and $n$ $\cC$-boundary points, denoted $\cS_\cM(2+n)$, is the Hilbert space
$$
\cM(W\to W\lhd X^{\otimes n})
=
\bigoplus_{\substack{
x_1,\dots, x_n\in\Irr(\cC)
\\
m,n\in\Irr(\cM)
}}
\cM(m\to n\lhd x_1\otimes \cdots \otimes x_n)
$$
where the direct sum is orthogonal, 
and each subspace $\cM(m\to n\lhd x_1\otimes \cdots \otimes x_n)$ has inner product
$$
\left\langle
\tikzmath{
\draw[thick, blue] (-.6,-.3) -- node[left]{$\scriptstyle m$} (-.6,-.7) ; 
\draw[thick, blue] (-.6,.3) -- (-.6,.7) node[above]{$\scriptstyle n$} ; 
\draw (-.3,.3) -- (-.3,.7) node[above]{$\scriptstyle x_1$} ;
\node at (.05,.5) {$\cdots$};
\draw (.3,.3) -- (.3,.7) node[above]{$\scriptstyle x_n$};
\roundNbox{}{(0,0)}{.3}{.6}{.3}{$f$}
}
\middle|
\tikzmath{
\draw[thick, blue] (-.6,-.3) -- node[left]{$\scriptstyle m$} (-.6,-.7) ; 
\draw[thick, blue] (-.6,.3) -- (-.6,.7) node[above]{$\scriptstyle n$} ; 
\draw (-.3,.3) -- (-.3,.7) node[above]{$\scriptstyle x_1$} ;
\node at (.05,.5) {$\cdots$};
\draw (.3,.3) -- (.3,.7) node[above]{$\scriptstyle x_n$};
\roundNbox{}{(0,0)}{.3}{.6}{.3}{$g$}
}
\right\rangle
:=
\frac{1}{\sqrt{\textcolor{blue}{d_md_n}d_{x_1}\cdots d_{x_n}}} 
\Tr^\cM_m\left(\,
\tikzmath{
\draw[thick, blue] (-.6,-.3) -- node[left]{$\scriptstyle m$} (-.6,-.7) ; 
\draw[thick, blue] (-.6,1.3) -- node[left]{$\scriptstyle m$} (-.6,1.7) ; 
\draw[thick, blue] (-.6,.3) -- node[left]{$\scriptstyle n$} (-.6,.7); 
\draw (-.3,.3) -- (-.3,.7);
\node at (.05,.5) {$\cdots$};
\draw (.3,.3) --node[right]{$\scriptstyle x_n$} (.3,.7);
\roundNbox{}{(0,1)}{.3}{.6}{.3}{$f^\dag$}
\roundNbox{}{(0,0)}{.3}{.6}{.3}{$g$}
}
\,\right).
$$
The square roots of dimensions are included in the denominator to ensure that gluing of morphisms is isometric as in \cite[Lem.~2.4]{2305.14068}.

We define
$p^A_\Lambda := \prod_{\ell\subset \Lambda}A_\ell$,
$p^B_\Lambda:=\prod_{p\subset \Lambda} B_p$, and $p_\Lambda:=p^A_\Lambda p^B_\Lambda$.
We have the following theorem for the operator $p^B_\Lambda$ on $p^A_\Lambda \bigotimes_{v\in \Lambda}\cH_v$.

\begin{lem}[{\cite{MR3204497}, \cite[Thm.~2.9]{2305.14068}}]
For a rectangle $\Lambda$ contained in the bulk, on the subspace $p^A_\Lambda \bigotimes_{v\in \Lambda}\cH_v$,
$$
p^B_\Lambda = D_{\cC}^{-\#p\subset \Lambda} \cdot \eval_\cC^\dag\circ \eval_\cC,
$$
where $\#p\subset \Lambda$ is the number of plaquettes in $\Lambda$
and $\eval_\cC$ is the evaluation map from the image of $p_A^\Lambda$ to the skein module $\cS_\cC(\#\partial \Lambda)$.
Hence $p_\Lambda \bigotimes_{v\in \Lambda} \cH_v$ is unitarily isomorphic to $\cS_\cC(\#\partial \Lambda)$ via the map $D_{\cC}^{-\#p\subset \Lambda/2}\eval_\cC$.
\end{lem}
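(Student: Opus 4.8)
The plan is to prove the identity by induction on the set of plaquettes contained in $\Lambda$, reducing the global statement to repeated application of the single-plaquette skein calculus. Throughout we work on the fixed subspace $p^A_\Lambda\bigotimes_{v\in\Lambda}\cH_v$ cut out by the edge constraints; one first records that each $B_p$ preserves this subspace and that the $B_p$ with $p\subset\Lambda$ are mutually commuting self-adjoint projections — both facts are read off from the matrix-coefficient formula of \cite[Lem.~2.8]{2305.14068} already invoked above — so that $p^B_\Lambda=\prod_{p\subset\Lambda}B_p$ is itself a self-adjoint projection onto the common range of the $B_p$.

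For a connected set $S$ of plaquettes of $\Lambda$ whose union $\overline{S}$ is simply connected, let $\eval_S$ denote the partial evaluation map that contracts all vertex morphisms and internal edges lying in $\overline{S}$, landing in the skein module $\cS_\cC(\#\partial\overline{S})$ on the boundary legs of $\overline{S}$. I claim $\prod_{p\in S}B_p=D_\cC^{-|S|}\,\eval_S^\dag\circ\eval_S$ on the edge-constraint subspace. The base case $|S|=1$ is precisely the content of the single-plaquette formula: writing $B_p$ through its matrix coefficients and using isotopy invariance together with the bubble identity $\sum_{c\in\Irr(\cC)}d_c\cdot(c\text{-bubble})=D_\cC$, one recognizes that inserting the Kirby-colored loop $\frac{1}{D_\cC}\sum_{c}d_c\cdot(c\text{-loop})$ around the face $p$ and fusing it into the lattice is exactly ``cut the boundary edges of $p$, project onto the skein module, and re-glue,'' i.e.\ $D_\cC^{-1}\eval_p^\dag\eval_p$. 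For the inductive step, take $p_0\in\Lambda\setminus S$ adjacent to $S$ with $\overline{S\cup\{p_0\}}$ still simply connected and compute $B_{p_0}\cdot D_\cC^{-|S|}\eval_S^\dag\eval_S$; the loop inserted by $B_{p_0}$ meets $\overline{S}$ only along the shared edge(s), and since $\eval_S$ has already collapsed everything inside $\overline{S}$ into a single morphism on $\partial\overline{S}$, a handle-slide/absorption argument in the skein module fuses the $p_0$-loop into that morphism and rewrites the composite as $D_\cC^{-(|S|+1)}\eval_{S\cup\{p_0\}}^\dag\eval_{S\cup\{p_0\}}$. Letting $S$ exhaust all plaquettes of the rectangle $\Lambda$ gives $p^B_\Lambda=D_\cC^{-\#p\subset\Lambda}\,\eval_\cC^\dag\circ\eval_\cC$.

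For the final statement, note that $\eval_\cC$ is surjective onto $\cS_\cC(\#\partial\Lambda)$ (any skein is resolved into a lattice configuration by $F$-moves), so $\eval_\cC^\dag$ is injective. Expanding $(p^B_\Lambda)^2=p^B_\Lambda$ gives $\eval_\cC^\dag(\eval_\cC\eval_\cC^\dag)\eval_\cC=D_\cC^{\#p\subset\Lambda}\eval_\cC^\dag\eval_\cC$, and cancelling the injective $\eval_\cC^\dag$ on the left and the surjective $\eval_\cC$ on the right yields $\eval_\cC\eval_\cC^\dag=D_\cC^{\#p\subset\Lambda}\,\id_{\cS_\cC(\#\partial\Lambda)}$. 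Hence for $\xi\in\operatorname{ran}(p^B_\Lambda)$ one has $\|D_\cC^{-\#p\subset\Lambda/2}\eval_\cC\xi\|^2=D_\cC^{-\#p\subset\Lambda}\langle\eval_\cC^\dag\eval_\cC\xi,\xi\rangle=\langle p^B_\Lambda\xi,\xi\rangle=\|\xi\|^2$, so $D_\cC^{-\#p\subset\Lambda/2}\eval_\cC$ restricts to an isometry $p_\Lambda\bigotimes_{v\in\Lambda}\cH_v\to\cS_\cC(\#\partial\Lambda)$; it is onto because for any $\zeta$ and any preimage $\eta$ with $\eval_\cC\eta=\zeta$ we get $\eval_\cC(p^B_\Lambda\eta)=D_\cC^{-\#p\subset\Lambda}\eval_\cC\eval_\cC^\dag\zeta=\zeta$ with $p^B_\Lambda\eta\in\operatorname{ran}(p^B_\Lambda)$. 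Thus it is unitary.

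The main obstacle is the inductive step: making rigorous the claim that adjoining one more plaquette operator to an already partially evaluated region merges exactly one more face and contributes exactly one factor $D_\cC^{-1}$. All of the genuine string-net calculus — naturality and isotopy invariance of evaluation, the $F$-move resolution of a graph into local tensors, the bubble identity $\sum_c d_c\cdot(c\text{-bubble})=D_\cC$, and the handle slide that absorbs the $\Omega$-loop into the already-collapsed region — is concentrated there, and the combinatorics of the boundary legs of $\overline{S}$ must be tracked carefully as $\overline{S}$ grows. Keeping $\overline{S}$ (and hence $\Lambda$, which is why we assume it is a bulk rectangle) simply connected throughout is what prevents any framing or linking ambiguity from entering.
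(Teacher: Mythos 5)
The paper does not prove this lemma: it is quoted with a citation to \cite{MR3204497} and to Theorem~2.9 of \cite{2305.14068}, so there is no in-paper argument against which to compare. Your outline is the standard skein-theoretic route, and its global structure is sound. The closing steps are clean and correct: since the $B_p$ are mutually commuting self-adjoint projections preserving $\operatorname{im}(p^A_\Lambda)$, the product $p^B_\Lambda$ is a projection, and once the displayed formula is established, idempotency together with surjectivity of $\eval_\cC$ (and hence injectivity of $\eval_\cC^\dag$) gives $\eval_\cC\eval_\cC^\dag = D_\cC^{\#p\subset\Lambda}\,\id$; the isometry and surjectivity of $D_\cC^{-\#p\subset\Lambda/2}\eval_\cC$ on $\operatorname{ran}(p_\Lambda)$ then follow as you say.

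However, the heart of the lemma is the inductive step, and as written you assert it rather than prove it. The claim that $B_{p_0}\cdot\eval_S^\dagger\eval_S = D_\cC^{-1}\,\eval_{S\cup\{p_0\}}^\dagger\eval_{S\cup\{p_0\}}$ on the edge-constraint subspace — with the factor exactly $D_\cC^{-1}$ uniformly in the number of edges $p_0$ shares with $\overline S$ — is the gluing/sewing identity for the evaluation map. Making it precise requires inserting a semisimple partition of unity on the shared edges, popping the resulting bubbles via $\sum_c d_c\cdot(c\text{-bubble}) = D_\cC$, and checking that the $1/\sqrt{d_{x_1}\cdots d_{x_n}}$ normalizations in the skein inner product (cf.\ \cite[Lem.~2.4]{2305.14068}) make this step isometric. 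It is also worth making explicit that the simple connectivity of $\overline S$ at every stage is arrangeable for a bulk rectangle (e.g.\ by lexicographic growth) and is what guarantees no extra Euler-characteristic factors appear. Since this computation is precisely the content of the cited references, the proposal currently presupposes the key step; the plan is right, but that step still needs to be carried out.
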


This result was used in \cite{MR4945955} to prove that the net of projections $(p_\Lambda)$ for the 2D Levin-Wen model without boundary
satisfies the LTO axioms from~\cite{MR4945955} (see \S\ref{sec:LTO} above for the main definitions).
In order to extend this to the boundary Levin-Wen model, we have a similar result for rectangles which meet the boundary.

\begin{lem}
For a rectangle $\Lambda$ such that $\partial \Lambda\cap \partial\cL\neq \emptyset$, on $p^A_\Lambda \bigotimes_{v\in \Lambda}\cH_v$,
$$
p^B_\Lambda = D_{\cC}^{-\#p\subset \Lambda} \cdot \eval_\cM^\dag\circ \eval_\cM,
$$
where $\#p\subset \Lambda$ is the number of plaquettes in $\Lambda$
and $\eval_\cM$ is the evaluation map from the image of $p_A^\Lambda$ to the skein module $\cS_\cM(\#\partial \Lambda)$.
Hence $p_\Lambda \bigotimes_{v\in \Lambda} \cH_v$ is unitarily isomorphic to $\cS_\cM(\#\partial \Lambda)$ via the map $D_{\cC}^{-\#p\subset \Lambda/2}\eval_\cM$.
\end{lem}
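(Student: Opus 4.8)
The plan is to adapt the bulk argument of \cite{MR3204497} and \cite[Thm.~2.9]{2305.14068} (the same one used in \cite{2307.12552} to verify the bulk LTO axioms), inducting on the number $N:=\#p\subset\Lambda$ of plaquettes, the only genuinely new ingredient being a plaquette whose left edge lies on $\partial\cL$. For such a boundary plaquette the term $B_p=\tfrac1{D_\cC}\sum_{c\in\Irr(\cC)}d_c\,(\text{insert a }c\text{-loop})$ of \eqref{eq:LWPlaquette} must be resolved into the local Hilbert spaces $\cH_v$ and $\cH^\partial_v=\cM(m_1\to m_2\lhd c)$ by using the module associativity constraint to drag the $c$-loop past the blue $\Irr(\cM)$-strand and the module trace $\Tr^\cM$ to contract the result; this is the ``similar formula along the boundary'' alluded to after \cite[Lem.~2.8]{2305.14068} (compare \cite[\S5]{MR3204497}, \cite[\S5]{0907.2204}). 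Its payoff is the single-plaquette identity: on $\im(p^A_\Lambda)$ a boundary plaquette term acts as $D_\cC^{-1}e_p^\dag\circ e_p$, where $e_p$ partially evaluates the diagram around that one plaquette. The square roots of $d_c$ and $d_m$ built into the skein-module inner products on $\cH_v,\cH^\partial_v$ and on $\cS_\cM(n)$ are precisely what make every such partial evaluation, and every gluing of module morphisms, isometric up to a power of $D_\cC^{1/2}$ --- this is \cite[Lem.~2.4]{2305.14068} --- and I would invoke it repeatedly.

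For the induction, the base case $N=0$ is immediate: enforcing the edge relations $A_\ell$ matches all $\Irr(\cC)$- and $\Irr(\cM)$-labels across $\Lambda$, so $\im(p^A_\Lambda)$ is isometrically $\cS_\cM(\#\partial\Lambda)$ --- compatibly with inner products, as $\cH^\partial_v=\cM(m_1\to m_2\lhd c)$ supplies exactly the $\cM$-morphism data the skein module is built from --- whence $\eval_\cM$ is this unitary and $p^B_\Lambda=1$. For the inductive step I would attach a single plaquette $p_0$ to the region assembled so far, keeping it simply connected as in the bulk argument (so intermediate regions need not be rectangles, and may be bulk-only --- the already-known case --- or boundary-touching). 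First one checks that $\{B_p:p\subset\Lambda\}$ is a commuting family of self-adjoint projections on $\im(p^A_\Lambda)$; the commutators not covered by the bulk analysis are those of two boundary plaquettes sharing a blue $\cM$-edge, and these vanish by module associativity. Then $p^B_\Lambda=\bigl(\prod_{p\neq p_0}B_p\bigr)B_{p_0}$ on $\im(p^A_\Lambda)$; the inductive hypothesis gives $\prod_{p\neq p_0}B_p=D_\cC^{-(N-1)}(\eval')^\dag\circ\eval'$ with $\eval'$ into the skein space of the smaller region, and combining with $B_{p_0}=D_\cC^{-1}e_{p_0}^\dag\circ e_{p_0}$ and the factorization $\eval_\cM=\eval'\circ e_{p_0}$ of the full evaluation through the one around $p_0$ (a composite of maps that are isometric up to a power of $D_\cC^{1/2}$, and independent of the resolution order by the module pentagon) yields $p^B_\Lambda=D_\cC^{-N}\eval_\cM^\dag\circ\eval_\cM$.

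The concluding ``Hence'' is then formal: on $\im(p^A_\Lambda)$ the identity just proved reads $D_\cC^{-N}\eval_\cM^\dag\eval_\cM=p^B_\Lambda$, while $\eval_\cM\eval_\cM^\dag=D_\cC^{N}\id_{\cS_\cM(\#\partial\Lambda)}$ because $\eval_\cM^\dag$ is, up to the scalar $D_\cC^{N/2}$, built from the isometric gluing maps of \cite[Lem.~2.4]{2305.14068} and is in particular surjective onto $\im(p^A_\Lambda)$; hence $D_\cC^{-N/2}\eval_\cM$ is a partial isometry with initial space $\im(p^A_\Lambda)\cap\im(p^B_\Lambda)=\im(p_\Lambda)$ and final space $\cS_\cM(\#\partial\Lambda)$, that is, the asserted unitary.

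I expect the main obstacle to be the normalization bookkeeping where the bulk and boundary resolutions meet along the row of $\Lambda$ adjacent to $\partial\cL$: one must verify that contracting the $c$-loops of the boundary plaquettes through the $\Irr(\cM)$-strand with $\Tr^\cM$ produces no stray factor of $\sum_m d_m^2$ --- which, unlike the bulk total dimension, need not equal $D_\cC$, since $W=\bigoplus_m m$ is not unit-like --- so that the overall prefactor collapses to exactly $D_\cC^{-N}$. A pleasant consequence of this check is that the identity does not depend on the scalar normalization of $\Tr^\cM$ fixed later in Appendix~\ref{appendix:AFActionsOfUmFCs}, since that scalar cancels between the $\tfrac1{\sqrt{d_m}}$ factors and $\Tr^\cM$ throughout. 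With the single boundary-plaquette identity and this normalization check in hand, the remainder is a transcription of the bulk argument of \cite{2307.12552,MR3204497,2305.14068}.
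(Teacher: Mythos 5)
Your proposal is correct and carries out precisely the adaptation the paper leaves implicit: the paper states this lemma without proof immediately after the bulk version (cited to \cite{MR3204497} and \cite[Thm.~2.9]{2305.14068}), and your induction on plaquettes, with the module associator/module trace handling the one new case of a boundary plaquette and the normalization check ruling out stray $\sum_m d_m^2$ factors, is exactly that ``similar'' argument. One small slip: the phrase ``surjective onto $\im(p^A_\Lambda)$'' in the last step should read that $D_\cC^{-N/2}\eval_\cM^\dag$ is an isometry with image $\im(p_\Lambda)$ (equivalently $\eval_\cM$ is surjective onto $\cS_\cM(\#\partial\Lambda)$), which is what actually delivers $\eval_\cM\eval_\cM^\dag=D_\cC^N\id$.
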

Note that $\#\partial \Lambda$ contains two sites on the boundary (corresponding to the two $\cM$ points on the boundary), and $(\partial \Lambda-2)$ $\cC$-boundary points in the skein module. 
With this identification the proof of this lemma is entirely analogous to the proof of~\cite[Thm.~2.9]{2305.14068}.

\begin{defn}
We now cut our 2D region along a horizontal 1D slice as in \eqref{eq:BoundaryRectangleLW} below.
We denote the 1D slice by $\cK$.
For a 2D rectangle $\Lambda$ contained in the bottom half-space bounded by $\cK$, we let $I=I(\Lambda)$ be the intersection $\partial\Lambda\cap \cK$.
Observe that $I=\emptyset$ if $\Lambda\cap \cK=\emptyset$.
When $I\neq \emptyset$, we define
$$
\fM(I)
:=
\begin{cases}
\End_\cC(X^{\otimes I})
&
\text{if }
I\cap \partial\cL = \emptyset
\\
\End_\cM(W\lhd X^{\otimes I\setminus\partial\cL})
&
\text{if }
I\cap \partial\cL \neq \emptyset
\end{cases}
$$
where $X^{\otimes I}$ is the tensor product of the 
generator $X=\bigoplus_{c\in\Irr(\cC)} c \in\cC$ over the sites in $I$,
$X^{\otimes I\setminus\partial\cL}$ means taking the tensor product over the sites in $I\setminus \partial\cL$, 
and
$W=\bigoplus_{m\in\Irr(\cM)} m\in\cM$ as above.
In \eqref{eq:BoundaryRectangleLW} below, $I$ is
the interval where $\partial\Lambda$ intersects the red slice, so points in $I$ are oriented from \emph{left-to-right}.
For example:
\begin{equation}
\label{eq:BoundaryRectangleLW}
\tikzmath{
\draw[thick, step=.75, thick] (.01,0.25) grid (3.5,2.75);
\draw[thick, blue] (0,.25) -- (0,2.75);
\draw[thick, red] (-.5,1.875) node[above]{$\scriptstyle \cK$} -- (3.5,1.875);
\draw[thick, cyan, rounded corners = 5pt] (1.125,.25) -- (1.125,1.7) --node[below,yshift=.1cm]{$\scriptstyle I$} (2.625,1.7) -- (2.625,.25);
\node[cyan] at (1.875,.5) {$\Lambda$};
\node[cyan] at (4.5,1.875) {$\rightsquigarrow$};
\node[cyan] at (7,1.875) {$\fM(I)=\End_\cC(X^{\otimes 2})$};
\draw[thick, purple, rounded corners = 5pt] (-.375,.25) -- (-.375,1.8) --node[below,yshift=.1cm]{$\scriptstyle J$} (1.125,1.8) -- (3.375,1.8) -- (3.375,.25);
\node[purple] at (.375,.5) {$\Delta$};
\node[purple] at (4.5,1.125) {$\rightsquigarrow$};
\node[purple] at (7.4,1.125) {$\fM(J)=\End_\cM(W \lhd X^{\otimes 4})$};
}
\end{equation}
We call the 1D net of algebras with boundary $\fM$ the \emph{fusion module categorical net} associated to $\cM_{\cC}$ with generators $X\in\cC$ and $W\in\cM$.
\end{defn}

When $\partial\Lambda\cap\cK\neq\emptyset$ but $\partial\Lambda\cap \partial\cL=\emptyset$,
just as \cite[Def.~4.6]{MR4945955}, for $\varphi \in \End_\cC(X^{\otimes I})$, we can define a gluing operator $\Gamma_\varphi$ acting on $\bigotimes_{v\in \Lambda} \cH_v$. 
When $I$ is as in \eqref{eq:BoundaryRectangleLW} and
$\varphi \in \cC(\bigotimes_I a_i \to \bigotimes_I b_i)$, $\Gamma_\varphi$ first applies the projector $p_\Lambda$ so that the edge labels of the vectors match and plaquettes are effectively filled.
This allows us to view a vector in $p_\Lambda\bigotimes_{v\in\Lambda} \cH_v$ 
as an element of the skein module $\cS_\cC(\#\partial\Lambda)$, giving an element in $\cC(1\to X^{\otimes \partial \Lambda})$.
We can then take our desired morphism $\varphi$ and glue it on, first projecting to make sure the labels $a_i$ in $\bigotimes_I a_i$ match so that composition is well-defined.
In order for this action to be a $*$-action, we also multiply by a ratio of quantum dimensions to the $1/4$ power; we refer the reader to \cite[Eq.~(2)]{2305.14068} for further details.
\begin{align*}
\tikzmath{
\draw[thick, step=.75] (0.25,1) grid (3.5,2);
\foreach \x in {1,2,3,4}{
\draw[thick] ($ (0,2) + \x*(.75,0) $) node[above]{$\scriptstyle c_{\x}$};
}
}
&\longmapsto
\prod_{i} \delta_{c_{i}=a_{i}}
\tikzmath{
\draw[thick, step=.75] (0.25,1) grid (3.5,2);
\foreach \x in {1,2,3,4}{
\draw[thick] ($ (0,2) + \x*(.75,0) $) node[above]{$\scriptstyle a_{\x}$};
}
}
\\&\longmapsto
\prod_{i} \delta_{c_{i}=a_{i}}
\left(\prod_j \frac{d_{b_j}}{d_{a_j}}\right)^{1/4}
\tikzmath{
\filldraw[thick, fill=gray!30, rounded corners=5pt]
(.35,2) rectangle (3.4,2.4);
\node at (1.75,2.2) {$\scriptstyle\varphi$};
\draw[thick, step=.75] (.25,1) grid (3.5,2);
\foreach \x in {1,2,3,4}{
\draw[thick] ($ \x*(.75,0) + (0,2.4) $) -- ($ \x*(.75,0) +(0,2.8) $) node[above]{$\scriptstyle b_{\x}$};
}
}
\end{align*}
This new morphism must be resolved into a vector in $\bigotimes_{v\in\Lambda} \cH_v$ in the usual way using semisimplicity.
Using these gluing operators, it was shown in \cite[Lem.~4.7 and Thm.~4.8]{MR4945955} that the usual Levin-Wen model for $\cC$ satisfies \ref{LTO:QECC}--\ref{LTO:Injective} for $s=1$, and the boundary algebra
$\fB(I)=\fM(I)$, the fusion categorical net associated to $\cC$.
$$
\fB(\Lambda\Subset \Delta)=\set{\Gamma_\varphi}{\varphi\in\End_\cC(X^{\otimes I})}\cong \fM(I)
\qquad\qquad
I=\partial\Delta\cap \partial\Lambda\cap \cK
$$

Similarly, 
when $\partial\Lambda\cap\cK\neq\emptyset$ and $\partial\Lambda\cap \partial\cL\neq\emptyset$,
for $\psi \in \End_\cM(W \lhd X^{\otimes J\setminus \partial\cL})$, we can define a gluing operator $\Gamma_\psi$ acting on $\bigotimes_{v\in \Lambda} \cH_v$. 
When $J$ is as in \eqref{eq:BoundaryRectangleLW} and
$\psi \in \cM(m\lhd \bigotimes_{J\setminus \partial\cL} a_j \to n\lhd \bigotimes_{J\setminus \partial\cL} b_j)$, $\Gamma_\psi$ first applies the projector $p_\Lambda$ which allows us to view a vector in $p_\Lambda\bigotimes_{v\in\Lambda} \cH_v$ 
as an element of the skein module $\cS_\cM(\#\partial\Lambda)$, giving a morphism in $\cM$.
We can then take our desired morphism $\psi$ and glue it on by post-composition, first projecting to make sure the labels $a_j$ in $\bigotimes_{J\setminus \partial\cL} a_j$ and $m\in\Irr(\cM)$ match so that composition is well-defined.
Again, there is a scalar we introduce in order to make the action a $*$-action.
\begin{align*}
\tikzmath{
\draw[thick, step=.75] (.76,1) grid (3.5,2);
\draw[thick, blue] (.75,1) -- (.75,2) node[above]{$\scriptstyle p$};
\foreach \x in {1,2,3}{
\draw[thick] ($ (.75,2) + \x*(.75,0) $) node[above]{$\scriptstyle c_{\x}$};
}
}
&\longmapsto
\delta_{p=m}
\prod_{i} \delta_{c_{i}=a_{i}}
\tikzmath{
\draw[thick, step=.75] (.76,1) grid (3.5,2);
\draw[thick, blue] (.75,1) -- (.75,2) node[above]{$\scriptstyle m$};
\foreach \x in {1,2,3}{
\draw[thick] ($ (.75,2) + \x*(.75,0) $) node[above]{$\scriptstyle a_{\x}$};
}
}
\\&\longmapsto
\delta_{p=m}
\left(\frac{d_{n}}{d_{m}}\right)^{1/4}
\prod_{i} \delta_{c_{i}=a_{i}}
\left(\prod_j \frac{d_{b_j}}{d_{a_j}}\right)^{1/4}
\tikzmath{
\filldraw[thick, fill=gray!30, rounded corners=5pt]
(.35,2) rectangle (3.4,2.4);
\node at (1.75,2.2) {$\scriptstyle\psi$};
\draw[thick, step=.75] (.76,1) grid (3.5,2);
\draw[thick, blue] (.75,1) -- (.75,2);
\draw[thick, blue] (.75,2.4) -- (.75,2.8) node[above]{$\scriptstyle n$};
\foreach \x in {1,2,3}{
\draw[thick] ($ \x*(.75,0) + (.75,2.4) $) -- ($ \x*(.75,0) +(.75,2.8) $) node[above]{$\scriptstyle b_{\x}$};
}
}
\end{align*}
As before, this new morphism from $\cM$ must be resolved into a vector in $\bigotimes_{v\in\Lambda} \cH_v$ in the usual way using semisimplicity.

Arguing as in \cite[Lem.~4.7 and Thm.~4.8]{MR4945955}, but using a variant of the strip/ladder algebra for $\cM_{\cC}$ 
from \cite{MR2942952,MR3975865}
instead of the tube algebra, we get the following theorem.

\begin{thm}
\label{thm:BoundaryLTQO-LW}
The boundary Levin-Wen model for $\cM_\cC$ satisfies \ref{BoundaryLTO:QECC}--\ref{BoundaryLTO:Injective} for $s=1$
and the boundary algebra
$$
\fB(\Lambda\Subset^\partial \Delta)=\set{\Gamma_\psi}{\psi \in \End_\cM(W\lhd X^{\otimes J\setminus \partial\cL})}\cong \fM(J)
\qquad\qquad
J=\partial\Lambda\cap \partial\Delta\cap \cK
$$
is isomorphic to the fusion module categorical net for $\cM_{\cC}$ with generators $X\in\cC$ and $W\in\cM$.
\end{thm}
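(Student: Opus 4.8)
The plan is to follow the proof that the bulk Levin--Wen model satisfies \ref{LTO:QECC}--\ref{LTO:Injective} from \cite[Lem.~4.7 and Thm.~4.8]{2307.12552}, with the tube algebra $\Tube(\cC)$ replaced by the strip/ladder algebra of the module category $\cM_\cC$ in the sense of \cite{MR2942952,MR3975865}. The starting point is the skein-module description just recorded: for any rectangle $\Lambda$ the space $p_\Lambda\bigotimes_{v\in\Lambda}\cH_v$ is unitarily $\cS_\cC(\#\partial\Lambda)$ when $\partial\Lambda\cap\partial\cL=\emptyset$ and $\cS_\cM(\#\partial\Lambda)$ when $\partial\Lambda$ meets $\partial\cL$, via a rescaled evaluation map $\eval$. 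Under this identification the gluing operators $\Gamma_\varphi$, $\Gamma_\psi$ defined above act by post-composing a morphism onto the free legs along $\cK$, and they lie in $p_\Lambda\fA(\Lambda)p_\Lambda$. I would organize the verification around three geometric regimes for the pair $\Lambda\subset\Delta$: (i) rectangles $\Lambda\Subset_s\Delta$ disjoint from $\partial\cL$, which are \cite{2307.12552} verbatim; (ii) $\Lambda\ll_s^\partial\Delta$; and (iii) $\Lambda\Subset_s^\partial\Delta$. Throughout one takes $s=1$ because every edge and plaquette term --- boundary terms included --- has range $1$.

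For \ref{BoundaryLTO:QECC}, given $\Lambda\ll_1^\partial\Delta$ and $x\in\fA(\Lambda)$, use $p_\Delta=p_\Delta p_\Lambda$ (as $p_\Delta\le p_\Lambda$) to view $p_\Delta x p_\Delta=p_\Delta\,(p_\Lambda x p_\Lambda)\,p_\Delta$ with $p_\Lambda x p_\Lambda\in\End(\cS_\cM(\#\partial\Lambda))$, then glue into $\Delta$. Since $\Lambda$ meets $\partial\Delta$ only in a segment of $\partial\cL$ and is otherwise interior to $\Delta$, the resulting picture in $\cS_\cM(\#\partial\Delta)$ is a capped-off module disk, and topological invariance of $\eval_\cM$ (the $\cM$-analogue of the sphere-to-scalar computation in the bulk) forces $p_\Delta x p_\Delta\in\bbC p_\Delta$. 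For \ref{BoundaryLTO:Boundary}, take $\Lambda\Subset_1^\partial\Delta$ with $J=\partial\Lambda\cap\partial\Delta\cap\cK$; the same compression presents $p_\Delta\fA(\Lambda)p_\Delta$ as the image of the $\cM_\cC$ strip algebra attached to the half-strip $\Delta$, and the morphisms that survive the gluing and depend only on $J$ (not on the interior of $\Lambda$) are exactly those obtained by post-composition along the sites of $J$; by semisimplicity and Frobenius reciprocity between $\cC$ and $\cM$ this space is $\End_\cM(W\lhd X^{\otimes J\setminus\partial\cL})$. Checking that each $\Gamma_\psi$ satisfies the commutation relations defining $\fB^\partial(\Lambda\Subset_1^\partial\Delta)$ then yields the three-way equality $p_\Delta\fA(\Lambda)p_\Delta=\fB^\partial(\Lambda\Subset_1^\partial\Delta)p_\Delta=\set{\Gamma_\psi p_\Delta}{\psi\in\End_\cM(W\lhd X^{\otimes J\setminus\partial\cL})}$, which is both \ref{BoundaryLTO:Boundary} and the displayed description of the boundary algebra.

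The remaining axioms should be routine adaptations. For \ref{BoundaryLTO:Surjective} I would invoke Lemma~\ref{lem:SemisimpleCommutation}: passing from $\Lambda_1$ to a larger $\Lambda_2$ with the same intersection with $\partial\Delta$ only introduces interior sites whose labels are summed over, and the commutant of $\End_\cC(X^{\otimes\bullet})$ acting on the relevant $\cM$-hom spaces contributes no new operators fixing the sub-projections used to define $\fB^\partial$, so $\fB^\partial(\Lambda_1\Subset_1^\partial\Delta)=\fB^\partial(\Lambda_2\Subset_1^\partial\Delta)$. For \ref{BoundaryLTO:Injective}, the rescaled evaluation $\eval_\cM$ into the larger skein module $\cS_\cM(\#\partial\Delta_2)$ stays faithful on $p_\Lambda\fA(\Lambda)p_\Lambda$, exactly as in the bulk, so $x\mapsto xp_{\Delta_2}$ is injective. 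Finally, to recognize the net I would check that $\Gamma_\psi\mapsto\psi$ is a $*$-isomorphism $\fB^\partial(I)\to\End_\cM(W\lhd X^{\otimes I\setminus\partial\cL})=\fM(I)$ --- the quantum-dimension prefactors in the definition of $\Gamma_\psi$ were chosen precisely to make gluing a $*$-action --- that it intertwines the inclusions $\fB^\partial(I)\subset\fB^\partial(I')$ with the maps given by tensoring with $\id_X$, and that it respects locality along $\cK$; this gives $\fB^\partial\cong\fM$ as nets of algebras with boundary.

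I expect the main obstacle to be regime (iii): proving the analogue of \cite[Lem.~4.7]{2307.12552} in the presence of the module boundary. One has to check that with surrounding constant $s=1$ the skein-theoretic ``push everything to the corner'' step still closes up near the corner site $\partial\cK=\cK\cap\partial\cL$, where the $\cC$-strand combinatorics of the fusion spin chain meet the $\cM$-module structure, and that after compression the $\cM_\cC$ strip algebra surjects onto all of $\End_\cM(W\lhd X^{\otimes J\setminus\partial\cL})$. Getting the module-trace normalization of $\cS_\cM$ (to be fixed in Appendix~\ref{appendix:AFActionsOfUmFCs}) arranged so that every gluing map in sight is isometric is the bookkeeping that makes this argument go through cleanly.
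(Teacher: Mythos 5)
Your proposal takes essentially the same route as the paper. The paper itself does not give a detailed proof: it only states that one argues ``as in [2307.12552, Lem.~4.7 and Thm.~4.8], but using a variant of the strip/ladder algebra for $\cM_\cC$ from [MR2942952, MR3975865] instead of the tube algebra.'' Your outline is precisely that adaptation carried out: the skein-module identification of $p_\Lambda\bigotimes_{v\in\Lambda}\cH_v$ with $\cS_\cC$ or $\cS_\cM$, the gluing operators $\Gamma_\psi$ with their dimension prefactors, $s=1$ from the range of the edge/plaquette terms, and the strip algebra playing the role the tube algebra plays in the bulk. Your flagging of the corner regime $\Lambda\Subset_1^\partial\Delta$ and the module-trace normalization as the delicate points matches where the genuine new content lies.

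One small caution: for \ref{BoundaryLTO:Surjective} you appeal to Lemma~\ref{lem:SemisimpleCommutation}. That lemma's role in the paper (see Example~\ref{ex:fusion_spin}) is to identify $\fB(I)$ with $\fF(I)$ once the relevant hom space contains all simples, i.e.\ it is used to show that the boundary algebra only depends on the interval $I$ and saturates once $|I|\geq n$. For \ref{BoundaryLTO:Surjective} the cleaner statement is the one in [2307.12552, Lem.~4.7]: enlarging $\Lambda_1$ to $\Lambda_2$ with the same boundary trace only caps off interior edges and plaquettes, and the skein evaluation identifies the two compressed algebras directly, without passing through a commutant computation. Both routes should give the same conclusion, but the direct skein argument parallels the bulk proof more closely and is what the paper is implicitly pointing at.
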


%%%%%%%%%%%%%%%%%%%%%%%%%%%%%%%%%%%%%%%%%%%%%%%%%%%%%
\subsection{The boundary DHR theory for the boundary Levin-Wen model}
\label{sec:BoundaryDHRforBoundaryLW}

We now analyze the boundary DHR theory for our boundary Levin-Wen model using subfactor theory.
We denote the fusion module boundary algebra by
$$
\fM:=\varinjlim_n \End_\cM(W\lhd X^{\otimes n}).
$$

We now unpack Definition \ref{defn:BoundaryDHRBimodule} of the tensor category $\DHR^\partial(\fM)$ of boundary DHR bimodules for $\fM$.
Such a bimodule is an $\fM-\fM$ Hilbert $\rmC^*$-bimodule ${}_\fM Y_\fM$ such that there is an interval $I\subset \cK$ which meets the boundary ($I\cap \partial\cL\neq \emptyset$) such that $Y$ is \emph{localizable} in $I$, i.e., there is a finite $Y_\fM$-basis $\{\beta_i\}\subset Y$ such that $\beta_i a=a\beta_i$ for all 
$$
a\in 
\fM(I^c)
:= 
\varinjlim_k \End_\cC(X^{\otimes k})
\subset 
\varinjlim_k \End_\cM(W\lhd X^{\otimes |I-1|+k})
=
\fM
$$
That is, the basis elements $\beta_i$ commute with all the elements of $\fM$ which are in the subalgebra supported on $I^c$.

\begin{construction}
\label{construct:BoundaryDHR}
We construct for each $F\in \End(\cM_\cC)$ a DHR bimodule ${}_\fM Y^F_\fM$.
We then show that the bimodules $Y^F$ assemble into a fully faithful tensor functor 
\begin{equation}
\label{eq:EmbedEndIntoDHR}
Y: \End(\cM_\cC)\to \DHR^\partial(\fM).
\end{equation}
For $n\geq 0$, we define
$$
Y^F_n := \cM(W\lhd X^{\otimes n} \to FW\lhd \otimes X^{\otimes n})
$$
which is naturally a bimodule for $\fM_n:=\End_\cM(W\lhd X^{\otimes n})$, where
the left action is post-composition after applying $F$ and the right action is pre-composition.
We equip $Y^F_n$ with a right $\fM_n$-valued inner product by 
$$
\langle \eta|\xi\rangle_{\fM_n}:= \eta^\dag\circ \xi \in \End((\fM_n)_{\fM_n})=\fM_n.
$$
Observe that the inclusions $-\otimes \id_X: Y^F_n\hookrightarrow Y^F_{n+1}$
are compatible with the inclusions $-\otimes \id_X : \fM_n\hookrightarrow \fM_{n+1}$, and that
$$
\langle \eta\otimes\id_X|\xi\otimes\id_X\rangle_{\fM_{n+1}}
=
(\eta\otimes \id_X)^\dag\circ (\xi\otimes \id_X)
=
\eta^\dag\circ \xi \otimes \id_X
=
\langle \eta|\xi\rangle_{\fM_n}\otimes \id_X.
$$
We thus get an inductive limit $\fM-\fM$ bimodule $Y^F:=\varinjlim Y^F_n$.

Now for each $\cC$-module natural transformation $\theta: F\Rightarrow G$ in $\End(\cM_\cC)$, we get a bimodule intertwiner as the inductive limit of the maps
$$
\cM(W\lhd X^{\otimes n} \to FW\lhd \otimes X^{\otimes n})
=
Y_n^F
\ni
f
\mapsto 
(\theta_W\otimes \id_{X^{\otimes n}})\circ f
\in 
Y_n^G
=
\cM(W\lhd X^{\otimes n} \to GW\lhd \otimes X^{\otimes n}).
$$
This construction is easily seen to be a unitary tensor functor.
Moreover, it can be shown that this functor is fully faithful using the unitary multifusion category embedding results of \cite{MR4916103}, together with the \emph{Q-system realization} technique from \cite{MR4419534}.
We review this argument in the context of $\rmC^*$-Hilbert bimodules in Appendix \ref{appendix:AFActionsOfUmFCs} below for convenience of the reader.
\end{construction}

\begin{rem}
\label{rem:FullyFaithful}
Appendix \ref{appendix:AFActionsOfUmFCs} proves far more than the fully faithfulness of the unitary tensor functor
$\End(\cM_\cC)\hookrightarrow \Bim(\fM)$ given by $F\mapsto Y^F$.
We may think of $\End(\cM_\cC)$ as one corner of a unitary multifusion category which also includes $\cC,\cM$:
$$
\cE := 
\begin{pmatrix}
\End(\cM_\cC) & \cM
\\
\cM^{\rm op} & \cC
\end{pmatrix}.
$$
Similar to the above construction, we may also embed $\cC$ fully faithfully as $\rmC^*$-Hilbert bimodules over the inductive limit algebra
$$
\fN:=\varinjlim_k \End_\cC(X^{\otimes k}),
$$
and similarly, we can embed $\cM$ fully faithfully as $\fM-\fN$ bimodules.
All told, this produces a fully faithful unitary tensor functor
$$
\cE := 
\begin{pmatrix}
\End(\cM_\cC) & \cM
\\
\cM^{\rm op} & \cC
\end{pmatrix}
\hookrightarrow
\begin{pmatrix}
\Bim(\fM) & \Bim(\fM,\fN)
\\
\Bim(\fN,\fM) & \Bim(\fN)
\end{pmatrix}
=
\Bim(\fM\oplus\fN).
$$
Indeed, one first uses the results of \cite{MR4916103} to fully faithfully embed $\cC$ into $\Bim(\fN)$.
The Q-system realization technique then allows one to boost this embedding into a fully faithful embedding of the unitary multifusion category of $Q-Q$ bimodules ${}_Q\cC_Q$ into $\Bim(|Q|_{\fN})$, the Hilbert $\rmC^*$-bimodules over the realized Q-system.
Choosing $Q=1\oplus \underline{\End}_\cC(W)$ using the unitary internal end \cite{MR4750417,2410.05120}, we have ${}_Q\cC_Q\cong \cE$ and $|Q|_\fN\cong \fM\oplus\fN$, as claimed.

Moreover, we do not need that $\cC$ is fusion nor that $\cM$ is indecomposable; this construction works for arbitrary unitary multifusion $\cC$ and unitary modules $\cM_\cC$.
However, one must choose the unitary module trace on $\cM$ carefully to be compatible with the generator $W\in\cM$.

We refer the reader to Appendix \ref{appendix:AFActionsOfUmFCs} for a more detailed discussion.
\end{rem}

\begin{nota}
For an interval $I\subset \cK$ which meets $\partial \cL$,
we write $\fN_I:=\fM(I^c)$, so that we have a canonical inclusion 
$\iota_I:=1_{\fM(I)}\otimes - :\fN_I\hookrightarrow \fM$.
The algebra $\fN_I$ is the relative commutant of $\fM(I)$ in $\fM$.
Observe that for all $I$, the algebras $\fN_I$ are obviously isomorphic to
$$
\fN:=\varinjlim_k \End_\cC(X^{\otimes k}).
$$
\end{nota}

The algebra $\fM$ is naturally an $\fN_I-\fN_I$ Hilbert $\rmC^*$-bimodule using the Q-system realization technique for the Q-system
$\underline{\End}_\cC(W\lhd X^{\otimes |I|-1})\in\cC$.
Hence ${}_{\fN_I}\fM_{\fN_I}$ lies in the fully faithful image of $\cC \subset \Bim(\fN_I)\cong\Bim(\fN)$.
Moreover, the centralizer
\begin{equation}
\label{eq:FusionModuleBoundaryHaagDuality}
Z_\fM(\fN_I) 
= 
\End_{\fN_{I}-\fM}(\fM)
\cong
\End_\cM(W\lhd X^{\otimes |I|-1})=\fM(I),
\end{equation}
i.e., the fusion module categorical net satisfies boundary algebraic Haag duality.
Again, more details can be found in Appendix \ref{appendix:AFActionsOfUmFCs}.

\begin{thm}
\label{thm:BoundaryLTOforLW}
The category $\DHR^\partial(\fM)$ of boundary DHR bimodules is the essential image of $\End(\cM_\cC)\subseteq \DHR^\partial(\fM)$.
\end{thm}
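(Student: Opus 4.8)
The functor $Y\colon \End(\cM_\cC)\to\DHR^\partial(\fM)$, $F\mapsto Y^F$, of Construction~\ref{construct:BoundaryDHR} is fully faithful by Remark~\ref{rem:FullyFaithful}, so the content of the theorem is \emph{essential surjectivity}: every ${}_\fM Y_\fM\in\DHR^\partial(\fM)$ is isomorphic to some $Y^F$. Since $\End(\cM_\cC)$ is unitary multifusion, its essential image in $\Bim(\fM)$ is replete and idempotent complete, so it suffices to show an arbitrary boundary DHR bimodule ${}_\fM Y_\fM$ lies in this essential image. The plan is to mirror the strategy of Example~\ref{ex:fusionSpinChainDHR}, working inside $\Bim(\fM\oplus\fN)$, into which the unitary multifusion category $\cE$ of Remark~\ref{rem:FullyFaithful} embeds fully faithfully with replete, idempotent-complete essential image; here the $\End(\cM_\cC)$-corner of that image is the essential image of $F\mapsto Y^F$, the $\cC$-corner is the essential image of $\cC$ in $\Bim(\fN)\cong\Bim(\fN_I)$, and the two off-diagonal corners are the essential images of $\cM$ and $\cM^{{\rm op}}$.

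First I would restrict to the boundary. By Lemma~\ref{lem:BoundaryDHRBimoduleCharacterization}, choose an interval $I\subset\cK$ meeting $\partial\cL$ large enough that ${}_{\fN_I}Y_\fM$ is a summand of a finite direct sum of copies of ${}_{\fN_I}\fM_\fM$ (and large enough that the discussion preceding~\eqref{eq:FusionModuleBoundaryHaagDuality} applies). Restricting also the right action, ${}_{\fN_I}Y_{\fN_I}$ is a summand of a finite direct sum of copies of ${}_{\fN_I}\fM_{\fN_I}$. By the Q-system realization of the AF inclusion $\fN_I\subset\fM$ via the Q-system $\underline{\End}_\cC(W\lhd X^{\otimes|I|-1})\in\cC$, the bimodule ${}_{\fN_I}\fM_{\fN_I}$ lies in the fully faithful image of $\cC$ in $\Bim(\fN_I)$; as that image is idempotent complete (since $\cC$ is unitary fusion), ${}_{\fN_I}Y_{\fN_I}$ lies in the essential image of $\cC$, hence of $\cE$.

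Next I would induce back to $\fM$. Set $P:={}_\fM\fM_{\fN_I}$ and $\overline P:={}_{\fN_I}\fM_\fM$; these are mutually dual, and by Remark~\ref{rem:FullyFaithful} they lie in the essential images of the $\cM$- and $\cM^{{\rm op}}$-corners of $\cE$. Connes fusion with a trivial bimodule over the same algebra computes restriction of one-sided actions, so
$$
\overline P\boxtimes_\fM Y\boxtimes_\fM P\;\cong\;{}_{\fN_I}Y_{\fN_I}\in\cE
\qquad\qquad\text{and}\qquad\qquad
Y\boxtimes_\fM P\;\cong\;{}_\fM Y_{\fN_I}.
$$
Applying Lemma~\ref{lem:2outof3} with $A=\fM\oplus\fN$, $X=\overline P$, $Z=Y\boxtimes_\fM P$: two of the three objects $X$, $Z$, $X\boxtimes_A Z\cong{}_{\fN_I}Y_{\fN_I}$ lie in $\cE$, hence $Y\boxtimes_\fM P\in\cE$. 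Applying Lemma~\ref{lem:2outof3} a second time with $X=Y$, $Z=P$: since $P\in\cE$ and $X\boxtimes_A Z=Y\boxtimes_\fM P\in\cE$, we conclude $Y\in\cE$, and since $Y$ is an $\fM$-$\fM$ bimodule it lies in the $\End(\cM_\cC)$-corner, i.e.\ in the essential image of $F\mapsto Y^F$. This proves essential surjectivity. (As in Example~\ref{ex:fusionSpinChainDHR}, note that boundary algebraic Haag duality is \emph{not} used here; it would be needed only for the half-braiding mentioned at the end of \S\ref{sec:BoundaryAlgforBoundaryLTO}.)

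The real obstacle is not in the diagram chase above but in the inputs it consumes, all supplied by Appendix~\ref{appendix:AFActionsOfUmFCs}: full faithfulness of $F\mapsto Y^F$, the fully faithful embedding of the multifusion category $\cE$, and --- the crucial point --- the identification of ${}_{\fN_I}\fM_{\fN_I}$, $P$, and $\overline P$ as realizations of objects of $\cC$, $\cM$, $\cM^{{\rm op}}$ through the AF Q-system $\underline{\End}_\cC(W\lhd X^{\otimes|I|-1})$ (via Q-system realization, unitary internal ends, and Ocneanu compactness). The only point needing attention in this section is that a single interval $I$ can be chosen large enough to meet the hypotheses of Lemma~\ref{lem:BoundaryDHRBimoduleCharacterization} \emph{and} the semisimplicity/generation hypotheses (cf.\ Lemma~\ref{lem:SemisimpleCommutation}) needed for these identifications, which holds for all $|I|$ past a fixed bound.
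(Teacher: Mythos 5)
Your proof is correct, and it arrives at the same conclusion through the same basic machinery (local triviality of boundary DHR bimodules, the fully faithful embedding of $\cE$ into $\Bim(\fM\oplus\fN)$ via Q-system realization, and Lemma~\ref{lem:2outof3}), but it takes one more step than the paper's proof. You restrict both the left and right actions to land in the $\cC$-corner --- ${}_{\fN_I}Y_{\fN_I}$ is a summand of copies of ${}_{\fN_I}\fM_{\fN_I}$ --- and then apply Lemma~\ref{lem:2outof3} twice to climb back to $Y\in\cE$. This exactly mirrors Example~\ref{ex:fusionSpinChainDHR}, where the two-fold application is spelled out. The paper's proof is slightly more economical: it stops after restricting only the left action, observes that ${}_{\fN_I}Y_\fM$ is already a summand of copies of ${}_{\fN_I}\fM_\fM$ (which lives in the $\cM^{\rm op}$-corner of $\cE$), and applies Lemma~\ref{lem:2outof3} a single time with $X = {}_{\fN_I}\fM_\fM$ and $Z = Y_\fM$, so that $X\boxtimes_\fM Z = {}_{\fN_I}Y_\fM$ and $X$ are the two objects known to lie in $\cE$. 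The tradeoff is purely one of economy versus explicitness; your version makes the parallel with Example~\ref{ex:fusionSpinChainDHR} more transparent at the cost of restricting an action that did not need restricting. Your remarks about what the argument genuinely consumes --- the Appendix~\ref{appendix:AFActionsOfUmFCs} inputs, the choice of a single sufficiently large interval $I$, and the fact that boundary algebraic Haag duality is not used --- all agree with the paper.
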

\begin{proof}
Suppose ${}_\fM Y_\fM\in \DHR^\partial(\fM)$.
Since $Y$ is localized in some boundary interval $I$, by Lemma~\ref{lem:DHRBimoduleCharacterization}, 
the restriction ${}_{\fN_I}Y_\fM=  {}_{\fN_I}\fM\boxtimes_\fM Y_\fM$ is isomorphic to a summand of a finite direct sum of copies of ${}_{\fN_I}\fM_\fM$, which lies in the essential image of $\cE$ in $\Bim(\fM\oplus \fN_I)$.
The result now follows by Lemma~\ref{lem:2outof3},
as each of ${}_{\fN_I} Y_{\fM}$ and ${}_{\fN_I}\fM_\fM$ lie in $\cE\subset \Bim(\fM\oplus \fN_I)$.
\end{proof}

%%%%%%%%%%%%%%%%%%%%%%%%%%%%%%%%%%%%%%%%%%%%%%%%%%%%%
%%%%%%%%%%%%%%%%%%%%%%%%%%%%%%%%%%%%%%%%%%%%%%%%%%%%%
%%%%%%%%%%%%%%%%%%%%%%%%%%%%%%%%%%%%%%%%%%%%%%%%%%%%%
\section{Braided categorical nets}
\label{sec:BraidedCategoricalNets}

In~\cite{MR4945955}, an important role is played by the fusion categorical nets (or fusion spin chains) on $\bbZ$ defined in Example~\ref{ex:fusion_spin}.
For higher dimensional lattices, extra structure is needed to define inclusion by tensoring with identity morphisms compatible with the geometric locality of the underlying lattice.
We now construct a canonical net of algebras associated to a choice of object $X$ in a unitary \emph{braided} fusion category $\cB$.
We write $\beta_{a,b}:a\otimes b\to b\otimes a$ for the braiding.
Our goal is to assign an object in $\cB$ to a finite rectangle in a $\bbZ^2$ lattice, i.e.,
$$
\tikzmath{
\foreach \y in {0,.5,...,2.5}{
\foreach \x in {0,.5,...,2.5}{
\filldraw (\x,\y) circle (.05cm);
}
}
\draw[thick, blue, rounded corners=5pt] (.25,.25) rectangle (2.25,2.25);
\node[blue] at (1.25,1.25) {$\Lambda$};
}
\qquad
\longmapsto
\qquad
\fA(\Lambda)
= \End_\cB(X^{\otimes \Lambda}).
$$
In order to do so, we must make sense of the object $X^{\otimes \Lambda}\in\cB$.
Our construction follows the idea sketched in~\cite[\S IV.B]{MR4808260}.
We do so in more generality, where we are allowed to assign any object to any site in $\Lambda$.
That is, if $b_{ij}\in\cB$ is the object assigned to site $(i,j)\in\Lambda$, we  define $(b_{ij})^{\otimes \Lambda}$.

When $\cB$ is symmetric, there is an obvious choice for $(b_{ij})^{\otimes \Lambda}$, namely the \emph{unordered tensor product} over the sites of $\Lambda$.
We choose an arbitrary ordering of the sites in $\Lambda$, say $(i_n,j_n)$, and we take the tensor product $\bigotimes_{n=1}^{N^2} b_{i_nj_n}\in\cB$
(here, we have assumed $\Lambda$ has $N^2$ sites).
Given another ordering $(i'_n,j'_n)$ of the sites in $\Lambda$, there is a unique element $\sigma$ of the symmetric group $S_{N^2}$ which takes the first ordering to the second,
and we get a canonical unitary 
$u_\sigma:\bigotimes_{n=1}^{N^2} b_{i_nj_n}\to \bigotimes_{n=1}^{N^2} b_{i_n'j_n'}$.
Moreover, given any third ordering $(i''_n,j''_n)$,
there are canonical unitaries $\sigma'$ taking $(i'_n,j'_n)$ to $(i''_n,j''_n)$
and $\sigma''$ taking
$(i_n,j_n)$ to $(i''_n,j''_n)$ such that $\sigma' \sigma = \sigma''$.
The corresponding canonical unitaries 
$u_{\sigma'}:\bigotimes_{n=1}^{N^2} b_{i_n'j_n'}\to \bigotimes_{n=1}^{N^2} b_{i_n''j_n''}$
and
$u_{\sigma''}:\bigotimes_{n=1}^{N^2} b_{i_nj_n}\to \bigotimes_{n=1}^{N^2} b_{i_n''j_n''}$
satisfy
$u_{\sigma'} u_{\sigma} = u_{\sigma''}$.
We thus have a contractible groupoid, and in this sense, the unordered tensor product $(b_{ij})^{\otimes \Lambda}$ is canonical.

When $\cB$ is merely braided, we must be more careful.
We again proceed by choosing an arbitrary ordering $(i_n,j_n)$ of the sites in $\Lambda$.
However, we now view this ordering as a interval $I$ of sites on a $\bbZ$ lattice contained in $\bbR^2$, and 
our choice of ordering is replaced by 
\emph{choice of homeomorphism of $\bbR^2$} which moves sites $(i,j)\in\Lambda$ to sites $I\subset \bbR^2$.
For example, we could choose a lexicographic ordering:
\begin{equation}
\label{eq:LexicographicOrdering}
\tikzmath{
\clip (-.5,-.5) rectangle (12.5,3);
\foreach \x in {.5,1,...,8}{
\filldraw (4+\x,2) circle (.05cm);
}
\foreach \y in {.5,1,1.5,2}{
\foreach \x in {.5,1,1.5,2}{
\filldraw (\x,\y) circle (.05cm);
\draw[red,thick,->] (\x,\y) to[out=-30,in=-90] 
($ 4*(\y,0)+(\x,0) + (2,2) $);
}
}
\draw[thick, blue, rounded corners=5pt] (.25,.25) rectangle (2.25,2.25);
\node[blue] at (1.25,1.25) {$\Lambda$};
}
\end{equation}
As above, we get an object $\bigotimes_{n=1}^{N^2} b_{i_nj_n}\in\cB$.

Now for any two choices of homeomorphisms $h_1,h_2$ of $\bbR^2$ which map $\Lambda \to I$, we get a canonical homeomorphism of the $N^2$-punctured plane by taking the punctures in $I$ to themselves by 
$$
I\xrightarrow{h_1^{-1}} \Lambda \xrightarrow{h_2} I.
$$
As the fundamental group of the $N^2$-punctured plane is the braid group $B_{N^2}$, this homeomorphism of $\bbR^2$ which preserves $I$ gives a canonical element $\beta_{12}\in B_{N^2}$, which gives a canonical unitary $u_{\beta_{12}}: \bigotimes_{n=1}^{N^2} b_{i_nj_n}\to \bigotimes_{n=1}^{N^2} b_{i_n'j_n'}$.
As before, for a third ordering, we get two more canonical elements of $B_{N^2}$ and two more unitaries in $\cB$ which compose as they should.
We again get a contractible groupoid, and in this sense, the \emph{braided tensor product} $(b_{ij})^{\otimes \Lambda}$ is canonical.

\begin{rem}
That the above object $(b_{i,j})^{\otimes \Lambda}$ is well defined is proven carefully in 
\cite[Prop~7.6]{MR1989873} in the language of operads.
They work in the more restrictive setting that $\cB$ is \emph{balanced} (although the authors of \cite{MR1989873} use the term ribbon instead.)
A \emph{balancing} structure is a unitary natural isomorphism $\theta: \id_\cB\Rightarrow\id_\cB$ called the \emph{twist} which satifies the balancing axiom
$$
(\theta_a\otimes \theta_b)
\circ
\beta_{b,a}\circ \beta_{a,b}
=
\theta_{a\otimes b}.
$$
For example, if we equip $\cB$ with any unitary dual functor $\vee: \cB \to \cB^{\textrm{mop}}$ in the sense of \cite{MR2091457,MR2767048,MR2861112,MR4133163},%
\footnote{Here $\cB^{\textrm{mop}}$ is the \emph{monoidal opposite} category obtained by reversing the arrows of $\cC$ and also reversing the order of the tensor product.}
the induced unitary pivotal structure is given by 
$$
\varphi_b:= 
(\coev_b^\dagger \otimes \id_{b^{\vee\vee}}) \circ (\id_b \otimes \coev_{b^\vee}) 
=
(\id_{b^{\vee\vee}} \otimes \ev_b) \circ (\ev_b^\dagger \otimes \id_b)
$$
which induces a canonical twist given by
$$
\theta_b:=
\tikzmath{
\draw (0,-.3) -- node[left]{$\scriptstyle b$} 
 (0,-.7);
\draw (.25,1) arc(90:-90:.2cm) to[out=180, in=-90] (0,1.2);
\draw[knot] (0,.3) to[out=90,in=180](.25,1);
\node at (-.3,.5) {$\scriptstyle b^{\vee\vee}$};
\node at (.7,.8) {$\scriptstyle b^{\vee}$};
\node at (-.2,1) {$\scriptstyle b$};
\roundNbox{}{(0,0)}{.3}{0}{0}{$\varphi_b$}
}=
(\id_b\otimes \ev_{b^\vee})
\circ
(\beta_{b^{\vee\vee},b}\otimes \id_{b^\vee})
\circ
(\id_{b^{\vee\vee}}\otimes \coev_b)
\circ\varphi_b
.
$$
This canonical twist induces a \emph{ribbon structure} (that is, $\theta_{b^\vee} = \theta_b^\vee$)
if and only if $\varphi$ is spherical \cite[Appendix~A.2]{MR3578212}.
In particular, we may choose to equip $\cB$ with its unique unitary spherical structure \cite{MR1444286}; this is the only choice when $\cB$ is fusion \cite{MR4133163}.

In our setting, the balancing is not really needed as we are not keeping track of duality in defining $(b_{ij})^{\otimes \Lambda}$.
\end{rem}

Now given a labeling $(b_{ij})$ for sites $(i,j)\in\Lambda$ and an inclusion $\Lambda\subset \Delta$, we get the labelling $(b_{ij})$ for sites $(i,j)\in\Delta$ by setting $b_{ij}:=1_\cC$ whenever $(i,j)\in\Delta\setminus\Lambda$.
We thus have a canonical isometry
$(b_{ij})^{\otimes\Lambda} \to (b_{ij})^{\otimes \Delta}$ which adds $1_\cB$ to all sites in $\Delta\setminus\Lambda$.
To see this is well-defined, pick an isotopy $\Delta\to I\cup J\subset \bbZ$ where sites in $\Lambda$ go to $I$, sites in $\Delta\setminus \Lambda$ go to $J$, and $J$ is to the right of $I$.
The map then corresponds to 
$\bigotimes_{n=1}^{N^2}b_{i_n,j_n}\mapsto \bigotimes_{n=1}^{N^2}b_{i_n,j_n}\otimes \bigotimes_{v\in J} 1_\cB$.

We now again fix an object $X\in\cB$, and we consdider the map $\Lambda\mapsto \fA(\Lambda):= \End_\cB(X^{\otimes \Lambda})$.
By choosing our isotopties as above, we see that $\Lambda\mapsto \fA(\Lambda)$ indeed defines a net of algebras.

\begin{defn}[Braided fusion categorical net]
\label{defn:BraidedFusionCategoricalNet}
Let $\cB$ be a braided fusion category with a choice of object $X\in\cB$.
Consider the lattice $\bbZ^2$.
For a rectangle $\Lambda\subset \bbZ^2$, we define
$$
\fA(\Lambda):=\End_\cB(X^{\otimes \Lambda})
$$ 
where 
$X^{\otimes \Lambda}$ denotes the braided tensor product of copies of $X$ over the points of $\Lambda$.
\end{defn}

%%%%%%%%%%%%%%%%%%%%%%%%%%%%%%%%%%%%%%%%%%%%%%%%%%%%%
\subsection{Superselection sectors of the canonical state}
\label{sec:SSSofBraidedCategoricalNet}

As in the previous section, let $\fA$ denote the braided categorical net for a choice of object $X\in\cB$, where we assume that $1_\cB$ is a subobject of $X$.
On the local algebra $\fA(\Lambda)$, we define the state
$$
\psi_\Lambda(f)
:=
\tikzmath{
\foreach \y in {0,1,2}{
\foreach \x in {0,1,2}{
\draw[red, thick] (\x,\y) -- ($ (\x,\y) + (.2,-.2) $);
\filldraw[red] ($ (\x,\y) + (.2,-.2) $) circle (.05cm);
}}
\filldraw[thick, rounded corners=5pt, fill=gray!50, opacity=.8] (-.2,-.2) rectangle (2.2,2.2);
\foreach \y in {0,1,2}{
\foreach \x in {0,1,2}{
\draw[red, thick] (\x,\y) -- ($ (\x,\y) + (-.2,.2) $);
\filldraw[red] ($ (\x,\y) + (-.2,.2) $) circle (.05cm);
}}
\node at (1,.5) {$f$};
}
:=
(\textcolor{red}{i_X^{\otimes \Lambda}}
)^\dag
\circ f \circ
\textcolor{red}{i_X^{\otimes \Lambda}}
\qquad\qquad
\forall\,f\in\fA(\Lambda)=\End_{\cB}(
\textcolor{red}{X^{\otimes \Lambda}}
)
$$
where the red univalent vertex is the partial isometry $i_X:1_{\cB} \hookrightarrow X$
or its adjoint.
On the inductive limit algebra $\fA=\varinjlim \fA(\Lambda)$, we define the state $\psi:=\varinjlim \psi_\Lambda$
(cf.~\cite[Cor.~2.19]{MR4945955}).

We now consider the net $\fA$ equipped with the state $\psi$, which results in a net of von Neumann algebras on the poset of cones in $\bbR^2$ by taking $A_\Lambda:=\fA(\Lambda)''$ in the GNS representation of $\psi$.
We now prove that the net $A$ satisfies Haag duality, and we compute the underlying $\rmW^*$-category of its superselection sectors.

Let $\langle X\rangle$ denote the full braided fusion subcategory of $\cB$ generated by $X$ and $\Hilb\langle X\rangle$ its unitary ind completion \cite[\S2.6]{MR3687214} (see also \cite{MR3509018}).
Recall that the objects of $\Hilb\langle X\rangle$ can be written as
$$
\bigoplus_{x\in \Irr\langle X\rangle} H_x\otimes x
$$
where $\Irr\langle X\rangle$ denotes a set of representatives for the simple objects of $\langle X\rangle$, and each $H_x$ is a separable (possibly infinite dimensional) multiplicity Hilbert space.

The following construction is the braided categorical net analog of \cite[Const.~6.13]{MR4945955} for fusion categorical nets.

\begin{construction}
Fix an object $x\in \Irr\langle X\rangle$.
For each rectangle $\Lambda$, define
$$
\cK_x(\Lambda):=\cB(x\to X^{\otimes \Lambda})
=
\set{
\tikzmath{
\draw[cyan, thick, snake] (1,1) -- (3,-.5) node[above]{$\scriptstyle x$};
\filldraw[thick, rounded corners=5pt, fill=gray!50, opacity=.8] (-.2,-.2) rectangle (2.2,2.2);
\foreach \y in {0,1,2}{
\foreach \x in {0,1,2}{
\draw[red, thick] (\x,\y) -- ($ (\x,\y) + (-.3,.3) $);
}}
\node at (1,.5) {$f$};
}
}{f: x\to X^{\otimes \Lambda}}
$$
with inner product
$$
\langle f|g\rangle_{\cK_x(\Lambda)}:= \tr_\cB(f^\dag\circ g).
$$
If $\Lambda \subset \Delta$, we include $\cK_x(\Lambda)\hookrightarrow \cK_x(\Delta)$ by tensoring with the inclusion isometry $i_X:1_\cB \hookrightarrow X$ for every site of $\Delta \setminus \Lambda$.
For example, for the inclusion $\Lambda \subset \Lambda^{+1}$, we have the following.
$$
\tikzmath{
\draw[cyan, thick, snake] (1,1) -- (3,-.5) node[above]{$\scriptstyle x$};
\filldraw[thick, rounded corners=5pt, fill=gray!50, opacity=.8] (-.2,-.2) rectangle (2.2,2.2);
\foreach \y in {0,1,2}{
\foreach \x in {0,1,2}{
\draw[red, thick] (\x,\y) -- ($ (\x,\y) + (-.3,.3) $);
}}
\node at (1,.5) {$f$};
}
\mapsto \qquad
\tikzmath{
\draw[cyan, thick, snake] (1,1) -- (3.5,-1) node[above]{$\scriptstyle x$};
\filldraw[thick, rounded corners=5pt, fill=gray!50, opacity=.8] (-.2,-.2) rectangle (2.2,2.2);
\foreach \y in {-1,0,1,2,3}{
\foreach \x in {-1,0,1,2,3}{
\draw[red, thick] (\x,\y) -- ($ (\x,\y) + (-.3,.3) $);
}}
\foreach \x in {-1,0,1,2,3}{
\filldraw[red] (\x,-1) circle (.05cm);
\filldraw[red] (\x,3) circle (.05cm);
}
\foreach \y in {0,1,2}{
\filldraw[red] (-1,\y) circle (.05cm);
\filldraw[red] (3,\y) circle (.05cm);
}
\node at (1,.5) {$f$};
}
$$
Setting $\cK_x=\varinjlim \cK_x(\Lambda)$, we get a representation $\fA\to B(\cK_x)$.

Observe that $\cK_{1_\cB}\cong L^2(\fA,\psi)$ via the (inductive limit of the) map
$$
\fA(\Lambda)\Omega_\psi
\ni
\tikzmath{
\foreach \y in {0,1,2}{
\foreach \x in {0,1,2}{
\draw[red, thick] (\x,\y) -- ($ (\x,\y) + (.2,-.2) $);
\filldraw[red] ($ (\x,\y) + (.2,-.2) $) circle (.05cm);
\filldraw[red] ($ (\x,\y) + (.4,-.4) $) circle (.05cm);
\draw[red, thick] ($ (\x,\y) + (.4,-.4) $) -- ($ (\x,\y) + (.6,-.6) $);
}}
\filldraw[thick, rounded corners=5pt, fill=gray!50, opacity=.8] (-.2,-.2) rectangle (2.2,2.2);
\foreach \y in {0,1,2}{
\foreach \x in {0,1,2}{
\draw[red, thick] (\x,\y) -- ($ (\x,\y) + (-.3,.3) $);
}}
\node at (1,.5) {$f$};
}
\longmapsto
\tikzmath{
\foreach \y in {0,1,2}{
\foreach \x in {0,1,2}{
\draw[red, thick] (\x,\y) -- ($ (\x,\y) + (.2,-.2) $);
\filldraw[red] ($ (\x,\y) + (.2,-.2) $) circle (.05cm);
}}
\filldraw[thick, rounded corners=5pt, fill=gray!50, opacity=.8] (-.2,-.2) rectangle (2.2,2.2);
\foreach \y in {0,1,2}{
\foreach \x in {0,1,2}{
\draw[red, thick] (\x,\y) -- ($ (\x,\y) + (-.3,.3) $);
}}
\node at (1,.5) {$f$};
}
\in\cK_{1_\cB}(\Lambda).
$$

For an arbitrary $z\in \Hilb\langle X\rangle$, we write $z=\bigoplus_{x\in\Irr\langle X\rangle} H_x \otimes x$ and set $\cK_z:=\bigoplus_{x\in\Irr\langle X\rangle} H_x\otimes \cK_x$.
\end{construction}

In fact, one can actually reduce the study of many structural properties of the 2D braided categorical net to the 1D setting, which have already been analyzed in detail in \cite{MR4814692} and \cite{MR4945955}.
The main trick is that a cone splits the 2D plane into two halves, and the braiding can be used to move the $\bbZ^2$ lattice points on either side of the split into a 1D $\bbZ$ lattice.
$$
\tikzmath{
\clip (-.5,-.5) rectangle (12.5,3);
\foreach \x in {.5,1,...,8}{
\filldraw (4+\x,2) circle (.05cm);
}
\foreach \y in {.5,1,1.5,2}{
\foreach \x in {.5,1,1.5,2}{
\filldraw (\x,\y) circle (.05cm);
}}
\foreach \x in {.5,1,1.5,2}{
\draw[blue,thick,->] (\x,.5) to[out=-30,in=-90] 
($ (8,2) - (\x,0) $);
}
\foreach \y in {1,1.5,2}{
\draw[blue,thick,->] (.5,\y) to[out=-30,in=-90] 
($ (6.5,2) - (\y,0) $);
\foreach \x in {1,1.5,2}{
\draw[red,thick,->] (\x,\y) to[out=-30,in=-90] 
($ 3*(\y,0)+(\x,0) + (4,2) $);
}
}
\draw[thick, cyan] (.75,2.25) -- (.75,.75) -- (2.25,.75);
\draw[thick, cyan] (7.75, 2.25) -- (7.75,1.75);
\node[cyan] at (1.25,1.25) {$\Lambda$};
}
$$
We can thus infer properties about our braided categorical net $\fA$ acting on $\cK_x$ from the fusion categorical setting.

For the results 
Proposition~\ref{prop:NormalActionOfConeAlgebra}, 
Lemma \ref{lem:IrreducbleConeAlgebraRepresentation},
and Theorem~\ref{thm:BraidedCategoricalNetHaagDuality} below, our statments are given generally so that they may simultaneously be interpreted for 1D fusion spin systems and 2D braided fusion spin systems, and it suffices to prove the 1D case.
For the 1D case, we always take our cone $\Lambda$ to be a 1-sided infinite interval in $\bbZ$, say $[1,\infty)$, and we set $\Lambda_n=[1,n]$ so that $\Lambda = \bigcup \Lambda_n$.
(For 2D braided fusion spin systems, $\Lambda_n$ should grow by more than 1 site at a time, corresponding to larger initial segments of the cone $\Lambda$.)
Without loss of generality, it then suffices to consider 
$$
\cK_x(\Lambda_n) = \cB(x\to X^{\otimes n}),
$$
and $\Lambda_{n+1}$ is obtained by adding sites to the right, so that the map $\cK_x(\Lambda_n)\hookrightarrow \cK_x(\Lambda_{n+1})$ is $-\otimes i_X$.

\begin{prop}
\label{prop:NormalActionOfConeAlgebra}
Fix a cone $\Lambda$ in $\bbR^2$.
If $x\in \langle X\rangle$, then the action of $\fA(\Lambda)$ on $B(\cK_x)$ extends to a normal action of $A_\Lambda$.
\end{prop}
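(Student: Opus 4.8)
The plan is to reduce to the one-dimensional fusion categorical net and there to exhibit $\cK_x$, as a representation of the cone algebra $\fA(\Lambda)$, as a subrepresentation of the GNS space $L^2(\fA,\psi)=\cK_{1_\cB}$; since any subrepresentation of the (tautologically normal) action of $A_\Lambda=\pi_\psi(\fA(\Lambda))''$ on $L^2(\fA,\psi)$ is normal, this gives the desired extension at once. For the reduction, the braiding argument indicated above transports the $\bbZ^2$-sites flanking $\partial\Lambda$ onto a single copy of $\bbZ$, so it suffices to take $\fA(\Lambda)=\varinjlim_n\End_\cB(X^{\otimes n})$ and $\cK_x=\varinjlim_n\cB(x\to X^{\otimes n})$ with connecting maps $-\otimes i_X$. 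A finite direct sum of normal representations is normal, so we may assume $x\in\Irr\langle X\rangle$, and we may discard the trivial case $\cK_x=0$, so $x\prec X^{\otimes n_0}$ for some $n_0$. Finally, because $1_\cB\prec X$, every simple object of $\langle X\rangle$ is a subobject of some tensor power of $X$; in particular $x^\vee\prec X^{\otimes m_1}$ for some $m_1$.

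Now I would fix an isometry $w\colon 1_\cB\to X^{\otimes m_1}\otimes x$ — which exists precisely because $x^\vee\prec X^{\otimes m_1}$ — and define, for each $n$,
$$
\iota_0\colon\ \cB(x\to X^{\otimes n})\ \longrightarrow\ \cB(1_\cB\to X^{\otimes(m_1+n)}),\qquad f\ \longmapsto\ (\id_{X^{\otimes m_1}}\otimes f)\circ w .
$$
There are three points to check, all routine: (i) $\iota_0$ commutes with the connecting maps $-\otimes i_X$ on source and target (each appends one copy of $X$ at the right-hand end), so the $\iota_0$ assemble into one map $\cK_x\to\cK_{1_\cB}$; (ii) $\sqrt{d_x}\,\iota_0$ is an isometry, using that $x$ is simple, so $f^\dag\circ g\in\bbC\,\id_x$, together with $w^\dag w=\id_{1_\cB}$; and (iii) for $a\in\End_\cB(X^{\otimes n})$ acting by post-composition, $\iota_0(a\cdot f)=(\id_{X^{\otimes m_1}}\otimes a\otimes\id)\circ\iota_0(f)$, so that $\iota_0$ intertwines the $\fA(\Lambda)$-action on $\cK_x$ with the $\fA(\Lambda)$-action on $\cK_{1_\cB}\cong L^2(\fA,\psi)$ precomposed with the translation $\theta_{m_1}$ of the chain by $m_1$ sites.

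To undo the translation, I would use that the canonical state $\psi$ is translation invariant — its defining vectors $i_X^{\otimes\Lambda}$ are translation covariant (cf.\ \cite[Cor.~2.19]{2307.12552}) — so there is a unitary $U$ on $L^2(\fA,\psi)$ with $U\,\pi_\psi(\,\cdot\,)\,U^*=\pi_\psi(\theta_{m_1}(\,\cdot\,))$. Then $V:=\sqrt{d_x}\,U^*\iota_0$ is an $\fA(\Lambda)$-equivariant isometry $\cK_x\hookrightarrow L^2(\fA,\psi)$; its range is a closed $\pi_\psi(\fA(\Lambda))$-invariant subspace, the corresponding projection lies in $\pi_\psi(\fA(\Lambda))'=A_\Lambda'$, and compressing the normal action of $A_\Lambda$ on $L^2(\fA,\psi)$ to this subspace gives a normal representation of $A_\Lambda$ extending the action of $\fA(\Lambda)$ on $\cK_x$.

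The main obstacle is step (iii) together with the translation step: the coevaluation $w$ unavoidably displaces the image of $\fA(\Lambda)$ by $m_1$ sites, so one genuinely must invoke translation invariance of $\psi$ to recognize $\pi_\psi\circ\theta_{m_1}$ as unitarily equivalent to $\pi_\psi$ and thus return to the correct copy of $\fA(\Lambda)$. The only other point requiring care is making the $2$D-to-$1$D reduction precise, which is handled by the braiding discussion preceding the statement.
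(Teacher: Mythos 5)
Your idea of constructing an explicit $\fA(\Lambda)$-equivariant isometry into $L^2(\fA,\psi)$ and then appealing to the fact that a subrepresentation of a normal representation is normal is a valid and natural route, and it is genuinely different from the paper's. However, there is a substantive gap and an unnecessary detour.

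The gap: you identify $\cK_x$ with $\varinjlim_n\cB(x\to X^{\otimes n})$ with connecting maps $-\otimes i_X$, but after the $2$D-to-$1$D reduction this is the \emph{cone} Hilbert space $\cK_x(\Lambda)$ for $\Lambda=[1,\infty)$, not the $\cK_x$ in the statement. The latter is the inductive limit over \emph{all} intervals $\Delta\subset\bbZ$, including those extending into $\Lambda^c=(-\infty,0]$. Your $\iota_0$ cannot be made coherent on this larger system: the $m_1$ auxiliary strands coming from $w$ must sit at actual lattice sites, and no fixed placement is compatible with the inclusions once $\Delta$ grows to the left — at fixed positions they collide with $\Delta$, and "just to the left of $\Delta$" drifts as $\Delta$ grows. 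So what you have proved is normality of the $A_\Lambda$-action on $\cK_x(\Lambda)$, which is a proper closed subspace of $\cK_x$; to conclude you would additionally need to decompose $\cK_x$ over $\fA(\Lambda)$ into multiples of the $\cK_y(\Lambda)$ (an argument the paper only carries out later, in the proof of Theorem~\ref{thm:BraidedCategoricalNetHaagDuality}).

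The detour: the translation unitary is avoidable. If the auxiliary strands are placed to the \emph{left} of those already present, i.e., at sites of $\Lambda^c$, then $\fA(\Lambda)$ acts on the same positions before and after $\iota_0$, and translation invariance of $\psi$ need never be invoked.

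The paper's proof avoids both points by invoking the criterion of \cite[Prop.~4.15 and Rem.~4.16(2)]{MR3948170}: it suffices to exhibit a dense subspace $\cD_x\subset\cK_x$ whose vector states, restricted to $\bigcup_n\fA(\Lambda_n)$, are realized by vectors in $\cK_{1_\cB}$. This is a \emph{pointwise} condition — for each $f\in\cK_x(\Delta)$ one produces some $g\in\cK_{1_\cB}$ with $\omega_f=\omega_g$ on $\bigcup_n\fA(\Lambda_n)$ — with no requirement that $f\mapsto g$ assemble into a coherent isometry. The paper takes $\cD_x=\bigcup_\Delta\cK_x(\Delta)$ over all rectangles $\Delta$ and $g=(v\otimes f)\circ\ev_x^\dag$ with the extra strands to the left of $\Delta$, then verifies the state equality directly. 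Your map $\iota_0$ produces exactly such a $g$ for each $f$, so your argument is readily repaired by applying this criterion rather than attempting a global embedding of $\cK_x$.
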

\begin{proof}
By \cite[Prop.~4.15 and Rem.~4.16(2)]{MR3948170}, it suffices to exhibit a dense subspace $\cD_x\subset\cK_x$ whose vector states can be realized on $\bigcup \fA(\Lambda_n)$ by vectors in $\cK_{1_\cB}$.
Indeed, we claim $\cD_x=\bigcup\cK_x(\Delta)$ works.
Since $x\in\langle X\rangle$, there is an isometry $v: x^\vee\to X^{\otimes n}$ for some $n$.
Thus if $f\in \cD_x$, we may add $n$ sites to the \emph{left} of those in $\Delta$ to get $g:= (v\otimes f)\circ \ev^\dag_x\in \cK_{1_\cB}$ which realizes the vector state $\omega_f$.
For example, when $f\in \cK_x(\Lambda_n)$ and $a\in \fA(\Lambda_n)$, we have
$$
\omega_f(a)
=
\tr_\cB(f^\dag af)
=
\tikzmath{
\draw (-.3,.3) -- (-.3,.7);
\node at (.03,.5) {$\cdots$};
\draw (.3,.3) -- (.3,.7);
\draw (-.3,-.3) -- (-.3,-.7);
\node at (.03,-.5) {$\cdots$};
\draw (.3,-.3) -- (.3,-.7);
\draw[cyan] (0,1.3) node[right,yshift=.2cm]{$\scriptstyle x$} arc(0:180:.5cm and .3cm) --node[left]{$\scriptstyle x^\vee$} (-1,-1.3) arc(-180:0:.5cm and .3cm) node[right,yshift=-.2cm]{$\scriptstyle x$};
\roundNbox{}{(0,1)}{.3}{.15}{.15}{$f^\dag$}
\roundNbox{}{(0,0)}{.3}{.15}{.15}{$a$}
\roundNbox{}{(0,-1)}{.3}{.15}{.15}{$f$}
}
=
\tikzmath{
\draw (-.7,-.7) -- (-.7,.7);
\node at (-.97,0) {$\cdots$};
\draw (-1.3,-.7) -- (-1.3,.7);
\draw (-.3,.3) -- (-.3,.7);
\node at (.03,.5) {$\cdots$};
\draw (.3,.3) -- (.3,.7);
\draw (-.3,-.3) -- (-.3,-.7);
\node at (.03,-.5) {$\cdots$};
\draw (.3,-.3) -- (.3,-.7);
\draw[cyan] (-1,1.3) node[left,yshift=.2cm]{$\scriptstyle x^\vee$} arc(180:0:.5cm and .3cm) node[right,yshift=.2cm]{$\scriptstyle x$};
\draw[cyan] (0,-1.3) node[right,yshift=-.2cm]{$\scriptstyle x$} arc(0:-180:.5cm and .3cm) node[left,yshift=-.2cm]{$\scriptstyle x^\vee$};
\roundNbox{}{(-1,1)}{.3}{.15}{.15}{$v^\dag$}
\roundNbox{}{(0,1)}{.3}{.15}{.15}{$f^\dag$}
\roundNbox{}{(0,0)}{.3}{.15}{.15}{$a$}
\roundNbox{}{(0,-1)}{.3}{.15}{.15}{$f$}
\roundNbox{}{(-1,-1)}{.3}{.15}{.15}{$v$}
\roundNbox{dashed, red}{(-.5,-1.15)}{.55}{.5}{.5}{}
\node[red] at (.8,-1) {$g$}; 
}
=
\omega_g(a).
$$
Observe that the above equality is compatible with adding strands to the right, i.e., taking inductive limits in $\fA(\Lambda)$.
The result follows.
\end{proof}

\begin{lem}
\label{lem:IrreducbleConeAlgebraRepresentation}
For a cone $\Lambda$, let $\cK_x(\Lambda):=\varinjlim \cK_x(\Lambda_n)$ for $\Lambda_n$ bounded contractible regions which increase to $\Lambda$.
The action of $\fA(\Lambda)$ on $\cK_x(\Lambda)$ is irreducible.
\end{lem}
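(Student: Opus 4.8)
The plan is to work in the $1$D case, as the text permits, so that $\fA(\Lambda)=\varinjlim_n \End_\cB(X^{\otimes n})$ acts on $\cK_x(\Lambda)=\varinjlim_n \cB(x\to X^{\otimes n})$ by postcomposition, with bonding maps $\cK_x(\Lambda_n)\hookrightarrow\cK_x(\Lambda_{n+1})$ given by $-\otimes i_X$ and $\fA(\Lambda_n)\hookrightarrow\fA(\Lambda_{n+1})$ given by $-\otimes\id_X$. It then suffices to show that every $T$ in the commutant of $\fA(\Lambda)$ inside $B(\cK_x(\Lambda))$ is a scalar. The strategy has two ingredients: first, produce enough elements of $\fA(\Lambda)$ to force $T$ to preserve each finite stage $\cK_x(\Lambda_n)$; second, apply the semisimple commutation lemma together with simplicity of $x$ to conclude $T$ is scalar on each stage.

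For the first ingredient, I would use that $1_\cB$ is a subobject of $X$ to take $i_X\colon 1_\cB\to X$ to be an isometry, so that $q^{(m)}_n:=\id_{X^{\otimes n}}\otimes(i_Xi_X^\dag)^{\otimes(m-n)}$ is a self-adjoint idempotent in $\End_\cB(X^{\otimes m})=\fA(\Lambda_m)\subset\fA(\Lambda)$, and postcomposition by it is exactly the orthogonal projection of $\cK_x(\Lambda_m)$ onto the image of $\cK_x(\Lambda_n)$. Tracking how $\fA(\Lambda_m)$ acts across the bonding maps shows that, as $m\to\infty$, the operators $q^{(m)}_n$ converge strongly on $\cK_x(\Lambda)$ to the orthogonal projection $P_n\colon\cK_x(\Lambda)\to\cK_x(\Lambda_n)$. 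Since $T$ commutes with every $q^{(m)}_n\in\fA(\Lambda)$, it commutes with each $P_n$; hence $T$ preserves the finite-dimensional subspace $\cK_x(\Lambda_n)$ for all $n$, and therefore $T|_{\cK_x(\Lambda_n)}$ lies in the commutant of the postcomposition action of $\fA(\Lambda_n)=\End_\cB(X^{\otimes n})$ on $\cK_x(\Lambda_n)=\cB(x\to X^{\otimes n})$.

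For the second ingredient, choose $N$ so that every simple object of $\langle X\rangle$ is a summand of $X^{\otimes N}$; this exists since $X$ strongly tensor-generates $\langle X\rangle$, and, again using $1_\cB\le X$, the same holds for all $n\ge N$. Then Lemma~\ref{lem:SemisimpleCommutation}, applied in $\langle X\rangle$ with generating object $X^{\otimes n}$ and with $a=x$, identifies the commutant of $\End_\cB(X^{\otimes n})$ acting by postcomposition on $\cB(x\to X^{\otimes n})$ with $\End_\cB(x)$ acting by precomposition, and $\End_\cB(x)=\bbC\id_x$ since $x$ is simple. Thus $T|_{\cK_x(\Lambda_n)}=\lambda_n\id$ for all $n\ge N$, compatibility of the inclusions forces the $\lambda_n$ to equal a single constant $\lambda$, and density of $\bigcup_n\cK_x(\Lambda_n)$ in $\cK_x(\Lambda)$ gives $T=\lambda\id$. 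The only genuinely delicate point, and the step I would take most care over, is the strong-convergence claim $q^{(m)}_n\to P_n$: verifying that the $i_X$-tensoring inclusions are intertwined correctly with the postcomposition action so that the $T$-invariant subspaces are precisely the $\cK_x(\Lambda_n)$. Everything else is routine bookkeeping.
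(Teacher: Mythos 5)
Your proof is correct, and takes a genuinely different route from the paper's. The paper constructs an explicit system of matrix units for $B(\cK_x(\Lambda))$ lying inside $\bigcup_n\fA(\Lambda_n)$: one chooses an ONB of $\cK_x(\Lambda)$ adapted to the orthogonal decomposition $\cK_x(\Lambda)=\bigoplus_n \cL_n$ with $\cL_{n+1}:=\cK_x(\Lambda_{n+1})\ominus\cK_x(\Lambda_n)$, observes $g^\dag\circ f=\delta_{g=f}d_x^{-1}\id_x$ for basis vectors, and then forms $e_{fg}:=d_x\cdot(f\otimes i_X^{\otimes\cdots})(g\otimes i_X^{\otimes\cdots})^\dag$ with the $i_X$-factors inserted so that $e_{fg}\in\fA(\Lambda_{\max})$. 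This yields a full system of matrix units in the image of $\fA(\Lambda)$, whence the commutant is trivial. You instead attack the commutant directly: you show the projections $q^{(m)}_n$ converge strongly to $P_n$ (your verification is correct --- on $\cK_x(\Lambda_k)$ they are eventually constant equal to $P_n$ once $m\ge k$, because $(i_Xi_X^\dag)\circ i_X=i_X$), so any $T$ in the commutant preserves each finite stage, and then Lemma~\ref{lem:SemisimpleCommutation} with $a=x$ simple forces $T$ to be scalar on each stage. Both proofs rest on the same two facts (simplicity of $x$, and $1_\cB\le X$ so that the $i_X$-inclusions can be manipulated); the paper's version has the side benefit of exhibiting matrix units, whose $d_x$-normalization is referenced again in the proof of Theorem~\ref{thm:BraidedCategoricalNetHaagDuality}, while yours is more modular in that it defers to the already-proved Semisimple Commutation Lemma rather than redoing that computation by hand.
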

\begin{proof}
Set $\cL_1:=\cK_x(\Lambda_1)$, and inductively define $\cL_{n+1}:=\cK_x(\Lambda_{n+1})\ominus \cK_x(\Lambda_n)$.
Now $\cK_x(\Lambda)=\bigoplus \cL_n$, and we may choose an ONB $B_n$ for each $\cL_n$, and observe that the ONBs for $m\neq n$ are orthogonal.
Moreover, for all $f,g\in B_n$, observe that
$g^\dag\circ f = \delta_{g=f} d_x^{-1} \cdot \id_x$.

We now observe that we can form a system of matrix units for the ONB $\bigcup B_n$ of $\cK_x(\Lambda)$ which lie in $\bigcup \fA(\Lambda_n)\subset \fA(\Lambda)$.
That is, for $f\in B_m$ and $g\in B_n$ with $m\leq n$, we define 
$$
e_{fg}
:=
\begin{cases}
d_x\cdot (f\otimes i_X^{\otimes n-m})g^\dag
&
\text{if }m<n
\\
d_x\cdot fg^\dag
&
\text{if }m=n
\\
d_x\cdot f(g\otimes i_X^{\otimes m-n})^\dag
&
\text{if }m>n,
\end{cases}
$$
which lies in $\fA(\Lambda_{\max\{m,n\}})$.
The result follows.
\end{proof}

\begin{thm}
\label{thm:BraidedCategoricalNetHaagDuality}
The net of von Neumann algebras $(A_\Lambda)$ satisfies Haag duality.
Moreover, each cone algebra $A_\Lambda$ is a finite direct sums of type $\rm I_\infty$ factors, with one summand for each simple object in the full fusion subcategory of $\cB$ generated by $X$.
\end{thm}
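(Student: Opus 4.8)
The plan is to use the reduction to one dimension flagged above: by the braiding trick a cone in $\bbR^2$ straightens into a half-line in $\bbZ$, so it suffices to treat the $1$D fusion categorical net of $\cB$ on $\bbZ$, with $\Lambda=[1,\infty)$, $\Lambda^c=(-\infty,0]$, and $\psi$ the state realized in the GNS picture by the vector $i_X^{\otimes\Lambda}$.

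The first step is to split the GNS Hilbert space along the cut between sites $0$ and $1$. Writing $\cK_{1_\cB}\cong L^2(\fA,\psi)\cong\cB(1_\cB\to X^{\otimes\bbZ})$ and inserting a complete set of intertwiners through each simple $x\in\Irr\langle X\rangle$ across the cut, semisimplicity should produce a unitary
\[
\cK_{1_\cB}\;\cong\;\bigoplus_{x\in\Irr\langle X\rangle}\cK_{x^\vee}(\Lambda^c)\otimes\cK_x(\Lambda)
\]
(Hilbert space tensor product) under which $\fA(\Lambda)$ acts as $\bigoplus_x 1\otimes\rho_x$ and $\fA(\Lambda^c)$ as $\bigoplus_x\rho^c_{x^\vee}\otimes 1$, where $\rho_x$ and $\rho^c_{x^\vee}$ are the standard cone representations of $\fA(\Lambda)$ on $\cK_x(\Lambda)$ and of $\fA(\Lambda^c)$ on $\cK_{x^\vee}(\Lambda^c)$ with basepoint at the cut, hence irreducible by Lemma~\ref{lem:IrreducbleConeAlgebraRepresentation}. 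Checking that this splitting is compatible with the inductive limits defining $\fA(\Lambda),\fA(\Lambda^c)$ and, in the $2$D case, with the braid isotopies entering the definition of $X^{\otimes\Lambda}$, is where I expect the real work to be; everything afterwards is formal.

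The second step is to place the block projections inside $A_\Lambda$. Taking $n$ large enough that $X^{\otimes[1,n]}$ contains every simple of $\langle X\rangle$, the minimal central projections $z^{(n)}_y\in\End_\cB(X^{\otimes[1,n]})\subseteq\fA(\Lambda)$ act as $\bigoplus_x 1\otimes(z^{(n)}_y|_{\cK_x(\Lambda)})$; since any $f\colon x\to X^{\otimes\Lambda}$ is, up to $\varepsilon$, supported on finitely many sites and tensoring with copies of $i_X$ leaves its image in the $x$-isotypic component of $X^{\otimes[1,n]}$, one gets $z^{(n)}_y\to P_y$ strongly as $n\to\infty$, where $P_y$ is the projection onto the $y$-summand. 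Hence $P_y\in A_\Lambda=\fA(\Lambda)''$ for every $y$, so $A_\Lambda=\bigoplus_y(\fA(\Lambda)P_y)''$ with $\fA(\Lambda)P_y$ acting as $1\otimes\rho_y(\fA(\Lambda))$ and $\rho_y(\fA(\Lambda))''=B(\cK_y(\Lambda))$ by irreducibility. As $\Irr\langle X\rangle$ is finite this yields
\[
A_\Lambda\;\cong\;\bigoplus_{x\in\Irr\langle X\rangle}B(\cK_x(\Lambda)),
\]
a finite direct sum of type $\rmI$ factors indexed by the simples of $\langle X\rangle$; each $\cK_x(\Lambda)$ is infinite dimensional whenever $X\neq 1_\cB$ (the dimensions $\dim\cB(x\to X^{\otimes n})$ then grow without bound), so the summands are type $\rmI_\infty$, and the resulting action on each $\cK_x$ is normal by Proposition~\ref{prop:NormalActionOfConeAlgebra}.

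For Haag duality, locality gives $A_{\Lambda^c}\subseteq A_\Lambda'$, so it remains to prove $A_\Lambda'\subseteq A_{\Lambda^c}$. Since $A_\Lambda$ contains all $P_y$, every element of $A_\Lambda'$ is block diagonal and, in block $y$, commutes with $1\otimes B(\cK_y(\Lambda))$, hence lies in $B(\cK_{y^\vee}(\Lambda^c))\otimes 1$; thus $A_\Lambda'=\bigoplus_y B(\cK_{y^\vee}(\Lambda^c))\otimes 1$. Running the second step verbatim for $\Lambda^c$ (with the reflected case of Lemma~\ref{lem:IrreducbleConeAlgebraRepresentation}) identifies $A_{\Lambda^c}=\fA(\Lambda^c)''$ with exactly this algebra, so $A_\Lambda'=A_{\Lambda^c}$. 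The main obstacle is the first step: obtaining the cut-decomposition and controlling its compatibility with the braided tensor product; after that the argument is bookkeeping with finitely many type $\rmI$ factors.
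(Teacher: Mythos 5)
Your proposal follows essentially the same route as the paper's proof: reduce to 1D, decompose $\cK_{1_\cB}\cong\bigoplus_{x\in\Irr\langle X\rangle}\cK_{x^\vee}(\Lambda^c)\otimes\cK_x(\Lambda)$ by semisimplicity across the cut, use Lemma~\ref{lem:IrreducbleConeAlgebraRepresentation} to get irreducibility of $\fA(\Lambda)$ on each $\cK_x(\Lambda)$, and read off the commutant from the block structure. The one place you supply more detail than the paper is in exhibiting the block projections $P_y$ as strong limits of the minimal central projections $z^{(n)}_y$ of $\fA(\Lambda_n)$; the paper passes directly from the decomposition to $A_\Lambda\cong\bigoplus_x B(\cK_x(\Lambda))$, implicitly using that the $\fA(\Lambda_n)$-action preserves the internal label $x$ and that the multiplicity space factors are $\fA(\Lambda)$-trivial, which amounts to the same thing. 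Your reflection argument that $A_{\Lambda^c}=\bigoplus_y B(\cK_{y^\vee}(\Lambda^c))\otimes 1$ is also what the paper means by ``The commutant is clearly the action by $\fA(\Lambda^c)$ on the multiplicity spaces,'' just spelled out. In short: correct, same approach, slightly more explicit bookkeeping.
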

\begin{proof}
Consider $A_\Lambda$ acting in the ground state representation
$L^2(\fA,\psi)\cong \cK_{1_\cB}$.
Let $\Delta_n$ be an increasing family of rectangles where $\bigcup \Delta_n$ covers the entire lattice,
such that $\Delta_n\cap \Lambda = \Lambda_n$.
By semisimplicity,
\begin{align*}
\cK_x(\Delta_n) 
&=
\cB(1_\cC\to X^{\otimes \Delta_n})
\\&\cong
\cB(\overline{X}^{\otimes\Delta_n\setminus\Lambda} \to X^{\otimes \Lambda_n})
\\&\cong
\bigoplus_{x\in \Irr\langle X\rangle} \cB(\overline{X}^{\otimes\Delta\setminus\Lambda} \to x)\otimes \cB(x\to X^{\otimes \Lambda_n})
\\&=
\bigoplus_{x\in \Irr\langle X\rangle} \cB(\overline{X}^{\otimes\Delta_n\setminus\Lambda} \to x)\otimes \cK_x(\Lambda_n)
\\&\cong
\bigoplus_{x\in \Irr\langle X\rangle} \cB(x^\vee\to X^{\otimes\Delta_n\setminus\Lambda})\otimes \cK_x(\Lambda_n)
\\&\cong
\bigoplus_{x\in \Irr\langle X\rangle} \cK_{x^\vee}(\Delta_n\setminus\Lambda)\otimes \cK_x(\Lambda_n)
\end{align*}
(There is a subtlety with the inner product on each summand above where one normalizes by $d_X$ as in the proof of Lemma \ref{lem:IrreducbleConeAlgebraRepresentation}, but this normalization does not affect the rest of this argument.)
One may visualize this isomorphism as splitting each map as follows.
$$
\tikzmath{
\draw (-.3,.3) --node[left]{$X^{\otimes \Delta_n\setminus\Lambda}$} (-.3,.7);
\draw (.3,.3) --node[right]{$X^{\otimes \Lambda_n}$} (.3,.7);
\roundNbox{}{(0,0)}{.3}{.3}{.3}{$f$}
}
=
\sum_{x\in\Irr\langle X\rangle}\sum_{i=1}^{n_x}
\tikzmath{
\draw (0,.3) --node[left]{$X^{\otimes \Delta_n\setminus\Lambda}$} (0,.7);
\draw (1,.3) --node[right]{$X^{\otimes \Lambda_n}$} (1,.7);
\draw[thick, cyan] (0,-.3) node[left,yshift=-.2cm]{$\scriptstyle x^\vee$} arc(-180:0:.5cm) node[right,yshift=-.2cm]{$\scriptstyle x$};
\roundNbox{}{(0,0)}{.3}{0}{0}{$f_x^i$}
\roundNbox{}{(1,0)}{.3}{0}{0}{$g_x^i$}
}
$$
The $\fA(\Lambda_n)$-action is by post-composition on the Hilbert space $\cB(x\to X^{\otimes \Lambda_n})=\cK_x(\Lambda_n)$,
and each space $\cK_{x^\vee}(\Delta_n\setminus\Lambda)$ acts as a multiplicity space for the $\fA(\Lambda_n)$ action.
Observe that the left $\fA(\Lambda_n)$-action may never change the internal label $x\in\Irr\langle X\rangle$, and we thus see that $\cK_{1_\cB}$ decomposes as an $\fA(\Lambda)$-representation as 
$$
\bigoplus_{x\in\Irr\langle X\rangle} \cK_{x^\vee}(\Lambda^c) \otimes \cK_x(\Lambda).
$$
We know that the action of $\fA(\Lambda)$ on each $\cK_x(\Lambda)$ is irreducible, so $A_\Lambda\cong \bigoplus_{x\in\Irr\langle X\rangle} B(\cK_x(\Lambda))$ is a finite direct sum of type $\rmI_\infty$ factors as claimed.
The commutant is clearly the action by $\fA(\Lambda^c)$ on the multiplicity spaces.
\end{proof}

Thus our net of von Neumann algebras $(A_\Lambda)$ is isotone, properly infinite, and satisfies Haag duality.
By \cite[Thm.~A]{MR4927814}, we have a well-defined braided $\rmW^*$-tensor category $\mathsf{SSS}(A)$ of superselection sectors.
We now characterize the underlying $\rmW^*$-category of $\mathsf{SSS}(A)$.
Clearly the map $z\mapsto \cK_z$ is a unitary functor $\langle X\rangle^{\rm op} \to \mathsf{SSS}(A)$, as precomposition with $f\in \cB(x\to y)$ commutes with post-composition by $\End_\cB(X^{\otimes\Lambda})$.

\begin{thm}
\label{thm:SSSforBraidedCategoricalNet}
The functor $\Hilb\langle X\rangle^{\rm op}\ni z\mapsto \cK_z \in \mathsf{SSS}(A)$ is a unitary equivalence of $\rmW^*$-categories.
\end{thm}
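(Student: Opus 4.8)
The plan is to show the unitary ($\dagger$-preserving) functor $\cK\colon\Hilb\langle X\rangle^{\rm op}\to\mathsf{SSS}(A)$, $z\mapsto\cK_z$, is fully faithful and essentially surjective; a unitary functor between $\rmW^*$-categories with these two properties is automatically a unitary equivalence. Since $\Hilb\langle X\rangle$ is the unitary ind-completion of the fusion category $\langle X\rangle$, every object is a countable orthogonal direct sum $z\cong\bigoplus_{x\in\Irr\langle X\rangle}H_x\otimes x$ with $\cK_z=\bigoplus_x H_x\otimes\cK_x$, and $\mathsf{SSS}(A)$ is closed under countable direct sums; so it is enough to treat the simple objects $x\in\Irr\langle X\rangle$ and establish (i) $\End_{\mathsf{SSS}(A)}(\cK_x)=\bbC$, (ii) $\cK_x\not\cong\cK_y$ in $\mathsf{SSS}(A)$ whenever $x\not\cong y$ in $\Irr\langle X\rangle$, and (iii) every irreducible object of $\mathsf{SSS}(A)$ is isomorphic to some $\cK_x$.

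For (i) and (ii), write $\pi_x$ for the $\fA$-action on $\cK_x$, and recall that a morphism $\cK_x\to\cK_y$ in $\mathsf{SSS}(A)$ is precisely a bounded intertwiner of the representations $\pi_x,\pi_y$. Applying the system-of-matrix-units argument from the proof of Lemma~\ref{lem:IrreducbleConeAlgebraRepresentation} to the whole lattice (rather than to a cone) shows $\pi_x(\fA)''=B(\cK_x)$, so $\pi_x$ is irreducible; Schur's lemma then gives (i) and reduces (ii) to showing $\pi_x\not\cong\pi_y$ as $\fA$-representations. To detect the label $x$, fix a cone $\Lambda$, pick rectangles $\Delta_N\nearrow\bbZ^2$ with $\Delta_N\cap\Lambda=\Lambda_N$, split $X^{\otimes\Delta_N}\cong X^{\otimes(\Delta_N\setminus\Lambda)}\otimes X^{\otimes\Lambda_N}$ using the braided tensor product, and apply semisimplicity of $\cB$ twice, exactly as in the proof of Theorem~\ref{thm:BraidedCategoricalNetHaagDuality}, to obtain, as a representation of the $\rmC^*$-algebra $\fA(\Lambda^c)\otimes_{\min}\fA(\Lambda)$,
\[
\cK_x\;\cong\;\bigoplus_{w,z\in\Irr\langle X\rangle}\cB(x\to w\otimes z)\otimes\bigl(\cK_w(\Lambda^c)\boxtimes\cK_z(\Lambda)\bigr),
\]
where each $\cK_w(\Lambda^c)\boxtimes\cK_z(\Lambda)$ is irreducible by Lemma~\ref{lem:IrreducbleConeAlgebraRepresentation} (applied to $\Lambda$ and to $\Lambda^c$). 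Hence a unitary equivalence $\cK_x\cong\cK_y$ of $\fA$-representations forces $\dim\cB(x\to w\otimes z)=\dim\cB(y\to w\otimes z)$ for all simple $w,z$; taking $z=1_\cB$ (allowed since $1_\cB\in\langle X\rangle$ is simple) yields $\dim\cB(x\to w)=\dim\cB(y\to w)$ for every simple $w$, i.e. $x\cong y$. Full faithfulness of the extended functor on $\Hilb\langle X\rangle^{\rm op}$ then follows from additivity of hom spaces in the $\rmW^*$-category $\mathsf{SSS}(A)$ together with the simple-object case.

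For (iii), let $\pi$ be an irreducible object of $\mathsf{SSS}(A)$. The strategy is the reduction to one dimension already exploited for Theorem~\ref{thm:BraidedCategoricalNetHaagDuality}: a cone $\Lambda$ splits $\bbZ^2$ into $\Lambda$ and $\Lambda^c$, and the braiding moves the sites on the two sides onto a one-dimensional lattice, realizing the relevant algebras as half-infinite fusion categorical nets for $(\langle X\rangle, X)$. Using the superselection criterion for $\pi$ (equivalence of $\pi|_{\fA(\Lambda^c)}$ to the vacuum $\pi_0|_{\fA(\Lambda^c)}\cong\cK_{1_\cB}|_{\fA(\Lambda^c)}$ for every cone $\Lambda$), the normality of the cone-algebra action (Proposition~\ref{prop:NormalActionOfConeAlgebra}), Haag duality (Theorem~\ref{thm:BraidedCategoricalNetHaagDuality}), and the non-factorial Doplicher--Haag--Roberts machinery of~\cite{2410.21454}, one transports $\pi$ to a cone-localized sector and expresses it through a localized structure whose irreducible constituents are labelled by the finite-dimensional center $\bbC^{\Irr\langle X\rangle}$ of the cone algebras; the corresponding one-dimensional identification — the fusion categorical net analogue established in~\cite{2307.12552} — identifies the constituent with label $x$ with $\cK_x$. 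Thus every irreducible sector is some $\cK_x$, and a general object of $\mathsf{SSS}(A)$, being a direct sum of irreducibles, is a $\cK_z$ with $z\in\Hilb\langle X\rangle$, which gives essential surjectivity.

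I expect (iii) to be the main obstacle. Full faithfulness follows cleanly from the irreducibility of the $\cK_x$ together with the fusion-multiplicity invariant displayed above, but ruling out ``exotic'' superselection sectors requires carefully porting the standard Doplicher--Haag--Roberts localization and transport arguments to the present non-factorial setting, in which each cone algebra is a finite direct sum of type $\rmI_\infty$ factors rather than a single factor — precisely the situation for which the framework of~\cite{2410.21454} and the one-dimensional precedent of~\cite{2307.12552} are designed.
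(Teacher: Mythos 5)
Your proof is correct and, at the step where all the real content lies (essential surjectivity), it takes exactly the paper's route: reduce along a cone to a 1D fusion spin chain and cite \cite[Const.~6.13, Thm.~6.14]{2307.12552}. The explicit 2D arguments you give for full faithfulness (the matrix-unit argument for $\pi_x(\fA)''=B(\cK_x)$ and the $\fA(\Lambda^c)\otimes_{\min}\fA(\Lambda)$-decomposition to distinguish $\cK_x$ from $\cK_y$) are correct but redundant, since the same 1D reduction already handles those parts in the paper's one-line proof.
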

\begin{proof}
As with the above results, as we are ignoring the fusion and braiding, it suffices to consider the case of a 1D fusion spin system.
The result then follows by \cite[Const.~6.13 and Thm.~6.14]{MR4945955} (ignoring the module-action there, as we only care about the category itself).
\end{proof}

We leave the identification of the braiding and fusion of $\mathsf{SSS}(A)$ with that of $\Hilb\langle X\rangle$ to a future article.

%%%%%%%%%%%%%%%%%%%%%%%%%%%%%%%%%%%%%%%%%%%%%%%%%%%%%
\subsection{Bounded spread isomorphism to edge-restricted Levin-Wen operators}
\label{sec:BoundedSpreadIso}

\begin{nota}
For a region $\Lambda$, $\Lambda^{+s}$ is the smallest region containing all lattice points of distance at most $s$ away from $\Lambda$.
When $\Lambda$ is a rectangle, $\Lambda^{+s}$ is another rectangle.
When $\Lambda$ is a cone, $\Lambda^{+s}$ means the cone obtained from $\Lambda$ by moving the boundaries of the cone perpendicularly by $s$ to obtain a larger cone.

Even though our $\bbZ^n$ lattice is discrete, since $\bbZ^n\subset \bbR^n$, we say a region $\Lambda\subset \cL$ is \emph{contractible} if the union of all edges and plaquettes contained in $\Lambda$ is contractible in $\bbR^n$.
\end{nota}

\begin{defn}
Suppose $\fA$ and $\fB$ are two nets of finite dimensional $\rm C^*$-algebras on the same lattice.
A \emph{bounded spread isomorphism} $\alpha:\fA\to \fB$ is a $*$-algebra isomorphism between the quasi-local algebras such that there is a universal constant $s>0$ called the \emph{spread} such that $\alpha(\fA(\Lambda)) \subset \fB(\Lambda^{+s})$ 
and
$\alpha^{-1}(\fB(\Lambda)) \subset \fA(\Lambda^{+s})$ 
for all bounded contractible $\Lambda$.
\end{defn}

One utility of bounded spread isomorphism between nets of algebras is that bounded spread Haag duality for one net implies the same for the other \cite[Prop.~5.10]{MR4927814} in a given state.
Thus the two nets of algebras have isomorphic braided $\rmW^*$-tensor categories of superselection sectors \cite[Thm.~C]{MR4927814}.

\begin{ex}[Intertwined nets of algebras]
Suppose $\fA,\fB$ are two nets of finite dimensional $\rmC^*$-algebras on the same lattice.
We say that $\fA$ and $\fB$ are \emph{intertwined with spread $t>0$} if $\fA(\Lambda)\subseteq \fB(\Lambda^{+t})$ and $\fB(\Lambda)\subseteq \fA(\Lambda^{+t})$ for all bounded contractible $\Lambda$. 
Choosing increasing finite rectangles whose union is all of $\cL$, we see that we have an \emph{equality} of quasi-local algebras $\fA=\fB$, and the identity map is manifestly bounded spread.
\end{ex}

\begin{subex}[{\cite[\S~IV.E]{MR4808260}}]
\label{subex:EdgeRestricted}
Consider the \emph{edge restricted local Hilbert spaces} of the Levin-Wen model for a UFC $\cC$ given by $p^A_\Lambda\bigotimes_{v\in\Lambda} \cH_v$ where $p^A_\Lambda:=\prod_{\ell\in \Lambda} A_\ell$.
It was shown in \cite[Prop.~2.6]{2305.14068} that
$$
p^A_\Lambda\bigotimes_{v\in\Lambda} \cH_v
\cong
\cC(X^{\partial \Lambda} \to F(A)^{\# p\in\Lambda})
\cong
Z(\cC)(\Tr(X^{\partial \Lambda}) \to A^{\#p\in \Lambda}),
$$
where $X^{\partial \Lambda}$ is the tensor product of our generator $X=\bigoplus_{c\in\Irr(\cC)}c$ over the number of edges in $\partial \Lambda$, $A\in Z(\cC)$ is the canonical Lagrangian algebra given by $A=I(1)$ where $\Tr: \cC\to Z(\cC)$ is the unitary adjoint \cite[\S2.1]{MR4750417} of the forgetful functor $F: Z(\cC)\to \cC$, and $\#p\in \Lambda$ is the number of plaquettes in $\Lambda$.

We now consider two nets of algebras.
The first is given by 
$$
\fA(\Lambda):= \End_{Z(\cC)}(A^{\#p\in \Lambda}),
$$
which is exactly the commutant of 
$\End_{Z(\cC)}(\Tr(X^{\otimes \partial \Lambda}) )$
on the above Hilbert space by Lemma~\ref{lem:SemisimpleCommutation}.
Indeed, $X$ contains all simples in $\cC$, so $\Tr(X^{\otimes \partial \Lambda})$ contains all simples in $Z(\cC)$.
It is worth mentioning that $\End_{Z(\cC)}(\Tr(X^{\otimes \partial \Lambda}) )$ is Morita equivalent to Ocneanu's tube algebra, which is given by
$$
\bigoplus_{a,b,c\in\Irr(\cC)} \cC(ac\to cb)
\cong
\bigoplus_{a,b,c\in\Irr(\cC)} \cC(a\to cbc^\vee)
\cong
\cC(X\to F(\Tr(X))
\cong
\End_{Z(\cC)}(\Tr(X)).
$$
Indeed, we define the algebra
$$
\Tube_\cC(\partial\Lambda)
=
\left\{
\tikzmath{
\draw[thick, rounded corners=5pt] (0,0) rectangle (3,3);
\draw[thick, blue, rounded corners=5pt] (.5,.5) rectangle (2.5,2.5);
\draw[thick, rounded corners=5pt] (1,1) rectangle (2,2);
\foreach \x in {1.2,1.4,1.6,1.8}{
\draw[thick, blue] (\x,0) -- (\x,1);
\draw[thick, blue] (\x,2) -- (\x,3);
\draw[thick, blue] (0,\x) -- (1,\x);
\draw[thick, blue] (2,\x) -- (3,\x);
}
\node at (1.5,1.5) {$\Lambda$};
}
\right\}
$$
which acts by gluing around the boundary and resolving in the usual way using semisimplicity.
One shows that
$$
\Tube_\cC(\partial\Lambda) \cong \End_{Z(\cC)}(\Tr(X^{\partial \Lambda}).
$$

We can visualize the action of $\fA(\Lambda)$ on vectors in $p^A_\Lambda\bigotimes_{v\in\Lambda}\cH_v$ as follows.
First, by \cite{MR3578212}, the canonical Lagrangian $A=\Tr(1)\in Z(\cC)$ should be viewed as an empty tube corresponding to a `curled up' empty sheet labelled by $\cC$.
One thus thinks of the edge-restricted Levin-Wen Hilbert space as having tubes emanating upwards from every plaquette.
$$
\tikzmath{
\draw (-.5,-.5) grid (3.5,3.5);
\foreach \x in {.3,1.3,2.3}{
\foreach \y in {1.1,2.1,3.1}{
\filldraw[white] ($ (\x,\y) + (0,-.35) $) rectangle ($ (\x,\y) + (.4,0) $);
\draw[thick] (\x,\y) -- ($ (\x,\y) + (0,-.35) $) .. controls ++(270:.2cm) and ++(70:.1cm) .. ($ (\x,\y) + (-.1,-.6) $);
\draw[thick] ($ (\x,\y) + (.4,0) $) -- ($ (\x,\y) + (.4,-.35) $) .. controls ++(270:.2cm) and ++(110:.1cm) .. ($ (\x,\y) + (.5,-.6) $) ;
\filldraw[thick, fill=cyan!30] ($ (\x,\y) + (.2,0) $) ellipse (.2 and .1);
}}
}
\qquad\longrightarrow\qquad
\tikzmath{
\draw (-.5,-.5) grid (3.5,3.5);
\foreach \x in {.3,1.3,2.3}{
\foreach \y in {1.1,2.1,3.1}{
\filldraw[white] ($ (\x,\y) + (0,-.35) $) rectangle ($ (\x,\y) + (.4,0) $);
\draw[thick] (\x,\y) -- ($ (\x,\y) + (0,-.35) $) .. controls ++(270:.2cm) and ++(70:.1cm) .. ($ (\x,\y) + (-.1,-.6) $);
\draw[thick] ($ (\x,\y) + (.4,0) $) -- ($ (\x,\y) + (.4,-.35) $) .. controls ++(270:.2cm) and ++(110:.1cm) .. ($ (\x,\y) + (.5,-.6) $) ;
\filldraw[thick, fill=red!30] ($ (\x,\y) + (.2,0) $) ellipse (.2 and .1);
}}
\filldraw[thick, rounded corners=5pt, fill=gray!50, opacity=.8] (.1,.8) rectangle (2.9,3.4);
\foreach \x in {.3,1.3,2.3}{
\foreach \y in {1.1,2.1,3.1}{
\filldraw[white, opacity=.8] ($ (\x,\y) $) arc(-180:0:.2 and .1) -- ($ (\x,\y) + (.4,.4) $) -- ($ (\x,\y) + (0,.4) $);
\draw[thick] (\x,\y) -- ($ (\x,\y) + (0,.4) $);
\draw[thick] ($ (\x,\y) + (.4,0) $) -- ($ (\x,\y) + (.4,.4) $);
\filldraw[thick, fill=cyan!30] ($ (\x,\y) + (.2,.4) $) ellipse (.2 and .1);
}}
}
$$
The action of $\fA(\Lambda)$ is then by gluing a local operator in the braided categorical net for $Z(\cC)$ with strand label $A$ (which is \emph{not} a strong tensor generator!) onto these empty tubes.
The resulting diagram is again in $\cC(X^{\partial \Lambda}\to F(A^{\#p\in\Lambda}))$.

Second, the \emph{boundary preserving local operators} $\fB(\Lambda)$ are the endomorphisms of the local Hilbert spaces $\bigotimes_{v\in\Lambda} \cH_v$ which fix the labels on the external boundary edges of $\Lambda$.
Since operators in $\fA(\Lambda)$ clearly fix the external boundary, we have an obvious inclusion 
$$
\fA(\Lambda)\subset \fB(\Lambda).
$$
Note, however, that operators in $\fB(\Lambda)$ need not preserve the internal edge labels.

Observe now that operators in $\fB(\Lambda)$ included into $\fB(\Lambda^{+1})$ commute with the external action of $\Tube_\cC(\partial\Lambda^{+1})$!
Since 
$$
\fA(\Lambda^{+1})=\End_{\Tube_\cC(\partial\Lambda^{+1})}(\cC(X^{\partial \Lambda^{+1}}\to F(A^{\#p\in\Lambda^{+1}})),
$$
we see that $\fB(\Lambda)\subset \fA(\Lambda^{+1})$.
Thus the nets of algebra $\fA,\fB$ are intertwined.
\end{subex}

In order to use the above intertwining to make a statement about an isomorphism of categories of superselection sectors, we must analyze the canonical states on $\fA$ and $\fB$.
Recall that $\psi_\fA$ on $\fA$ from \S\ref{sec:SSSofBraidedCategoricalNet} above is defined by conjugating by unit maps $1_{Z(\cC)} \to A$ for each site in $\Lambda$ and taking inductive limits.
The state $\psi_\fB$ is given on $\fB(\Lambda)$ by compressing by $p_\Delta$ for $\Lambda \Subset_1 \Delta$ as in \S\ref{sec:CanonicalStateAndBoundaryAlgebra}:
$$
\psi_\fB(x)p_\Delta = p_\Delta xp_\Delta
\qquad\qquad\qquad
\forall\, x\in \fB(\Lambda).
$$
Observe that for the image of $\fA(\Lambda)\subset \fB(\Lambda)$, compressing by $p_\Delta$ implements the conjugation by the unit maps by \cite[Rem.~2.10]{2305.14068}.
In more detail, in the graphical calculus of strings on tubes for the categorified trace from \cite{MR3578212}, we have
$$
\frac{1}{D_\cC}
\sum_{c\in\Irr(\cC)}
d_c
\cdot
\tikzmath{
\draw[thick] (0,0) -- (0,1);
\draw[thick] (1,0) -- (1,1);
\draw[thick] (0,0) arc(-180:0:.5 and .2);
\draw[thick, dotted] (0,0) arc(180:0:.5 and .2);
\draw[thick, cyan] (0,.5) arc(-180:0:.5 and .2);
\draw[thick, cyan, dotted] (0,.5) arc(180:0:.5 and .2);
\node[cyan] at (.5,.45) {$\scriptstyle c$};
\filldraw[thick, fill=white] (.5,1) ellipse (.5 and .2);
}
=
\tikzmath{
\draw[thick] (0,0) arc (180:0:.5 and .4);
\draw[thick] (0,1) arc (-180:0:.5 and .4);
\draw[thick] (0,0) arc(-180:0:.5 and .2);
\draw[thick, dotted] (0,0) arc(180:0:.5 and .2);
\filldraw[thick, fill=white] (.5,1) ellipse (.5 and .2);
}
=
i\circ i^\dag.
$$
Thus compressing an operator $f\in \fA(\Lambda)$ by $p_\Delta$ gives
$$
p_\Delta x p_\Delta = \psi_\fA(x) p_\Delta,
$$
which implies $\psi_\fB(f)=\psi_\fA(f)$.
We thus see that $\psi_\fA=\psi_\fB$ on the inductive limit algebras $\fA,\fB$.

As $\fA,\fB$ are intertwined and $\fA$  satisfies Haag duality in the state $\psi$ by Theorem \ref{thm:BraidedCategoricalNetHaagDuality}, $\fB$ satisfies bounded spread Haag duality in the state $\psi$ by \cite[Prop.~5.10]{MR4927814}.
Thus $\fB$ has a well-defined braided $\rmW^*$-tensor category of superselection sectors by \cite{MR4362722} or \cite[Thm.~B]{MR4927814}.
By \cite[Thm.~C]{MR4927814}, the braided $\rmW^*$-tensor categories of superselection sectors for $\fB$ and $\fA$ are isomorphic.
By Theorem \ref{thm:SSSforBraidedCategoricalNet}, the underlying $\rmW^*$-category of the superselection sectors for $\fA$ is equivalent to the full subcategory of $\Hilb(Z(\cC))$ generated by $A$, i.e., $\Hilb\langle A\rangle$.
We thus have the following theorem.

\begin{thm}
\label{thm:OnlyFluxes}
The superselection sector category for the net of boundary preserving local operators in the Levin-Wen model is equivalent to $\Hilb\langle A\rangle\subset \Hilb(Z(\cC))$, the full subcategory generated by the canonical Lagrangian algebra $A=\Tr(1)$.
\end{thm}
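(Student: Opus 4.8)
The plan is to assemble the ingredients already prepared in Subexample~\ref{subex:EdgeRestricted} and in the paragraphs preceding the statement; essentially nothing new needs to be proved, only organized. Write $\fA$ for the net $\fA(\Lambda):=\End_{Z(\cC)}(A^{\#p\in\Lambda})$, which is the braided categorical net of \S\ref{sec:SSSofBraidedCategoricalNet} for the pair $(Z(\cC),A)$, and $\fB$ for the net of boundary preserving local operators of the Levin-Wen model for $\cC$. The first step is to recall from Subexample~\ref{subex:EdgeRestricted} that $\fA(\Lambda)\subseteq\fB(\Lambda)\subseteq\fA(\Lambda^{+1})$ for every bounded contractible $\Lambda$, so that $\fA$ and $\fB$ are intertwined with spread one and the identity on their common quasi-local algebra is a bounded spread isomorphism. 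This uses the identification $p^A_\Lambda\bigotimes_{v\in\Lambda}\cH_v\cong\cC(X^{\partial\Lambda}\to F(A^{\#p\in\Lambda}))$ of \cite[Prop.~2.6]{2305.14068} and Lemma~\ref{lem:SemisimpleCommutation}, together with the fact that $X=\bigoplus_{c\in\Irr(\cC)}c$ contains every simple of $\cC$, so that $\Tr(X^{\otimes\partial\Lambda})$ contains every simple of $Z(\cC)$.

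The second step is to check that the two canonical states agree, $\psi_\fA=\psi_\fB=:\psi$, on this quasi-local algebra. By construction $\psi_\fB(x)p_\Delta=p_\Delta x p_\Delta$ for $x\in\fB(\Lambda)$ with $\Lambda\Subset_1\Delta$, while $\psi_\fA$ is defined by conjugating by the unit maps $1_{Z(\cC)}\to A$ at every site and passing to inductive limits. On the subalgebra $\fA(\Lambda)\subseteq\fB(\Lambda)$, compression by $p_\Delta$ implements exactly that conjugation, by the graphical identity $\tfrac{1}{D_\cC}\sum_{c\in\Irr(\cC)}d_c\,(\,\cdots\,)=i\circ i^\dag$ of \cite[Rem.~2.10]{2305.14068} recalled just above the theorem; since $\bigcup_\Lambda\fA(\Lambda)$ is norm-dense, the two states coincide.

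The third step transports the superselection analysis. By Theorem~\ref{thm:BraidedCategoricalNetHaagDuality}, $\fA$ satisfies Haag duality in the state $\psi$ and its cone algebras are finite direct sums of type $\rmI_\infty$ factors, so the net is isotone and properly infinite; since $\fA$ and $\fB$ are intertwined in the same state, \cite[Prop.~5.10]{2410.21454} gives bounded spread Haag duality for $\fB$ in $\psi$, and then \cite{MR4362722} (or \cite[Thm.~B]{2410.21454}) produces a well-defined braided $\rmW^*$-tensor category $\mathsf{SSS}(\fB)$ which by \cite[Thm.~C]{2410.21454} is (braided) equivalent to $\mathsf{SSS}(\fA)$. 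The fourth step applies Theorem~\ref{thm:SSSforBraidedCategoricalNet} to the net $(Z(\cC),A)$: its hypothesis holds since $1_{Z(\cC)}$ is a subobject of $A=\Tr(1)$, as $\Hom_{Z(\cC)}(1,\Tr(1))\cong\Hom_\cC(1,1)\cong\bbC$, and the conclusion is that the underlying $\rmW^*$-category of $\mathsf{SSS}(\fA)$ is unitarily equivalent to $\Hilb\langle A\rangle\subseteq\Hilb(Z(\cC))$, the full subcategory generated by $A$. Composing the two equivalences yields the theorem. I expect no real obstacle: the substantive inputs---the structure of the cone algebras, Haag duality for braided categorical nets, the transfer of (bounded spread) Haag duality and of superselection sectors across a bounded spread isomorphism, and the fusion-categorical computation of $\mathsf{SSS}$---are all quoted, and the only points demanding care are the spread-one intertwining and the matching of the two canonical states, both of which reduce to the skein- and tube-theoretic identities of \cite{2305.14068,MR3578212}.
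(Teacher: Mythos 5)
Your proposal is correct and follows essentially the same route as the paper: establish the spread-one intertwining of the braided categorical net $\fA(\Lambda)=\End_{Z(\cC)}(A^{\#p\in\Lambda})$ with the edge-restricted net $\fB$, match the two canonical states via the graphical identity from \cite[Rem.~2.10]{2305.14068}, invoke Theorem~\ref{thm:BraidedCategoricalNetHaagDuality} for Haag duality of $\fA$, transfer bounded spread Haag duality and the superselection category across the intertwining via \cite[Prop.~5.10, Thm.~B, Thm.~C]{2410.21454}, and conclude with Theorem~\ref{thm:SSSforBraidedCategoricalNet}. The one small addition you make---explicitly checking that $1_{Z(\cC)}$ is a subobject of $A=\Tr(1)$ via $\Hom_{Z(\cC)}(1,\Tr(1))\cong\Hom_\cC(1,1)\cong\bbC$ so that Theorem~\ref{thm:SSSforBraidedCategoricalNet} applies---is a useful verification that the paper leaves implicit.
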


\begin{rem}
Note that the above theorem only establishes an isomorphism of $\rmW^*$-categories, and not braided $\rmW^*$-tensor categories, as we have not yet identified composition of superselection sectors for the braided categorical net with the tensor product and braiding in $\Hilb(Z(\cC))$.
\end{rem}

\begin{rem}
The reason we consider the edge-restricted Levin-Wen operators is that this model is bounded-spread equivalent to the braided fusion categorical nets we defined earlier.
The superselection structure of the usual (unrestricted) Levin-Wen model, including the braiding, has recently been worked out by Bols and Kj{\ae}r~\cite{2511.21521,2603.01936}.
The physical boundary algebra for the full model was studied in our previous work~\cite{MR4945955}, where it was found that its DHR bimodules are given by $Z(\cC)$, matching the bulk topological superselection theory.
\end{rem}

%%%%%%%%%%%%%%%%%%%%%%%%%%%%%%%%%%%%%%%%%%%%%%%%%%%%%
\subsection{Adapting the Levin-Wen model to see all excitations in the edge-restricted sector}

Unfortunately, only the fluxes (summands of the canonical Lagrangian algebra) appear as superselection sectors in Theorem \ref{thm:OnlyFluxes} above.
The model from \cite[Rem.~3.5]{2305.14068} due to the first author is a generalization of the Levin-Wen model where we can see charges and dyons as superselection sectors of the net of boundary preserving local algebras, and not just fluxes.
The local Hilbert space is given by
$$
\tikzmath{
\draw (-.5,0) node[left]{$\scriptstyle a$} -- (.5,0) node[right]{$\scriptstyle d$};
\draw (0,-.5) node[below]{$\scriptstyle b$} -- (0,.5) node[above]{$\scriptstyle c$};
\draw[thick, red] (0,0) -- (-.3,.3) node[above]{$\scriptstyle z$};
}
\qquad
\longleftrightarrow
\qquad
\cH_v 
:= 
\bigoplus_{\substack{
a,b,c,d\in \Irr(\cC)
\\
z\in \Irr(Z(\cC))
}}
\cC(ab \to F(z)cd)
=
\cC(X^{\otimes 2} \to F(Z)\otimes X^{\otimes 2})
$$
where $F: Z(\cC) \to \cC$ is the forgetful functor, and the direct sum is orthogonal, i.e., the summands for distinct objects in $Z(\cC)$ are orthogonal.
The space $\cH_v$ is equipped with the enriched skein module inner product
$$
\left\langle
\tikzmath{
\draw (-.5,0) node[left]{$\scriptstyle a$} -- (.5,0) node[right]{$\scriptstyle d$};
\draw (0,-.5) node[below]{$\scriptstyle b$} -- (0,.5) node[above]{$\scriptstyle c$};
\draw[thick, red] (0,0) -- (-.3,.3) node[above]{$\scriptstyle z$};
\node at (.2,-.2) {$\scriptstyle \xi$};
}
\middle|
\tikzmath{
\draw (-.5,0) node[left]{$\scriptstyle a'$} -- (.5,0) node[right]{$\scriptstyle d'$};
\draw (0,-.5) node[below]{$\scriptstyle b'$} -- (0,.5) node[above]{$\scriptstyle c'$};
\draw[thick, red] (0,0) -- (-.3,.3) node[above]{$\scriptstyle z'$};
\node at (.2,-.2) {$\scriptstyle \xi'$};
}
\right\rangle
=
\delta_{a=a'}
\delta_{b=b'}
\delta_{c=c'}
\delta_{d=d'}
\delta_{z=z'}
\frac{1}{\sqrt{\textcolor{red}{d_z}d_ad_bd_cd_d}}
\cdot
\tr_\cC(\xi^\dag\circ \xi').
$$

As in the usual Levin-Wen model, we have edge terms $A_\ell$ which enforce that string labels match.
We also have plaquette terms which must incorporate the half-braiding with vertical edges as in \cite[\S~II.B]{MR4640433}.
$$
\frac{1}{D_\cC}\sum_r
d_r
\tikzmath{
\draw[step=1.0,black,thin] (0.5,0.5) grid (2.5,2.5);
\filldraw[thick, cyan, rounded corners=5pt, fill=cyan!20] (1.15,1.15) rectangle (1.85,1.85);
\draw[thick, knot, red] (2,1) -- ($ (2,1) + (-.4,.4)$);
\draw[thick, red] (1,1) -- ($ (1,1) + (-.4,.4)$);
\draw[thick, red] (1,2) -- ($ (1,2) + (-.4,.4)$);
\draw[thick, red] (2,2) -- ($ (2,2) + (-.4,.4)$);
\node at (2.3,.8) {$\scriptstyle \xi_{2,1}$};
\node at (.7,1.8) {$\scriptstyle \xi_{1,2}$};
\node at (1.3,.8) {$\scriptstyle \xi_{1,1}$};
\node at (2.3,1.8) {$\scriptstyle \xi_{2,2}$};
\node[cyan] at (1.3,1.5) {$\scriptstyle r$};
}
=
\frac{1}{D_\cC}\sum_r
d_r
\tikzmath{
\draw[step=1.0,black,thin] (0.5,0.5) grid (2.5,2.5);
\draw[thick, cyan] (1.3,1) -- (1,1.3);
\draw[thick, cyan] (1.7,1) -- (2,1.3);
\draw[thick, cyan] (1.3,2) -- (1,1.7);
\draw[thick, cyan] (1.7,2) -- (2,1.7);
\filldraw[fill=green] (1.3,1) circle (.05cm);
\filldraw[fill=green] (1.7,1) circle (.05cm);
\filldraw[fill=orange] (2,1.3) circle (.05cm);
\filldraw[fill=orange] (2,1.7) circle (.05cm);
\filldraw[fill=yellow] (1.3,2) circle (.05cm);
\filldraw[fill=yellow] (1.7,2) circle (.05cm);
\filldraw[fill=purple] (1,1.3) circle (.05cm);
\filldraw[fill=purple] (1,1.7) circle (.05cm);
\fill[white] (1.85,1.15) circle (.05cm);
\draw[thick, red] (2,1) -- ($ (2,1) + (-.4,.4)$);
\draw[thick, red] (1,1) -- ($ (1,1) + (-.4,.4)$);
\draw[thick, red] (1,2) -- ($ (1,2) + (-.4,.4)$);
\draw[thick, red] (2,2) -- ($ (2,2) + (-.4,.4)$);
\node at (2.3,.8) {$\scriptstyle \xi_{2,1}$};
\node at (.7,1.8) {$\scriptstyle \xi_{1,2}$};
\node at (1.3,.8) {$\scriptstyle \xi_{1,1}$};
\node at (2.3,1.8) {$\scriptstyle \xi_{2,2}$};
}
=
\sum_\eta
C(\xi,\eta)
\tikzmath{
\draw[step=1.0,black,thin] (0.5,0.5) grid (2.5,2.5);
\draw[thick, red] (2,1) -- ($ (2,1) + (-.4,.4)$);
\draw[thick, red] (1,1) -- ($ (1,1) + (-.4,.4)$);
\draw[thick, red] (1,2) -- ($ (1,2) + (-.4,.4)$);
\draw[thick, red] (2,2) -- ($ (2,2) + (-.4,.4)$);
\node at (2.3,.8) {$\scriptstyle \eta_{2,1}$};
\node at (1.3,1.8) {$\scriptstyle \eta_{1,2}$};
\node at (1.3,.8) {$\scriptstyle \eta_{1,1}$};
\node at (2.3,1.8) {$\scriptstyle \eta_{2,2}$};
}
$$
There is also a \emph{unit term} local projector $C_v$ for each vertex $v$ which only allows $z=1_{Z(\cC)}$ at $v$.
Thus the local Hamiltonian
$$
H =
-\sum A_\ell
-\sum B_p
-\sum C_v
$$
again recovers the $\cC$-skein module as its ground state space.

We now proceed as in Sub-Example \ref{subex:EdgeRestricted} with our augmented Levin-Wen model.
We pass to the edge restricted local Hilbert spaces $p^A_\Lambda\bigotimes_{v\in\Lambda} \cH_v$ where $p^A_\Lambda:=\prod_{\ell\subset \Lambda} A_\ell$.
By an argument similar to \cite[Prop.~2.6]{2305.14068},
$$
p^A_\Lambda\bigotimes_{v\in\Lambda} \cH_v
=
\cC(X^{\partial \Lambda} \to F(A)^{\#p\in\Lambda}\otimes F(Y)^{\#v\in \Lambda})
\cong
Z(\cC)(\Tr(X^{\partial \Lambda}) \to A^{\#p\in\Lambda}\otimes Y^{\#v\in \Lambda})
$$
where again $F:Z(\cC)\to \cC$ is the forgetful functor, $\Tr:\cC\to Z(\cC)$ is its unitary adjoint,
$A=\Tr(1)$ is the canonical Lagrangian algebra,
$X=\bigoplus_{c\in \Irr(\cC)} c$, and
$Y=\bigoplus_{z\in \Irr(Z(\cC))} z$.
Using a similar intertwining argument, the boundary preserving local operators $\fB(\Lambda)$ contains
$$
\fA(\Lambda)
:=
\End_{\Tube_\cC(\partial \Lambda)}(Z(\cC)(\Tr(X^{\partial \Lambda}) \to A^{\#p\in\Lambda}\otimes Y^{\#v\in \Lambda}))
=
\End_{Z(\cC)}(A^{\#p\in\Lambda}\otimes Y^{\#v\in \Lambda}),
$$
and $\fB(\Lambda)\subset \fA(\Lambda^{+1})$ as before.
One can view the net of algebras $\fA$ as a braided fusion categorical net built from 
two generators  $\textcolor{cyan}{A=\Tr(1)}$ and $\textcolor{red}{Y=\bigoplus_{z\in \Irr(Z(\cC)} z}$
on a 2D edge lattice.
$$
\tikzmath{
\draw[step=1.0,black,thin] (0.5,0.5) grid (4.5,4.5);
\foreach \i in {1,2,3} {
\foreach \j in {1,2,3} {
        \fill[cyan!20, rounded corners=5pt] ($ (\i,\j) + (.1,.1)$) rectangle ($ (\i,\j) + (.9,.9)$) ;
}}
\foreach \i in {1,2,3,4} {
\foreach \j in {1,2,3,4} {
        \draw[thick, knot, red] (\i,\j) -- ($ (\i,\j) + (-.3,.3)$);
}}
}
\qquad
\rightsquigarrow
\qquad
\tikzmath{
\draw (-.5,-.5) grid (3.5,3.5);
\foreach \x in {.3,1.3,2.3}{
\foreach \y in {.7,1.7,2.7}{
\filldraw[white] ($ (\x,\y) + (0,-.15) $) rectangle ($ (\x,\y) + (.4,0) $);
\draw[thick] (\x,\y) -- ($ (\x,\y) + (0,-.15) $) .. controls ++(270:.2cm) and ++(70:.1cm) .. ($ (\x,\y) + (-.1,-.4) $);
\draw[thick] ($ (\x,\y) + (.4,0) $) -- ($ (\x,\y) + (.4,-.15) $) .. controls ++(270:.2cm) and ++(110:.1cm) .. ($ (\x,\y) + (.5,-.4) $) ;
\filldraw[thick, fill=cyan!20] ($ (\x,\y) + (.2,0) $) ellipse (.2 and .1);
}}
\foreach \i in {0,1,2,3} {
\foreach \j in {0,1,2,3} {
        \draw[thick, knot, red] (\i,\j) -- ($ (\i,\j) + (-.3,.3)$);
}}
}
\qquad
\rightsquigarrow
\qquad
\tikzmath{
\draw[dotted, yshift=-.5cm] (-.25,.25) grid (3.25,3.75);
\foreach \y in {0,1,2,3}{
\foreach \x in {0,1,2,3}{
\filldraw[red] (\x,\y) circle (.05cm);
}}
\foreach \y in {0,1,2}{
\foreach \x in {0,1,2}{
\filldraw[cyan] (\x+.5,\y+.5) circle (.05cm);
}}
}
$$
The state on $\fA$ corresponding to the canonical ground state $\psi$ on $\fB$ is given by the inductive limit of the local states
$$
\phi_\Lambda(f)
:=
\tikzmath{
\foreach \y in {0,1,2}{
\foreach \x in {0,1,2}{
\draw[red, thick] (\x,\y) -- ($ (\x,\y) + (.2,-.2) $);
\filldraw[red] ($ (\x,\y) + (.2,-.2) $) circle (.05cm);
}}
\foreach \y in {.5,1.5}{
\foreach \x in {.5,1.5}{
\draw[cyan, thick] (\x,\y) -- ($ (\x,\y) + (.2,-.2) $);
\filldraw[cyan] ($ (\x,\y) + (.2,-.2) $) circle (.05cm);
}}
\filldraw[thick, rounded corners=5pt, fill=gray!50, opacity=.8] (-.2,-.2) rectangle (2.2,2.2);
\foreach \y in {0,1,2}{
\foreach \x in {0,1,2}{
\draw[red, thick] (\x,\y) -- ($ (\x,\y) + (-.2,.2) $);
\filldraw[red] ($ (\x,\y) + (-.2,.2) $) circle (.05cm);
}}
\foreach \y in {.5,1.5}{
\foreach \x in {.5,1.5}{
\draw[cyan, thick] (\x,\y) -- ($ (\x,\y) + (-.2,.2) $);
\filldraw[cyan] ($ (\x,\y) + (-.2,.2) $) circle (.05cm);
}}
\node at (1,.5) {$f$};
}
:=
(
\textcolor{cyan}{i_A^{\#p\in\Lambda}}
\otimes
\textcolor{red}{i_Y^{\#v\in\Lambda}}
)^\dag
\circ f \circ
(
\textcolor{cyan}{i_A^{\#p\in\Lambda}}
\otimes
\textcolor{red}{i_Y^{\#v\in\Lambda}}
)
\quad
\forall\,f\in\End_{Z(\cC)}(
\textcolor{cyan}{A^{\#p\in\Lambda}}
\otimes 
\textcolor{red}{Y^{\#v\in\Lambda}}
).
$$
Here, the 
cyan univalent vertex is the partial isometry $i_A:1_{Z(\cC)}\hookrightarrow A$ 
or its adjoint, and the 
red univalent vertex is the partial isometry $i_Y:1_{Z(\cC)} \hookrightarrow Y$
or its adjoint.
Indeed, $\psi(f)p_\Delta= p_\Delta fp_\Delta$ for all $\Lambda \ll \Delta$,
and 
compression of $f$ by
$p_\Delta=p^B_\Delta p^C_\Delta$ is easily seen to conjugate $f$ by both 
$\textcolor{cyan}{i_A^{\#p\in\Lambda}}$
and
$\textcolor{red}{i_Y^{\#v\in\Lambda}}$ 
where appropriate,
corresponding to the compression of $f$
of both $p^C_\Delta$ and $p^B_\Delta$ respectively.

As the nets $\fA,\fB$ are intertwined, they yield isomorphic $\rmW^*$-tensor categories of superselection sectors after completing the cone von Neumann algebras in the GNS representation of the canonical ground state $\psi$.
Since $\Hilb\langle A\otimes Y\rangle =\Hilb (Z(\cC))$, we have the following theorem.

\begin{thm}
The superselection sector category for the net of boundary preserving local operators in the augmented Levin-Wen model is equivalent to $\Hilb(Z(\cC))$.
\end{thm}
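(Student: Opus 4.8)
The plan is to derive this statement as a corollary of the intertwining $\fA \leftrightarrow \fB$ established just above, together with the multi-generator analogue of the superselection analysis of \S\ref{sec:SSSofBraidedCategoricalNet}. The three inputs I would assemble are: (i) that $\fA$, viewed as a \emph{braided categorical net} on the two-colored 2D lattice carrying the object $A = \Tr(1)$ at every plaquette site and $Y = \bigoplus_{z\in\Irr(Z(\cC))} z$ at every vertex site, has $\rmW^*$-category of superselection sectors (in its canonical state $\phi$) unitarily equivalent to the unitary ind-completion $\Hilb\langle A, Y\rangle$ of the braided fusion subcategory of $Z(\cC)$ generated by $A$ and $Y$; (ii) that $\langle A, Y\rangle = Z(\cC)$; and (iii) that under the intertwining the canonical ground state $\psi$ of $\fB$ restricts on $\fA$ to exactly $\phi$.

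For (i) I would check that the proofs of Proposition~\ref{prop:NormalActionOfConeAlgebra}, Lemma~\ref{lem:IrreducbleConeAlgebraRepresentation}, Theorem~\ref{thm:BraidedCategoricalNetHaagDuality}, and Theorem~\ref{thm:SSSforBraidedCategoricalNet} never use that all lattice sites carry the same object: the cone-splitting trick still moves the two-colored 2D lattice onto two 1D half-rays of two-colored sites, and the 1D input is the fusion spin chain analysis of \cite[\S6]{2307.12552}, which (as in Example~\ref{ex:fusion_spin}) already permits distinct objects at distinct sites. Concretely, $\cK_x(\Lambda)$ becomes $\cB(x\to \bigotimes_{v\in\Lambda} b_v)$ with $b_v\in\{A,Y\}$, the cone algebra acts irreducibly on each such space, each cone von Neumann algebra in the GNS representation of $\phi$ is a finite direct sum of type $\rmI_\infty$ factors indexed by $\Irr\langle A, Y\rangle$, and Haag duality holds. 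Part (ii) is immediate since $1_{Z(\cC)}$ is a summand of $Y$, so already $Y = \bigoplus_{z\in\Irr(Z(\cC))} z$ contains every simple of $Z(\cC)$, whence $\langle A, Y\rangle = Z(\cC)$ and $\Hilb\langle A, Y\rangle = \Hilb(Z(\cC))$. For (iii) I would invoke the computation just above: for $\Lambda \ll \Delta$, compression by $p_\Delta = p^B_\Delta p^C_\Delta$ conjugates $f\in \fA(\Lambda)$ simultaneously by $i_A^{\#p\in\Lambda}$ (from $p^B_\Delta$) and $i_Y^{\#v\in\Lambda}$ (from $p^C_\Delta$) by \cite[Rem.~2.10]{2305.14068}, so $\psi(f) p_\Delta = p_\Delta f p_\Delta = \phi(f) p_\Delta$, giving $\psi|_{\fA} = \phi$. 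Finally, since $\fA$ and $\fB$ are intertwined and $\fA$ satisfies Haag duality in $\psi=\phi$, \cite[Prop.~5.10]{2410.21454} yields bounded-spread Haag duality for $\fB$ in $\psi$, \cite{MR4362722} (or \cite[Thm.~B]{2410.21454}) equips $\fB$ with a braided $\rmW^*$-tensor category of superselection sectors, and \cite[Thm.~C]{2410.21454} identifies it with that of $\fA$; combined with (i)--(ii) this gives the theorem.

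The main obstacle I expect is bookkeeping rather than anything conceptual: verifying that the multi-generator adaptation of \S\ref{sec:SSSofBraidedCategoricalNet} is genuinely routine, and in particular that the cone-splitting/braiding argument correctly distributes the two colors of sites onto the two 1D half-lines and that $\phi$ matches the 1D canonical state used in \cite[Thm.~6.14]{2307.12552} color-by-color. A secondary point to handle with care is the identification $\psi|_{\fA} = \phi$: one must track how the vertex unit projectors $C_v$, the plaquette projectors $B_p$, and the edge projectors $A_\ell$ interact so that compressing by a surrounding $p_\Delta$ really implements both families of conjugations, which is where \cite[Rem.~2.10]{2305.14068} and the half-braiding conventions of \cite[\S II.B]{MR4640433} must be followed precisely.
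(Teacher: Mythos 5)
Your proposal is correct and follows essentially the same route as the paper: set up $\fA$ as a two-generator braided categorical net with $A=\Tr(1)$ at plaquettes and $Y=\bigoplus_{z\in\Irr(Z(\cC))}z$ at vertices, verify $\psi|_\fA=\phi$, observe $\langle A,Y\rangle=\langle A\otimes Y\rangle=Z(\cC)$, and transport Haag duality plus the superselection identification across the intertwining via \cite[Prop.~5.10 and Thm.~C]{2410.21454}. One small note: you attribute the $i_A$-conjugation to $p^B_\Delta$ and the $i_Y$-conjugation to $p^C_\Delta$, which is the natural pairing (plaquette terms fix the Lagrangian label, unit terms fix the vertex label); the paper's sentence lists them in the opposite order, which appears to be a typo and does not affect the argument.
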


We conclude that we can now see all excitations and not just those under $A\in Z(\cC)$.

%%%%%%%%%%%%%%%%%%%%%%%%%%%%%%%%%%%%%%%%%%%%%%%%%%%%%
\subsection{Braided-enriched fusion categorical net on a lattice with boundary}

In \S\ref{sec:BoundaryLW} above, we saw how to construct a fusion module spin chain $\fM$ from a module category $\cM_\cC$ for a fusion category $\cC$ on a 1D $\bbZ$ lattice with boundary.
On a 2D $\bbZ^2$ lattice with a 1D $\bbZ$ boundary, we can use a \emph{braided-enriched} fusion category to construct a braided fusion module categorical net.

Braided-enriched fusion categories were introduced in \cite{MR3961709} and shown to be equivalent to \emph{module fusion categories} \cite{MR3578212,MR3866911} over a braided fusion category.
In more detail, given a unitary braided fusion category $\cA$, a unitary \emph{module fusion category} for $\cA$ or a \emph{unitary $\cA$-enriched fusion category} is a unitary fusion category $\cX$ equipped with a unitary braided central functor $\Phi^Z: \cA\to Z(\cX)$.
This central functor is exactly the data needed to hook up the 3D Walker-Wang model for $\cA$ \cite{1104.2632} to the 2D Levin-Wen model for $\cX$ \cite{PhysRevB.71.045110} as discussed in \cite{MR4640433,2305.14068}.
We will discuss this lattice model in more detail in \S\ref{sec:WW} below.

Just as one can assign a well-defined object $(a_{ij})^{\otimes \Lambda}\in \cA$ to a tuple of objects $a_{ij}\in \cA$ for $(i,j)\in\Lambda$,
we can assign an object $(x_{ij})^{\otimes\Lambda}\in\cX$ to a tuple of objects $x_{ij}\in\cX$ indexed by a subset $(i,j)\in\Lambda$ as long as $\Lambda$ meets the boundary $\partial \cL$ of our $\bbZ^2$ lattice, and $x_{ij}\in Z(\cX)$ for all sites $(i,j)$ which do not lie in the boundary $\partial \cL$, i.e., $(i,j)\in\Lambda \setminus \partial \cL$.
$$
\tikzmath{
\foreach \y in {.5,1,...,1.5}{
\foreach \x in {0,.5,...,1.5}{
\filldraw[red] (\x,\y) circle (.05cm);
}
}
\foreach \x in {0,.5,...,1.5}{
\filldraw (\x,0) circle (.05cm);
}
\draw[thick, blue, rounded corners=5pt] (-.25,-.25) rectangle (1.75,1.75);
\node[blue] at (.75,.75) {$\Lambda$};
\node at (2,0) {$\partial \cL$};
}
$$
That is, the lattice sites colored black belonging to $\Lambda\cap \partial \cL$ above are assigned objects $x_{ij}\in\cX$ and the lattice sites colored red belonging to $\Lambda\setminus \partial\cL$ above are assigned objects $x_{ij}\in Z(\cX)$.
As we assumed $\cX$ is $\cA$-enriched so that we have a braided central functor $\Phi^Z:\cA\to Z(\cX)$, we can label sites colored black belonging to $\Lambda\cap \partial \cL$ by objects $x_{ij}\in\cX$ and lattice cites colored red belonging to $\Lambda\setminus \partial\cL$ by objects $a_{ij}\in\cA$.
As the objects $\Phi^Z(a_{ij})\in Z(\cX)$, we thus get a well-defined object
$$
(x_{ij})^{\otimes \Lambda\cap \partial \cL}
\lhd 
(a_{ij})^{\otimes \Lambda\setminus \partial \cL}
=
(x_{ij})^{\otimes \Lambda\cap \partial \cL}
\otimes 
(\Phi^Z(a_{ij}))^{\otimes \Lambda\setminus \partial \cL}.
$$
Here, we have implicitly chosen a lexicographic ordering of the sites in $\Lambda$ as in \eqref{eq:LexicographicOrdering}, as the $x_{ij}\in\cX$ are only linearly ordered along $\partial \cL$, so that the $\Phi^Z(a_{ij})$ appear to the right, viewing $\cX$ as a right $\cA$-module (tensor) category.

\begin{defn}[Braided-enriched fusion categorical net]
Let $\cA$ be a braided unitary fusion category and let $\cX$ be an $\cA$-enriched unitary fusion category, i.e., $\cX$ is a unitary fusion category equipped with a braided unitary tensor functor $\Phi^Z: \cA\to Z(\cX)$.
We write $\Phi:= \Forget\circ \Phi^Z: \cA \to \cX$.
Let $A\in\cA$ and $X\in\cX$ be a \emph{strongly generating pair}, i.e., there are $n,k\in\bbN$ such that every simple object of $\cX$ appears as a direct summand of
$$
X^{\otimes n}\lhd A^{\otimes k}
=
X^{\otimes n} \otimes \Phi^Z(A)^{\otimes k}.
$$
For example, we may take $X=\bigoplus_{x\in\Irr(\cX)} x$ and $A=\bigoplus_{a\in\Irr(\cA)} a$.
For rectangles $\Lambda$ which do not meet $\partial \cL$, we assign the algebra $\fX(\Lambda)$ to be the value of the braided categorical net that $(\cA,A)$ would assign to $\Lambda$, i.e., $\fX(\Lambda):=\End_\cA(A^{\otimes\Lambda})$.
For rectangles $\Delta$ which meet $\partial \cL$, we assign 
$\fX(\Delta)=\End_\cX(X^{\otimes \Delta \cap \partial \cL}\lhd A^{\otimes \Delta\setminus \partial \cL})$.
$$
\tikzmath{
\foreach \y in {.5,1,...,2.5}{
\foreach \x in {0,.5,...,2.5}{
\filldraw[red] (\x,\y) circle (.05cm);
}
}
\foreach \x in {0,.5,...,2.5}{
\filldraw (\x,0) circle (.05cm);
}
\draw[thick, cyan, rounded corners=5pt] (-.25,-.25) rectangle (2.75,2.75);
\draw[thick, blue, rounded corners=5pt] (.75,.75) rectangle (1.75,1.75);
\node[blue] at (1.25,1.25) {$\Lambda$};
\node[cyan] at (.25,2.25) {$\Delta$};
\node at (3,0) {$\partial \cL$};
}
\qquad
\qquad
\begin{aligned}
\Lambda
&\longmapsto
\fX(\Lambda)
= \End_\cA(A^{\otimes \Lambda})
\\
\Delta
&\longmapsto
\fX(\Delta)
= \End_\cX(X^{\otimes \Delta \cap \partial \cL}\lhd A^{\otimes \Delta\setminus \partial \cL})    
\end{aligned}
$$
One uses the braided central functor $\Phi^Z$ to obtain the inclusion $\fX(\Lambda) \subset \fX(\Delta)$ for such regions.
One checks that $\fX$ is a net of algebras on the $\bbZ^2$ lattice with $\bbZ$ boundary.
\end{defn}

To compute the tensor category of DHR bimodules, we use the folding trick.
Given a boundary rectangle $\Delta$ meeting $\partial \cL$,
we fold our lattice $\cL$ with boundary into a thickened 1D lattice with boundary, which we call a `1D slab'.
This process squeezes the bulk lattice sites of $\Delta$ farther into the center of this thickened 1D slab, but choosing a careful bijection $\bbZ\times \bbN\to \bbN$, we can view $\Delta$ as sitting inside a boundary interval of the 1D slab.
$$
\tikzmath{
\foreach \y in {.5,1,...,2.5}{
\foreach \x in {-.5,0,.5,...,2.5}{
\filldraw[red] (\x,\y) circle (.05cm);
}
}
\foreach \x in {-.5,0,.5,...,2.5}{
\filldraw (\x,0) circle (.05cm);
}
\node at (3,0) {$\partial \cL$};
\filldraw (4.5,1.5) circle (.05cm);
\foreach \x in {5,5.5,6}{
\filldraw (\x,1) circle (.05cm);
\filldraw[red] (\x,1.5) circle (.05cm);
\filldraw[red] ($ (\x,1.5) + (1.5,0) $) circle (.05cm);
\filldraw (\x,2) circle (.05cm);
}
\draw[thick, blue, ->] (1,0) to[out=30,in=180] (4.5,1.5);
\draw[thick, blue, ->] (1.5,0) to[out=30,in=-135] (5,1);
\draw[thick, blue, ->] (2,0) to[out=30,in=-135] (5.5,1);
\draw[thick, blue, ->] (2.5,0) to[out=30,in=-135] (6,1);
\draw[thick, blue, ->] (.5,0) to[out=60,in=135] (5,2);
\draw[thick, blue, ->] (0,0) to[out=60,in=135] (5.5,2);
\draw[thick, blue, ->] (-.5,0) to[out=60,in=135] (6,2);
\draw[thick, blue, ->] (1,.5) to[out=30,in=135] (5,1.5);
\draw[thick, blue, ->] (.5,.5) to[out=30,in=135] (5.5,1.5);
\draw[thick, blue, ->] (1.5,.5) to[out=30,in=-150] (6,1.5);
\node[red] at (8,1.5) {$\cdots$};
}
$$
This process allows us to perform a dimensional reduction to consider our 2D net of algebras $\fX$ with boundary as a 1D net of algebras $\fM$ with boundary.
Here, we use $\fM$ for the 1D boundary net as the situation is completely analogous to the boundary Levin-Wen setup from \S\ref{sec:BoundaryLW}.

We now observe that any boundary DHR bimodule in $\DHR^\partial(\fX)$ can be viewed as a boundary DHR bimodule in $\DHR^\partial(\fM)$.
We can now identify $\fM$ with a fusion module categorical net as in \eqref{eq:BoundaryRectangleLW}.
The single boundary site on the left in the 1D slab corresponds to $\cM=\cX$ as a right module for the unitary multifusion category 
$$
\cC=\cX^{\rm mp}\boxtimes_\cA \cA \boxtimes_\cA \cX
\cong
\cX^{\rm mp}\boxtimes_\cA \cX.
$$
Here, we use the \emph{ladder category model} for the Deligne product \cite{mitchell-thesis,MR3975865}.
In more detail $\cX^{\rm mp}\boxtimes_\cA\cX$ is the unitary multifusion category obtained by taking the unitary Cauchy completion \cite[\S3.1.1]{MR4598730} of $\cL=\mathsf{Lad}(\cX^{\rm mp},\cA,\cX)$; the latter has has objects formal symbols $x^{\rm mp}\boxtimes y$ for $x,y\in \cX$ and 
$$
\Hom(
x_1^{\rm mp}\boxtimes y_1
\to
x_2^{\rm mp}\boxtimes y_2
)
:=
\bigoplus_{a\in\Irr(\cA)}
\cX^{\rm mp}(x_1^{\rm mp} \to x_2^{\rm mp}\lhd a)
\otimes
\cX(a\rhd y_1 \to y_2)
$$
where $x^{\rm mp}\lhd a := x^{\rm mp}\otimes_{\cX^{\rm mp}} \Phi(a)^{\rm mp}=(\Phi(a)\otimes x)^{\rm mp}$ 
and $a\rhd x:= \Phi(a)\otimes x$ is the $\cA$-action on $\cX$ via $\Phi:=\Forget\circ \Phi^Z: \cA\to \cX$.
To compose morphisms, we stack ladders vertically and decompose the rungs into simples using the semisimplicity relation for $\cA$.
$$
\tikzmath{
\draw[dashed, cyan, thick] (0,-1) -- (0,1);
\draw (-.5,-1) --node[left]{$\scriptstyle x_2^{\rm mp}$} (-.5,-.7) -- (-.5,.7) --node[left]{$\scriptstyle x_3^{\rm mp}$} (-.5,1);
\draw (.5,-1) --node[right]{$\scriptstyle y_2$} (.5,-.7) -- (.5,.7) --node[right]{$\scriptstyle y_3$} (.5,1);
\draw[thick, red] (-.5,-.5) --node[left]{$\scriptstyle b$} (.5,.5);
\roundNbox{fill=white}{(-.5,-.4)}{.3}{0}{0}{$f_2$}
\roundNbox{fill=white}{(.5,.4)}{.3}{0}{0}{$g_2$}
}
\circ
\tikzmath{
\draw[dashed, cyan, thick] (0,-1) -- (0,1);
\draw (-.5,-1) --node[left]{$\scriptstyle x_1^{\rm mp}$} (-.5,-.7) -- (-.5,.7) --node[left]{$\scriptstyle x_2^{\rm mp}$} (-.5,1);
\draw (.5,-1) --node[right]{$\scriptstyle y_1$} (.5,-.7) -- (.5,.7) --node[right]{$\scriptstyle y_2$} (.5,1);
\draw[thick, red] (-.5,-.5) --node[left]{$\scriptstyle a$} (.5,.5);
\roundNbox{fill=white}{(-.5,-.4)}{.3}{0}{0}{$f_1$}
\roundNbox{fill=white}{(.5,.4)}{.3}{0}{0}{$g_1$}
}
:=
\tikzmath{
\draw[dashed, cyan, thick] (0,-1) -- (0,2.7);
\draw (-.5,-1) --node[left]{$\scriptstyle x_1^{\rm mp}$} (-.5,-.7) -- (-.5,.6) --node[left]{$\scriptstyle x_2^{\rm mp}$} (-.5,1) -- (-.5,2.4) --node[left]{$\scriptstyle x_3^{\rm mp}$} (-.5,2.7);
\draw (.5,-1) --node[right]{$\scriptstyle y_1$} (.5,-.7) -- (.5,.7) --node[right]{$\scriptstyle y_2$} (.5,1) -- (.5,2.4) --node[right]{$\scriptstyle y_3$} (.5,2.7);
\draw[thick, red] (-.5,-.5) --node[left]{$\scriptstyle a$} (.5,.5);
\draw[thick, red] (-.5,1.3) --node[left]{$\scriptstyle b$} (.5,2.1);
\roundNbox{fill=white}{(-.5,-.4)}{.3}{0}{0}{$f_1$}
\roundNbox{fill=white}{(.5,.4)}{.3}{0}{0}{$g_1$}
\roundNbox{fill=white}{(-.5,1.3)}{.3}{0}{0}{$f_2$}
\roundNbox{fill=white}{(.5,2.1)}{.3}{0}{0}{$g_2$}
}
\underset{\text{\eqref{eq:FusionRelationInSkeinModule}}}{=}
\sum_{c\in\Irr(\cA)}
\frac{\sqrt{d_c}}{\sqrt{d_ad_b}}
\tikzmath{
\draw[dashed, cyan, thick] (0,-2.2) -- (0,2.2);
\draw (-1,-2.2) --node[left]{$\scriptstyle x_1^{\rm mp}$} (-1,-1.9) -- (-1,-1.3) --node[left]{$\scriptstyle x_2^{\rm mp}$} (-1,-.9) -- (-1,-.3) -- (-1,1.9) --node[left]{$\scriptstyle x_3^{\rm mp}$} (-1,2.1);
\draw (1,2.2) --node[right]{$\scriptstyle y_3$} (1,1.9) -- (1,1.3) --node[right]{$\scriptstyle y_2$} (1,.9) -- (1,-.3) -- (1,-1.9) --node[right]{$\scriptstyle y_1$} (1,-2.1);
\draw[thick, red] (-1,-.6) --node[above]{$\scriptstyle b$} (-.3,-.3) --node[right]{$\scriptstyle c$} (.3,.3) --node[below]{$\scriptstyle a$} (1,.6);
\draw[thick, red] (-1,-1.6) --node[right]{$\scriptstyle a$} (-.3,-.3);
\draw[thick, red] (1,1.6) --node[left]{$\scriptstyle b$} (.3,.3);
\roundNbox{fill=white}{(-1,-1.6)}{.3}{0}{0}{$f_1$}
\roundNbox{fill=white}{(1,.6)}{.3}{0}{0}{$g_1$}
\roundNbox{fill=white}{(-1,-.6)}{.3}{0}{0}{$f_2$}
\roundNbox{fill=white}{(1,1.6)}{.3}{0}{0}{$g_2$}
\filldraw[fill=blue] (-.3,-.3) circle (.05cm);
\filldraw[fill=blue] (.3,.3) circle (.05cm);
}
$$
Our ladders include a dashed line in the middle to distinguish between both sides of the ladder in the tensor product decomposition.
We also use the paired node convention from \cite[\S2.5]{MR3663592} which involves summing over an ONB for the trivalent skein module and its adjoint, which is independent of the choice of basis.
Using the skein module inner product, the fusion/semsimplicity relation is given by
\begin{equation}
\label{eq:FusionRelationInSkeinModule}
\sqrt{d_ad_b}\cdot
\tikzmath{
\draw[thick, red] (-.3,-.6) node[below]{$\scriptstyle a$} -- (-.3,.6) node[above]{$\scriptstyle a$};
\draw[thick, red] (.3,-.6) node[below]{$\scriptstyle b$} -- (.3,.6) node[above]{$\scriptstyle b$};
}
=
\sum_{c\in\Irr(\cA)}
\sqrt{d_c}\cdot
\tikzmath{
\draw[thick, red] (-.3,-.6) node[below]{$\scriptstyle a$} -- (0,-.3) -- (.3,-.6) node[below]{$\scriptstyle b$};
\draw[thick, red] (-.3,.6) node[above]{$\scriptstyle a$} -- (0,.3) -- (.3,.6) node[above]{$\scriptstyle b$};
\draw[thick, red] (0,-.3) --node[left]{$\scriptstyle c$} (0,.3);
\filldraw[fill=blue] (0,-.3) circle (.05cm);
\filldraw[fill=blue] (0,.3) circle (.05cm);
}\,.
\end{equation}

Tensor product on objects is given by
$$
(x^{\rm mp}\boxtimes y)\otimes_\cL (w^{\rm mp}\boxtimes z)
:=
(x^{\rm mp}\otimes_{\cX^{\rm mp}} w^{\rm mp})
\boxtimes 
(y\otimes_\cX z)
=
(w\otimes_\cX x)^{\rm mp}
\boxtimes
(y\otimes_\cX z),
$$
and on morphisms, it is nesting ladders using that the $\cA$-action factors through $\cA\to Z(\cX)$.\footnote{It appears that our `nested' monoidal product differs from the one defined in \cite[\S3.2.1]{mitchell-thesis}, but this is a manifestation of the fact that one copy of $\cX$ has the opposite monoidal product.}
\begin{align*}
\tikzmath{
\draw[dashed, cyan, thick] (0,-1) -- (0,1);
\draw (-.5,-1) --node[left]{$\scriptstyle x_1^{\rm mp}$} (-.5,-.7) -- (-.5,.7) --node[left]{$\scriptstyle x_2^{\rm mp}$} (-.5,1);
\draw (.5,-1) --node[right]{$\scriptstyle y_1$} (.5,-.7) -- (.5,.7) --node[right]{$\scriptstyle y_2$} (.5,1);
\draw[thick, red] (-.5,-.5) --node[left]{$\scriptstyle a$} (.5,.5);
\roundNbox{fill=white}{(-.5,-.4)}{.3}{0}{0}{$f_1$}
\roundNbox{fill=white}{(.5,.4)}{.3}{0}{0}{$g_1$}
}
\otimes
\tikzmath{
\draw[dashed, cyan, thick] (0,-1) -- (0,1);
\draw (-.5,-1) --node[left]{$\scriptstyle w_1^{\rm mp}$} (-.5,-.7) -- (-.5,.7) --node[left]{$\scriptstyle w_2^{\rm mp}$} (-.5,1);
\draw (.5,-1) --node[right]{$\scriptstyle z_1$} (.5,-.7) -- (.5,.7) --node[right]{$\scriptstyle z_2$} (.5,1);
\draw[thick, red] (-.5,-.5) --node[left]{$\scriptstyle b$} (.5,.5);
\roundNbox{fill=white}{(-.5,-.4)}{.3}{0}{0}{$f_2$}
\roundNbox{fill=white}{(.5,.4)}{.3}{0}{0}{$g_2$}
}
&:=
\tikzmath{
\draw[thick, red] (-1.3,-.5) --node[left, xshift=-.1cm]{$\scriptstyle b$} (1.3,.5);
\draw (-1.3,-1.4) --node[left]{$\scriptstyle w_1^{\rm mp}$} (-1.3,-1.1) -- (-1.3,1.1) --node[left]{$\scriptstyle w_2^{\rm mp}$} (-1.3,1.4);
\draw[knot] (-.5,-1.4) --node[left]{$\scriptstyle x_1^{\rm mp}$} (-.5,-1.1) -- (-.5,1.1) --node[left]{$\scriptstyle x_2^{\rm mp}$} (-.5,1.4);
\draw[knot] (.5,-1.4) --node[right]{$\scriptstyle y_1$} (.5,-1.1) -- (.5,1.1) --node[right]{$\scriptstyle y_2$} (.5,1.4);
\draw (1.3,-1.4) --node[right]{$\scriptstyle z_1$} (1.3,-1.1) -- (1.3,1.17) --node[right]{$\scriptstyle z_2$} (1.3,1.4);
\draw[thick, red, knot] (-.5,-.8) --node[below]{$\scriptstyle a$} (.5,.8);
\draw[dashed, cyan, thick] (0,-1.4) -- (0,1.4);
\roundNbox{fill=white}{(-.5,-.8)}{.3}{0}{0}{$f_1$}
\roundNbox{fill=white}{(.5,.8)}{.3}{0}{0}{$g_1$}
\roundNbox{fill=white}{(-1.3,-.6)}{.3}{0}{0}{$f_2$}
\roundNbox{fill=white}{(1.3,.6)}{.3}{0}{0}{$g_2$}
}
\\&
\underset{\text{\eqref{eq:FusionRelationInSkeinModule}}}{=}
\sum_{c\in\Irr(\cA)}
\frac{\sqrt{d_c}}{\sqrt{d_ad_b}}
\tikzmath{
\draw[thick, red] (.4,.2) --node[above]{$\scriptstyle c$} (-.4,-.2);
\draw[thick, red] (-1.8,-.5) --node[above]{$\scriptstyle b$} (-.4,-.2);
\draw[thick, red] (1.8,.5) --node[below]{$\scriptstyle b$} (.4,.2);
\draw (-1.8,-1.4) --node[left]{$\scriptstyle w_1^{\rm mp}$} (-1.8,-1.1) -- (-1.8,1.1) --node[left]{$\scriptstyle w_2^{\rm mp}$} (-1.8,1.4);
\draw[knot] (-1,-1.4) --node[left]{$\scriptstyle x_1^{\rm mp}$} (-1,-1.1) -- (-1,1.1) --node[left]{$\scriptstyle x_2^{\rm mp}$} (-1,1.4);
\draw[knot] (1,-1.4) --node[right]{$\scriptstyle y_1$} (1,-1.1) -- (1,1.17) --node[right]{$\scriptstyle y_2$} (1,1.4);
\draw (1.8,-1.4) --node[right]{$\scriptstyle z_1$} (1.8,-1.1) -- (1.8,1.1) --node[right]{$\scriptstyle z_2$} (1.8,1.4);
\draw[thick, red, knot] (-1,-.8) --node[right]{$\scriptstyle a$} (-.4,-.2);
\draw[thick, red, knot] (1,.8) --node[left]{$\scriptstyle a$} (.4,.2);
\draw[dashed, cyan, thick] (0,-1.4) -- (0,1.4);
\roundNbox{fill=white}{(-1,-.8)}{.3}{0}{0}{$f_1$}
\roundNbox{fill=white}{(1,.8)}{.3}{0}{0}{$g_1$}
\roundNbox{fill=white}{(-1.8,-.6)}{.3}{0}{0}{$f_2$}
\roundNbox{fill=white}{(1.8,.6)}{.3}{0}{0}{$g_2$}
\filldraw[fill=blue] (-.4,-.2) circle (.05cm);
\filldraw[fill=blue] (.4,.2) circle (.05cm);
}\,.
\end{align*}
Again we use the pair of shaded nodes convention to sum over an ONB and its adjoint, but this notation suppresses a braiding to swap the $a$ and $b$ strands on the right hand side after taking the adjoint.

Now $\cX$ has a right $\cX\boxtimes_\cA\cX^{\rm mp}$-module structure given by $z \lhd (x\boxtimes y^{\rm mp}) := x\otimes z \otimes \overline{y}$ and on morphisms by
$$
\tikzmath{
\draw (0,-1) --node[left]{$\scriptstyle z_1$} (0,-.7) -- (0,.7) --node[left]{$\scriptstyle z_2$} (0,1);
\roundNbox{fill=white}{(0,0)}{.3}{0}{0}{$h$}
}
\lhd 
\tikzmath{
\draw[dashed, cyan, thick] (0,-1) -- (0,1);
\draw (-.5,-1) --node[left]{$\scriptstyle x_1^{\rm mp}$} (-.5,-.7) -- (-.5,.7) --node[left]{$\scriptstyle x_2^{\rm mp}$} (-.5,1);
\draw (.5,-1) --node[right]{$\scriptstyle y_1$} (.5,-.7) -- (.5,.7) --node[right]{$\scriptstyle y_2$} (.5,1);
\draw[thick, red] (-.5,-.5) --node[left]{$\scriptstyle a$} (.5,.5);
\roundNbox{fill=white}{(-.5,-.4)}{.3}{0}{0}{$f$}
\roundNbox{fill=white}{(.5,.4)}{.3}{0}{0}{$g$}
}
:=
\tikzmath{
\draw (-.8,-1.4) --node[left]{$\scriptstyle x_1$} (-.8,-1.1) -- (-.8,1.1) --node[left]{$\scriptstyle x_2$} (-.8,1.4);
\draw (.8,-1.4) --node[right]{$\scriptstyle y_1$} (.8,-1.1) -- (.8,1.1) --node[right]{$\scriptstyle y_2$} (.8,1.4);
\draw[thick, red] (-.8,0) --node[left]{$\scriptstyle a$} (.8,.8);
\draw[knot] (0,-1.4) --node[right]{$\scriptstyle z_1$} (0,-1.1) -- (0,1.1) --node[right]{$\scriptstyle z_2$} (0,1.4);
\roundNbox{fill=white}{(-.8,0)}{.3}{0}{0}{$f$}
\roundNbox{fill=white}{(.8,.8)}{.3}{0}{0}{$g$}
\roundNbox{fill=white}{(0,-.8)}{.3}{0}{0}{$h$}
}
=
\tikzmath{
\draw (-.8,-1.4) --node[left]{$\scriptstyle x_1$} (-.8,-1.1) -- (-.8,1.1) --node[left]{$\scriptstyle x_2$} (-.8,1.4);
\draw (.8,-1.4) --node[right]{$\scriptstyle y_1$} (.8,-1.1) -- (.8,1.1) --node[right]{$\scriptstyle y_2$} (.8,1.4);
\draw[thick, red] (-.8,-.8) --node[left]{$\scriptstyle a$} (.8,0);
\draw[knot] (0,-1.4) --node[right]{$\scriptstyle z_1$} (0,-1.1) -- (0,1.1) --node[right]{$\scriptstyle z_2$} (0,1.4);
\roundNbox{fill=white}{(-.8,-.8)}{.3}{0}{0}{$f$}
\roundNbox{fill=white}{(.8,0)}{.3}{0}{0}{$g$}
\roundNbox{fill=white}{(0,.8)}{.3}{0}{0}{$h$}
}
\,.
$$

In our model, $\cM=\cX$ and $\cC=\cX\boxtimes_\cA \cA \boxtimes_\cA \cX^{\rm mp}$ which means our ladders really have 3 vertical strands and two horizontal rungs, but this is not important.
The generators for $\cM$ and $\cC$ in our model are
$$
W:= \bigoplus_{z\in\Irr(\cX)} z
\qquad\qquad \qquad
X:=\bigoplus_{\substack{x,y\in \Irr(\cX)\\a\in \Irr(\cA)}} x^{\rm mp}\boxtimes a\boxtimes y
$$
respectively.
Even though $\cC$ is now multifusion, Construction \ref{construct:BoundaryDHR} applies; it is for this reason that Appendix~\ref{appendix:AFActionsOfUmFCs} is written in this generality (see Remark~\ref{rem:FullyFaithful}).
Hence for each $F\in \End(\cM_\cC)$, we get a boundary DHR bimodule ${}_\fM Y^F_\fM \in \DHR^\partial(\fM)$.
By Theorem \ref{thm:BoundaryLTOforLW}, we conclude that
$\DHR^\partial(\fM)\cong \End(\cM_\cC)$ as unitary tensor categories.
Now as 
$$
\cC
=
\cX^{\rm mp}\boxtimes_\cA \cA \boxtimes_\cA \cX
\cong
\cX^{\rm mp}\boxtimes_\cA \cX
$$
as multifusion categories, by 
\cite[Ex.~II.9]{MR4640433}
together with the folding trick in the 4-category of unitary braided fusion categories \cite{MR4228258},\footnote{Braided fusion categories are fully dualizable in the 4-category of braided tensor categories by \cite{MR4228258}, which allows us to apply the folding trick.}
we have 
$$
\End(\cM_\cC) 
\cong
\End(\cX_{\cX^{\rm mp}\boxtimes_\cA \cX})
\cong
\End^\cA({}_\cX\cX_{\cX})
\cong 
Z^\cA(\cX)
=
\cA'\subset Z(\cX),
$$
the \emph{enriched center} of $\cX$ as an $\cA$-enriched fusion category \cite{MR3725882}, a.k.a., the \emph{M\"uger centralizer} of $\cA\subset Z(\cX)$ \cite{MR1990929}.
Here, $\End^\cA({}_\cX\cX_\cX)$ denotes the the $\cA$-enriched endomorphisms in the 3-category of $\cA$-enriched unitary fusion categories.

\begin{rem}
From a subfactor perspective, it would be reasonable to call the unitary internal endomorphism \cite{MR4750417} algebra
$$
S:=
\underline{\End}^\dag_{\cX\boxtimes_\cA\cX^{\rm mp}}(1_\cX)
$$
the
\emph{enriched symmetric enveloping algebra object}.
Similar to \cite[Lem.~6.3]{MR4581741}, one can think of $S$ as the unitary splitting of the orthogonal projection
$$
\frac{1}{D_\cA}
\sum_{\substack{
x,y\in\Irr(\cX)
\\
a\in\Irr(\cA)
}}
\sqrt{d_a}\cdot
\tikzmath{
\draw[dashed, cyan, thick] (0,-.6) -- (0,.6);
\draw (-.3,-.6) --node[left]{$\scriptstyle x^{\rm mp}$} (-.3,-.3) -- (-.3,.3) --node[left]{$\scriptstyle y^{\rm mp}$} (-.3,.6);
\draw (.3,-.6) --node[right]{$\scriptstyle x$} (.3,-.3) -- (.3,.3) --node[right]{$\scriptstyle y$} (.3,.6);
\draw[thick, red] (-.3,-.3) --node[left, xshift=.05cm]{$\scriptstyle a$} (.3,.3);
\filldraw[fill=blue] (-.3,-.3) circle (.05cm);
\filldraw[fill=blue] (.3,.3) circle (.05cm);
}
\in 
\End_{\cX^{\rm mp}\boxtimes_\cA \cX}\left( \bigoplus_{x\in\Irr(\cX) }x^{\rm mp}\boxtimes x\right),
$$
where we again use a pair of shaded vertices to denote summing over an ONB and its conjugate for an appropriate inner product.
That the formula defines an orthogonal projection is entirely similar to \cite[Eq.~(30)]{MR4581741}.

Now given an $\cA$-enriched UFC $\cX$ and an action of $\cX$ as bimodules over the hyperfinite $\rm II_1$ factor $R$, the realization construction \cite{MR3948170,MR4419534} yields a \emph{sub-multifactor} (an inclusion of multifactors which have finite dimensional centers)
$R\otimes R^{\rm op} \subset |S|$
which could reasonably called the \emph{enriched symmetric enveloping algebra}.
\end{rem}

Summarizing so far, we have the following result, which for the time being, makes no mention of a \emph{braiding} on either $\DHR^\partial(\fX)$ nor $Z^\cA(\cX)$.

\begin{prop}
The unitary tensor category of boundary DHR bimodules $\DHR^\partial(\fX)$ is equivalent to the enriched center $Z^\cA(\cX)$.
\end{prop}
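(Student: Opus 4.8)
The plan is to reduce the statement to the gapped-boundary Levin-Wen situation of Theorem~\ref{thm:BoundaryLTOforLW} by folding, and then to perform a purely categorical computation identifying the resulting dual category with the M\"uger centralizer $\cA'\subset Z(\cX)$. First I would make the folding trick precise: using a bijection $\bbZ\times\bbN\to\bbN$ that carries the boundary line $\partial\cL$ of the $\bbZ^2$-with-boundary lattice onto an initial boundary interval of $\bbN$ and pushes the bulk sites farther in (as in the picture above), one checks that $\fX$, restricted to the relevant quadrant, is identified in a bounded-spread fashion with a one-dimensional net $\fM$ on $\bbN$; that the boundary LTO axioms \ref{BoundaryLTO:QECC}--\ref{BoundaryLTO:Injective} transport along this identification; and that a bimodule localized near $\partial\cL$ becomes one localized near $0\in\bbN$. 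Hence folding induces an equivalence of unitary tensor categories $\DHR^\partial(\fX)\cong\DHR^\partial(\fM)$.

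Next I would identify $\fM$ with a fusion module categorical net. The single boundary site of the slab carries $\cX$, while the bulk sites carry $\cA$-labelled data glued on through the central functor $\Phi^Z\colon\cA\to Z(\cX)$; collapsing the slab, $\fM$ is precisely the fusion module categorical net attached to $\cM=\cX$ regarded as a right module category over the unitary multifusion category $\cC=\cX^{\rm mp}\boxtimes_\cA\cX$, with generators $W=\bigoplus_{z\in\Irr(\cX)}z$ and $X$ as displayed above (using the ladder-category model of $\boxtimes_\cA$ and its nested monoidal structure). Construction~\ref{construct:BoundaryDHR} and Appendix~\ref{appendix:AFActionsOfUmFCs} apply in this multifusion generality (this is the content of Remark~\ref{rem:FullyFaithful}), provided the unitary module trace on $\cX$ is normalized compatibly with $W$, so Theorem~\ref{thm:BoundaryLTOforLW} yields $\DHR^\partial(\fM)\cong\End(\cM_\cC)=\End(\cX_{\cX^{\rm mp}\boxtimes_\cA\cX})$ as unitary tensor categories.

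Finally I would compute the dual category. Since $\cX$ is the canonical invertible module category over $\cX^{\rm mp}\boxtimes_\cA\cX$, one has $\End(\cX_{\cX^{\rm mp}\boxtimes_\cA\cX})\cong\End^\cA({}_\cX\cX_\cX)$, the category of $\cA$-enriched endofunctors of $\cX$; by \cite[Ex.~II.9]{MR4640433} together with the folding trick in the $4$-category of braided tensor categories \cite{MR4228258} this is the enriched center $Z^\cA(\cX)=\cA'\subset Z(\cX)$. Composing the three equivalences gives $\DHR^\partial(\fX)\cong Z^\cA(\cX)$, as claimed.

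The main obstacle I expect is the bookkeeping of the folding step: one must verify not merely that the folded net is bounded-spread equivalent to \emph{some} one-dimensional net, but that it is \emph{literally} the fusion module categorical net for the pair $(\cX,\cX^{\rm mp}\boxtimes_\cA\cX)$ with the stated generators, and that the boundary LTO axioms survive the fold so that Theorem~\ref{thm:BoundaryLTOforLW} applies. Pinning down the correct module category, its generators, and a compatible unitary module trace is where essentially all the effort lies; the identification $\End(\cX_{\cX^{\rm mp}\boxtimes_\cA\cX})\cong Z^\cA(\cX)$ is then standard Morita theory for braided-enriched fusion categories.
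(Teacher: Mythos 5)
Your proposal follows the paper's argument almost verbatim: fold the $\bbZ^2$-with-boundary net into a $1$D slab to identify $\fX$ with the fusion module categorical net $\fM$ for $\cM=\cX$ over $\cC\cong\cX^{\rm mp}\boxtimes_\cA\cX$ (via the ladder category and a compatible module trace on $\cX$), apply Theorem~\ref{thm:BoundaryLTOforLW} (justified in multifusion generality by Remark~\ref{rem:FullyFaithful} and Appendix~\ref{appendix:AFActionsOfUmFCs}) to get $\DHR^\partial(\fM)\cong\End(\cX_{\cX^{\rm mp}\boxtimes_\cA\cX})$, and finally invoke \cite[Ex.~II.9]{MR4640433} plus full dualizability of braided fusion categories \cite{MR4228258} to identify the dual category with $Z^\cA(\cX)=\cA'\subset Z(\cX)$. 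The one small inaccuracy is your phrasing that the boundary LTO axioms must ``transport along this identification'': Theorem~\ref{thm:BoundaryLTOforLW} is a statement about the abstract fusion module categorical net, so what must transport is the identification of $\fM$ with that net together with its generators, not the LTO axioms themselves — but this does not affect the substance of the argument.
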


With this proposition in hand, we can give a 2D construction of each boundary DHR bimodule $Y^z$ for $z\in Z^\cA(\cX)$ parallel to the 1D Construction \ref{construct:BoundaryDHR} which exhausts all boundary DHR bimodules.
Pick an initial boundary point $\lambda_0\in \partial \cL$, and inductively construct a sequence of rectangles by
$\Lambda_0:=\{\lambda_0\}$ and $\Lambda_{n+1}=\Lambda_n^{+1}$, i.e., $\Lambda_{n}^{+1}$ is obtained from $\Lambda_n$ by adding all points in $\cL$ with $\ell^1$-distance 1.
For $n\in \bbN$, we define
$$
Y^z_n := 
\Hom_\cX( 
X^{\otimes \Lambda_n\cap \partial \cL}
\lhd 
A^{\otimes \Lambda_n\setminus \partial \cL}
\to 
X^{\otimes \Lambda_n\cap \partial \cL}
\lhd 
A^{\otimes \Lambda_n\setminus \partial \cL}
\otimes z
).
$$
The inclusion $Y_n^z\hookrightarrow Y^z_{n+1}$ is achieved by tensoring with identity morphisms, $\id_X$ for sites in $\partial \cL$ and $\id_A$ for sites in $\cL\setminus \partial\cL$.
One uses the braiding in $Z(\cX)$ to move the $z$-string past the $\id_A$-strings, and since $z\in Z^\cA(\cX)$, the inclusion is independent of the choice of how one performs this braiding.
For example, the inclusion $Y^z_3\hookrightarrow Y^z_4$ is given by
$$
Y^z_3
\ni
\tikzmath{
\foreach \x in {0,1,2}{
\draw[thick] (\x,0) -- ($ (\x,0) + (.4,-.4) $);
\foreach \y in {1,2}{
\draw[red, thick] (\x,\y) -- ($ (\x,\y) + (.4,-.4) $);
}}
\filldraw[thick, rounded corners=5pt, fill=gray!50, opacity=.8] (-.2,-.2) rectangle (2.2,2.2);
\foreach \x in {0,1,2}{
\draw[thick] (\x,0) -- ($ (\x,0) + (-.4,.4) $);
\foreach \y in {1,2}{
\draw[red, thick] (\x,\y) -- ($ (\x,\y) + (-.4,.4) $);
}}
\node at (1,.5) {$f$};
\draw[cyan, knot, thick, snake] (2.2,0) -- (3.2,0) node[above]{$\scriptstyle z$};
}
\qquad \longmapsto \qquad
\tikzmath{
\foreach \x in {-1,0,1,2,3}{
\draw[thick] (\x,0) -- ($ (\x,0) + (.4,-.4) $);
\foreach \y in {1,2,3}{
\draw[red, thick] (\x,\y) -- ($ (\x,\y) + (.4,-.4) $);
}}
\filldraw[thick, rounded corners=5pt, fill=gray!50, opacity=.8] (-.2,-.2) rectangle (2.2,2.2);
\foreach \x in {-1,0,1,2,3}{
\draw[thick] (\x,0) -- ($ (\x,0) + (-.4,.4) $);
\foreach \y in {1,2,3}{
\draw[red, thick] (\x,\y) -- ($ (\x,\y) + (-.4,.4) $);
}}
\node at (1,.5) {$f$};
\draw[cyan, knot, thick, snake] (2.2,0) -- (3.5,0) node[above]{$\scriptstyle z$};
}
\in Y^z_{4}.
$$
Here, we use the convention from \cite[Rem.~2.12]{2305.14068}; even though stands corresponding to $A\in\cA$ go upwards, the $z$-strand in the centralizer $Z^\cA(\cX)=\cA'\subset Z(\cX)$ is drawn on the other side of $\cX$ to emphasize that $\cX$ is an $\cA-Z^\cA(\cX)$ bimodule tensor category.
The $Y^z_n$ are equipped with the obvious $\fX(\Lambda_n)$-valued inner products which are compatible with the inclusion maps.
We thus obtain a well-defined inductive limit bimodule $Y^z:=\varinjlim Y^z_n$ which clearly lies in $\DHR^\partial(\fX)$.
The bimodules $Y^z$ assemble into a functor $Y: Z^\cA(\cX)\to \DHR^\partial(\fX)$ which is fully faithful by the argument from Appendix \ref{appendix:AFActionsOfUmFCs}. 
Indeed, this is shown at the level of the 1D fusion module categorical nets for the 1D slab.

Now since the 1D fusion module categorical net satisfies boundary algebraic Haag duality by \eqref{eq:FusionModuleBoundaryHaagDuality} (see also Appendix \ref{appendix:AFActionsOfUmFCs}), the 2D enriched fusion category categorical net satisfies boundary algebraic Haag duality.
This allows us to follow the prescription from \cite{MR4814692} to define a unitary braiding on the boundary DHR bimodules $\DHR^\partial(\fX)$.

\begin{defn}
Suppose $Y,Z\in \DHR^\partial(\fX)$.
Choose projective bases $\{b_{i}\}, \{c_{j}\}$ for $Y,Z$ respectively which are localized in sufficiently large rectangles $\Lambda,\Delta$ which are sufficiently far apart in $\cL$, both of which meet $\partial \cL$.
We define 
$$
u^{\Lambda,\Delta}_{Y,Z}: Y\boxtimes_\fX Z\rightarrow Z\boxtimes_\fX Y
\qquad\qquad\text{by}\qquad\qquad
\sum_{i,j}b_{i}\boxtimes c_{j}a_{ij}
\mapsto
\sum_{i,j}c_{j}\boxtimes b_{i} a_{ij}.
$$
As in \cite[\S3.3]{MR4814692},
one verifies this formula is a well-defined unitary $\fX-\fX$ bimodule map which is independent of the choices of $\Lambda,\Delta$, provided they are sufficiently large and separated, and also independent of the choices of projective bases.
(Observe that the previous subtlety in 1D is no longer relevant thanks to the boundary.)

We provide some initial details to convince the reader that this works as claimed.
In all the calculations below, the elements with underbraces live in the indicated algebras by Lemma~\ref{lem:BoundaryIPLocalizedInCentralizer}.
First, for all $a_{ij}\in \fX$, 
\begin{align*}
\left\langle 
u_{Y,Z}^{\Lambda,\Delta}\sum b_i \boxtimes c_j a_{ij}
\bigg|
u_{Y,Z}^{\Lambda,\Delta}\sum b_k \boxtimes c_\ell a_{k\ell}
\right\rangle
&=
\sum
a_{ij}^*
\langle 
c_j \boxtimes b_i 
|
c_\ell \boxtimes b_k 
\rangle_{\fX}
a_{k\ell}
\\&=
\sum
a_{ij}^*
\langle 
b_i 
|
\underbrace{\langle c_j | c_\ell \rangle_{\fX}}_{\in\fX(\Delta)} b_k 
\rangle_{\fX}
a_{k\ell}
\\&=
\sum
a_{ij}^*
\langle 
b_i 
|
b_k \langle c_j | c_\ell \rangle_{\fX} 
\rangle
a_{k\ell}
\\&=
\sum
a_{ij}^*
\underbrace{\langle b_i|b_k  
\rangle_\fX}_{\in \fX(\Lambda)}
\langle c_j | c_\ell \rangle_{\fX}
a_{k\ell}
\\&=
\sum
a_{ij}^*
\langle c_j | c_\ell \rangle_{\fX}
\langle b_i |b_k  \rangle_\fX
a_{k\ell}
\\&=
\sum
a_{ij}^*
\langle c_j | c_\ell 
\langle b_i |b_k  \rangle_\fX
\rangle_{\fX}
a_{k\ell}
\\&=
\sum
a_{ij}^*
\langle c_j |  
\langle b_i |b_k  \rangle_\fX
c_\ell
\rangle_{\fX}
a_{k\ell}
\\&=
\sum
a_{ij}^*
\langle b_i\boxtimes c_j |  
b_k \boxtimes c_\ell
\rangle_\fX
a_{k\ell}
\\&=
\left\langle 
\sum b_i \boxtimes c_j a_{ij}
\bigg|
\sum b_k \boxtimes c_\ell a_{k\ell}
\right\rangle.
\end{align*}
This map is manifestly a right $\fX$-module map.
To see left $\fX$-linearity, we calculate
\begin{align*}
u^{\Lambda,\Delta}_{Y,Z}(ab_k\boxtimes c_j)
&=
u^{\Lambda,\Delta}_{Y,Z}\sum b_i\underbrace{\langle b_i| ab_k\rangle_\fX}_{\in\fX(\Lambda)}\boxtimes c_j
\\&=
u^{\Lambda,\Delta}_{Y,Z}\sum b_i\boxtimes \langle b_i| ab_k\rangle_\fX c_j
\\&=
u^{\Lambda,\Delta}_{Y,Z}\sum b_i\boxtimes c_j\langle b_i| ab_k\rangle_\fX
\\&=
\sum c_j\boxtimes b_i\langle b_i| ab_k\rangle_\fX
\\&=
c_j\boxtimes ab_k
\\&=
c_j a\boxtimes b_k
\\&=
ac_j \boxtimes b_k
\\&=
au^{\Lambda,\Delta}_{Y,Z}(b_k\boxtimes c_j).
\end{align*}
If $\{b_k'\}$, $\{c_\ell'\}$ are another choice of bases still localized in $\Lambda,\Delta$ respectively, we have
\begin{align*}
u^{\Lambda,\Delta}_{Y,Z} (b_k'\boxtimes c_\ell')
&=
u^{\Lambda,\Delta}_{Y,Z}\sum b_i\underbrace{\langle b_i|b_k'\rangle_\fX}_{\in\fX(\Lambda)}\boxtimes c_j
\underbrace{\langle c_j|c_\ell'\rangle_\fX}_{\in\fX(\Delta)}
\\&=
u^{\Lambda,\Delta}_{Y,Z}\sum b_i\boxtimes \langle b_i|b_k'\rangle_\fX c_j\langle c_j|c_\ell'\rangle_\fX
\\&=
u^{\Lambda,\Delta}_{Y,Z}\sum b_i\boxtimes c_j \langle b_i|b_k'\rangle_\fX \langle c_j|c_\ell'\rangle_\fX
\\&=
\sum c_j\boxtimes b_i \langle b_i|b_k'\rangle_\fX \langle c_j|c_\ell'\rangle_\fX
\\&=
\sum c_j\boxtimes b_k' \langle c_j|c_\ell'\rangle_\fX
\\&=
\sum c_j\boxtimes  \langle c_j|c_\ell'\rangle_\fX b_k'
\\&=
\sum c_j  \langle c_j|c_\ell'\rangle_\fX \boxtimes b_k'
\\&=
c_\ell' \boxtimes b_k'.
\end{align*}
Hence the map $u^{\Lambda,\Delta}_{Y,Z}$ is a well-defined $\fX-\fX$ bimodule intertwiner.
One can similarly mimic the remainder of the proofs from \cite{MR4814692} to see we have a well-defined unitary braiding.
\end{defn}

\begin{thm}[]
\label{thm:BoundaryDHRForBraidedEnrichedNet}
The boundary DHR bimodules $\DHR^\partial(\fX)$ is equivalent as a unitary braided tensor category to the enriched center $Z^\cA(\cX)$.
\end{thm}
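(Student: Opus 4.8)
The plan is to note that, by the Proposition immediately preceding the theorem, the fully faithful unitary tensor functor $Y\colon Z^\cA(\cX)\to\DHR^\partial(\fX)$, $z\mapsto Y^z$, is already an equivalence of unitary \emph{tensor} categories, so the only remaining task is to upgrade this to a \emph{braided} equivalence: one must check that $Y$ carries the braiding of $Z^\cA(\cX)$ — which, as the M\"uger centralizer $\cA'\subset Z(\cX)$, is by definition the restriction of the canonical braiding $\beta^{Z(\cX)}$ — to the unitary braiding on $\DHR^\partial(\fX)$ constructed above. Since $Y$ is monoidal, with structure isomorphisms $Y^z\boxtimes_\fX Y^w\cong Y^{z\otimes w}$ obtained from composition of $\Hom$-spaces at each finite stage, it suffices to compare, for $z,w\in Z^\cA(\cX)$, the DHR map $u^{\Lambda,\Delta}_{Y^z,Y^w}$ with $Y(\beta^{Z(\cX)}_{z,w})$ under these identifications.

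First I would set up good localized bases. Fix boundary sites $\lambda_0,\mu_0\in\partial\cL$ with $\mu_0$ far to the right of $\lambda_0$, grow rectangles $\Lambda_n$ from $\lambda_0$ and $\Delta_m$ from $\mu_0$ as in the $2$D construction of the $Y^z$, and take $n,m$ large enough that every relevant simple of $\cX$ appears in the generating objects and that $\Lambda_n$ lies entirely to the left of, and far from, $\Delta_m$. For such $n,m$ the standard $Y^z_n$- and $Y^w_m$-bases are honest projective bases of the DHR bimodules $Y^z,Y^w$ localized in $\Lambda_n,\Delta_m$ respectively, and these are admissible choices in the definition of $u^{\Lambda_n,\Delta_m}_{Y^z,Y^w}$ precisely because $\fX$ satisfies boundary algebraic Haag duality by \eqref{eq:FusionModuleBoundaryHaagDuality}.

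The heart of the argument is then the skein/string computation. Graphically, a simple tensor $b_i\boxtimes c_j$ is drawn with a $z$-string emanating from the $\Lambda_n$-block and a $w$-string from the $\Delta_m$-block, both routed out of the lattice toward the right as in the inclusion pictures for the $Y^z_n$; applying $u^{\Lambda_n,\Delta_m}_{Y^z,Y^w}$ swaps which block each string leaves from. To compare, I would re-route the $w$-string leftward past the $\Lambda_n$-block and the $z$-string rightward past the $\Delta_m$-block: since $z,w\in Z(\cX)$ and in fact lie in $\cA'$, their half-braidings let them slide freely through every $\cX$-labeled and every $\Phi^Z(A)$-labeled strand, so after isotopy the only residual crossing is the single crossing of the $z$- and $w$-strings with one another, i.e.\ exactly $\beta^{Z(\cX)}_{z,w}$. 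The direction of the crossing is pinned down by the convention that $\Lambda_n$ precedes $\Delta_m$ along $\partial\cL$, which is the same linear-ordering convention that makes $u^{\Lambda,\Delta}_{Y,Z}$ well defined in the first place (the $1$D subtlety noted above being harmless here because the boundary supplies a canonical order). Combined with the independence-of-choices statements proved above, this shows $u^{\Lambda,\Delta}_{Y^z,Y^w}$ corresponds to $Y^{z\otimes w}\xrightarrow{\,Y(\beta^{Z(\cX)}_{z,w})\,}Y^{w\otimes z}$ for all sufficiently large, separated $\Lambda,\Delta$, and since the braiding of $Z^\cA(\cX)$ as the M\"uger centralizer is the restriction of $\beta^{Z(\cX)}$, the equivalence $Y$ is braided.

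I expect the main obstacle to be making the ``slide the central strings freely, leaving a single crossing'' step fully rigorous in the genuinely $2$D braided categorical net: one must track the braided tensor product $(\,\cdot\,)^{\otimes\Lambda}$ and the homeomorphism of the punctured plane implementing the swap of the $\Lambda_n$- and $\Delta_m$-blocks, and verify that composing with the exits of the $z$- and $w$-strings yields exactly one positive crossing after isotopy. This is the $2$D analog of the fusion-spin-chain braiding computation (Example~\ref{ex:fusionSpinChainDHR} and Appendix~\ref{appendix:AFActionsOfUmFCs}), but — unlike $\DHR^\partial(\fM)$, which carries no braiding — it genuinely uses the planar geometry: the second lattice direction is exactly what gives the $z$- and $w$-strings room to cross, and the delicate point is confirming the crossing is $\beta^{Z(\cX)}_{z,w}$ and not its inverse, using the boundary-induced ordering that already rescued well-definedness of the DHR braiding.
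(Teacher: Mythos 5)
Your proposal is correct and follows essentially the same route as the paper: reduce to checking braidedness of the already-established tensor equivalence, pick projective bases for $Y^z$ and $Y^w$ localized at well-separated sites along $\partial\cL$ (via partitions of unity as in \eqref{eq:PartitionOfUnityGivesProjectiveBasis}), and verify graphically that the DHR swap map $u^{\Lambda,\Delta}_{Y^w,Y^z}$ agrees with $Y(\beta^{Z(\cX)}_{w,z})$, then invoke bimodularity. The one cosmetic difference is that the paper first does the $\cA=\Hilb$ case explicitly and then notes the enrichment adds $A$-strands ``going into the page'' that play no role, whereas you argue the general case directly via the M\"uger-centralizer property of $z,w\in\cA'\subset Z(\cX)$ — these are the same observation packaged slightly differently.
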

\begin{proof}
It remains to show that the functor $Y: Z^\cA(\cX)\to \DHR^\partial(\fX)$ is braided.
This follows similarly to the proof of \cite[Lem.~4.18]{MR4814692} based on \cite[Prop.~6.15]{MR4753059}, adding the $\cA$-enrichment.
For an easy way to see this, we provide a 2D graphical version of the the proof from \cite[Lem.~4.18]{MR4814692} for the case $\cA=\Hilb$, from which our result follows by adding $A$-strands going `back into the page,' making the diagrams 3D.

Since the underlying object of $z\in Z(\cX)$ is a direct sum of simples in $\cX$, $z$ is completely determined by
$$
\cX(X \to F(z)) 
\cong
Z(\cX)(\Tr(X)\to z)
$$
by the Yoneda Lemma.
Choose a partition of unity
\begin{equation}
\label{eq:PartitionOfUnityGivesProjectiveBasis}
1_{X\otimes z} = 
\sum_{j}
\tikzmath{
\draw (0,.3) --node[left]{$\scriptstyle X$} (0,.7);
\draw[cyan, knot, thick, snake] (.2,1.3) -- (.2,1.7) node[above]{$\scriptstyle z$};
\draw (-.2,1.3) -- (-.2,1.7) node[above]{$\scriptstyle X$};
\draw (-.2,-.3) -- (-.2,-.7) node[below]{$\scriptstyle X$};
\draw[cyan, knot, thick, snake] (.2,-.3) -- (.2,-.7) node[below]{$\scriptstyle z$};
\roundNbox{}{(0,1)}{.3}{.1}{.1}{$b_j$}
\roundNbox{}{(0,0)}{.3}{.1}{.1}{$b_j^\dag$}
}
\qquad\qquad
\rightsquigarrow
\qquad\qquad
\tikzmath{
\draw (-1,-.7) node[below]{$\scriptstyle X$} -- (-1,.7) node[above]{$\scriptstyle X$};
\draw (-.5,-.7) node[below]{$\scriptstyle X$} -- (-.5,.7) node[above]{$\scriptstyle X$};
\node at (.8,.5) {$\cdots$};
\draw (1,-.7) node[below]{$\scriptstyle X$} -- (1,.7) node[above]{$\scriptstyle X$};
\draw (.5,-.7) node[below]{$\scriptstyle X$} -- (.5,.7) node[above]{$\scriptstyle X$};
\node at (-.7,0) {$\cdots$};
\draw[thick, cyan, knot, snake] (.3,0) -- (1.3,0) node[right]{$\scriptstyle F(z)$} ;
\draw (0,.3) -- (0,.7) node[above]{$\scriptstyle X$};
\draw (0,-.3) -- (0,-.7) node[below]{$\scriptstyle X$};
\roundNbox{fill=white}{(0,0)}{.3}{0}{0}{$b_j$}
}
\end{equation}
and observe that $\{b_j\}$ gives a projective basis for the DHR bimodule $Y^z$ localized at any single site $n\in \bbZ$ that we choose.
Now choose another $w\in Z(\cX)$ with partition of unity $1_{X\otimes w}=\sum a_i a_i^\dag$ similar to \eqref{eq:PartitionOfUnityGivesProjectiveBasis}, 
and we obtain a projective basis  $\{a_i\}$ for $Y^w$ localized at any other site we choose, say any $k<n$.
Observe that $\{a_i\boxtimes b_j\}$ is a projective basis for $Y^w\boxtimes_{\fX} Y^z$, but it is now localized in the interval $[k,n]$.
We can now compute that our functor $Y: Z(\cX)\to \DHR^\partial(\fX)$ is braided.
Indeed, postcomposing with the braiding $w\otimes z\to z\otimes w$, we have
$$
Y(\beta_{w,z})(a_i\boxtimes b_j)
=
\tikzmath{
\draw (-1.5,-.7) node[below]{$\scriptstyle X$} -- (-1.5,1.7) node[above]{$\scriptstyle X$};
\draw (-1,-.7) node[below]{$\scriptstyle X$} -- (-1,1.7) node[above]{$\scriptstyle X$};
\draw (1,-.7) node[below]{$\scriptstyle X$} -- (1,1.7) node[above]{$\scriptstyle X$};
\draw (.5,-.7) node[below]{$\scriptstyle X$} -- (.5,1.7) node[above]{$\scriptstyle X$};
\node at (.8,.5) {$\cdots$};
\node at (-1.2,.5) {$\cdots$};
\draw (0,.3) -- (0,1.7) node[above]{$\scriptstyle X$};
\draw (0,-.3) -- (0,-.7) node[below]{$\scriptstyle X$};
\draw (-.5,1.3) -- (-.5,1.7) node[above]{$\scriptstyle X$};
\draw (-.5,.7) -- (-.5,-.7) node[below]{$\scriptstyle X$};
\draw[thick, cyan, knot, snake] (.3,0) -- (1,0) to[out=0,in=180] (1.7,1) -- (2,1) node[right]{$\scriptstyle F(z)$} ;
\draw[thick, orange, knot, snake] (-.2,1) -- (1,1) to[out=0,in=180] (1.7,0) -- (2,0) node[right]{$\scriptstyle F(w)$} ;
\roundNbox{fill=white}{(0,0)}{.3}{0}{0}{$b_j$}
\roundNbox{fill=white}{(-.5,1)}{.3}{0}{0}{$a_i$}
}
=
\tikzmath{
\draw (-1.5,-.7) node[below]{$\scriptstyle X$} -- (-1.5,1.7) node[above]{$\scriptstyle X$};
\draw (-1,-.7) node[below]{$\scriptstyle X$} -- (-1,1.7) node[above]{$\scriptstyle X$};
\draw (1,-.7) node[below]{$\scriptstyle X$} -- (1,1.7) node[above]{$\scriptstyle X$};
\draw (.5,-.7) node[below]{$\scriptstyle X$} -- (.5,1.7) node[above]{$\scriptstyle X$};
\node at (.8,.5) {$\cdots$};
\node at (-1.2,.5) {$\cdots$};
\draw (0,1.3) -- (0,1.7) node[above]{$\scriptstyle X$};
\draw (0,.7) -- (0,-.7) node[below]{$\scriptstyle X$};
\draw (-.5,.3) -- (-.5,1.7) node[above]{$\scriptstyle X$};
\draw (-.5,-.3) -- (-.5,-.7) node[below]{$\scriptstyle X$};
\draw[thick, cyan, knot, snake] (.3,1) -- (1.3,1) node[right]{$\scriptstyle F(z)$} ;
\draw[thick, orange, knot, snake] (-.2,0) -- (1.3,0) node[right]{$\scriptstyle F(w)$} ;
\roundNbox{fill=white}{(0,1)}{.3}{0}{0}{$b_j$}
\roundNbox{fill=white}{(-.5,0)}{.3}{0}{0}{$a_i$}
}
\hspace*{-.2cm}=
b_j\boxtimes a_i
=
u^{k,n}_{Y^w,Y^z}(a_i\boxtimes b_j).
$$
Since both $Y(\beta_{w,z})$ and $u^{k,n}_{Y^w,Y^z}$ are $\fX-\fX$ bimodular, the result follows.

When $\cA$ is an arbitrary braided UFC and $\cX$ is an arbitrary $\cA$-enriched UFC, we still have projective bases for $Y^w,Y^z$ localized at a single boundary site as in \eqref{eq:PartitionOfUnityGivesProjectiveBasis}.
Again using the convention from \cite[Rem.~2.12]{2305.14068} in which the strands for $w,z\in Z^\cA(\cX)$ are drawn on the other side of $\cX$, we see that the red $A$-strands going `back into the page' play no role in the above argument.
$$
\tikzmath{
\draw (-1.5,-.7) node[below]{$\scriptstyle X$} -- (-1.5,1.7) node[above]{$\scriptstyle X$};
\draw[thick, red] (-1.6,-.8) -- (-1.6,1.6);
\draw[thick, red] (-1.7,-.9) -- (-1.7,1.5);
\draw (-1,-.7) node[below]{$\scriptstyle X$} -- (-1,1.7) node[above]{$\scriptstyle X$};
\draw[thick, red] (-1.1,-.8) -- (-1.1,1.6);
\draw[thick, red] (-1.2,-.9) -- (-1.2,1.5);
\draw (1,-.7) node[below]{$\scriptstyle X$} -- (1,1.7) node[above]{$\scriptstyle X$};
\draw[thick, red] (.9,-.8) -- (.9,1.6);
\draw[thick, red] (.8,-.9) -- (.8,1.5);
\draw (.5,-.7) node[below]{$\scriptstyle X$} -- (.5,1.7) node[above]{$\scriptstyle X$};
\draw[thick, red] (.4,-.8) -- (.4,1.6);
\draw[thick, red] (.3,-.9) -- (.3,1.5);
\filldraw[white] (-1.4,.5) circle (.05cm);
\filldraw (-1.4,.5) circle (.025cm);
\filldraw[white] (-1.25,.5) circle (.055cm);
\filldraw (-1.25,.5) circle (.025cm);
\filldraw[white] (-1.1,.5) circle (.05cm);
\filldraw (-1.1,.5) circle (.025cm);
\filldraw[white] (.9,.5) circle (.055cm);
\filldraw (.9,.5) circle (.025cm);
\filldraw[white] (.75,.5) circle (.055cm);
\filldraw (.75,.5) circle (.025cm);
\filldraw[white] (.6,.5) circle (.055cm);
\filldraw (.6,.5) circle (.025cm);
\draw[thick, red] (-.1,-.8) -- (-.1,1.6);
\draw[thick, red] (-.2,-.9) -- (-.2,1.5);
\draw[thick, red] (-.6,-.8) -- (-.6,1.6);
\draw[thick, red] (-.7,-.9) -- (-.7,1.5);
\roundNbox{knot, fill=white}{(0,0)}{.3}{0}{0}{$b_j$}
\roundNbox{knot, fill=white}{(-.5,1)}{.3}{0}{0}{$a_i$}
\draw (0,.3) -- (0,1.7) node[above]{$\scriptstyle X$};
\draw (0,-.3) -- (0,-.7) node[below]{$\scriptstyle X$};
\draw (-.5,1.3) -- (-.5,1.7) node[above]{$\scriptstyle X$};
\draw (-.5,.7) -- (-.5,-.7) node[below]{$\scriptstyle X$};
\draw[thick, cyan, knot, snake] (.3,0) -- (1,0) to[out=0,in=180] (1.7,1) -- (2,1) node[right]{$\scriptstyle F(z)$} ;
\draw[thick, orange, knot, snake] (-.2,1) -- (1,1) to[out=0,in=180] (1.7,0) -- (2,0) node[right]{$\scriptstyle F(w)$} ;
}
=
\tikzmath{
\draw (-1.5,-.7) node[below]{$\scriptstyle X$} -- (-1.5,1.7) node[above]{$\scriptstyle X$};
\draw[thick, red] (-1.6,-.8) -- (-1.6,1.6);
\draw[thick, red] (-1.7,-.9) -- (-1.7,1.5);
\draw (-1,-.7) node[below]{$\scriptstyle X$} -- (-1,1.7) node[above]{$\scriptstyle X$};
\draw[thick, red] (-1.1,-.8) -- (-1.1,1.6);
\draw[thick, red] (-1.2,-.9) -- (-1.2,1.5);
\draw (1,-.7) node[below]{$\scriptstyle X$} -- (1,1.7) node[above]{$\scriptstyle X$};
\draw[thick, red] (.9,-.8) -- (.9,1.6);
\draw[thick, red] (.8,-.9) -- (.8,1.5);
\draw (.5,-.7) node[below]{$\scriptstyle X$} -- (.5,1.7) node[above]{$\scriptstyle X$};
\draw[thick, red] (.4,-.8) -- (.4,1.6);
\draw[thick, red] (.3,-.9) -- (.3,1.5);
\filldraw[white] (-1.4,.5) circle (.05cm);
\filldraw (-1.4,.5) circle (.025cm);
\filldraw[white] (-1.25,.5) circle (.055cm);
\filldraw (-1.25,.5) circle (.025cm);
\filldraw[white] (-1.1,.5) circle (.05cm);
\filldraw (-1.1,.5) circle (.025cm);
\filldraw[white] (.9,.5) circle (.055cm);
\filldraw (.9,.5) circle (.025cm);
\filldraw[white] (.75,.5) circle (.055cm);
\filldraw (.75,.5) circle (.025cm);
\filldraw[white] (.6,.5) circle (.055cm);
\filldraw (.6,.5) circle (.025cm);
\draw[thick, red] (-.1,-.8) -- (-.1,1.6);
\draw[thick, red] (-.2,-.9) -- (-.2,1.5);
\draw[thick, red] (-.6,-.8) -- (-.6,1.6);
\draw[thick, red] (-.7,-.9) -- (-.7,1.5);
\roundNbox{knot,fill=white}{(0,1)}{.3}{0}{0}{$b_j$}
\roundNbox{knot,fill=white}{(-.5,0)}{.3}{0}{0}{$a_i$}
\draw (0,1.3) -- (0,1.7) node[above]{$\scriptstyle X$};
\draw (0,.7) -- (0,-.7) node[below]{$\scriptstyle X$};
\draw (-.5,.3) -- (-.5,1.7) node[above]{$\scriptstyle X$};
\draw (-.5,-.3) -- (-.5,-.7) node[below]{$\scriptstyle X$};
\draw[thick, cyan, knot, snake] (.3,1) -- (1.3,1) node[right]{$\scriptstyle F(z)$} ;
\draw[thick, orange, knot, snake] (-.2,0) -- (1.3,0) node[right]{$\scriptstyle F(w)$} ;
}
$$
The result follows.
\end{proof}

%%%%%%%%%%%%%%%%%%%%%%%%%%%%%%%%%%%%%%%%%%%%%%%%%%%%%
%%%%%%%%%%%%%%%%%%%%%%%%%%%%%%%%%%%%%%%%%%%%%%%%%%%%%
%%%%%%%%%%%%%%%%%%%%%%%%%%%%%%%%%%%%%%%%%%%%%%%%%%%%%
\section{Boundary algebras of Walker-Wang models}
\label{sec:WW}

In this section, we prove that the unitary tensor category version \cite{MR4640433,2305.14068} of the Walker-Wang string net model \cite{1104.2632} for a unitary braided fusion category (UBFC) $\cA$
(as opposed to the 6j-symbol model)
has a net of projections 
$$
p_\Lambda:= \prod_{p\subset \Lambda} B_p \prod_{\ell\subset\Lambda} A_\ell
$$
satisfying the LTO axioms from~\cite{MR4945955} (see \S\ref{sec:LTO} above for the main definitions).
We then identify the boundary algebra as a 2D braided categorical net, and we use the DHR bimodules for this net to calculate the point-like excitations.
Finally, we consider a topological boundary coming from an $\cA$-enriched UFC $\cX$ and study the boundary LTO and its boundary DHR bimodules.

%%%%%%%%%%%%%%%%%%%%%%%%%%%%%%%%%%%%%%%%%%%%%%%%%%%%%
\subsection{The Walker-Wang model}
Let $\cA$ denote a UBFC.
For simplicity we will only consider the model on a cubic lattice in three dimensions.
Schematically, the Hilbert space can be visualized as on the left hand side below, where the red edges carry labels from $\Irr(\cA)$.
Given an $\cA$-enriched UFC $\cX$, we also get a 2D topological $\cX$-boundary for the 3D Walker-Wang model for $\cA$, pictured on the right hand side below.
$$
\tikzmath{
\draw[thick, red, step=.75, xshift=-.3cm, yshift=-.3cm] (0.25,0.25) grid (3.5,3.5);
\draw[thick, red, step=.75, knot, xshift=-.15cm, yshift=-.15cm] (0.25,0.25) grid (3.5,3.5);
\draw[thick, red, step=.75, knot] (0.25,0.25) grid (3.5,3.5);
\foreach \x in {.75,1.5,2.25,3}{
\foreach \y in {.75,1.5,2.25,3}{
\draw[thick, red] ($ (\x,\y) + (.15,.15) $) -- ($ (\x,\y) + (-.45,-.45) $);
}}
}
\qquad\qquad\text{or}\qquad\qquad
\tikzmath{
\draw[thick, red, step=.75, xshift=-.3cm, yshift=-.3cm] (0.25,0.25) grid (3.5,3.5);
\draw[thick, red, step=.75, knot, xshift=-.15cm, yshift=-.15cm] (0.25,0.25) grid (3.5,3.5);
\draw[thick, step=.75, knot, thick] (0.25,0.25) grid (3.5,3.5);
\foreach \x in {.75,1.5,2.25,3}{
\foreach \y in {.75,1.5,2.25,3}{
\draw[thick, red] (\x,\y) -- ($ (\x,\y) + (-.45,-.45) $);
%\draw[thick] ($ (\x,\y) + (.15,.15) $) -- ($ (\x,\y) + (-.075,-.075)$);
}}
}
$$
Here, we read from bottom left to top right.
The total Hilbert space is the tensor product of local Hilbert spaces:
\begin{equation}
\label{eq:VertexSpaceOrientation}
\tikzmath{
\draw[thick, red] (-.5,-.5) node[left]{$\scriptstyle b_1$} -- (.5,0.5) node[right]{$\scriptstyle b_2$};
\draw[thick, red] (-.5,0) node[left]{$\scriptstyle a_1$} -- (.5,0) node[right]{$\scriptstyle a_2$};
\draw[thick, red] (0,-.5) node[below]{$\scriptstyle c_1$} -- (0,.5) node[above]{$\scriptstyle c_2$};
\filldraw[red] (0,0) circle (.05cm);
\draw[blue!50, very thin] (-.5,.5) -- (.5,-.5);
{\draw[blue!50, very thin, -stealth] ($ (-.3,.3) - (.1,.1)$) to ($ (-.3,.3) + (.1,.1)$);}
{\draw[blue!50, very thin, -stealth] ($ (.3,-.3) - (.1,.1)$) to ($ (.3,-.3) + (.1,.1)$);}
}
\qquad
\longleftrightarrow
\qquad
\cH_v 
:= 
\bigoplus_{
a_i,b_i,c_i\in \Irr(\cA)
}
\cA(a_1b_1c_1\to c_2b_2a_2)
\end{equation}
where the direct sum is orthogonal.
The space $\cH_v$ is equipped with the `skein-module' inner product
$$
\left\langle
\tikzmath{
\draw[thick, red] (-.5,-.5) node[left]{$\scriptstyle b_1$} -- (.5,.5) node[right]{$\scriptstyle b_2$};
\draw[thick, red] (-.5,0) node[left]{$\scriptstyle a_1$} -- (.5,0) node[right]{$\scriptstyle a_2$};
\draw[thick, red] (0,-.5) node[below]{$\scriptstyle c_1$} -- (0,.5) node[above]{$\scriptstyle c_2$};
\node[red] at (.2,-.2) {$\scriptstyle \xi$};
}
\middle|
\tikzmath{
\draw[thick, red] (-.5,-.5) node[left]{$\scriptstyle b_1'$} -- (.5,.5) node[right]{$\scriptstyle b_2'$};
\draw[thick, red] (-.5,0) node[left]{$\scriptstyle a_1'$} -- (.5,0) node[right]{$\scriptstyle a_2'$};
\draw[thick, red] (0,-.5) node[below]{$\scriptstyle c_1'$} -- (0,.5) node[above]{$\scriptstyle c_2'$};
\node[red] at (.2,-.2) {$\scriptstyle \xi'$};
}
\right\rangle
=
\left(\prod_{i=1,2}
\frac{\delta_{a_i=a_i'}
\delta_{b_i=b_i'}
\delta_{c_i=c_i'}}{\sqrt{d_{a_i}d_{b_i}d_{c_i}}}
\right)
\cdot
\tr_\cA(\xi^\dag\circ \xi').
$$
For the setup with the $\cA$-enriched UFC $\cX$-boundary, each boundary vertex is assigned a different local Hilbert space
\begin{equation}
\label{eq:BoundaryVertexSpaceOrientation}
\tikzmath{
\draw[thick, red] (-.5,-.5) node[left]{$\scriptstyle b_1$} -- (0,0);
\draw[thick] (-.5,0) node[left]{$\scriptstyle x_1$} -- (.5,0) node[right]{$\scriptstyle x_2$};
\draw[thick] (0,-.5) node[below]{$\scriptstyle y_1$} -- (0,.5) node[above]{$\scriptstyle y_2$};
\filldraw (0,0) circle (.05cm);
\draw[blue!50, very thin] (-.5,.5) -- (.5,-.5);
{\draw[blue!50, very thin, -stealth] ($ (-.3,.3) - (.1,.1)$) to ($ (-.3,.3) + (.1,.1)$);}
{\draw[blue!50, very thin, -stealth] ($ (.3,-.3) - (.1,.1)$) to ($ (.3,-.3) + (.1,.1)$);}
}
\qquad
\longleftrightarrow
\qquad
\cK_v 
:= 
\bigoplus_{
a_i,b_i,c_i\in \Irr(\cA)
}
\cX(x_1F(b_1)y_1\to y_2x_2)
\end{equation}
equipped with the enriched skein module inner product \cite[\S3.1]{2305.14068}:
$$
\left\langle
\tikzmath{
\draw[thick, red] (-.5,-.5) node[left]{$\scriptstyle b$} -- (0,0);
\draw[thick] (-.5,0) node[left]{$\scriptstyle x_1$} -- (.5,0) node[right]{$\scriptstyle x_2$};
\draw[thick] (0,-.5) node[below]{$\scriptstyle y_1$} -- (0,.5) node[above]{$\scriptstyle y_2$};
\node at (.2,-.2) {$\scriptstyle \eta$};
}
\middle|
\tikzmath{
\draw[thick, red] (-.5,-.5) node[left]{$\scriptstyle b'$} -- (0,0);
\draw[thick] (-.5,0) node[left]{$\scriptstyle x_1'$} -- (.5,0) node[right]{$\scriptstyle x_2'$};
\draw[thick] (0,-.5) node[below]{$\scriptstyle y_1'$} -- (0,.5) node[above]{$\scriptstyle y_2'$};
\node at (.2,-.2) {$\scriptstyle \eta'$};
}
\right\rangle
=
\frac{\delta_{b=b'}}{\sqrt{d_{b}}}
\left(\prod_{i=1,2}
\frac{\delta_{x_i=x_i'}
\delta_{y_i=y_i'}}{\sqrt{d_{x_i}d_{y_i}}}
\right)
\cdot
\tr_\cX(\eta^\dag\circ \eta').
$$

Consider now a 3D rectangle $\Lambda$ in our lattice $\cL$.
We consider the canonical spin system from this setup, i.e., we define local algebras
$\fA(\Lambda):= \bigotimes_{v\in \Lambda} B(\cH_v)$,
and we set $\fA:= \varinjlim \fA(\Lambda) = \bigotimes_v B(\cH_v)$.
In the presence of $\cX$-boundary sites, we replace the necessary copies of $\cH_v$ with $\cK_v$.
As before, there are edge terms $A_\ell$ which enforce that the simple labels on connected edges match, and there are plaquette terms $B_p$ for every square in the lattice.
These plaquette terms incorporate the braiding in $\cA$ and the half-braiding in $Z(\cX)$:
$$
\frac{1}{D_\cA}
\sum_{a\in \Irr(\cA)}d_a\cdot
\tikzmath{
\draw[thick, red] (.65,-.35) -- (1.35,.35);
\draw[thick, red] (-.35,.65) -- (.35,1.35);
\draw[thick, red] (.65,.65) -- (1.35,1.35);
\filldraw[knot, thick, orange, rounded corners=5pt, fill=orange!60, opacity=.5] (.15,.15) rectangle (.85,.85);
\node[orange] at (.25,.5) {$\scriptstyle a$};
\draw[thick, red, knot] (-.35,-.35) -- (.35,.35);
\draw[thick, red] (-.5,-.5) grid (1.5,1.5);
}\,,
\qquad
\frac{1}{D_\cA}
\sum_{a\in \Irr(\cA)}d_a\cdot
\tikzmath{
\draw[thick, red] (.6,-.9) -- (.6,1.1);
\draw[thick, red] (.1,-.4) -- (1.1,-.4);
\draw[thick, red] (.1,.6) -- (1.1,.6);
\filldraw[knot, thick, orange, rounded corners=5pt, fill=orange!60, opacity=.5] (.7,0) -- (.7,-.2) -- (.9,0) -- (.9,.8) -- (.7,.6) -- (.7,0);
\node[orange] at (.8,.3) {$\scriptstyle a$};
\draw[thick, knot] (.5,-.5) grid (1.5,1.5);
\draw[thick, red] (.25,-.75) -- (1,0);
\draw[thick, red] (.25,.25) -- (1,1);
}\,,
\qquad
\text{or}\qquad
\frac{1}{D_\cX}
\sum_{x\in \Irr(\cX)}d_x\cdot
\tikzmath{
\draw[thick, red] (.65,-.35) -- (1,0);
\draw[thick, red] (-.35,.65) -- (0,1);
\draw[thick, red] (.65,.65) -- (1,1);
\filldraw[knot, thick, blue, rounded corners=5pt, fill=blue!60, opacity=.5] (.15,.15) rectangle (.85,.85);
\node[blue] at (.25,.5) {$\scriptstyle x$};
\draw[thick, red] (-.35,-.35) -- (0,0);
\draw[thick, black] (-.5,-.5) grid (1.5,1.5);
}
$$
In the middle picture above, one uses $\Phi:\cA\to \cX$ to resolve the plaquette operator into the vertices which meet the black $\cX$ edges. 
There is a similar formula for the matrix coefficients $C(\xi,\xi')$ for the $B_p$ which can be obtained by modifying \cite[Prop.~3.2]{2305.14068} which shows that the $B_p$ are self-adjoint projectors.
There are no terms for cubes in the lattice.

As before, we define
$p^A_\Lambda := \prod_{\ell\subset \Lambda}A_\ell$,
$p^B_\Lambda:=\prod_{p\subset \ell} B_p$, and $p_\Lambda:=p^A_\Lambda p^B_\Lambda$.
We have the following theorem for the operator $p^B_\Lambda$ on $\im(p^A_\Lambda) =p^A_\Lambda \bigotimes_{v\in \Lambda}\cH_v$
(again replacing $\cH_v$ for $\cK_v$ in the presence of a topological $\cX$ boundary.)

\begin{lem}[{\cite[Thm. 3.4]{2305.14068}}]
For a rectangle $\Lambda$, on $\im(p^A_\Lambda)$,
$$
p^B_\Lambda = 
D_{\cA}^{-\#p\subset \Lambda^\circ + \#c\subset \Lambda^\circ} 
D_{\cX}^{-\#p\subset \partial\Lambda} 
\cdot \eval^\dag\circ \eval,
$$
where 
$\Lambda^\circ$ is the part of $\Lambda$ which does not touch the $\cX$ boundary,
$\#p\subset \Lambda^\circ$ is the number of plaquettes in $\Lambda^\circ$ and $\#c\subset \Lambda^\circ$ is the number of cubes in $\Lambda$, including plaquettes and cubes which meet but are not strictly contained in $\partial\Lambda$, and $\#p\subset \partial\Lambda$ is the number of plaquettes contained where $\Lambda$ intersects the $\cX$ boundary.
Hence $\im(p_\Lambda)$ is unitarily isomorphic to the enriched skein module via an appropriately scaled evaluation map.
\end{lem}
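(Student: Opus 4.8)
The plan is to run the argument of \cite[Thm.~3.4]{2305.14068} in the boundary-enriched setting, which is the Walker--Wang analogue of the 2D Levin--Wen computation recalled above (see \cite[\S4]{2307.12552}). Throughout one works on the edge-restricted subspace $\im(p^A_\Lambda)$, on which all edge labels match at shared vertices, so that a vector is literally a morphism in $\cA$ (or in $\cX$ once we resolve the local pieces meeting the $\cX$-boundary using $\Phi=\Forget\circ\Phi^Z$), i.e.\ an element of the (enriched) skein module $\cS$ of the $1$-skeleton of $\Lambda$. First I would invoke from \cite[Prop.~3.2]{2305.14068} (suitably modified for the boundary) that each $B_p$ is a self-adjoint projector on $\im(p^A_\Lambda)$ and that the $B_p$ pairwise commute; hence $p^B_\Lambda=\prod_{p\subset\Lambda}B_p$ is the orthogonal projection onto $\bigcap_p\im(B_p)$, and it remains only to identify this common image with $\im(\eval^\dag)$ and to pin down the normalizing scalar.

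The engine is a single-plaquette identity: on $\im(p^A_\Lambda)$ one has $B_p=D_\cA^{-1}\cdot\eval_p^\dag\circ\eval_p$ for a plaquette $p$ in the bulk and $B_p=D_\cX^{-1}\cdot\eval_p^\dag\circ\eval_p$ for a plaquette on the $\cX$-boundary, where $\eval_p$ is the local evaluation that ``fills in'' the $2$-cell $p$, absorbing the loop strand using the braiding of $\cA$ (and the half-braiding $\Phi^Z\colon\cA\to Z(\cX)$ for boundary plaquettes, exactly as in the three plaquette formulas displayed above); this is a direct computation of the matrix coefficients $C(\xi,\xi')$, a boundary-enriched variant of \cite[Prop.~3.2]{2305.14068}. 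I would then induct on a fixed linear ordering $p_1,\dots,p_M$ of the plaquettes of $\Lambda$, proving that $B_{p_1}\cdots B_{p_k}=D_\cA^{-(n_k+c_k)}\,D_\cX^{-m_k}\,\eval_{(k)}^\dag\circ\eval_{(k)}$, where $\eval_{(k)}$ evaluates the sub-$2$-complex $p_1\cup\dots\cup p_k$, the integer $n_k$ (resp.\ $m_k$) counts the bulk (resp.\ $\cX$-boundary) plaquettes among $p_1,\dots,p_k$, and $c_k$ counts the cubes all of whose faces lie in $\{p_1,\dots,p_k\}$. The inductive step composes $\eval_{(k)}^\dag$ with $\eval_{p_{k+1}}$ and uses the semisimplicity/fusion relation in $\cA$ in the form \eqref{eq:FusionRelationInSkeinModule} (together with the braiding) to merge the newly added strand into the accumulated skein; the scalar picks up $D_\cA^{-1}$ (or $D_\cX^{-1}$) from the normalization of $B_{p_{k+1}}$, and an \emph{additional} factor $D_\cA^{-1}$ precisely when $p_{k+1}$ is the last face of some cube, because at that moment the six plaquette loops bound a $3$-ball and the strand configuration on its boundary $2$-sphere is removed at the cost of a sphere evaluation $D_\cA$, using the ribbon structure of $\cA$. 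Taking $k=M$ yields $p^B_\Lambda=D_\cA^{-(\#p\subset\Lambda^\circ+\#c\subset\Lambda^\circ)}D_\cX^{-\#p\subset\partial\Lambda}\,\eval^\dag\circ\eval$.

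Finally, since $p^B_\Lambda$ is a projection and $\eval$ is surjective onto $\cS$, the formula forces $\eval\circ\eval^\dag=D_\cA^{\#p\subset\Lambda^\circ+\#c\subset\Lambda^\circ}D_\cX^{\#p\subset\partial\Lambda}\cdot\id_\cS$, so $\im(p_\Lambda)=\im\!\big(p^B_\Lambda|_{\im(p^A_\Lambda)}\big)=\im(\eval^\dag)=(\ker\eval)^\perp$, and the rescaled map $D_\cA^{(\#p\subset\Lambda^\circ+\#c\subset\Lambda^\circ)/2}D_\cX^{(\#p\subset\partial\Lambda)/2}\cdot\eval$ restricts to a unitary $\im(p_\Lambda)\xrightarrow{\ \sim\ }\cS$. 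The main obstacle I anticipate is bookkeeping the crossings: unlike Levin--Wen, the Walker--Wang plaquette operators carry genuine over/under information (the loop braids through the edges perpendicular to $p$, and through the $\cX$-edges via $\Phi^Z$ at the boundary), so one must check carefully that the sliding and fusion moves used in the inductive step are valid for the chosen ordering $p_1,\dots,p_M$ and independent of it, and that the sphere removal at a completed cube is exactly the Walker--Wang consistency relation. This is where the hypotheses that $\cA$ is braided and $\cX$ is $\cA$-enriched (i.e.\ comes with $\Phi^Z\colon\cA\to Z(\cX)$) are used in an essential way.
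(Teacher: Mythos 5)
The paper does not actually prove this lemma; it is cited verbatim from \cite[Thm.~3.4]{2305.14068}, exactly as the parallel 2D Levin--Wen statement above it is cited from that paper's Thm.~2.9. So there is no in-paper argument to compare your proposal against. That said, the shape of your argument --- working on the edge-restricted subspace, verifying the single-plaquette identity $B_p = D^{-1}\eval_p^\dag\circ\eval_p$ from the matrix coefficients (with the half-braiding $\Phi^Z$ for boundary plaquettes), inducting on a plaquette ordering, picking up a topological correction each time a $3$-cell is completed, and extracting unitarity of the rescaled $\eval$ from the projection property --- is the expected route.

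There is, however, one genuine error: your final scalar disagrees with the statement you are proving. You land on $D_\cA^{-(\#p\subset\Lambda^\circ + \#c\subset\Lambda^\circ)}$, whereas the lemma asserts $D_\cA^{-\#p\subset\Lambda^\circ\, +\, \#c\subset\Lambda^\circ}$; the cube count enters with a \emph{plus} sign. Your heuristic that completing a cube costs ``a sphere evaluation $D_\cA$'' and therefore requires an \emph{extra} $D_\cA^{-1}$ has the sign backwards. When the $(k+1)$-st plaquette closes a cube, the skein relation coming from the new $3$-cell identifies the interior of the bounding $2$-sphere with the vacuum; this \emph{collapses} a degree of freedom and makes the scalar in $\eval_{(k+1)}\circ\eval_{(k+1)}^\dag$ a factor of $D_\cA$ \emph{smaller} than the naive per-plaquette product $D_\cA^{k+1}$, so the compensating factor in the projector is $D_\cA^{+1}$ per completed cube. (One sanity check: a single $1\times1\times1$ cube has $6$ plaquettes and $1$ cube, and the correct exponent is $-6+1=-5$, not $-7$; going from a $1\times1\times1$ block to a $2\times1\times1$ domino adds $5$ plaquettes and $1$ cube, and $\#p - \#c$ increments by $4$, which tracks the $5$ new plaquette factors minus one sphere collapse.) You already flagged the cube/sphere consistency relation as the main delicacy, and that is exactly where the sign lives; I would redo the single-cube computation explicitly and compare against \cite[Thm.~3.4]{2305.14068} before trusting the rest of the induction.
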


\begin{defn}
We now cut our 3D region along a vertical 2D hyperplane $\cK$ as follows:
\begin{equation}
\label{eq:BoundaryRectangleWW}
\tikzmath{
\draw[thick, red, step=.75, xshift=-.3cm, yshift=-.3cm] (1.75,0.25) grid (3.5,3.5);
\draw[thick, red, step=.75, knot, xshift=-.15cm, yshift=-.15cm] (1.75,0.25) grid (3.5,3.5);
\draw[thick, red, step=.75, knot] (1.75,0.25) grid (3.5,3.5);
\filldraw[draw=blue, thick, fill=blue!30, rounded corners=5pt]
(1.75,0) --  (2.1,.2) -- (2.1,3.6) -- (1.4,3.2) -- (1.4,-.2) -- (1.75,0);
\draw[thick, red, step=.75, knot, xshift=-.3cm, yshift=-.3cm] (0.25,0.25) grid (2,3.5);
\draw[thick, red, step=.75, knot, xshift=-.15cm, yshift=-.15cm] (0.25,0.25) grid (2,3.5);
\draw[thick, red, step=.75, knot] (0.25,0.25) grid (2,3.5);
\foreach \x in {.75,1.5,3}{
\foreach \y in {.75,1.5,2.25,3}{
\draw[thick, red] ($ (\x,\y) + (.15,.15) $) -- ($ (\x,\y) + (-.45,-.45) $);
}}
\foreach \y in {.75,1.5,2.25,3}{
\draw[thick, red] ($ (2.25,\y) + (.15,.15) $) -- ($ (2.25,\y) + (-.15,-.15) $);
}
}
\qquad\qquad\text{or}\qquad\qquad
\tikzmath{
\draw[thick, red, step=.75, xshift=-.3cm, yshift=-.3cm] (1.75,0.25) grid (3.5,3.5);
\draw[thick, red, step=.75, knot, xshift=-.15cm, yshift=-.15cm] (1.75,0.25) grid (3.5,3.5);
\draw[thick, step=.75, knot] (1.75,0.25) grid (3.5,3.5);
\filldraw[draw=blue, thick, fill=blue!30, rounded corners=5pt]
(1.75,0) --  (2.1,.2) -- (2.1,3.6) -- (1.4,3.2) -- (1.4,-.2) -- (1.75,0);
\draw[thick, red, step=.75, knot, xshift=-.3cm, yshift=-.3cm] (0.25,0.25) grid (2,3.5);
\draw[thick, red, step=.75, knot, xshift=-.15cm, yshift=-.15cm] (0.25,0.25) grid (2,3.5);
\draw[thick, step=.75, knot] (0.25,0.25) grid (2,3.5);
\foreach \x in {.75,1.5,3}{
\foreach \y in {.75,1.5,2.25,3}{
\draw[thick, red] (\x,\y) -- ($ (\x,\y) + (-.45,-.45) $);
}}
\foreach \y in {.75,1.5,2.25,3}{
\draw[thick, red] (2.25,\y) -- ($ (2.25,\y) + (-.15,-.15) $);
}
}
\end{equation}
We choose a distinguished half-space $\bbH$ bounded by $\cK$, which is assumed to be to the \emph{right} of the slice in \eqref{eq:BoundaryRectangleWW} above.
For a 3D rectangle $\Lambda\subset\bbH$, we let $I=I(\Lambda)$ be the intersection $\partial\Lambda\cap \cK$.
Observe that $I=\emptyset$ if $\Lambda\cap \cK=\emptyset$.
The tangent vectors on points of $I$ are oriented according to \eqref{eq:VertexSpaceOrientation}, projected into the 2D plane as in the diagram in the left hand side there.
In \eqref{eq:BoundaryRectangleWW} above, $I$ is
the rectangle where the red lattice lines intersect the blue slice, so points in $I$ are oriented from \emph{left-to-right}.

When $I\neq \emptyset$, we define
$$
\fF(I)
:=
\End_\cA(A^{\otimes I})
$$
where $A^{\otimes I}$ is the tensor product of the 
generator $A:=\bigoplus_{a\in\Irr(\cA)} a$ over the sites in $I$.
In the presence of the $\cX$ topological boundary located on sites in the boundary $\partial\cL$ of our lattice $\cL$, we modify the above definition to account for sites in $I$ which meet $\partial \cL$:
$$
\fG(I)
:=
\begin{cases}
\End_\cA(A^{\otimes I})
&
\text{if }I\cap \partial \cL=\emptyset
\\
\End_\cX(X^{\otimes I \cap \partial \cL}\lhd A^{\otimes I\setminus \partial \cL})   
&
\text{if }I\cap \partial \cL\neq\emptyset
\end{cases}
$$
where $X=\bigoplus_{x\in\Irr(\cX)}x$ is our strong tensor generator of $\cX$.

\end{defn}

Just as \cite[Def.~4.6]{MR4945955}, for $\psi \in \fF(I)$, we can define a gluing operator $\Gamma_\psi$ acting on $\bigotimes_{v\in \Lambda} \cH_v$. 
When $I$ is as in \eqref{eq:BoundaryRectangleWW} and
$\psi \in \cA(\bigotimes_I a_i \to \bigotimes_I b_i)$, $\Gamma_\psi$ first applies the projector $p_\Lambda$ so that the edge labels of the vectors match and plaquettes are effectively filled.
This allows us to view a vector in $p_\Lambda\bigotimes_{v\in\Lambda} \cH_v$ 
as an element of the enriched skein module \cite[\S3.1]{2305.14068} for $\cA$, giving an element in $\cA(1_\cA \to A^{\otimes \partial \Lambda})$.
We can then take our desired morphism $\psi$ and glue it on, first projecting to make sure the labels $a_i$ in $\bigotimes_I a_i$ match in order so composition is well-defined.
We have a similar definition for $\Gamma_\psi$ in the presence of the $\cX$ toplogical boundary for $\psi\in \fG(I)$.
We must also multiply by ratios of fourth roots of quantum dimensions in order to make this a $*$-action, which are omitted in the cartoons below. 
$$
\tikzmath{
\draw[thick, red, step=.75, knot, xshift=-.3cm, yshift=-.3cm] (1,0.25) grid (2,3.5);
\draw[thick, red, step=.75, knot, xshift=-.15cm, yshift=-.15cm] (1,0.25) grid (2,3.5);
\draw[thick, red, step=.75, knot] (1,0.25) grid (2,3.5);
\foreach \x in {1.5}{
\foreach \y in {.75,1.5,2.25,3}{
\draw[thick, red] ($ (\x,\y) + (.15,.15) $) -- ($ (\x,\y) + (-.45,-.45) $);
}}
\foreach \y in {1,2,3,4}{
\foreach \s in {1,2,3}{
\draw[thick, red] ($ (1.85,0) + \y*(0,.75) + (0,.15) - \s*(0,.15) $) node[right]{$\scriptstyle c_{\y,\s}$};
}}
}
\mapsto
\prod_{i,j} \delta_{c_{ij}=a_{ij}}
\tikzmath{
\draw[thick, red, step=.75, knot, xshift=-.3cm, yshift=-.3cm] (1,0.25) grid (2,3.5);
\draw[thick, red, step=.75, knot, xshift=-.15cm, yshift=-.15cm] (1,0.25) grid (2,3.5);
\draw[thick, red, step=.75, knot] (1,0.25) grid (2,3.5);
\foreach \x in {1.5}{
\foreach \y in {.75,1.5,2.25,3}{
\draw[thick, red] ($ (\x,\y) + (.15,.15) $) -- ($ (\x,\y) + (-.45,-.45) $);
}}
\foreach \y in {1,2,3,4}{
\foreach \s in {1,2,3}{
\draw[thick, red] ($ (1.85,0) + \y*(0,.75) + (0,.15) - \s*(0,.15) $) node[right]{$\scriptstyle a_{\y,\s}$};
}}
}
\mapsto
\tikzmath{
\foreach \y in {.75,1.5,2.25,3}{
\foreach \s in {0,.15,.3}{
\draw[thick, red] ($ (1.75,\y) - (0,\s) $) -- ($ (2.5,\y) - (\s,\s) $);
}}
\filldraw[draw=blue, thick, fill=blue!30, rounded corners=5pt]
(1.75,0) --  (2.1,.2) -- (2.1,3.6) -- (1.4,3.2) -- (1.4,-.2) -- (1.75,0);
\node[blue] at (1.85,1.75) {$\scriptstyle\varphi$};
\draw[thick, red, step=.75, knot, xshift=-.3cm, yshift=-.3cm] (1,0.25) grid (2,3.5);
\draw[thick, red, step=.75, knot, xshift=-.15cm, yshift=-.15cm] (1,0.25) grid (2,3.5);
\draw[thick, red, step=.75, knot] (1,0.25) grid (2,3.5);
\foreach \x in {1.5}{
\foreach \y in {.75,1.5,2.25,3}{
\draw[thick, red] ($ (\x,\y) + (.15,.15) $) -- ($ (\x,\y) + (-.45,-.45) $);
}}
\foreach \y in {1,2,3,4}{
\foreach \s in {1,2,3}{
\draw[thick, red] ($ (2.5,0) + \y*(0,.75) + (0,.15) - \s*(0,.15) $) node[right]{$\scriptstyle b_{\y,\s}$};
}}
}
\qquad\text{or}\qquad
\tikzmath{
\draw[thick, red, step=.75, knot, xshift=-.3cm, yshift=-.3cm] (1,0.25) grid (2,3.5);
\draw[thick, red, step=.75, knot, xshift=-.15cm, yshift=-.15cm] (1,0.25) grid (2,3.5);
\draw[thick, step=.75, knot] (1,0.25) grid (2,3.5);
\foreach \x in {1.5}{
\foreach \y in {.75,1.5,2.25,3}{
\draw[thick, red] ($ (\x,\y) $) -- ($ (\x,\y) + (-.45,-.45) $);
}}
}
\mapsto
\tikzmath{
\foreach \y in {.75,1.5,2.25,3}{
\draw[thick] ($ (1.75,\y) $) -- ($ (2.5,\y) $);
\foreach \s in {.15,.3}{
\draw[thick, red] ($ (1.75,\y) - (0,\s) $) -- ($ (2.5,\y) - (\s,\s) $);
}}
\filldraw[draw=blue, thick, fill=blue!30, rounded corners=5pt]
(1.75,0) --  (2.1,.2) -- (2.1,3.6) -- (1.4,3.2) -- (1.4,-.2) -- (1.75,0);
\node[blue] at (1.85,1.75) {$\scriptstyle\varphi$};
\draw[thick, red, step=.75, knot, xshift=-.3cm, yshift=-.3cm] (1,0.25) grid (2,3.5);
\draw[thick, red, step=.75, knot, xshift=-.15cm, yshift=-.15cm] (1,0.25) grid (2,3.5);
\draw[thick, step=.75, knot] (1,0.25) grid (2,3.5);
\foreach \x in {1.5}{
\foreach \y in {.75,1.5,2.25,3}{
\draw[thick, red] ($ (\x,\y) $) -- ($ (\x,\y) + (-.45,-.45) $);
}}
}
$$
This new morphism must be resolved into a vector in $\bigotimes_{v\in\Lambda} \cH_v$ in the usual way using semisimplicity.

Arguing as in \cite[Lem.~4.7 and Thm.~4.8]{MR4945955}, but using a variant of the \emph{sphere algebra} of $\cA$ \cite[Def.~3.12]{2305.14068} instead of the tube algebra, we get the following theorem.

\begin{thm}
\label{thm:LTQO-WW}
The Walker-Wang model for $\cA$ satisfies \ref{LTO:QECC}--\ref{LTO:Injective} for $s=1$,
and the boundary algebra
$$
\fB(\Lambda\Subset \Delta)=\set{\Gamma_\psi}{\psi \in \End_\cA(A^{\otimes I})}\cong \fF(I)
\qquad\qquad\qquad
I=\partial\Lambda\cap \partial\Delta\cap \cK.
$$
\end{thm}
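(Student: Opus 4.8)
The plan is to follow the template of \cite[Lem.~4.7 and Thm.~4.8]{2307.12552} almost verbatim, making two substitutions: the Levin--Wen skein-module identification is replaced by the enriched-skein-module identification $\im(p_\Lambda)\cong \cA(1_\cA\to A^{\otimes\partial\Lambda})$ (suitably scaled) of \cite[Thm.~3.4]{2305.14068} recalled just above, and Ocneanu's tube algebra is replaced by the sphere algebra of $\cA$ from \cite[Def.~3.12]{2305.14068}. The first step is to translate compression by a larger projector into an operation on enriched skein modules: for $\Lambda\subset\Delta$, applying $p_\Delta$ lets us view any $x\in\fA(\Lambda)$ as an endomorphism of $\cA(1_\cA\to A^{\otimes\partial\Delta})$, and since $x$ is supported in $\Lambda$ its action only alters the skein element in the subregion of $\Delta$ near $\Lambda$, in particular commuting with anything glued onto the edges of $\partial\Delta$ away from $\Lambda$. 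The braiding of $\cA$ enters only in making sense of the ordered product $A^{\otimes\partial\Delta}$ along the (knotted) boundary lattice, not in the locality bookkeeping.

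For \ref{LTO:QECC}, when $\Lambda\ll_1\Delta$ the rectangle $\Lambda$ sits inside $\Delta$ with a collar of width $\geq 1$, so after applying $p_\Delta$ the image of any $x\in\fA(\Lambda)$ in $\cA(1_\cA\to A^{\otimes\partial\Delta})$ has support that can be contracted to the empty diagram using the plaquette relations of the $B_p$ surrounding $\Lambda$ together with semisimplicity; exactly as in \cite[\S4]{2307.12552} this forces $p_\Delta x p_\Delta\in\bbC p_\Delta$. For \ref{LTO:Boundary}, fix $\Lambda\Subset_1\Delta$ and set $I=\partial\Lambda\cap\partial\Delta\cap\cK$, decomposing the edges of $\partial\Delta$ crossing $\cK$ into $I$ and a complementary piece. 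The gluing operators $\Gamma_\psi$ with $\psi\in\fF(I)=\End_\cA(A^{\otimes I})$ manifestly lie in $p_\Delta\fA(\Lambda)p_\Delta$ and are boundary error operators, giving $\set{\Gamma_\psi}{\psi\in\End_\cA(A^{\otimes I})}\subseteq\fB(\Lambda\Subset_1\Delta)$. Conversely, any $xp_\Delta\in p_\Delta\fA(\Lambda)p_\Delta$, viewed as an endomorphism of $\cA(1_\cA\to A^{\otimes\partial\Delta})$, commutes with the action of the sphere algebra of $\cA$ glued along the complementary part of $\partial\Delta$; since $A=\bigoplus_{a\in\Irr(\cA)}a$ contains every simple, that ``far'' portion of the skein module is a faithful module for the sphere algebra, and a braided version of Lemma~\ref{lem:SemisimpleCommutation} identifies the commutant with exactly the operators obtained by gluing an element of $\End_\cA(A^{\otimes I})$ along $I$, i.e.\ with $\set{\Gamma_\psi}{\psi\in\End_\cA(A^{\otimes I})}$. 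Finally one checks $\psi\mapsto\Gamma_\psi$ is an injective unital $*$-homomorphism (using that gluing is isometric and that $A$ strongly tensor-generates), yielding the displayed isomorphism $\fB(\Lambda\Subset\Delta)\cong\fF(I)$.

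For \ref{LTO:Surjective}, if $\Lambda_1\subset\Lambda_2\Subset_1\Delta$ have the same boundary intersection $I$, then $\Gamma_\psi$ is literally the same operator computed via $\Lambda_1$ or $\Lambda_2$ (it glues $\psi$ onto the edges crossing $\cK$), so $\fB(\Lambda_1\Subset_1\Delta)=\fB(\Lambda_2\Subset_1\Delta)$; for \ref{LTO:Injective}, if $\Delta_1\subset\Delta_2$ have the same intersection with $\partial\Lambda$, injectivity of $\Gamma_\psi p_{\Delta_1}\mapsto\Gamma_\psi p_{\Delta_2}$ follows since $\psi$ is recovered from $\Gamma_\psi p_{\Delta_i}$ through the faithful enriched-skein representation. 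Hence $\fB(I):=\fB(\Lambda_I\Subset_1\Delta_I)\cong\End_\cA(A^{\otimes I})$, and under the inclusions given by tensoring with $\id_A$ this is precisely the 2D braided categorical net $\fF$ of Definition~\ref{defn:BraidedFusionCategoricalNet} attached to $(\cA,A)$.

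I expect the main obstacle to be the braided semisimple-commutation step inside \ref{LTO:Boundary}: unlike the 2D Levin--Wen case, the edges of $\partial\Delta$ genuinely link in $\bbR^3$ and $A^{\otimes\partial\Delta}$ depends on the braiding, so one must verify carefully within the enriched skein calculus of \cite{2305.14068} that gluings along $I$ and along its complement commute, that the sphere-algebra action along the complement is faithful, and that its commutant is exactly the $I$-gluings. Once this is in place, the rest of the argument is bookkeeping parallel to \cite[\S4]{2307.12552}.
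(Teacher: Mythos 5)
Your proposal is correct and takes essentially the same approach as the paper. The paper's proof consists precisely of the instruction to argue as in \cite[Lem.~4.7 and Thm.~4.8]{2307.12552}, substituting the enriched skein module identification of \cite[Thm.~3.4]{2305.14068} and the sphere algebra of $\cA$ for the tube algebra — which is exactly what you do, axiom by axiom, and you correctly flag the braided semisimple-commutation/sphere-algebra commutant step as the only new technical content.
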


Using a variant of the \emph{dome algebra} of the $\cA$-enriched UFC $\cX$ \cite[Def.~3.6]{2305.14068}, we get the following theorem.

\begin{thm}
\label{thm:BoundaryWWBoundaryAlgebra}
The boundary Walker-Wang model for the $\cA$-enriched UFC $\cX$ satisfies \ref{BoundaryLTO:QECC}--\ref{BoundaryLTO:Injective} for $s=1$, and the boundary algebra
$$
\fB(\Lambda\Subset^\partial \Delta)=\set{\Gamma_\psi}{\psi \in \End_\cX(X^{\otimes I \cap \partial \cL}\lhd A^{\otimes I\setminus \partial \cL})   )}
\cong 
\fG(I)
\qquad\qquad
I=\partial\Lambda\cap \partial\Delta\cap \cK.
$$
\end{thm}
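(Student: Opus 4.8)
The plan is to run the argument of \cite[Lem.~4.7 and Thm.~4.8]{2307.12552} in the enriched three‑dimensional setting, substituting the \emph{dome algebra} of the $\cA$‑enriched UFC $\cX$ from \cite[Def.~3.6]{2305.14068} (and, for corner rectangles disjoint from $\partial\cL$, the \emph{sphere algebra} of $\cA$ from \cite[Def.~3.12]{2305.14068}) for the tube algebra, while keeping careful track of the braided central functor $\Phi^Z\colon\cA\to Z(\cX)$. The computational backbone is the preceding lemma (\cite[Thm.~3.4]{2305.14068}): $\im(p_\Lambda)$ is unitarily isomorphic, via a scaled evaluation map, to the enriched skein module of $\Lambda$, whether or not $\Lambda$ touches $\partial\cL$; under this isomorphism a vector of $\im(p_\Lambda)$ is a morphism of $\cA$ (resp.\ of $\cX$) whose boundary legs are indexed by $\partial\Lambda$. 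First I would check, exactly as in \cite[Def.~4.6]{2307.12552}, that for $\psi\in\fG(I)$ the gluing operator $\Gamma_\psi$ — apply $p_\Lambda$, resolve into the enriched skein module, post‑compose with $\psi$ at the corner rectangle $I$ (first projecting onto matching labels), and reabsorb with the appropriate fourth‑root‑of‑dimension normalizations — is a well‑defined operator on $\bigotimes_{v\in\Lambda}\cH_v$ commuting with $p_\Lambda$, that $\psi\mapsto\Gamma_\psi$ is a unital $*$‑homomorphism, and that it is injective because the evaluation map is isometric and $(A,X)$ is a strong tensor generating pair. Hence $\{\Gamma_\psi:\psi\in\fG(I)\}\cong\fG(I)$ acting faithfully on $\im(p_\Lambda)$.

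\textbf{The two local axioms.} For \ref{BoundaryLTO:QECC}: when $\Lambda\ll_1^\partial\Delta$ the surrounding occurs along every face of $\Lambda$ except the one lying in $\partial\cL$, so every plaquette and cube of $\Lambda$ is completely surrounded inside $\Delta$; compressing $a\in\fA(\Lambda)$ by $p_\Delta$ and passing to the enriched skein module of $\Delta$, the plaquette projectors of $\Delta\setminus\Lambda$ cap off the entire support of $a$, and the standard ``disk of plaquette operators trivializes a local operator'' computation — the enriched analogue of the argument behind Theorem~\ref{thm:LTQO-WW}, now permitting an $\cX$‑colored face — yields $p_\Delta\fA(\Lambda)p_\Delta=\bbC p_\Delta$. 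For \ref{BoundaryLTO:Boundary}, the inclusion $\supseteq$ is the observation that $\Gamma_\psi$ touches only the labels along the corner $I$ and the enriched skein relations let it slide through any plaquette projector supported away from $I$, so $\Gamma_\psi$ commutes with all $p_{\Delta'}$ for which $\Lambda\Subset_1^\partial\Delta'$ with $\partial\Lambda\cap\partial\Delta'=\partial\Lambda\cap\partial\Delta$, whence $\Gamma_\psi\in\fB^\partial(\Lambda\Subset_1^\partial\Delta)$ and $p_\Delta\Gamma_\psi p_\Delta=\Gamma_\psi p_\Delta$. For $\subseteq$, given $a\in\fA(\Lambda)$, realize $p_\Delta a p_\Delta$ as an endomorphism of the enriched skein module of $\Delta$ on which the dome algebra glued around the part of $\partial\Delta$ away from the corner acts externally; the key input is that the commutant of this external dome‑algebra action is exactly $\{\Gamma_\psi:\psi\in\fG(I)\}$. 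This forces $p_\Delta a p_\Delta=\Gamma_\psi p_\Delta$ for a unique $\psi$, and the support condition in the definition of $\fB^\partial$ forces $\psi$ to be supported on $I=\partial\Lambda\cap\partial\Delta\cap\cK$, i.e.\ $\psi\in\fG(I)$; when $I$ meets $\partial\cL$ one uses $\Phi^Z$ to resolve $\cA$‑plaquettes against the $\cX$‑vertices, and when it does not one is in the purely $\cA$‑colored sphere‑algebra situation of Theorem~\ref{thm:LTQO-WW}.

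\textbf{Independence, injectivity, and the net.} Axiom \ref{BoundaryLTO:Injective} is immediate from Step 1: enlarging $\Delta_1$ to $\Delta_2$ with the corner held fixed only tensors the skein picture with identity strands, the evaluation map stays isometric, and no nonzero $\Gamma_\psi$ is annihilated, so $x\mapsto xp_{\Delta_2}$ is injective on $\fB^\partial(\Lambda\Subset_1^\partial\Delta_1)$. Axiom \ref{BoundaryLTO:Surjective} holds because shrinking or enlarging $\Lambda$ with the corner $I$ fixed does not change the dome (or sphere) algebra attached to $I$, hence not the set of $\Gamma_\psi$; this simultaneously shows $\fB^\partial(\Lambda\Subset_1^\partial\Delta)$ depends only on $I$ and equals $\fG(I)$, and, using $\Phi^Z$ to build the inclusions $\fG(I)\subset\fG(J)$ for $I\subset J$, that $I\mapsto\fG(I)$ assembles into a net of algebras with boundary on $\cK$. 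This proves the isomorphism $\fB(\Lambda\Subset^\partial\Delta)\cong\fG(I)$ asserted in the theorem.

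\textbf{Main obstacle.} I expect the crux to be the dome‑algebra commutation statement used for the inclusion $\subseteq$ in \ref{BoundaryLTO:Boundary}: making precise, in the enriched skein calculus with a topological boundary, that the commutant of the external dome/sphere action on the boundary skein module is exactly the local gluing operators $\Gamma_\psi$, and that this identification is compatible with the half‑braidings coming from $\Phi^Z\colon\cA\to Z(\cX)$. This is the enriched, boundary analogue of the tube‑algebra commutation lemma behind \cite[Thm.~4.8]{2307.12552}, and it is where unitarity (faithful traces, isometric gluing in \cite[Lem.~2.4]{2305.14068}) and the strong‑generation hypothesis on $(A,X)$ are genuinely needed; one also has to be careful with the two cases (the corner rectangle $I$ meeting $\partial\cL$ or not), which are glued together by the central functor.
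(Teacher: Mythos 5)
Your proposal is correct and follows the same route the paper indicates: adapt the argument of \cite[Lem.~4.7 and Thm.~4.8]{2307.12552}, replacing the tube algebra with the dome algebra of the $\cA$-enriched UFC $\cX$ from \cite[Def.~3.6]{2305.14068} (and the sphere algebra for corner rectangles away from $\partial\cL$), using the enriched skein-module identification from \cite[Thm.~3.4]{2305.14068} and the gluing operators $\Gamma_\psi$. The paper gives this proof only as a one-line pointer; your fleshed-out version, including the identification of the dome-algebra commutant computation as the crux of \ref{BoundaryLTO:Boundary}, is consistent with that indicated argument.
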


%%%%%%%%%%%%%%%%%%%%%%%%%%%%%%%%%%%%%%%%%%%%%%%%%%%%%
%%%%%%%%%%%%%%%%%%%%%%%%%%%%%%%%%%%%%%%%%%%%%%%%%%%%%
%%%%%%%%%%%%%%%%%%%%%%%%%%%%%%%%%%%%%%%%%%%%%%%%%%%%%
\subsection{DHR bimodules for the Walker-Wang model}

In order to compute the the DHR bimodules for the boundary algebra of the Walker-Wang model for $\cA$, we first analyze the boundary DHR bimodules for the boundary Walker-Wang model for the $\cA$-enriched UFC $\cX$.
We saw in Theorem~\ref{thm:BoundaryWWBoundaryAlgebra} above that the boundary algebra for this boundary LTO is given by the braided categorical net with boundary $\fG$ associated to the $\cA$-enriched UFC $\cX$.
In Theorem \ref{thm:BoundaryDHRForBraidedEnrichedNet} above, we showed that $\DHR^\partial(\fG)\cong Z^\cA(\cX)$, the enriched center of $\cX$, a.k.a., the M\"uger centralizer $\cA'\subset Z(\cX)$.

With this theorem in hand, we can now compute the DHR bimodules for the boundary algebra $\fB\cong\fF$ of the $\cA$ Walker-Wang model.
We apply a 3D folding trick along our cut $\cK$ to view the underlying UFC of $\cA$ as an $\cA^{\rm rev}\boxtimes \cA$-enriched UFC.
Folding effectively doubles each lattice site away from $\cK$ and views sites in $\cK$ as the boundary $\partial \cL$ of the lattice.
We observe now that a DHR bimodule localized in a rectangle $I\subset \cK$ can be viewed as a boundary DHR bimodule localized in $I$, as $I$ is now constrained to the boundary $\partial\cL=\cK$.
By Theorem \ref{thm:BoundaryDHRForBraidedEnrichedNet}, we have $\DHR^\partial(\fB)\cong Z^{\cA^{\rm rev}\boxtimes \cA}(\cA)\cong Z_2(\cA)$, the \emph{M\"uger center} of $\cA$, as UBFCs \cite{MR1990929}.
Unfolding, we see that indeed $\DHR(\fB)\cong Z_2(\cA)$ as UBFCs, as the definition for the braiding on $\DHR(\fB)$ can without loss of generality be defined in terms of rectangles localized along our cut $\cK$.

%%%%%%%%%%%%%%%%%%%%%%%%%%%%%%%%%%%%%%%%%%%%%%%%%%%%%%%%%%%%%%%%%
\appendix
\section{AF actions of unitary multifusion categories}
\label{appendix:AFActionsOfUmFCs}

In this section, we review how one can get an AF action of a unitary multifusion category on an AF $\rmC^*$-algebra by adapting the results of \cite{MR4916103}.
We begin by reviewing these results.

Let $\cC$ be a unitary multifusion category.
Fix a \emph{faithful 2-shading} on $\cC$, i.e., a decomposition $1_\cC=1_+\oplus 1_-$ such that $\cC_{+-}:=1_+\otimes \cC\otimes 1_- \neq 0$.
Observe that the similarly defined $\cC_{++},\cC_{-+},\cC_{--}$ are all nonzero, and we may think of $\cC$ as a $2\times 2$ block of categories
$$
\cC=
\begin{pmatrix}
\cC_{++} & \cC_{+-}
\\
\cC_{-+} & \cC_{--}
\end{pmatrix}.
$$

We represent $1_+$ by an unshaded region and $1_-$ by a shaded region.
Pick a generator $X\in \cC_{+-}$ (which generates $\cC$ under tensor product $\otimes$, algebraic duals $\vee$, direct sum $\oplus$, and idempotent completion), and we equip $\cC$ with the \emph{standard unitary dual functor} $\vee$ with respect to $X$ \cite{MR4133163} such that the loop parameters for $X$ are scalars:
$$
\tikzmath{
\fill[rounded corners=5pt, gray!50] (-.8,-.5) rectangle (.8,.5);
\filldraw[fill=white] (0,0) circle (.25cm);
\node at (.5,0) {$\scriptstyle X$};
\node at (-.5,0) {$\scriptstyle X^\vee$};
}
=
\lambda\cdot\,
\tikzmath{
\fill[rounded corners=5pt, gray!50] (-.5,-.5) rectangle (.5,.5);
}
\qquad
\tikzmath{
\draw[rounded corners=5pt, dotted] (-.8,-.5) rectangle (.8,.5);
\filldraw[fill=gray!50] (0,0) circle (.25cm);
\node at (.5,0) {$\scriptstyle X^\vee$};
\node at (-.5,0) {$\scriptstyle X$};
}
=
\lambda\cdot\,
\tikzmath{
\draw[rounded corners=5pt, dotted] (-.5,-.5) rectangle (.5,.5);
}
\qquad\qquad
\tikzmath{
\draw[rounded corners=5pt, dotted] (-.3,-.3) rectangle (.3,.3);
}
=
\cC_{++}
\qquad
\tikzmath{
\fill[rounded corners=5pt, gray!50] (-.3,-.3) rectangle (.3,.3);
}
=
\cC_{--}.
$$
Observe that $X\in\cC_{+-}$ being a generator means that the map
$\End_\cC(1_{+})\otimes \End_\cC(1_-)\to Z(\End_\cC(X))$ given by 
\begin{equation}
\label{eq:ZEndXIso}
\tikzmath{
\draw[rounded corners=5pt, dotted] (-.5,-.5) rectangle (.5,.5);
\roundNbox{}{(0,0)}{.3}{0}{0}{$x$}
}\,
\otimes 
\,\tikzmath{
\fill[rounded corners=5pt, gray!50] (-.5,-.5) rectangle (.5,.5);
\roundNbox{fill=white}{(0,0)}{.3}{0}{0}{$y$}
}\,
\longmapsto
\,\tikzmath{
\begin{scope}
\clip[rounded corners=5pt] (-1,-.5) rectangle (1,.5);
\fill[gray!50] (0,-.5) rectangle (1,.5);    
\end{scope}
\draw[rounded corners=5pt,dotted] (0,-.5) -- (-1,-.5) -- (-1,.5) --(0,.5);
\roundNbox{fill=white}{(-.5,0)}{.3}{0}{0}{$x$}
\roundNbox{fill=white}{(.5,0)}{.3}{0}{0}{$y$}
\draw (0,-.5) node[below]{$\scriptstyle X$} -- (0,.5) node[above]{$\scriptstyle X$};
}\,
\end{equation}
is an isomorphism.

Since $\cC$ is unitary multifusion, by \cite{MR4133163}, there is a faithful \emph{spherical weight} $\psi: \cC(1_\cC\to 1_\cC)\to \bbC$ such that
$$
\psi\left(
\tikzmath{
\draw (0,.3) arc(180:0:.3cm) --node[right]{$\scriptstyle c^\vee$} (.6,-.3) arc(0:-180:.3cm);
\roundNbox{}{(0,0)}{.3}{0}{0}{$f$}
}
\right)
=
\psi\left(
\tikzmath{
\draw (0,.3) arc(0:180:.3cm) --node[left]{$\scriptstyle c^\vee$} (-.6,-.3) arc(-180:0:.3cm);
\roundNbox{}{(0,0)}{.3}{0}{0}{$f$}
}
\right)
\qquad\qquad\qquad
\forall\,f:c\to c,
\quad
\forall\,c\in\cC,
$$
which is unique subject to the normalization
$\psi(\id_{1_+})=\psi(\id_{1_-})=1$.
This means that $\Tr^\cC:=\psi\circ \tr_L^\vee = \psi\circ \tr_R^\vee$ is a unitary trace on $\cC$ which equips it with the structure of a \emph{2-Hilbert space} in the sense of \cite{MR1448713}.
Each hom space $\cC(a\to b)$ is equipped with a Hilbert space structure given by
$$
\langle f| g\rangle_{a\to b}:= 
\Tr^\cC_a(f^\dag\circ g)
=
\Tr^\cC_b(g\circ f^\dag)
$$
which satisfies that for all $f:a\to b$, $g:b\to c$, and $h: a\to c$,
$$
\langle g| h\circ f^\dag\rangle_{b\to c}
=
\langle g\circ f| h\rangle_{a\to c}
=
\langle f| g^\dag\circ h\rangle_{a\to b}.
$$

\begin{construction}
One now constructs two AF $\rmC^*$-algebras as follows.
First, we write
$$
X^{\alt\otimes n}:= \underbrace{X\otimes X^\vee \otimes X\otimes \cdots X^?}_{n\text{ tensorands}}
\qquad\qquad\text{and}\qquad\qquad
(X^\vee)^{\alt\otimes n}:= \underbrace{X^\vee\otimes X \otimes X^\vee\otimes \cdots X^?}_{n\text{ tensorands}}
$$
where the question marks are determined by parity of $n$.
We define
$$
A_n:= \End_\cC(X^{\alt \otimes n})
\qquad\qquad\text{and}\qquad\qquad
B_n:=\End_\cC((X^\vee)^{\alt\otimes n}).
$$
We have the inductive limit AF algebras $A:=\varinjlim A_n$ and $B:=\varinjlim B_n$,
where the inclusions are always given by tensoring by identities on the right, i.e., taking inductive limit over the map
$$
-\otimes \id:
\tikzmath{
\draw[cyan, dashed] (0,.7) -- (0,-.7);
\draw[cyan,->] (-.15,.5) -- (.15,.5);
\draw[cyan,->] (-.15,-.5) -- (.15,-.5);
\draw (-.7,0) node[left]{$\scriptstyle n$} -- (-.3,0);
\draw (.7,0) node[right]{$\scriptstyle n$} -- (.3,0);
\roundNbox{fill=white}{(0,0)}{.3}{0}{0}{$\eta$}
}
\,
\longmapsto
\,
\tikzmath{
\draw[cyan, dashed] (0,.7) -- (0,-.7);
\draw[cyan,->] (-.15,.5) -- (.15,.5);
\draw (-.7,-.5) -- (.7,-.5);
\draw (-.7,0) node[left]{$\scriptstyle n$} -- (-.3,0);
\draw (.7,0) node[right]{$\scriptstyle n$} -- (.3,0);
\roundNbox{fill=white}{(0,0)}{.3}{0}{0}{$\eta$}
}
$$
Here, a strand labeled $n$ denotes $X^{\alt\otimes n}$ or $(X^\vee)^{\alt\otimes n}$,
and the single strand represents either $X$ or $X^\vee$ depending on the parity of $n$.
The top string is always $X$ for $A$ and $X^\vee$ for $B$.
As the map \eqref{eq:ZEndXIso} is an isomorphism, we see that 
$Z(A)\cong \End_\cC(1_+)$ and $Z(B)\cong \End_\cC(1_-)$.
This also relies on the fact that eventually the Bratteli diagrams for $A,B$ are stationary \cite[Thm.~4.1.4]{MR999799} (see also \cite[\S3.1]{MR4227743}).
The inclusion $A\subset B$ is obtained by taking the inductive limit of the map $\id_{X^\vee}\otimes -:A_n\hookrightarrow B_{n+1}$ given by
$$
\tikzmath{
\draw[cyan, dashed] (0,.7) -- (0,-.7);
\draw[cyan,->] (-.15,.5) -- (.15,.5);
\draw[cyan,->] (-.15,-.5) -- (.15,-.5);

\draw (-.7,0) node[left]{$\scriptstyle n$} -- (-.3,0);
\draw (.7,0) node[right]{$\scriptstyle n$} -- (.3,0);
\roundNbox{fill=white}{(0,0)}{.3}{0}{0}{$\eta$}
}
\,
\longmapsto
\,
\tikzmath{
\draw[cyan, dashed] (0,.7) -- (0,-.7);
\draw[cyan,->] (-.15,-.5) -- (.15,-.5);
\draw (-.7,.5) -- (.7,.5);
\draw (-.7,0) node[left]{$\scriptstyle n$} -- (-.3,0);
\draw (.7,0) node[right]{$\scriptstyle n$} -- (.3,0);
\roundNbox{fill=white}{(0,0)}{.3}{0}{0}{$\eta$}
}\,.
$$

Now on $A_n$, we can define a faithful tracial state by $\tr_n:= d_X^{-n}\Tr^\cC$, and similarly for $B_n$.
(The normalization for $\psi$ ensures that $\tr_n(1)=1$ for all $n$ for both $A$ and $B$.)
Note that the inclusions 
$A_n\hookrightarrow A_{n+1}$
and
$A_n\hookrightarrow B_{n+1}$
are both trace preserving.
We can thus complete $A,B$ to $\rm II_1$ \emph{multifactors} (direct sums of $\rm II_1$ factors) $\cA,\cB$ by using the GNS construction with respect to $\tr:=\varinjlim \tr_n$,
which yields a homogeneous connected finite index inclusion whose standard invariant $(\cC(\cA\subset \cB), {}_\cA L^2\cB_\cB)$ is unitarily equivalent to $(\cC,X)$ \cite[Prop.~6.5]{MR4916103}.
These results rely on Popa's generating tunnel for finite depth inclusions \cite{MR1055708}
and the Ocneanu Compactness Theorem \cite{MR1473221}, \cite[\S11.4]{MR1642584}, generalized to the multifactor setting \cite[Appendices B and C]{MR4916103}. 
To see this, one performs Jones' basic construction to get a tower of multifactors equipped with traces
$$
\cA_0:=\cA \subset \cB:=\cA_1 
\overset{e_1}{\subset} 
\cA_2 
\overset{e_2}{\subset} 
\cA_3
\subset \cdots
$$
and the \emph{higher relative commutants} can be identified with endomorphism spaces in $\cC$ as follows \cite{MR1424954}:
\begin{equation}
\label{eq:HigherRelativeCommutants}
\begin{aligned}
\cA_0' \cap \cA_{2k} 
&\cong 
\End_{\cA-\cA}(L^2\cB^{\boxtimes_\cA k})
\cong 
\End_\cC((X\otimes X^\vee)^{\otimes k})
\\
\cA_0' \cap \cA_{2k+1} 
&\cong 
\End_{\cA-\cB}(L^2\cB^{\boxtimes_\cA k+1})
\cong 
\End_\cC((X\otimes X^\vee)^{\otimes k}\otimes X)
\\
\cA_1' \cap \cA_{2k+1} 
&\cong 
\End_{\cB-\cB}(L^2\cB^{\boxtimes_\cA k})
\cong 
\End_\cC((X^\vee\otimes X)^{\otimes k+1})
\\
\cA_1' \cap \cA_{2k+2} 
&\cong 
\End_{\cB-\cA}(L^2\cB^{\boxtimes_\cA k+1})
\cong 
\End_\cC((X^\vee\otimes X)^{\otimes k}\otimes X^\vee).
\end{aligned}
\end{equation}
We thus get a unitary tensor functor into $\Bim(\cA\oplus \cB)$ from the full subcategory $\cC_0\subset \cC$ whose objects are of the form $X^{\alt\otimes n}$ and $(X^\vee)^{\alt\otimes n}$.
We then use the universal property of unitary idempotent completion to uniquely extend the functor to
$$
\cF:\cC
=
\begin{pmatrix}
\cC_{++} & \cC_{+-}  
\\
\cC_{-+} & \cC_{--}
\end{pmatrix}
\hookrightarrow
\begin{pmatrix}
\Bim(\cA) & \Bim(\cA,\cB)
\\
\Bim(\cB,\cA) & \Bim(\cB)
\end{pmatrix}
=
\Bim(\cA\oplus \cB).
$$

There are also uniqueness results in \cite{MR4916103}, but these are not necessary for the results in this article.
\end{construction}

\begin{ex}
\label{ex:MoreDetailForFF-AF-Action}
We now discuss the above unitary tensor functor in more detail using the \emph{multistep basic construction}.
Our analysis follows the exposition from \cite[\S6.3]{MR4753059}.
The main point is that the inclusion of the form 
$$
\cA_j \subset \cA_{j+k} \overset{f^{j+k}_j}\subset \cA_{j+2k}
\qquad\qquad\qquad
f^{j+k}_j
:=
\frac{1}{d_X^j}\cdot\,
\tikzmath{
\draw (0,-.5) --node[left]{$\scriptstyle j$} (0,.5);
\draw (.3,.5) node[above]{$\scriptstyle k$} arc(-180:0:.3cm);
\draw (.3,-.5) node[below]{$\scriptstyle k$} arc(180:0:.3cm);
}
$$
is also a Jones basic construction \cite{MR965748}.
This means that $\cA_0'\cap \cA_{2k}=\End_{\cA-\cA}(L^2\cA_k)$, and similarly for the other 3 cases.
Iterating the Jones basic construction gives the isomorphism $L^2\cA_k \cong L^2\cB^{\boxtimes_\cA k}$, leading to the isomorphisms \eqref{eq:HigherRelativeCommutants} above.

Now one effectively performs the unitary idempotent completion of $\cC_0\subset \Bim(\cA\oplus \cB)$ by splitting projections in these higher relative commutants. 
We explain this in detail for an arbitrary $p\in \cA_0'\cap \cA_{2k}$, and the other three cases are similar.
The corresponding $\cA-\cA$ bimodule is given by
$pL^2\cA_k \cong pL^2\cB^{\boxtimes_\cA k}$.
Now as $p\in \cA_0'\cap \cA_{2n}$, in order for $p$ to act on $\cA_k$, we can embed $\cA_k$ into $\cA_{2k}$ where the action is just left multiplication.
But in order to preserve the endomorphism space of the bimodule, one should also multiply by the Jones projection $f^{2k}_k$ on the right, which commutes with the $\cA-\cA$ action.
In diagrams, the map $\cA_k\cong \cA_k f^{2k}_k = \cA_{2k}f^{2k}_k$ (by \cite[Pull Down Lemma~1.2]{MR860811}) is given by taking the inductive limit of the map
$$
\tikzmath{
\draw[cyan, dashed] (0,.8) -- (0,-.8);
\draw[cyan,->] (-.15,.6) -- (.15,.6);
\draw[cyan,->] (-.15,-.6) -- (.15,-.6);
\draw (-.7,.2) node[left]{$\scriptstyle k$} -- (-.3,.2);
\draw (-.7,-.2) node[left]{$\scriptstyle n$} -- (-.3,-.2);
\draw (.7,.2) node[right]{$\scriptstyle k$} -- (.3,.2);
\draw (.7,-.2) node[right]{$\scriptstyle n$} -- (.3,-.2);
\roundNbox{fill=white}{(0,0)}{.4}{-.1}{-.1}{$\eta$}
}
\,
\longmapsto
\,
\frac{1}{d_X^k}
\tikzmath{
\draw[cyan, dashed] (0,.8) -- (0,-.8);
\draw[cyan,->] (-.15,-.6) -- (.15,-.6);
\draw (.7,.6) node[right]{$\scriptstyle k$} -- (-.3,.6) arc (90:270:.2cm);
\draw (-1,.6) node[left]{$\scriptstyle k$} -- (-.9,.6) arc(90:-90:.2cm) -- (-1,.2) node[left]{$\scriptstyle k$};
\draw (-1,-.2) node[left]{$\scriptstyle n$} -- (-.3,-.2);
\draw (.7,.2) node[right]{$\scriptstyle k$} -- (.3,.2);
\draw (.7,-.2) node[right]{$\scriptstyle n$} -- (.3,-.2);
\roundNbox{fill=white}{(0,0)}{.4}{-.1}{-.1}{$\eta$}
}
$$
under $-\otimes \id$.
(Recall that the left action is post-composition, and the right action is pre-composition.)
It is important to note that the $\cA-\cA$ actions on $\cA_k$ only interact with the $n$ strands over which we take the inductive limit, and not the top $k$ strings;
similarly for $\cA_{2k}$ and the top $2k$ strands. 
Left multiplication by $p$ is then given by
$$
\frac{1}{d_X^k}
\tikzmath{
\draw[cyan, dashed] (0,.8) -- (0,-.8);
\draw[cyan,->] (-.15,-.6) -- (.15,-.6);
\draw (.7,.6) node[right]{$\scriptstyle k$} -- (-.3,.6) arc (90:270:.2cm);
\draw (-1,.6) node[left]{$\scriptstyle k$} -- (-.9,.6) arc(90:-90:.2cm) -- (-1,.2) node[left]{$\scriptstyle k$};
\draw (-1,-.2) node[left]{$\scriptstyle n$} -- (-.3,-.2);
\draw (.7,.2) node[right]{$\scriptstyle k$} -- (.3,.2);
\draw (.7,-.2) node[right]{$\scriptstyle n$} -- (.3,-.2);
\roundNbox{fill=white}{(0,0)}{.4}{-.1}{-.1}{$\eta$}
}
\,\longmapsto\,
\frac{1}{d_X^k}
\tikzmath{
\draw[cyan, dashed] (0,.8) -- (0,-.8);
\draw[cyan,->] (-.15,-.6) -- (.15,-.6);
\draw (.7,.6) -- node[above]{$\scriptstyle k$} (.3,.6) -- (-.3,.6) arc (90:270:.2cm);
\draw (-1,.6) node[left]{$\scriptstyle k$} -- (-.9,.6) arc(90:-90:.2cm) -- (-1,.2) node[left]{$\scriptstyle k$};
\draw (-1,-.2) node[left]{$\scriptstyle n$} -- (-.3,-.2);
\draw (.7,.2) --node[above]{$\scriptstyle k$}  (.3,.2);
\draw (1.7,.6) node[right]{$\scriptstyle k$} -- (1.3,.6);
\draw (1.7,.2) node[right]{$\scriptstyle k$} -- (1.3,.2);
\draw (1.7,-.2) node[right]{$\scriptstyle n$} -- (.3,-.2);
\roundNbox{fill=white}{(0,0)}{.4}{-.1}{-.1}{$\eta$}
\roundNbox{fill=white}{(1,.4)}{.4}{-.1}{-.1}{$p$}
}
$$
Taking inductive limits gives the corresponding bimodule.

Since the $\cA-\cA$ action only interacts with the $n$ strands below under which we take the inductive limit, writing  $c\in\cC$ for the image of $p$,
we have an obvious $\cA-\cA$ bimodule isomorphism with $\cF(c):=\varinjlim \cF(c)_n$ where
$$
\cF(c)_n:=
F(c)_n:=\cC(X^{\alt\otimes n} \to c\otimes X^{\alt\otimes n})=
\left\{
\tikzmath{
\draw[cyan, dashed] (-.7,.7) -- (.7,-.7);
\draw[cyan,->] (-.6,.4) -- (-.4,.6);
\draw[cyan,->] (.4,-.6) -- (.6,-.4);
\draw (-.7,0) node[left]{$\scriptstyle n$} -- (-.3,0);
\draw (.7,0) node[right]{$\scriptstyle n$} -- (.3,0);
\draw[thick, red] (0,.3) --node[right]{$\scriptstyle c$} (0,.7);
\roundNbox{fill=white}{(0,0)}{.3}{0}{0}{$\eta$}
}
\right\}.
$$
Given another $d\in\cC$, 
the tensorator $\cF(c)_n \boxtimes_{A_n} \cF(d)_n \cong \cF(c\otimes d)_n$ is given graphically by
$$
\tikzmath{
\draw (-.7,0) node[left]{$\scriptstyle n$} -- (-.3,0);
\draw (.7,0) node[right]{$\scriptstyle n$} -- (.3,0);
\draw[thick, red] (0,.3) --node[right]{$\scriptstyle c$} (0,.7);
\roundNbox{}{(0,0)}{.3}{0}{0}{$\eta$}
}
\,\boxtimes\,
\tikzmath{
\draw (-.7,0) node[left]{$\scriptstyle n$} -- (-.3,0);
\draw (.7,0) node[right]{$\scriptstyle n$} -- (.3,0);
\draw[thick, blue] (0,.3) --node[right]{$\scriptstyle d$} (0,.7);
\roundNbox{}{(0,0)}{.3}{0}{0}{$\xi$}
}
\,\longmapsto\,
\tikzmath{
\draw (-.7,0) node[left]{$\scriptstyle n$} -- (-.3,0);
\draw (.7,0) --node[above]{$\scriptstyle n$} (.3,0);
\draw[thick, red] (0,.3) --node[right]{$\scriptstyle c$} (0,.7);
\roundNbox{}{(0,0)}{.3}{0}{0}{$\eta$}
\draw (1.7,0) node[right]{$\scriptstyle n$} -- (1.3,0);
\draw[thick, blue] (1,.3) --node[right]{$\scriptstyle d$} (1,.7);
\roundNbox{}{(1,0)}{.3}{0}{0}{$\xi$}
}
\,.
$$
We simply observe that the $\cA$-valued inner product on $F(c)_n$ coming from the trace on $\cA$ lands in $A_n$ and is given by
$$
\langle \eta|\xi\rangle_{A_n}:= \eta^\dag\circ \xi \in \End((A_n)_{A_n})=A_n.
$$
Moreover, the inclusions $-\otimes \id: F(c)_n\hookrightarrow F(c)_{n+1}$
are compatible with the inclusions $-\otimes \id : A_n\hookrightarrow A_{n+1}$, and
$$
\langle \eta\otimes\id|\xi\otimes\id\rangle_{A_{n+1}}
=
(\eta\otimes \id)^\dag\circ (\xi\otimes \id)
=
(\eta^\dag\circ \xi) \otimes \id
=
\langle \eta|\xi\rangle_{A_n}\otimes \id.
$$
Finally, this map is compatible with the associator and unitors in $\cC_{++}$, yielding a tensorator for our fully faithful tensor functor $\cF:\cC_{++}\to\Bim(\cA)$.
\end{ex}

\begin{rem}
\label{rem:NoNeedForvNA}
One does not need to pass to the von Neumann algebraic closures in order for the results on standard invariants and unitary tensor functors to hold.
Indeed, these results hold on the level of the AF $\rmC^*$-algebras and their Hilbert $\rmC^*$-bimodules/correspondences.
The standard invariant for $A\subset B$ where again the inclusion is tensor product on the left with $\id_{X^\vee}$, and again, we have a unitary equivalence $(\cC(A\subset B), {}_AB_B)\cong (\cC,X)$.
One argues this by showing we have canonical inclusions
\begin{align*}
&\End_\cC((X{\otimes} X^\vee)^{{\otimes} n})
&\subseteq&
\End_{A-A}(B^{\boxtimes_A n})
&\subseteq&
\End_{\cA-\cA}(L^2\cB^{\boxtimes_A n})
&\cong&
\End_\cC((X{\otimes} X^\vee)^{{\otimes} n})
\\
&\End_\cC((X{\otimes} X^\vee)^{{\otimes} n}{\otimes} X)
&\subseteq&
\End_{A-B}(B^{\boxtimes_A n+1})
&\subseteq&
\End_{\cA-\cB}(L^2\cB^{\boxtimes_A n+1})
&\cong&
\End_\cC((X{\otimes} X^\vee)^{{\otimes} n}{\otimes} X)
\\
&\End_\cC((X^\vee{\otimes} X)^{{\otimes} n})
&\subseteq&
\End_{B-B}(B^{\boxtimes_A n+1})
&\subseteq&
\End_{\cB-\cB}(L^2\cB^{\boxtimes_A n+1})
&\cong&
\End_\cC((X^\vee{\otimes} X)^{{\otimes} n})
\\
&\End_\cC((X^\vee{\otimes} X)^{{\otimes} n}{\otimes} X^\vee)
&\subseteq&
\End_{B-A}(B^{\boxtimes_A n+1})
&\subseteq&
\End_{\cB-\cA}(L^2\cB^{\boxtimes_A n+1})
&\cong&
\End_\cC((X^\vee{\otimes} X)^{{\otimes} n}{\otimes} X^\vee)
\end{align*}
which are thus isomorphisms by the finite dimensionality of endomorphism spaces of $\cC$.
We thus obtain a fully faithful unitary tensor functor
$$
F:\cC
=
\begin{pmatrix}
\cC_{++} & \cC_{+-}  
\\
\cC_{-+} & \cC_{--}
\end{pmatrix}
\hookrightarrow
\begin{pmatrix}
\Bim(A) & \Bim(A,B)
\\
\Bim(B,A) & \Bim(B)
\end{pmatrix}
=
\Bim(A\oplus B).
$$
such that $X\mapsto {}_A B_B$.
Similar to before, each $F(c):=\varinjlim F(c)_n$ now taken with respect to the $A$-valued inner product $\varinjlim \langle \,\cdot\, |\,\cdot\,\rangle_{A_n}$ gives an explicit description of an
inductive limit $A-A$ $\rmC^*$-Hilbert bimodule.
\end{rem}

\begin{ex}
Let $\cC$ be a unitary multifusion category, and let $\cM_\cC$ be a unitary right $\cC$-module category.
For any choice of generator $W\in\cM$, we can run the above construction to obtain a fully faithful unitary tensor functor
$$
\cE = 
\begin{pmatrix}
\End(\cM_\cC) & \cM
\\
\cM^{\rm op} & \cC
\end{pmatrix}
\hookrightarrow
\begin{pmatrix}
\Bim(A) & \Bim(A,B)
\\
\Bim(B,A) & \Bim(B)
\end{pmatrix}
=
\Bim(A\oplus B)
$$
such that $W\mapsto {}_A B_B$.
\end{ex}

We now turn to the explicit example that appears in \S\ref{sec:BoundaryDHRforBoundaryLW} above, and we describe how to employ the results of \cite{MR4916103} described above in order to prove the corresponding construction is fully faithful.

\begin{ex}
Suppose now that we want to work with two generators: some $W\in\cM$ and $X\in \cC$ rather than a single generator.
We impose the requirement that $X\in\cC$ is a strong tensor generator, i.e., there is an $n\in\bbN$ such that every $c\in\Irr(\cC)$ appears as a summand of $X^{\otimes n}$.

First, we use the above construction for our generator $X\in\cC$ to obtain a fully faithful unitary tensor functor $\cC\to \Bim(A)$.
(We do this by first replacing $\cC$ with $\operatorname{Mat}_2(\cC)$ and viewing our generator $X$ as living off the diagonal.)

We now employ the \emph{Q-system completion} technique from \cite{MR4419534}.
The basic idea is that given a fully faithful tensor functor $F: \cC\to \Bim(A)$ for a $\rmC^*$-algebra $A$ and a Q-system ($\rmC^*$ Frobenius algebra object) $Q\in \cC$, one can form the \emph{Q-system realization} $|Q|_F$, which is the $\rmC^*$-algebra $F(Q)\in \Bim(A)$, equipped with its own multiplication and involution. 
There are many utilities for $|Q|_F$; the one we are most interested in is \cite[Cor.~C]{MR4419534}, which states that we have a fully faithful composite 2-functor
$$
\mathsf{QSys}(\cC) 
\xrightarrow{\mathsf{QSys}(F)}
\mathsf{QSys}(\Bim(A))
\xrightarrow{|\cdot|}
\mathsf{C^*Alg}.
$$
As a corollary, for any Q-system $Q\in\cC$, we get a fully faithful unitary tensor functor from the unitary (multi)tensor category ${}_Q\cC_Q$ into $\Bim(|Q|_F)$.

There is still one technical subtlety to overcome to deal with our specific example.
Given our module generator $W\in\cM$, we would like that the inductive limit algebra
$$
B:= \varinjlim \End_\cM(W\lhd X^{\otimes n})
$$
is the Q-system realization of $\underline{\End}_\cM(W)\in\cC$.
In more detail, one uses duality in $\cE$ so that
$$
B:= \varinjlim \Hom_\cC(X^{\otimes n}\to W^\vee\otimes W\otimes X^{\otimes n})
$$
where multiplication is given by 
$$
\tikzmath{
\fill[gray!50] (-.4,.3) rectangle (-.2,.7);
\draw[thick, blue] (-.4,.3) -- (-.4,.7);
\draw[thick, blue] (-.2,.3) -- (-.2,.7);
\draw (0,.3) -- (0,.7);
\node at (.23,.5) {$\scriptstyle \cdots$};
\draw (.4,.3) -- (.4,.7);
\draw (0,-.3) -- (0,-.7);
\node at (.23,-.5) {$\scriptstyle \cdots$};
\draw (.4,-.3) -- (.4,-.7);
\roundNbox{}{(0,0)}{.3}{.3}{.3}{$x$}
}
\,\cdot\,
\tikzmath{
\fill[gray!50] (-.4,.3) rectangle (-.2,.7);
\draw[thick, blue] (-.4,.3) -- (-.4,.7);
\draw[thick, blue] (-.2,.3) -- (-.2,.7);
\draw (0,.3) -- (0,.7);
\node at (.23,.5) {$\scriptstyle \cdots$};
\draw (.4,.3) -- (.4,.7);
\draw (0,-.3) -- (0,-.7);
\node at (.23,-.5) {$\scriptstyle \cdots$};
\draw (.4,-.3) -- (.4,-.7);
\roundNbox{}{(0,0)}{.3}{.3}{.3}{$y$}
}
\,=
\tikzmath{
\fill[gray!50] (-.2,.7) -- (-.2,.3) -- (-.4,.3) arc(0:180:.2cm) -- (-.8,-.7) -- (-1,-.7) -- (-1,.7);
\draw[thick, blue] (-.4,.3) arc(0:180:.2cm) -- (-.8,-.7);
\draw[thick, blue] (-.2,.3) -- (-.2,.7);
\draw[thick, blue] (-1,-.7) -- (-1,.7);
\draw (0,.3) -- (0,.7);
\node at (.23,.5) {$\scriptstyle \cdots$};
\draw (.4,.3) -- (.4,.7);
\draw (0,-.3) -- (0,-.7);
\node at (.23,-.5) {$\scriptstyle \cdots$};
\draw (.4,-.3) -- (.4,-.7);
\draw (0,-1.3) -- (0,-1.7);
\node at (.23,-1.5) {$\scriptstyle \cdots$};
\draw (.4,-1.3) -- (.4,-1.7);
\roundNbox{}{(0,0)}{.3}{.3}{.3}{$x$}
\roundNbox{}{(0,-1)}{.3}{.9}{.3}{$y$}
}
$$
The main problem here is that the choice of cap above depends on a choice of unitary duality for $\cE$.
Moreover, one must use the notion of \emph{unitary internal end} in order for $\underline{\End}_\cM(W)$ to define an honest element of $\cC$ and not just the underlying multifusion category $\cC^{\natural}$ where we have forgotten the dagger structure. 
Both of these goals can be accomplished by endowing $\cM$ with a carefully chosen \emph{unitary module trace} as in \S\ref{sec:BoundaryLTOforBoundaryLW}, which allows one to use \emph{unitary adjunction} \cite[\S2.1]{MR4750417} to constuct the unitary itnernal hom as in \cite[\S3.3]{2410.05120}.
Indeed, when $\cM$ is indecomposable,  $\End(\cM_\cC)$ is fusion, so the bubble for $W$ is always a positive scalar.
There is thus a unique unitary module trace such that the $W$-bubble is equal to one.\footnote{This is entirely analogous to the situation of an operator $T: H\to K$ such that $T^*T=r 1_H$ for some $r>0$.
We can uniquely rescale the inner product on $H$ so that $T$ becomes unitary.}
For an arbitrary module, there is a unique faithful convex combination of traces on the indecomposable summands so that the bubble for $W$ in $\End(\cM_\cC)$ equals one. 
This immediately implies that $|\underline{\End}_\cC(W)|$ is a (special) Q-system
$$
\tikzmath{
\fill[gray!50] (-.2,-.7) to[out=90,in=-90] (-.6,0) to[out=90,in=-90] (-.2,.7) -- (.2,.7) to[out=-90,in=90] (.6,0) to[out=-90,in=90] (.2,-.7);
\draw[thick, blue] (-.2,-.7) to[out=90,in=-90] (-.6,0) to[out=90,in=-90] (-.2,.7);
\draw[thick, blue] (.2,-.7) to[out=90,in=-90] (.6,0) to[out=90,in=-90] (.2,.7);
\filldraw[thick, blue, fill=white] (0,0) circle (.3cm);
}
\,\,
=
\,\,
\tikzmath{
\fill[gray!50] (-.2,-.7) rectangle (.2,.7);
\draw[thick, blue] (-.2,-.7) -- (-.2,.7);
\draw[thick, blue] (.2,-.7) -- (.2,.7);
}
\,\,
=
\id_{\underline{\End}_\cC(W)},
$$
and we thus get an isomorphism $B\cong |\underline{\End}_\cC(W)|$.

Now by \cite[Cor.~C]{MR4419534}, we have a fully faithful tensor functor $\End(\cM_\cC)\hookrightarrow \Bim(B)$.
Unpacking as in Example \ref{ex:MoreDetailForFF-AF-Action} above, 
for $F\in \End(\cM_\cC)$,
we obtain exactly the DHR bimodules $Y^F:=\varinjlim Y^F_n$ where
$$
Y^F_n := \cM(W\lhd X^{\otimes n} \to FW\lhd \otimes X^{\otimes n})
$$
for the boundary algebra $\fB=B=\varinjlim \End_\cM(W\lhd X^{\otimes n})$ from \S\ref{sec:BoundaryDHRforBoundaryLW} above.

One can also repeat the above analysis for the Q-system $1\oplus |\underline{\End}_\cC(W)| \in\cC$ in order to obtain a fully faithful tensor functor 
$$
\cE = 
\begin{pmatrix}
\End(\cM_\cC) & \cM
\\
\cM^{\rm op} & \cC
\end{pmatrix}
\hookrightarrow
\begin{pmatrix}
\Bim(B) & \Bim(B,A)
\\
\Bim(A,B) & \Bim(A)
\end{pmatrix}
=
\Bim(B\oplus A).
$$
\end{ex}

\bibliographystyle{alpha}
\bibliography{bibliography}
%\bibliography{../../bibliography/bibliography}

\end{document}